%% file: main.tex
\documentclass[11pt]{article}
\usepackage[utf8]{inputenc}
\usepackage[english]{babel}
\usepackage{amsfonts,amsmath,amssymb,amsthm} %,mathrsfs,mathabx
\usepackage{graphicx}
\usepackage[dvipsnames]{xcolor}
\usepackage[toc,page]{appendix}
\usepackage{authblk}
\usepackage{tikz}
\usepackage{subcaption}
\usepackage{csquotes}
\usetikzlibrary{decorations.pathreplacing, patterns, hobby}
\usepackage{capt-of, caption}  %for captions in minipages
 \usepackage[colorinlistoftodos,textsize=footnotesize,textwidth=2.5cm]{todonotes}
\usepackage{dsfont}
\usepackage[bookmarksnumbered=true]{hyperref}
\usepackage[capitalise]{cleveref}
\crefname{equation}{}{}

\numberwithin{equation}{section}
\usepackage[style=alphabetic,
  sorting = nyt,
  sortcites=true,
  giveninits=true,
  date=year,
  isbn=false,
  maxbibnames=99,
  maxalphanames=6,
  backend=biber]{biblatex}
\DeclareFieldFormat*{titlecase}{\MakeSentenceCase{#1}}

\DeclareSourcemap{
  \maps[datatype=bibtex]{
    \map[overwrite]{
      \step[fieldsource=doi, final]
      \step[fieldset=url, null]
      \step[fieldset=eprint, null]
    }  
  }
}

\DeclareSourcemap{
  \maps[datatype=bibtex, overwrite]{
    \map{
      \step[fieldset=language, null]
      \step[fieldset=month, null]
      \step[fieldset=urldate, null]
    }
  }
}

\DeclareSourcemap{
  \maps[datatype=bibtex, overwrite]{
    \map{
      \pertype{article}
      \step[fieldset=note, null]
    }
  }
}

\AtEveryBibitem{%
  \clearlist{language}%
  \clearlist{month}%
  \clearlist{urldate}%
  \clearfield{urldate}%
}

\renewbibmacro*{in:}{%
  \setunit{\addcomma\space}%
  \ifentrytype{article}
    {}
    {\printtext{%
       \bibstring{in}\intitlepunct}}}

\renewcommand*{\intitlepunct}{\addspace}

\DeclareFieldFormat[misc]{title}{\mkbibquote{#1}}
\DeclareFieldFormat[article,inproceedings]{volume}{\mkbibbold{#1}}
\DeclareFieldFormat[inproceedings]{series}{\mkbibitalic{#1}}

\title{Mesoscopic Observables of the Dilute Fermi Gas at Low Energy}

\author[1,$*$,$\dagger$]{Niels Benedikter}
\author[2,$*$]{Emanuela L. Giacomelli}
\author[3,$\ddagger$]{Asbjørn~Bækgaard~Lauritsen}
\author[4,$\S$]{Sascha Lill}

\affil[1]{ORCID: \href{https://orcid.org/0000-0002-1071-6091}{0000-0002-1071-6091}, e--mail: \href{mailto:niels.benedikter@unimi.it}{\textnormal{\nolinkurl{niels.benedikter@unimi.it}}}}
\affil[2]{ORCID: \href{https://orcid.org/0000-0002-8147-3250}{0000-0002-8147-3250}, e--mail: \href{mailto:emanuela.giacomelli@unimi.it}{\textnormal{\nolinkurl{emanuela.giacomelli@unimi.it}}}}
\affil[3]{ORCID: \href{https://orcid.org/0000-0003-4476-2288}{0000-0003-4476-2288}, e--mail: \href{mailto:lauritsen@ceremade.dauphine.fr}{\textnormal{\nolinkurl{lauritsen@ceremade.dauphine.fr}}}}
\affil[4]{ORCID: \href{https://orcid.org/0000-0002-9474-9914}{0000-0002-9474-9914}, e--mail: \href{mailto:sali@math.ku.dk}{\textnormal{\nolinkurl{sali@math.ku.dk}}}}

\affil[$*$]{Università degli Studi di Milano, Via Cesare Saldini 50, 20133 Milano, Italy}
\affil[$\dagger$]{External Scientific Member of Basque Center for Applied Mathematics, Alameda de Mazarredo 14, 48009 Bilbao, Bizkaia, Spain}
\affil[$\ddagger$]{CNRS \& CEREMADE, Universit\'e Paris--Dauphine, PSL University, 75016 Paris, France}
\affil[$\S$]{Department of Mathematical Sciences, Universitetsparken 5, 2100 Copenhagen, Denmark}

\newcommand{\cE}{\mathcal{E}}
\newcommand{\cF}{\mathcal{F}}

\newcommand{\cN}{\mathcal{N}}

\newcommand{\CCC}{\mathbb{C}}

\newcommand{\RRR}{\mathbb{R}}

\newcommand{\ZZZ}{\mathbb{Z}}

\newcommand{\Bel}{\textnormal{Bel}}

\renewcommand{\d}{\textnormal{d}}
\newcommand{\di}{\textnormal{d}}

\newcommand{\exc}{{\rm exc}}

\newcommand{\F}{\mathrm{F}}

\newcommand{\I}{\mathrm{I}}

\renewcommand{\r}{\textnormal{r}}

\newcommand{\supp}{\mathrm{supp}}

\newcommand{\norm}[1]{\left\lVert #1 \right\rVert}
\newcommand{\abs}[1]{\left\vert #1 \right\vert}
\newcommand{\kF}{k_\F}

\newcommand{\ud}{\,\textnormal{d}}
\newcommand{\eps}{\varepsilon}
\newcommand{\expect}[1]{\left\langle #1 \right\rangle}

\newtheorem{lemma}{Lemma}[section]
\newtheorem{proposition}[lemma]{Proposition}
\newtheorem{theorem}[lemma]{Theorem}
\newtheorem{corollary}[lemma]{Corollary}

\theoremstyle{definition}
\newtheorem{definition}[lemma]{Definition}
\newtheorem{remark}[lemma]{Remark}

\AddToHook{env/proposition/begin}{\crefalias{lemma}{proposition}}
\AddToHook{env/theorem/begin}{\crefalias{lemma}{theorem}}
\AddToHook{env/corollary/begin}{\crefalias{lemma}{corollary}}
\AddToHook{env/lemma/begin}{\crefalias{lemma}{lemma}}
\AddToHook{env/remark/begin}{\crefalias{lemma}{remark}}
\AddToHook{env/definition/begin}{\crefalias{lemma}{definition}}

\begin{document}
\maketitle
\begin{abstract}
We consider a dilute quantum gas of interacting spin-1/2 fermions in the thermodynamic limit. For a trial state that resolves the ground state energy to the precision of the Huang--Yang formula, we  compute the expectation values of one-body observables on a mesoscopic scale. 
%Our result agrees with the formal perturbative argument of Belyakov (Sov.~Phys.~JETP 13: 850--851 (1961)).
\end{abstract}

\tableofcontents
\section{Introduction}

Dilute quantum gases have attracted a lot of attention in recent years, especially due to advances in the experimental accessibility of ultracold atoms and molecules, and their potential to realize programmable quantum simulators. In this paper, we focus on a dilute gas of fermionic quantum particles. More precisely, we consider $N$ fermions with spin $1/2$ in a box $\Lambda:= [-L/2,L/2]^3$ with periodic boundary conditions. The particles interact via a repulsive pair interaction. The Hamiltonian describing the system is
\begin{equation}\label{eq: hamiltonian}
  H_N := -\sum_{j=1}^N\Delta_{x_j} + \sum_{1\leq i< j\leq N} V(x_i - x_j)
\end{equation}
and acts on the antisymmetric Hilbert space $\bigwedge^N L^2(\Lambda,\mathbb{C}^2)$. Here $V$ is the periodization of a repulsive potential $V_\infty: \mathbb{R}^3 \to \mathbb{R}$. (Detailed assumptions on the potential will be specified later.) We denote by $N_\sigma$ the number of particles with spin $\sigma \in \{\uparrow, \downarrow\}$. Since $H_N$ leaves invariant the subspace $\mathfrak{h}(N_\uparrow, N_\downarrow) \subset \bigwedge^N L^2(\Lambda,\mathbb{C}^2)$ of states with exactly $N_\uparrow$ spin-up and $N_\downarrow$ spin-down particles (where $N = N_\uparrow + N_\downarrow$), we may restrict our analysis to this subspace.
Our analysis is carried out in the thermodynamic limit: we first take $L \to \infty$ while keeping the particle densities $\rho_\sigma := N_\sigma/L^3$ fixed. Then we expand in the dilute regime, where the densities tend to zero in the sense that $a^3 \rho_\uparrow \to 0 $ and $ a^3 \rho_\downarrow \to 0$, with $ a $ denoting the s-wave scattering length of $V$. 

A longstanding open problem concerning the ground state energy density was recently resolved in~\cite{GHNS24,GHNS25} (see also \cite{CWZ25} for the positive temperature setting) by the proof of a conjecture of Huang and Yang from 1957~\cite{HY57}. Denoting by $e(\rho_\uparrow, \rho_\downarrow)$ the ground state energy density, the Huang--Yang formula for the case $\rho_\uparrow=\rho_\downarrow = \rho/2$ reads
\begin{equation}\label{eq: HY formula}
  e\Big(\frac{\rho}{2}, \frac{\rho}{2}\Big) = \frac{3}{5}(3\pi^2)^{\frac{2}{3}}\rho^{\frac{5}{3}} + 2\pi a \rho^2 + \frac{4}{35}(11 -2\log2) (9\pi)^{\frac{2}{3}} a^2 \rho^{7/3} + o(\rho^{\frac{7}{3}}) \quad \text{as }\rho \to 0\;.
\end{equation}
The validity of this formula to order $a\rho^2$ was first proved in \cite{LSS05}
and later reproved with improved accuracy of the error term \cite{FGHP21,Gia23,Gia24} using  bosonization and \cite{Lau25} using cluster expansion. The analysis eventually arrived at the order $a^2\rho^{7/3}$ \cite{GHNS24,GHNS25} and was also extended to positive temperature \cite{CWZ25}.
For the spin-polarized case (i.\,e., $\rho_\downarrow = 0$), the analogue of the order $a\rho^2$ term was proved as an upper \cite{LS24} and a lower \cite{LS24b} bound. For the upper bound, the next term in the expansion is also known, though it should not be thought of as the analogue of the order $a^2\rho^{7/3}$ term in \eqref{eq: HY formula}.

In the present paper, we move beyond the expectation value of the energy and proceed to study generic observables. Naturally, any such analysis starts with the study of one-body observables, and will be restricted to a mesoscopic energy scale, which is in correspondence to a given precision of the energy of the state. More precisely, let $\hat{a}^\ast_{q,\sigma}$ and $\hat{a}_{q,\sigma}$ be the fermionic creation and annihilation operators, respectively, for momentum $q$ and spin $\sigma$ (see \Cref{sec:mathdef} for rigorous definitions). We are going to consider a translation invariant state $\Psi$ in the system's Hilbert space, so any one-body observable can be expressed through the expectation values $\langle \Psi, n_{q,\sigma} \Psi \rangle$ of the operator counting the number of particles with momentum $q$ and spin $\sigma$,
\[
	n_{q, \sigma} := \hat{a}_{q, \sigma}^* \hat{a}_{q, \sigma}\,.
\]
In absence of interactions, these expectation values are simple to compute. In fact, in this case, at fixed $L$, the ground state is the free Fermi gas state
\[
	\psi_{\mathrm{FFG}} := \prod_{\sigma\in \{\uparrow, \downarrow\}}\prod_{k\in\mathcal{B}_{\F}^\sigma} \hat{a}_{k,\sigma}^\ast \Omega\,,
\]
where $\Omega$ is the vacuum vector in fermionic Fock space and
\begin{equation}
\label{eq:fermiball}
	\mathcal{B}_{\F}^\sigma := \Big\{k \in \frac{2\pi}{L}\mathbb{Z}^3\; : \; |k| < k_{\F}^\sigma \Big\}
\end{equation}
is the Fermi ball associated to spin $\sigma \in \{\uparrow, \downarrow\}$. The Fermi momentum for the spin orientation $\sigma \in \{\uparrow, \downarrow\}$ is given by
\[
    k_{\F}^\sigma := (6\pi^2)^{1/3}\rho_\sigma^{1/3} + o_{L \to \infty}(1) \;.
\]
Following \cite{FGHP21, Gia23, Gia24, GHNS24, GHNS25} we restrict our attention to the case of a completely filled Fermi ball, i.\,e., we assume that the particle numbers are such that $N_\sigma = |\mathcal{B}_{\F}^\sigma|$. We then find
\begin{equation}\label{eq: ffg momentum distrubution}
	\langle \psi_{\mathrm{FFG}}, n_{q, \sigma} \psi_{\mathrm{FFG}} \rangle = \begin{cases}
		1 \quad \text{if } q \in  \mathcal{B}_{\F}^\sigma \,,  \\
		0 \quad \text{if } q \notin  \mathcal{B}_{\F}^\sigma \,.
	\end{cases} \;
\end{equation}
However, in the presence of interactions, the ground state (and any state energetically close to the ground state) has a complex entanglement structure where particles are correlated with each other. This makes it tremendously more challenging to compute expectation values---in fact, not only for the ground state, but already for trial states which are constructed to be energetically very close to the ground state. In particular, since the kinetic energy does not have a spectral gap uniform in the system size, already very weak interactions $ V \not= 0$ may change the expectation value \eqref{eq: ffg momentum distrubution} drastically, and even destroy the jump discontinuity typical for Fermi liquids (see \cite{Mig57} for a detailed discussion of the problem).
%In fact it is an open problem to prove that in the system's ground state the number of excitations with respect to \eqref{eq: ffg momentum distrubution} is pointwise much smaller than one, even though this was conjectured already by Landau in 1956 \cite{Lan56}.
For a fermionic gas in a high-density limit, where the interaction is weak and $ \rho \to \infty $ for fixed $ L $ (mean-field scaling limit), the distribution of momenta has been recently computed for two constructions of a trial state approximating the ground state energetically \cite{BL23} and \cite{BLN25}, relying on the bosonization methods of \cite{BNPSS20,BNPSS21,BNPSS21dyn,BPSS22,Ben26} and \cite{CHN21,CHN22,CHN23, CHN24, Chr23PhD}, respectively. The fact that two different constructions of approximate ground states yield essentially the same predictions for the behavior of the momentum distribution supports Landau's proposal \cite{Lan56} that Fermi liquid behavior is universal, at least up to exponentially small corrections resulting from Cooper pairing. The analysis in a completely different parameter regime (dilute instead of high density) would further substantiate this conjecture.
%While any statement about the momentum distribution of the true ground state still remains elusive, we believe that a further explicit example of a low-energy trial state in a different physical regime supports Landau's conjecture that the presence of a Fermi surface (up to much smaller corrections due to Kohn--Luttinger superconductivity) is universal in two and more spatial dimensions.
 This is a further reason motivating us to analyze the expectation values of mesoscopic one-body observables in a the trial state $\Psi \in \mathfrak{h}(N_\uparrow, N_\downarrow)$ employed in~\cite{GHNS24} to establish the optimal upper bound on the expansion~\eqref{eq: HY formula}.
A complete understanding of observables in the true ground state would require a significantly deeper analysis of the system due to the vanishing spectral gap, but we are convinced that our method of extracting the leading order through a conjugation with unitary transformations simplifies this problem, reducing it to showing that the number of further excitations created by the conjugated Hamiltonian is close to zero. We aim to return to this problem in the future.

Our main result provides the thermodynamic limit of the expectation value of one-body observables on a mesoscopic momentum scale in  the trial state constructed by \cite{GHNS24}. In particular, we obtain, via three explicit unitary transformations, Belyakov’s corrected prediction for momenta of the order of the Fermi radius.

\paragraph{Notation.} We denote by $C$ a constant independent of $L$ and $\rho$, whose value may change from line to line. For the Fourier transform of functions $f:\Lambda\rightarrow\mathbb{C}$ we use the convention
\[
	\hat{f}(p) = \int_\Lambda \di x\, f(x)e^{-ip\cdot x} \;, \qquad f(x) = \frac{1}{L^3}\sum_{p\in\Lambda^\ast}\hat{f}(p)e^{ip\cdot x} \;,
\]
with $\Lambda^* := \frac{2\pi}{L}\ZZZ^3$.
For $g:\mathbb{R}^3\rightarrow \mathbb{C}$ instead we denote its Fourier transform by $\mathcal{F}(g)$, using the differing normalization convention
\[
 	\mathcal{F}(g)(p) = \frac{1}{(2\pi)^3}\int_{\mathbb{R}^3} \di x\, g(x) e^{-ip\cdot x} \;.
\]
Moreover, for $f\in L^2(\Lambda, \mathbb{C}^2)$, we write $\hat{f}(\cdot, \sigma) = \hat{f}_\sigma(\cdot)$ and $f(\cdot, \sigma) = f_\sigma(\cdot)$. Often we write $\int$ and $\sum_p$ in place of $\int_\Lambda$ and $\sum_{p\in\Lambda^\ast}$, respectively. Furthermore, we define $f_x(\cdot) := f(\cdot\, - x)$.
 Finally, we write $o_{L\to\infty}(1)$ for any term which vanishes in the limit $L\to \infty$.

\section{Main Result}
\label{sec:mainresult}

Since we consider a translation invariant state, the expectation value of any one-particle observable (in second quantization written as $\di\Gamma(A)$, where $A$ is an operator on the one-particle Hilbert space) is diagonal with respect to momenta; that is, it maybe be computed by integrating a function of momentum against the excitation density operator
\begin{equation} \label{eq:nqexc}
	n^{(\exc)}_{q,\sigma} := \begin{cases} 1- n_{q,\sigma} &\text{if } q \in \mathcal{B}_{\F}^\sigma\,, \\ n_{q,\sigma} &\text{if } q\notin\mathcal{B}_{\F}^\sigma\,,  \end{cases}
\end{equation}
which is unitarily equivalent to $n_{q,\sigma}$, see \Cref{eq:nexc_R_trafo}. In the ground state of the non-interacting system $\psi_\textnormal{FFG}$, as a consequence of \eqref{eq: ffg momentum distrubution}, we do not have any excitations, so
\[
   \langle \psi_\text{FFG}, n^{(\exc)}_{q,\sigma} \psi_\text{FFG} \rangle = 0\,.
\]
Due to the Pauli principle we have $0 \leq n_{q,\sigma} \leq 1$, and therefore $0 \leq n^{(\exc)}_{q,\sigma} \leq 1$. In the thermodynamic limit, we have to restrict attention to a version of the excitation density which is averaged over a mesoscopic momentum scale. For this purpose, we define the averaging function $\hat{g}$ as the restriction to $\Lambda^*$ of some function $\mathcal{F}(g) \in L^\infty (\mathbb{R}^3)\cap L^2(\mathbb{R}^3)$. The mesoscopic observables are then
\begin{equation}\label{eq:ngpro}
	n_{g,\sigma}^{(\exc)}
    := \frac{(2\pi)^3}{L^3} \sum_{p\in\Lambda^\ast }\hat{g}(p)n_{p,\sigma}^{(\exc)}\;, \qquad
    n_g^{(\exc)}
    := \sum_{\sigma\in\{\uparrow, \downarrow\}} n_{g,\sigma}^{(\exc)} \;,
\end{equation}
and the predictions for its ground state expectations~\cite{Bel61} are
\begin{equation} \label{eq:ngBelyakov_2}
\begin{aligned}
    n^{(\Bel)}_{g,\sigma}
	&:= \frac{a^2}{\pi^4} \int_{\RRR^3} \d p
		\int_{|r| < k_{\F}^\sigma < |r+p|} \hspace{-0.3cm}\d r
		\int_{|r'| < k_{\F}^{\sigma^\perp} < |r'-p|}  \hspace{-0.3cm}\d r'\,
		\frac{{\mathcal{F}(g)}(r+p) + {\mathcal{F}(g)}(-r)}{(|r+p|^2 - |r|^2 + |r'-p|^2 - |r'|)^2} \;, \\
    n^{(\Bel)}_{g}
    &:= \sum_{\sigma\in\{\uparrow, \downarrow\}} n^{(\Bel)}_{g,\sigma} \;,
\end{aligned}
\end{equation}
with $\sigma^\perp =~\downarrow$ if $\sigma =~\uparrow$ and $\sigma^\perp =~\uparrow$ if $\sigma =~\downarrow$.
We can now state our main result.
\begin{theorem}[Main Result] \label{thm:main}
Let $ V_\infty \in L^2(\RRR^3) $ be non-negative, radial and compactly supported. Then, there exists a trial state $\Psi\in \mathfrak{h}(N_\uparrow, N_\downarrow)$ such that
\begin{enumerate}
\item it has energy density close to the ground state energy density in the sense that 
\begin{equation} \label{eq:main_energy}
	\limsup_{L \to \infty} \frac{1}{L^3}\left|\langle \Psi, H_N \Psi\rangle - E_{L}(N_\uparrow, N_\downarrow) \right|
    \leq C (\rho^{\frac 73 + \frac{1}{120}}) \qquad \text{as }\rho \to 0\;;
\end{equation}
\item and given $ \kappa > 0 $, there exist $ C_\kappa, C > 0 $ such that as $ \rho_\uparrow + \rho_\downarrow = \rho \to 0 $, we have
\begin{equation} \label{eq:main}
	\limsup_{L \to \infty} \abs{\langle \Psi, n_g^{(\exc)} \Psi\rangle - n^{(\Bel)}_{g}}
	\leq C \|\hat{g}\|_\infty \rho^{\frac{5}{3} + \frac{1}{9}}
        + C_\kappa\rho^{2-\kappa} {\|\hat g\|_2} \;.
\end{equation}
\end{enumerate}
Moreover, if $ \hat{g} \ge 0 $, then there exists $C,c > 0$ such that 
\begin{equation} \label{eq:ngBelyakovbound}
	c {\rho^{\frac 53}_\sigma}  \int_{{\mathbb{R}^3}} \frac{{\mathcal{F}(g)}(\kF^\sigma p)}{1+|p|^4} \di p
    \leq
    n^{(\Bel)}_{g,\sigma}
	\leq C {\rho^{\frac 53}_\sigma} \int_{{\mathbb{R}^3}}  \frac{{\mathcal{F}(g)}(\kF^\sigma p)}{1+|p|^4} \di p
    \;.
\end{equation}
\end{theorem}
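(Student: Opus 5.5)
The trial state will be the one of \cite{GHNS24}, written as $\Psi = R\, T_1 T_2 T_3\,\Omega$. Here $R$ is the particle--hole transformation mapping $\Omega$ to $\psi_{\mathrm{FFG}}$, and the $T_i$ are three explicit unitaries: a quasi-bosonic quadratic transformation $T_1 = e^{B_1 - B_1^*}$ implementing the two-body scattering correlations (kernel given by the scattering solution $\varphi$, cut off at momenta of order $\rho^{-\beta}$), a second quartic/quadratic transformation $T_2$ renormalizing the intermediate momentum scales, and a third transformation $T_3$ producing the $a^2\rho^{7/3}$ correction at momenta of order $\kF$. Part (1), the bound \eqref{eq:main_energy}, is then the energy estimate of \cite{GHNS24} combined with the matching lower bound of \cite{GHNS25}. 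The exponent $\frac{1}{120}$ comes from optimizing the cutoff parameters there, so nothing new is needed beyond quoting those results with the parameter choice fixed.

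For part (2), I use \Cref{eq:nexc_R_trafo}, i.e.\ $R^* n_{q,\sigma} R = n^{(\exc)}_{q,\sigma}$ expressed through $\hat a^*_{q,\sigma}\hat a_{q,\sigma}$ acting on $\Omega$. This reduces the problem to computing $\langle \Omega, T_3^*T_2^*T_1^*\, \mathcal N_g\, T_1T_2T_3\Omega\rangle$, where $\mathcal N_g=\frac{(2\pi)^3}{L^3}\sum_p \hat g(p) a^*_p a_p$ (with particles and holes). I conjugate successively with Duhamel's formula,
\[
T_i^*\mathcal N_g T_i = \mathcal N_g + \int_0^1 \d\lambda\, T_i^{-\lambda}[\mathcal N_g, B_i - B_i^*]T_i^{\lambda},
\]
and expand to second order. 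The first-order terms have vanishing vacuum expectation, or vanish up to errors after the next conjugation. The second-order term $\langle[[\mathcal N_g,B_3],B_3^*]\rangle_\Omega$ produces exactly the Belyakov expression: the kernel of $T_3$ at low momenta is $\frac{a}{L^3}\frac{1}{|r+p|^2-|r|^2+|r'-p|^2-|r'|^2}$ on the particle--hole configurations of both spins. Squaring it and weighting by $\hat g$ at the particle momentum $r+p$ and the hole momentum $-r$ (after the sign convention for holes) gives \eqref{eq:ngBelyakov_2}, and the Riemann sums converge as $L\to\infty$. The contributions of $T_1,T_2$ live at momenta $\gg \kF$. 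Their number of excitations is bounded by $C\rho^{5/3+1/9}$ per unit volume, which gives the $\|\hat g\|_\infty$ term. The remaining quadratic cross terms between the high-momentum kernels and $\hat g$ are controlled by Cauchy--Schwarz, using $\|\hat\varphi\|_2$-type bounds; this yields the $\|\hat g\|_2\rho^{2-\kappa}$ term, where $\kappa$ absorbs logarithms and cutoff powers.

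The main obstacle is controlling the error terms, i.e.\ the third-order Duhamel remainders, uniformly in $L$. For this I would use the a priori estimates of \cite{GHNS24} on $\langle T_\lambda\Omega,\mathcal N T_\lambda\Omega\rangle$ and on the kinetic energy along the flows (Grönwall-type propagation of the number operator). The key point is that $\mathcal N_g$ is bounded by $\|\hat g\|_\infty\mathcal N/L^3$, so every remainder reduces to the excitation-number bounds. The delicate part is the $T_3$ remainder near the Fermi surface, where the energy denominators vanish. There I would first try the same logarithmic-divergence control used in \cite{GHNS24} for the $\rho^{7/3}$ term, with an infrared regularization that is removed at cost $\rho^{-\kappa}$.

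Finally, \eqref{eq:ngBelyakovbound} follows from a direct analysis of \eqref{eq:ngBelyakov_2}. Rescale all momenta by $\kF^\sigma$, which extracts $a^2(\kF)^{5}\sim\rho_\sigma^{5/3}$ up to the ratio of Fermi momenta. Then compute, for fixed outer momentum $k$ (either $k=r+p$ outside the ball or $k=-r$ inside), the measure of admissible $(p,r')$ weighted by the squared denominator. This weight is bounded above and below by $c(1+|k|^4)^{-1}$: at large $|k|$ the denominator is $\sim|k|^2$ and the phase space is bounded, while near the Fermi surface the integral stays finite in three dimensions. Positivity of $\hat g$ then gives both inequalities.
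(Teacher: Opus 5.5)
\textbf{Gap: the error-control mechanism you name cannot reach the main term.} The reduction via $R$, the Duhamel expansion to second order, and reading off Belyakov's integral from the square of the low-momentum kernel all match the paper. The failure is in the errors. You rely on $|\langle\psi,n_g\psi\rangle|\le C\|\hat g\|_\infty L^{-3}\langle\psi,\cN\psi\rangle$ together with the a priori bound $\langle T_{1;\lambda}T_2\Omega,\cN\, T_{1;\lambda}T_2\Omega\rangle\le C_\kappa L^3\rho^{14/9-\kappa}$ (Lemma~\ref{lem: est number op}). This gives only $C_\kappa\|\hat g\|_\infty\rho^{14/9-\kappa}$, which is larger than the main term $n^{(\Bel)}_g\le C\|\hat g\|_\infty\rho^{5/3}$ itself, let alone the claimed $\rho^{5/3+1/9}$. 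An operator bound on $n_g$ also says nothing about the triple commutators in your third-order remainder.

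What is missing is the explicit normal-ordered computation of the double commutators. The paper stops at second order with exact integral remainders (Lemma~\ref{lem:Duhamelexpansion}); the first-order terms vanish identically on $\Omega$. It writes each of $[[n_g,B_i-B_i^*],B_j-B_j^*]$ as a c-number plus terms $\mathrm{I}_2,\dots,\mathrm{I}_9$, and shows that every operator term is bounded by $L^{-3}\rho^{\gamma}\cN$ with a strict gain $\gamma>0$:
\begin{itemize}
\item $\rho^{1/2-\kappa}\|\hat g\|_\infty$ for $B_2B_2$;
\item $\rho^{5/9-\kappa}\|\hat g\|_\infty$ for $B_1B_2$;
\item $\rho^{7/9}\|\hat g\|_\infty+\rho^{4/9}\|\hat g\|_2$ for $B_1B_1$.
\end{itemize}
The gain comes from the extra fermionic operators, whose norms are $\|\hat v_{t,\sigma}\|_2$, $\|\hat u_{t,\sigma}\|_2$ or $\|b_\sigma(\varphi^>_z,v,u^>)\|\le C\rho^{7/18}$, integrated against the $t,s$-representation of $\hat\eta^\varepsilon$ (Lemmas~\ref{lem:t} and~\ref{lem:ts}). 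The $\rho^{2-\kappa}\|\hat g\|_2$ term also has a specific origin. In the $B_1B_1$ terms $\hat u$ is not in $\ell^2$, so one splits $u*g=g-v*g$, pays $\rho^{4/9}\|\hat g\|_2$, and multiplies by $\rho^{14/9-\kappa}$.

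\textbf{Analytic inputs asserted rather than proved.} The constant the state produces is the regularized one, built from $\hat\eta^{\varepsilon}_{r,r'}(p)\widehat{\chi}_<(p)$ with $\varepsilon=\rho^{2/3+\delta}$; it is not Belyakov's integral, as you claim. Removing $\widehat{\chi}_<$ costs $\rho^{5/3+1/9}$. Removing $\varepsilon$ is not a logarithmic $\rho^{-\kappa}$ issue: finiteness of $\mathrm{I}_2$ alone gives no rate. The paper uses $|A^{-2}-(A+2\varepsilon)^{-2}|\le C\varepsilon A^{-3}$ together with finiteness of the cubic integral, $\mathrm{I}_3\le Cx^{-3}$ (Lemma~\ref{lem:belyakov_integral_bound}). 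That proof needs a delicate analysis of intersections of thin annuli near the degenerate configurations $|r+r'|\approx 1\pm x$.

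Likewise, for \eqref{eq:ngBelyakovbound}, your claim that ``the integral stays finite near the Fermi surface'' is the entire content of Lemma~\ref{app:belyakov_integral_bound_q}. For fixed $q=r+p$ with $|q|\to1$ the naive phase-space count blows up, and uniformity again requires the annuli geometry.

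\textbf{Trial state misidentified.} The state of \cite{GHNS24} is $RT_1T_2\Omega$, with only two quasi-bosonic transformations: $T_1$ with kernel $\hat\varphi\,\widehat{\chi}_>$ at $|p|\gtrsim\rho^{2/9}$, and $T_2$ with kernel $\hat\eta^\varepsilon\widehat{\chi}_<$. There is no intermediate-scale transformation; your $T_3$ is the paper's $T_2$.
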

\begin{remark}[Energy Precision]
We use a trial state whose energy density approximates the ground state energy density with an error that is subleading relative to the Huang-Yang correction, as is evident from \Cref{eq:main_energy} and~\Cref{eq: HY formula}.
\end{remark}
The proof of \cref{thm:main} is given in \cref{sec.proof.prop.main}, where we finalize it in \cref{subsec:proofmain}. As trial state $ \Psi $, we use the one of~\cite{GHNS24}, defined in \Cref{subsec:trialstate}. The energy statement \Cref{eq:main_energy} follows from~\cite{GHNS24,GHNS25}, while our novel contributions are \Cref{eq:main} and \Cref{eq:ngBelyakovbound}.

An interesting special case for $n_g^{(\exc)}$ is the excitation density averaged over a small ball in momentum space: Let $q\in (2\pi / L)\mathbb{Z}^3$ and $0< \alpha < 1/27$ (see Remark \ref{rem: alpha} for a discussion of the choice of the parameter $\alpha$) and let $\chi_{q,\alpha,\sigma}$ be the periodic function on $\Lambda$ given by the Fourier coefficients $\widehat{\chi}_{q,\alpha,\sigma}(k) = \mathcal{F}(\chi_{q,\alpha,\sigma})(k)$, $k\in \Lambda^*$, where 
\begin{equation} \label{eq:chi_q_alpha}
    \mathcal{F}({\chi}_{q,\alpha,\sigma})(k) := 
    \begin{cases} 1 &\text{if } k \in \RRR^3 \text{ satisfies } |k-q| \leq \rho^{\frac{1}{3} + \alpha}_\sigma\,, \\ 0&\text{for all other } k \in \RRR^3\,.\end{cases}
    \qquad 
\end{equation}
This corresponds to a ball in momentum space centered at $q$; as $\rho \to 0$, its radius $\rho^{\frac 13 + \alpha}$ is much smaller than the radius of the Fermi ball $ k_{\F}^\sigma \sim \rho^{1/3} $. 
The observables corresponding to the choices $\hat g = \widehat{\chi}_{q,\alpha,\sigma}$ and $\hat g = \widehat{\chi}_{q,\alpha,\uparrow} + \widehat{\chi}_{q,\alpha,\downarrow}$ are
%\footnote{The factor $(2\pi)^3/L^3$ on the right-hand side of \eqref{eq:nexcg} ensures that, in the limit $L\rightarrow \infty$, the quantity $n_{q,\alpha}^{\mathrm{exc}}$ converges to an integral over $\mathbb{R}^3$ with respect to $\d p$ without any additional prefactor.}
\begin{equation}\label{eq:nexcg}
    n^{(\exc)}_{q,\alpha,\sigma}
    := \frac{(2\pi)^3}{L^3}\sum_{p\in\frac{2\pi}{L}\mathbb{Z}^3}\widehat{\chi}_{q,\alpha,\sigma}(p)\, n^{(\exc)}_{p,\sigma} \;, \qquad
    n^{(\exc)}_{q,\alpha}
    := \sum_{\sigma\in\{\uparrow, \downarrow\}} n^{(\exc)}_{q,\alpha,\sigma} \;.
\end{equation}
Belyakov's \cite{Bel61} formula for the expectation value, averaged with the same $\hat g$, takes the form
\begin{equation} \label{eq:ngBelyakov}
\begin{aligned}
	n^{(\Bel)}_{q,\alpha, \sigma}
	&:= \frac{a^2}{\pi^4} \int_{\RRR^3} \! \d p
		\int_{|r| < k_{\F}^\sigma < |r+p|} \! \d r
		\int_{|r'| < k_{\F}^{\sigma^\perp} < |r'-p|} \! \d r'
		\frac{\mathcal{F}({\chi}_{q,\alpha,\sigma})(r+p) + {\mathcal{F}({\chi}_{q,\alpha,\sigma})}(-r)}{(|r+p|^2 - |r|^2 + |r'-p|^2 - |r'|^2)^2} \;, \\
    n^{(\Bel)}_{q,\alpha}
    &:= \sum_{\sigma\in\{\uparrow, \downarrow\}} n^{(\Bel)}_{q,\alpha, \sigma} \;,
\end{aligned}
\end{equation}
where we recall $\sigma^\perp =~\downarrow$ if $\sigma =~\uparrow$ and $\sigma^\perp =~\uparrow$ if $\sigma =~\downarrow$. Our main result for this averaging is as follows.

\begin{corollary} \label{cor:main}
Let $ V_\infty \in L^2(\RRR^3) $ be non-negative, radial and compactly supported. Then, given any $q\in (2\pi /L)\mathbb{Z}^3$ and $0< \alpha< 1/27$, for the trial state $\Psi\in \mathfrak{h}(N_\uparrow, N_\downarrow)$ from Theorem~\ref{thm:main} we have
\begin{equation} \label{eq:main_cor_1}
	\limsup_{L\to\infty}\abs{
        \langle \Psi, n^{(\exc)}_{q,\alpha} \Psi\rangle
        - n^{(\Bel)}_{q,\alpha}}
	\leq C \rho^{\frac 53 + \frac 19}  \qquad \text{as }\rho \to 0\;,
\end{equation}
%where $n_{q,\alpha}^{\mathrm{exc}}$ and $n^{(\Bel)}_{q,\alpha}$ were defined in \eqref{eq:nexcg} and \eqref{eq:ngBelyakov}, respectively.
Furthermore, there exist constants $c, C > 0$ such that for all $|q| \leq C \rho^{\frac 13}_\sigma$, we have
\begin{equation} \label{eq:main_cor_leadingorder}
	c {\rho^{\frac 53 + 3 \alpha}_\sigma}
	\leq n^{(\Bel)}_{q,\alpha, {\sigma}}
	\leq C {\rho^{\frac 53 + 3 \alpha}_\sigma} \;.
\end{equation} 
\end{corollary}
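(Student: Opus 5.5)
Corollary~\ref{cor:main} is Theorem~\ref{thm:main} specialized to $\hat g = \widehat\chi_{q,\alpha,\uparrow}+\widehat\chi_{q,\alpha,\downarrow}$, plus a direct estimate of the explicit Belyakov integral. Both $\widehat\chi_{q,\alpha,\sigma}$ are indicators, so $\|\hat g\|_\infty\le 2$.

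For \eqref{eq:main_cor_1} I would bound the $L^2$ norm, interpreted in the thermodynamic limit as $\|\mathcal F(g)\|_{L^2(\RRR^3)}$. This is bounded by the square root of the volume of a ball of radius $\rho_\sigma^{1/3+\alpha}\le\rho^{1/3+\alpha}$, so $\|\hat g\|_2\le C\rho^{1/2+3\alpha/2}$. The second error term in \eqref{eq:main} is then $C_\kappa\rho^{5/2+3\alpha/2-\kappa}$. Choosing $\kappa$ small (say $\kappa=1/2$), this is $o(\rho^{5/3+1/9})$, because $5/2-\kappa>5/3+1/9$. The first term is $2C\rho^{5/3+1/9}$. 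Adding the two gives \eqref{eq:main_cor_1}, since the averaging in \eqref{eq:nexcg} agrees with \eqref{eq:ngpro} for this $\hat g$. The cross terms need no separate treatment: in \eqref{eq:ngpro} each spin is paired only with its own $\hat g$, so summing $\widehat\chi_{q,\alpha,\uparrow}+\widehat\chi_{q,\alpha,\downarrow}$ only adds the opposite-spin indicator to each spin component. This is harmless, because Theorem~\ref{thm:main} holds for any $\hat g$. If one wants exactly $n^{(\Bel)}_{q,\alpha}$ as defined, one applies the theorem separately to each spin by taking $\hat g$ to be the indicator for that spin, which is possible since the proof is spinwise.

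For \eqref{eq:main_cor_leadingorder} I would use \eqref{eq:ngBelyakovbound} with $\hat g=\widehat\chi_{q,\alpha,\sigma}\ge0$. The task is to show
\[
\int_{\RRR^3}\frac{\mathcal F(\chi_{q,\alpha,\sigma})(\kF^\sigma p)}{1+|p|^4}\,\d p\sim\rho_\sigma^{3\alpha}.
\]
In the variable $p$, the integrand is the indicator of a ball centered at $q/\kF^\sigma$ with radius $\rho_\sigma^{1/3+\alpha}/\kF^\sigma\sim\rho_\sigma^\alpha$, so its volume is $\sim\rho_\sigma^{3\alpha}$. The hypothesis $|q|\le C\rho_\sigma^{1/3}$ confines the center to a fixed bounded region, $|q/\kF^\sigma|\le C'$. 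On the ball the weight $(1+|p|^4)^{-1}$ therefore lies between $(1+(C'+1)^4)^{-1}$ and $1$. This gives matching upper and lower bounds $\sim\rho_\sigma^{3\alpha}$, and multiplying by $\rho_\sigma^{5/3}$ yields the claim.

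The corollary has no real obstacle beyond what is already in Theorem~\ref{thm:main}. The one point needing care is interpreting $\|\hat g\|_2$: it must be read as the continuum $L^2$ norm obtained in the $L\to\infty$ limit, or equivalently as the Riemann sum with weight $(2\pi/L)^3$, so that its scaling is really $\rho^{1/2+3\alpha/2}$. The assumption $\alpha<1/27$ is not needed for these estimates and is presumably inherited from constraints inside the proof of the theorem.
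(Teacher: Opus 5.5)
Your proposal is correct and follows the paper's proof: insert the indicator into \eqref{eq:main} using $\|\hat g\|_\infty\le C$ and $\|\hat g\|_2\le C\rho^{1/2+3\alpha/2}$, and obtain \eqref{eq:main_cor_leadingorder} from \eqref{eq:ngBelyakovbound}. For the latter, the rescaled ball has volume $\sim\rho_\sigma^{3\alpha}$ and lies in a bounded region where $(1+|p|^4)^{-1}$ is comparable to $1$; your spinwise application of the theorem is the right way to get exactly $n^{(\Bel)}_{q,\alpha}$ when $\rho_\uparrow\neq\rho_\downarrow$, a point the paper glosses over. One small correction: $\alpha<1/27$ is not inherited from the proof of the theorem but is exactly the condition $3\alpha<1/9$ that makes the error $\rho^{5/3+1/9}$ subleading to $n^{(\Bel)}_{q,\alpha,\sigma}\sim\rho_\sigma^{5/3+3\alpha}$ (Remark~\ref{rem: alpha}).
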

\begin{remark}[Mesoscopic Scale]\label{rem: alpha} According to \eqref{eq:main_cor_1} and \eqref{eq:main_cor_leadingorder}, we are interested in $0< \alpha < 1/27$ for the error term to be subleading with respect to $n^{\mathrm{(Bel)}}_{q,\alpha}$. 
\end{remark}

In the physics literature, Czy\.z and Gottfried \cite{CG61} computed expectation values of $n_{q,\sigma}$, along with a numerical evaluation at selected values of the momentum $ |q| $. Belyakov~\cite{Bel61} subsequently proposed explicit formulas; however, both expressions in the regime  $k_{\F}^\sigma < |q| < \sqrt{3} k_{\F}^\sigma$  were evaluated incorrectly in~\cite{Bel61}. Sartor and Mahaux re-analyzed Belyakov’s integral~\cite{SM80} and eventually \cite{SM82} corrected the error. Figure~\ref{fig:Sartor_Mahaux_plot} shows a plot of their corrected prediction.
\begin{figure}[h]
    \centering
    \includegraphics[width=0.42\linewidth]{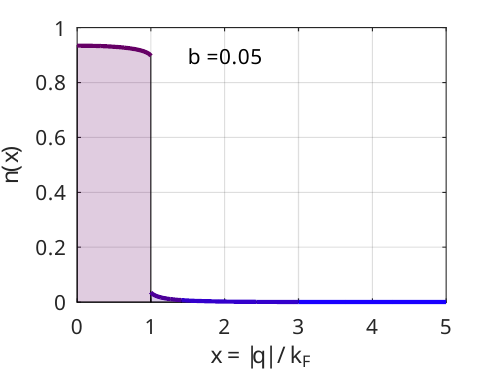}
    \hspace{2em}
    \includegraphics[width=0.42\linewidth]{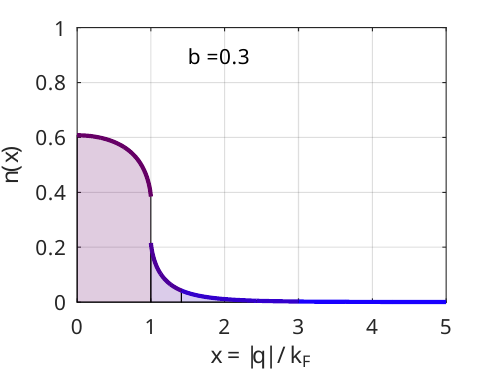}
    \caption{Plot of the expecation value $n(x) := \langle a_{q,\sigma}^* a_{q, \sigma} \rangle$ rescaled according to $x := |q|/k_{\F}$, as predicted by~\cite{SM80,SM82} for equal spin populations $\rho_\uparrow = \rho_\downarrow = \rho/2$ and for two values of the dimensionless density parameter $b := (3 \rho a^3 /\pi)^{2/3}$.}
    % Notice that there are four different formulas for $x$ in the intervals $(0,1)$, $(1,\sqrt{2})$, $(\sqrt{2},3)$, and $(3,\infty)$.
    \label{fig:Sartor_Mahaux_plot}
\end{figure}
%

%Let us add a few remarks before proving \Cref{cor:main} from \Cref{thm:main}.

%\begin{remark}[Microscopic scale]\label{rem: averaging} 
%One might attempt to reach the microscopic scale by introducing $\widehat{\chi}_{q,L}(k) = \mathcal{F}(\chi_{q,L})(k)= \chi (|k - q| < \pi/ L) L^3$ and replacing $\widehat{\chi}_{q,\alpha}(\cdot)$ in \eqref{eq:ngBelyakov} with $\widehat{\chi}_{q,L}$. 
%This would yield Belyakov's original prediction for $ \langle \Psi, n^{\mathrm{exc}}_{q,\sigma} \Psi \rangle $. However, the leading-order contribution $ n^{(\Bel)}_{q,\alpha,\sigma} $ is then independent of $ L $, while the error would scale as $L^3 $ (see also Remark \ref{rem: key pro}). 
%However, then the error would become much larger than the leading term as $ L \to \infty $.
%\end{remark}
\begin{remark}[Fermi surface]\label{rem: jump} 
%\textcolor{blue}{Although Belyakov's formula predicts a discontinuity of $n_{q,\sigma}$ at the Fermi surface, 
%Our averaged result does not prove the existence of a discontinuity of $n_{q,\sigma}$ at the Fermi surface:
%It is always possible to smoothen out $n_{q,\sigma}$ at an arbitrarily small scale, causing an arbitrarily small change of $n_{q,\alpha, \textcolor{magenta}{\sigma}}^{\mathrm{exc}}$. Proving a discontinuity of $n_{q,\sigma}$ would require a removal of the averaging, which is very difficult, see the previous remark.}

Our result implies a lower bound on the slope of the momentum distribution $\langle \Psi, n_{q,\sigma} \Psi \rangle$ at the Fermi momentum, in the sense that there exist two points $|q_1| > k_{\F}^\sigma$ and $|q_2| < k_{\F}^\sigma$ such that $\frac{|\langle \Psi, (n_{q_1,\sigma}-n_{q_2,\sigma}) \Psi \rangle|}{|q_1-q_2|} > c \rho^{-\frac 13 - \alpha}$. This is a remnant of the jump discontinuity in the momentum distribution of the non-interacting Fermi gas.
\end{remark}
%\begin{remark}[Validity of Belyakov's formula for $|q|\leq C\rho^{1/3}_\sigma$] Note that due to the choice of the localization function (see \eqref{eq:chi_q_alpha}), the bounds in \eqref{eq:main_cor_leadingorder} are meaningful only for momenta $q$ such that $|q|\leq C\rho^{1/3}_\sigma$.
%\end{remark}
\begin{remark}
\Cref{cor:main} shows that for a particular trial state (the one of \cite{GHNS24}), the expectation value agrees with that conjectured in the physics literature for the ground state, for momenta satisfying  $|q|\leq C\rho^{1/3}_\sigma$ and resolved on the mesoscopic scale.
The simpler trial states used in \cite{LSS05,FGHP21,Gia23,Lau25} cannot be expected to give the same result. In fact, Belyakov's formula for the excitation density arises from second order perturbation theory  in the state \cite{Bel61}. 
Second order perturbation theory in the state corresponds to third order perturbation theory in the energy. Thus, we expect Belyakov's formula to hold for trial states resolving the ground state energy to third order perturbation theory. 
The trial states of \cite{LSS05,FGHP21,Gia23,Lau25} resolve the ground state energy to order $a\rho^2$. This term arises from second order perturbation theory (see also \cite{LS24b} for the analogous term in the spin-polarized setting). Hence, one should not expect the trial states of \cite{LSS05,FGHP21,Gia23,Lau25} to reproduce the Belyakov term. More specifically yet, from a calculation similar to the present paper one sees that the trial states of \cite{FGHP21,Gia23} yield the expectation value at the same order as the Belyakov term, but with the wrong constant. For the Jastrow-type trial states of \cite{LSS05,Lau25}, the formal expansion of \cite[Eq.~(4.9)]{GGR71} also suggests an excitation density of the same order as the Belyakov term, but with the wrong constant. 
\end{remark}

\subsection{Proof of Corollary~\ref{cor:main}}
\label{sec:key_proposition_and_proof_of_theorem_ref_thm_main}

\begin{proof}[Proof of \Cref{cor:main}]
The validity of Beliakov's formula~\eqref{eq:main_cor_1} follows by plugging $ \hat{g}(p) = {\widehat{\chi}_{q,\alpha,\sigma}(p)} $ into~\eqref{eq:main}, where (compare~\eqref{eq:chi_q_alpha}) $\norm{\hat{g}}_\infty \le C$ and $\norm{\hat{g}}_2 \le C \rho^{{ \frac 12 + \frac 32 \alpha}}$.
The bound on $n^{(\Bel)}_{q,\alpha}$ follows from~\eqref{eq:ngBelyakovbound}
by choosing $ \mathcal{F}(g)(\cdot) = \mathcal{F}({\chi}_{q,\alpha,\sigma})(\cdot)$.
Indeed 
\[
    \int_{\mathbb{R}^3} \!\d p \frac{\mathcal{F}(\chi_{q,\alpha,\sigma})(\kF^\sigma  p)}{1+|p|^4} = \int_{|\kF^\sigma p - q| \leq \rho^{\frac{1}{3} + \alpha}_\sigma}  \frac{\d p}{1+|p|^4} = \frac{1}{(\kF^{\sigma})^3} \int_{|s-q|< \rho^{\frac{1}{3} + \alpha}_\sigma}  \frac{\d s}{1+|(\kF^\sigma)^{-1}s|^4} \;.
\]
If $q = \mathcal{O}(\rho^{1/3}_\sigma)$, a further change of variables yields
\[
    \frac{1}{(\kF^{\sigma})^3} \int_{|s-q|< \rho^{\frac{1}{3} + \alpha}_\sigma} \d s \frac{1}{1+|(\kF^\sigma)^{-1}s|^4}\leq C \frac{\rho^{1+3\alpha}_\sigma}{(\kF^\sigma)^3} \leq C \rho^{3\alpha}_\sigma\;.
\]
Moreover, since $q = \mathcal{O}(\rho^{1/3}_\sigma)$ and $|s-q| \leq \rho^{1/3 + \alpha}_\sigma$, we have that $|s|\leq C\rho^{1/3}_\sigma$ and thus
\[
    \frac{1}{(\kF^{\sigma})^3} \int_{|s-q|< \rho^{\frac{1}{3} + \alpha}_\sigma} \d s \frac{1}{1+|(\kF^\sigma)^{-1}s|^4}\geq C \frac{\rho^{1+3\alpha}_\sigma}{(\kF^\sigma)^3} = C \rho^{3\alpha}_\sigma \;.
\]
Therefore
\[
     C \rho^{3\alpha}_\sigma\leq \int_{\mathbb{R}^3} \d p \frac{\mathcal{F}(\chi_{q,\alpha,\sigma})(\kF^\sigma  p)}{1+|p|^4} \leq C \rho^{3\alpha}_\sigma\qquad \mbox{for}\,\,\, |q| \leq C\rho^{\frac{1}{3}}_\sigma \;.
\]
Using this bound in \eqref{eq:main} yields \eqref{eq:main_cor_1}.
\end{proof}

\section{Notation and Trial State}
\label{sec:mathdef}
Our remaining task is to prove \cref{thm:main}. We use second quantization to construct a trial state through unitary operators defined as exponentials of creation and annihilation operators. To compute expectation values, these exponentials will be expanded to second order and the error terms estimated in position space representation.

\subsection{Second Quantization}
We briefly review second quantization. The fermionic Fock space is
\begin{equation}
	\cF_{\mathrm{f}}
	:= \bigoplus_{N = 0}^\infty L^2(\Lambda ; \CCC^2)^{\wedge N} \;.
\end{equation}
The vacuum vector in Fock space is denoted by $\Omega = (1, 0, 0, 0,  \ldots)$.
We denote by $ a^*(f)$ and $a(f) $ the standard fermionic creation and annihilation operators acting on Fock space and satisfying the canonical anticommutation relations (CAR), i.\,e., for all $f, g \in L^2(\Lambda ; \CCC^2)$ we have
\begin{equation}
\label{eq:CAR}
	\{ a(f), a^*(g)\}
	= \langle f, g \rangle \;, \qquad
	\{ a(f), a(g) \}
	= 0
	= \{ a^*(f), a^*(g) \}
	\;.
\end{equation}
Since we consider fermions, we have the operator norm bounds
\begin{equation}\label{eq: Pauli}
\Vert a^*(f) \Vert = \Vert f \Vert_2 \quad \text{and} \quad \Vert a(f) \Vert = \Vert f \Vert_2 \;.
\end{equation}
 %In particular these operators are defined on all of $\cF_{\mathrm{f}}$.
 In the following, we use the notation $a_\sigma(f) := a(\delta_\sigma f)$.

 Let $\sigma \in \{\uparrow, \downarrow\}$ and $k\in \Lambda^\ast$. Writing $(\delta_\sigma f_k)(x,\sigma^\prime) = \delta_{\sigma, \sigma^\prime}L^{-3/2}e^{ik\cdot x}$, we define the corresponding annihilation operators
\begin{equation}
\label{eq:aadagger}
	\hat{a}_{k, \sigma}
    := a(\delta_\sigma f_{k})
    := \frac{1}{L^{3/2}}\int_\Lambda \di x\, a_{x,\sigma}e^{-ik\cdot x}\,,
\end{equation}
where we introduced the operator-valued distributions $a_{x,\sigma} := a(\delta_{x,\sigma})$ with $\delta_{x,\sigma}(y,\sigma^\prime) = \delta_{\sigma, \sigma^\prime}\delta(x-y)$.
The number operators are
\begin{equation}
\label{eq:cN}
	\mathcal{N}
    := \sum_{\sigma\in \{\uparrow, \downarrow\}}\mathcal{N}_\sigma\,, \qquad
    \cN_\sigma
    := \sum_{k \in \Lambda^\ast} \hat{a}_{k, \sigma}^* \hat{a}_{k, \sigma}\,.
\end{equation}

\subsection{Construction of the Trial State}
\label{subsec:trialstate}

In this section we construct the trial state used in \cref{thm:main}, which is the one used in~\cite{GHNS24} to derive the ground state energy density up to Huang--Yang precision. The construction applies three unitary transformations $R$, $T_1$, and $T_2$ to the vacuum vector $\Omega \in \cF_{\mathrm{f}}$, so the final trial state is
\begin{equation*}
    \Psi = R T_1 T_2 \Omega \;.
\end{equation*}
\paragraph{The particle--hole transformation $R$.} The unitary operator $R$ can be used to represent the non-interacting ground state. It is defined via two orthogonal projections $u,v:L^2(\Lambda, \mathbb{C}^2)\rightarrow L^2(\Lambda, \mathbb{C}^2)$ with integral kernels
\begin{equation}\label{eq: def uv}
	u_{\sigma,\sigma^\prime}(x;y)
    := \frac{\delta_{\sigma,\sigma^\prime}}{L^3}\sum_{k\notin \mathcal{B}_{\F}^\sigma}e^{ik\cdot (x-y)}\,,\qquad
    v_{\sigma,\sigma^\prime}(x;y)
    := \frac{\delta_{\sigma,\sigma^\prime}}{L^3}\sum_{k\in \mathcal{B}_{\F}^\sigma}e^{ik\cdot (x-y)}\,.
	\end{equation}
These projections satisfy $uv = vu = 0$, and $v$ is the one-particle reduced density matrix of $\psi_\textnormal{FFG}$, whereas $u = 1 - v$. For simplicity we write $u_\sigma := u_{\sigma,\sigma}$ and  $v_\sigma := v_{\sigma,\sigma}$, and introduce the functions $u_{x,\sigma}(y) := u_\sigma(y;x)$ and $v_{x,\sigma}(y) := v_\sigma(y;x)$.
In Fourier space, $u$ and $v$ can be represented as multiplications by
\begin{equation}\label{eq: def u hat, v hat}
    \hat{u}_\sigma(k)
    := \begin{cases}
      0 &\text{if } |k| \leq k_{\F}^\sigma\,, \\ 1 &\text{if } |k| > k_{\F}^\sigma\,,
  \end{cases}\qquad
    \hat{v}_\sigma(k)
    := \begin{cases} 1 &\text{if } |k| \leq k_{\F}^\sigma\,, \\ 0 &\text{if } |k| > k_{\F}^\sigma\,. \end{cases}
\end{equation}
We will view $u_\sigma$ and $v_\sigma$ also as functions with these Fourier coefficients. Since the integrals kernels $u_\sigma(x;y)$ and $v_\sigma(x;y)$ depend only on $x-y$, the notation $u_{x,\sigma}$ and $v_{x,\sigma}$ is consistent with the notation $f_x(y) = f(y-x)$ for a function $f$ introduced earlier.
\begin{definition}[Particle--Hole Transformation]\label{def: particle hole} For $u$ and $v$ given by \eqref{eq: def uv}, the particle--hole transformation $R: \mathcal{F}_{\mathrm{f}}\rightarrow \mathcal{F}_{\mathrm{f}}$ is the unitary operator defined by
\begin{equation}\label{eq: R*aR}
    R\Omega := \psi_{\textnormal{FFG}} \;, \qquad
	R^\ast a^\ast_{x,\sigma} R
    := a^\ast_\sigma(u_{x,\sigma}) + a_\sigma(v_{x,\sigma})\,.
\end{equation}
\end{definition}
Note that \eqref{eq: R*aR} is equivalent to
\[
	R^\ast \hat{a}_{k,\sigma} R = \hat{a}_{k,\sigma} \text{ if } k\notin\mathcal{B}_{\F}^\sigma
	\quad\text{and}\quad
	R^\ast \hat{a}_{k,\sigma} R = \hat{a}^*_{-k,\sigma} \text{ if } k\in\mathcal{B}_{\F}^\sigma\;.
\]
Moreover the reader may verify that $R = R^*$.
\paragraph{Quasi-bosonic Bogoliubov transformations $T_1$ and $T_2$.} The unitary operators $T_1$ and $T_2$ are defined in terms of quasi-bosonic operators, which are defined as pairs of fermionic annihilation or creation operators. Specifically, for $ \sigma \in \{\uparrow, \downarrow\} $ and $p,r \in \Lambda^*$,
\begin{equation}\label{eq: def bp bpr}
b_{p,r,\sigma}
:= \hat{u}_\sigma(p+r)\hat{v}_\sigma(r)\hat{a}_{p+r,\sigma}\hat{a}_{-r,\sigma}\;, \qquad
b_{p,\sigma}
:= \sum_{r\in\Lambda^*}b_{p,r,\sigma}\,,
\end{equation}
with adjoints $ b_{p,r,\sigma}^\ast := (b_{p,r,\sigma})^\ast $ and $ b_{p,\sigma}^\ast := (b_{p,\sigma})^\ast $.
Both $T_1$ and $T_2$ are quadratic exponentials in these quasi-bosonic operators. However, $T_1$ governs the high-momentum behavior and uses a less refined kernel, while $T_2$ is tailored to low momenta and employs a more refined kernel. The kernels will be defined using the scattering solutions.
\begin{definition}[Scattering Solutions]\label{def:scattering-phi-eta} We  define the periodic function $\varphi:\Lambda\to \mathbb{C}$ by
$\hat \varphi(p)=\mathcal{F}(\varphi_\infty)(p)$ for $p\in \Lambda^* \setminus \{0\}$, namely
\begin{equation}\label{eq: def phi}
  \varphi(x) := \frac{1}{L^3}\sum_{p\in\Lambda^*\setminus\{0\}}\mathcal{F}(\varphi_\infty)(p) e^{ip\cdot x}\,,
\end{equation}
where $\varphi_\infty:\mathbb{R}^3\to \mathbb{R}$ is the solution of the zero-energy scattering equation
\begin{equation}
\label{eq:scatteringequations}
		2 \Delta \varphi + V_\infty (1 - \varphi)  = 0 \;, \qquad \varphi(x) \to 0 \quad \text{as } |x| \to \infty\,.
\end{equation}
Moreover, for $\varepsilon := \rho^{2/3 +\delta}$, with an arbitrarily  large parameter\footnote{The constraint on $\delta$ is relaxed compared to \cite[Prop.~4.1, Prop.~4.9, and Sec.~5]{GHNS24}. We require only $\delta > 2/9$ instead of $16/63 < \delta < 8/9$, as a consequence of the improved estimate of the number operator; compare Lemma~\ref{lem: est number op} to \cite[Proposition 3.5]{GHNS24}.
%, while still guaranteeing the same approximation for the ground state energy.
}
${\delta > 2/9}$, we define
\begin{equation}\label{eq: def eta epsilon}
  \hat{\eta}_{r,r^\prime}^\varepsilon (p) := \frac{8\pi a}{ \lambda_{r,p} +  \lambda_{r',-p} + 2\varepsilon}\,,\qquad \lambda_{r,p} := \left\lvert \lvert r+p\rvert^2 - \lvert r\rvert^2 \right\rvert \,.
\end{equation}
\end{definition}
The functions $\hat{\varphi}$ and $\hat{\eta}^\varepsilon_{r,r'}$ describe scattering of excitations in the high- and low-momentum regimes, respectively.
To distinguish the two regimes, we follow~\cite{GHNS24} and introduce smooth cutoff functions $\mathcal{F}(\chi_{<,\infty})$ and $\mathcal{F}(\chi_{>,\infty})$ satisfying
\begin{equation}\label{eq: def chi< chi>}
\mathcal{F}({\chi}_{<,\infty})(p) := \begin{cases}
                1 &\mbox{if}\,\,\,|p| < 4\rho^{\frac{2}{9}}\,, \\
                0 &\mbox{if}\,\,\, |p| > 5 \rho^{\frac{2}{9}}\,,
              \end{cases}
\qquad
\mathcal{F}({\chi}_{>,\infty})(p) := \begin{cases}
                0 &\mbox{if}\,\,\, |p| < 4\rho^{\frac{2}{9}}\,,\\
                1 &\mbox{if}\,\,\, |p| > 5\rho^{\frac{2}{9}}\,.
              \end{cases}
\end{equation}
We write $\widehat{\chi}_<$ and $\widehat{\chi}_>$ for the finite volume analogues of $\mathcal{F}(\chi_{<,\infty})$ and $\mathcal{F}(\chi_{>,\infty})$.

\begin{definition}[Quasi-Bosonic Transformations $T_1$ and $T_2$]\label{def:bosonic-transformations} Let $\lambda\in \mathbb{R}$. We define
\begin{align}\label{eq: def T1 unitary}
T_{1;\lambda} & :=  \exp \left(\lambda(B_1 - B_1^\ast)\right)\,,\quad B_1 :=  \frac{1}{L^3}\sum_{p\in \Lambda^*}\hat{\varphi}(p)\widehat{\chi}_>(p)\, {b}_{p,\uparrow}{b}_{-p,\downarrow}\,, \\
\label{eq: def T2 unitary}
T_{2;\lambda} & :=  \exp \left(\lambda(B_2 - B_2^\ast)\right)\,,\quad B_2 :=  \frac{1}{L^3}\sum_{p,r,r^\prime \in \Lambda^*}\hat{\eta}_{r,r^\prime}^\varepsilon(p)\widehat{\chi}_<(p) b_{p,r,\uparrow}b_{-p,r^\prime,\downarrow}\,.
\end{align}
The functions $\hat{\varphi}$ and $\hat{\eta}^\varepsilon_{r,r^\prime}$ were introduced in \cref{def:scattering-phi-eta}. We also write $T_1$ and $T_2$ in place of $T_{1;1}$ and $T_{2;1}$, respectively.
\end{definition}

As a first simplification, we rewrite the excitation density using the particle--hole transformation. Recall that $\psi_{\mathrm{FFG}} = R\Omega$; now if we are interested in a state $\Psi$ containing particles which are non-trivially entangled, we can write $\Psi = R\xi$ with $\xi := R^*\Psi \not= \Omega$. The vector $\xi$ represents the excitations with respect to the non-interacting ground state $\psi_{\mathrm{FFG}}$. By definition of the particle--hole transformation and by the CAR, we have
\[
	R \hat{a}^*_{q,\sigma} \hat{a}_{q,\sigma} R^*
    = \begin{cases} 1 - \hat{a}^*_{q,\sigma} \hat{a}_{q,\sigma} &\text{if } q \in \mathcal{B}_{\F}^\sigma\,, \\ \hat{a}^*_{q,\sigma} \hat{a}_{q,\sigma} &\text{if } q\notin\mathcal{B}_{\F}^\sigma\,.  \end{cases}
\]
With $ n_{q, \sigma} = \hat{a}_{q, \sigma}^* \hat{a}_{q, \sigma} $ and \Cref{eq:nqexc}, this is equivalent to $n^{(\exc)}_{q,\sigma} = R n_{q,\sigma} R^*$, so
\begin{equation} \label{eq:nexc_R_trafo}
\langle \Psi, n^{(\exc)}_{q,\sigma} \Psi \rangle = \langle R\xi, R n_{q,\sigma} R^* R\xi \rangle = \langle \xi, n_{q,\sigma} \xi \rangle \;.
\end{equation}
Likewise, for the averaged mesoscopic observable $n^{(\exc)}_{g,\sigma}$ introduced in \cref{eq:ngpro}, if we define
\begin{equation}\label{eq:ng}
	n_{g,\sigma} := \frac{(2\pi)^3}{L^3} \sum_{p\in \Lambda^*} \hat g(p)\, \hat{a}_{p, \sigma}^* \hat{a}_{p, \sigma} \;, \qquad
    n_g := \sum_{\sigma \in \{\uparrow,\downarrow\}} n_{g,\sigma} \;,
\end{equation}
then we have
\begin{align*}
\langle \Psi, n^{(\exc)}_{g,\sigma} \Psi \rangle 
&= \frac{(2\pi)^3}{L^3} \left\langle \xi, \sum_{p\in \Lambda^*}{\hat g(p)}\left( R^* n^{(\exc)}_{p,\sigma} R \right)\xi \right\rangle  
= \frac{(2\pi)^3}{L^3} \sum_{p\in \Lambda^*}{\hat g(p)} \langle \xi,  n_{p,\sigma} \xi \rangle \\
&= \langle \xi, n_{g,\sigma} \xi \rangle \;. 
\end{align*}
In our main result, the trial state is $\Psi = RT_1 T_2 \Omega$, so by this identity we can equivalently compute
\begin{equation} \label{eq:nexc_simplification}
    \langle \Psi, n^{(\exc)}_g \Psi \rangle = \langle T_1 T_2 \Omega, n_g T_1 T_2 \Omega \rangle \;.
\end{equation}
Our main result will follow by an explicit second order expansion of this quantity.

\section{Preliminary Estimates}
\label{sec:usefullemmas}
In this section, we collect estimates that will be used in the analysis of the error terms. 
We introduce first some notation to shorten the formulas for the commutators arising from the Duhamel expansion.

\subsection{Notation}
We will encounter many times operators of the form $a_\sigma(v_x)$ and $a_\sigma(u_x)$ with $x \in \Lambda$ and $\sigma \in \{\uparrow,\downarrow\}$. Thus, with $k \in \Lambda^*$, we adopt the shortened notation 
\begin{equation}
    c_{x,\sigma} := a_\sigma(v_x)\;, 
    \quad 
    d_{x, \sigma} := a_\sigma(u_x)\;,
    \quad 
    \hat c_{k,\sigma} := \hat v_\sigma(k) \hat a_{k,\sigma}\;, 
    \quad 
    \hat d_{k,\sigma} := \hat u_\sigma(k) \hat a_{k,\sigma}\;.
\end{equation}
Further, we introduce the low- and high-momentum cutoffs and operators as
\begin{align}
\label{eq: u<}
  \hat{u}^<_\sigma(k) &:= \begin{cases} 1 &\text{if } k_{\F}^\sigma < |k| \leq 6\rho_\sigma^{2/9} \,,
  \\
  0 &\text{if } |k| < k_{\F}^\sigma \text{ or } |k| > 6\rho_\sigma^{2/9}\,,\end{cases}
  & 
  \hat d_{k, \sigma}^< 
  &:= \hat u^<_\sigma(k) a_{k,\sigma}\;, 
  &
  d_{x,\sigma}^< 
  & := a_\sigma(u^<_x)\;,
\\
\label{eq: def u>}
\hat{u}^>_\sigma(k) 
&:= \begin{cases}
1 & \text{if } |k| \geq 3\rho^{2/9}\,, \\
0 & \text{if } |k| < 3\rho^{2/9}\,,
\end{cases}
& 
  \hat d_{k, \sigma}^> 
  &:= \hat u^>_\sigma(k) a_{k,\sigma}\;, 
  &
  d_{x,\sigma}^> 
  & := a_\sigma(u^>_x) \;. 
\end{align}
Next, using the definition \eqref{eq: def eta epsilon} of $\hat{\eta}^\varepsilon_{r,r^\prime}$, we write as in \cite[Remark 2.9]{GHNS24}
\begin{equation}\label{eq: formula eta t-integral}
  \hat{\eta}^{\varepsilon}_{r,r^\prime}(p) = (8\pi a) \widehat{\chi}_<(p)\int_0^\infty \ud t\, e^{-(|r+p|^2 - |r|^2 + |r^\prime - p|^2 - |r^\prime|^2 + 2\varepsilon)t}\;,
\end{equation}
for $r+p\notin \mathcal{B}_F^{\uparrow}$,  $r^\prime - p\notin \mathcal{B}_F^{\downarrow}$, $r\in \mathcal{B}_F^{\uparrow}$, and $r^{\prime}\in \mathcal{B}_F^{\downarrow}$. 
It will be convenient to introduce 
\begin{equation}\label{eq: def ut vt}
\begin{aligned}
  \hat{v}_{t,\sigma}(\cdot) 
  &:= e^{t|\cdot|^2}\hat{v}_\sigma(\cdot)\,, \qquad 
  & \hat c_{t,k,\sigma} 
  &:= \hat v_{t,\sigma}(k) \hat a_{k,\sigma} \;, \qquad 
  & c_{t,x,\sigma} 
  &:= a_\sigma(v_{t,x}) \;,
  \\ \hat{u}_{t,\sigma}(\cdot) 
  &:= e^{-t|\cdot |^2}\hat{u}^<_{\sigma}(\cdot)\,, \qquad 
  & \hat d_{t,k,\sigma} 
  &:= \hat u_{t,\sigma}(k) \hat a_{k,\sigma} \;, \qquad 
  & d_{t,x,\sigma} 
  &:= a_\sigma(u_{t,x}) \;, 
\end{aligned}
\end{equation}
where $v_{t,\sigma}$ and $u_{t,\sigma}$ are the Fourier transforms of $\hat{v}_{t,\sigma}$ and $\hat{u}_{t,\sigma}$, and we recall the notation $f_x(\cdot) = f(\cdot - x)$.
Finally, for any $g \in L^2(\Lambda)$, we introduce
\begin{equation}
\begin{aligned}
    c_{\sigma}(g_x)
    &:= a_\sigma( (g * v)_x) \;, 
    \qquad 
    &c_{t,\sigma}(g_x)
    &:= a_\sigma( (g * v_t)_x) \;, \\
    d_{\sigma}(g_x)
    &:= a_\sigma( (g * u)_x) \;, 
    \qquad 
    &d_{t,\sigma}(g_x)
    &:= a_\sigma( (g * u_t)_x) \;. \\
\end{aligned}
\end{equation}

\subsection{Estimates for Combinations of Fermionic Operators}

We collect bounds on the scattering function and the operators defined above. Define 
\begin{equation}\label{eq: def tilde phi}
    \widehat {{\varphi}}^>(p)
    :=\mathcal{F}(\varphi_\infty)(p)  \widehat{\chi}_>(p) \;, \quad
    {\varphi}^>(x)
    := \frac{1}{L^3}\sum_{p\in\Lambda^*} \hat \varphi(p) \widehat{\chi}_>(p) e^{ip\cdot x}\;, \quad x\in \Lambda\;.
\end{equation}
\begin{lemma}[Bounds for ${\varphi}^>$ and $\chi_{<}$]\label{lem: bounds phi} {Let $ V_\infty \in L^2(\RRR^3) $ be non-negative, radial and compactly supported.} Then the following bounds hold uniformly in $ \rho $ for $L$ large:
\begin{equation}\label{eq: L norms phi}
    \|{\varphi}^>\|_{L^\infty(\Lambda)} \leq C \;, \qquad
    \|{\varphi}^>\|_{L^2(\Lambda)} \leq C \rho^{-\frac{1}{9}} \;, \qquad
    \|{\varphi}^>\|_{L^1(\Lambda)} \leq C \rho^{-\frac{4}{9}} \;.
\end{equation}
Furthermore, for $ \chi_{<} $ defined by \eqref{eq: def chi< chi>}, we have
\begin{equation}\label{eq:norms chi}
    \|\chi_{<}\|_{L^1(\Lambda)}\leq C \;, \qquad
   	\|\chi_{<}\|_{L^2(\Lambda)}\leq C \rho^{\frac 13} \;.
\end{equation}
\end{lemma}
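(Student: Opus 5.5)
We need bounds on $\varphi^>$ and $\chi_<$. The approach is to pass to the continuum via Riemann sums: for $L$ large, $\|f\|_{L^2(\Lambda)}^2 = L^{-3}\sum_p |\hat f(p)|^2$ converges to $(2\pi)^3\int |\mathcal{F}(f_\infty)|^2$, so the $L^2$ bounds follow from Plancherel in $\RRR^3$. For the $L^\infty$ and $L^1$ bounds, write the periodic function as a periodization $\sum_{n\in\ZZZ^3} f_\infty(x+nL)$ of the real-space function $f_\infty$ whose Fourier transform is $\mathcal{F}(\varphi_\infty)\mathcal{F}(\chi_{>,\infty})$ (resp. $\mathcal{F}(\chi_{<,\infty})$), up to the normalization convention. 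This gives $\|f\|_{L^1(\Lambda)}\le \|f_\infty\|_{L^1(\RRR^3)}$ and $\|f\|_\infty \le \sup_x\sum_n |f_\infty(x+nL)|$, which is controlled once $f_\infty$ decays integrably. The $p=0$ mode excluded in \eqref{eq: def phi} is irrelevant here because $\widehat\chi_>(0)=0$.

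\textbf{Scattering function facts.} From \eqref{eq:scatteringequations} with $V_\infty\ge0$ compactly supported and radial, we have $0\le\varphi_\infty\le1$, $\varphi_\infty(x)=a/|x|$ outside the support, and $|\mathcal{F}(\varphi_\infty)(p)|\le C|p|^{-2}$. The last bound follows since $2|p|^2\mathcal{F}(\varphi_\infty)=\mathcal{F}(V_\infty(1-\varphi_\infty))$, which is bounded by $\|V_\infty\|_1$.

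\textbf{$L^2$ bound.} We have
\[
\|\varphi^>\|_2^2\approx C\int_{|p|>4\rho^{2/9}}|p|^{-4}\,\d p \le C\rho^{-2/9},
\]
which gives $\|\varphi^>\|_2 \le C\rho^{-1/9}$.

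\textbf{$L^\infty$ bound.} Write $\varphi^>=\varphi-\varphi*\chi_<$ (at the level of $\RRR^3$, up to constants). Then
\[
\|\varphi^>\|_\infty\le\|\varphi_\infty\|_\infty(1+\|\chi_{<,\infty}\|_{L^1}).
\]
Here $\chi_{<,\infty}(x)=\rho^{2/3}\chi(\rho^{2/9}x)$ for a fixed Schwartz $\chi$, so its $L^1$ norm is scale invariant and bounded by $C$. This also gives $\|\chi_<\|_1\le C$. Likewise,
\[
\|\chi_<\|_2^2=\rho^{4/3}\rho^{-2/3}\|\chi\|_2^2=C\rho^{2/3},
\]
so $\|\chi_<\|_2\le C\rho^{1/3}$.

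\textbf{$L^1$ bound (main obstacle).} $\varphi_\infty\sim a/|x|$ is not integrable, and the cutoff is what produces the $\rho^{-4/9}$. Split at $|x|=\rho^{-2/9}$.

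For $|x|\le\rho^{-2/9}$, use the $L^\infty$ bound together with the pointwise bound $|\varphi^>(x)|\le |\varphi_\infty(x)|+|\varphi_\infty*\chi_{<,\infty}(x)|\le C/(1+|x|)$ (since convolving $1/|x|$ with a normalized bump preserves the $1/|x|$ decay). Integrating over this ball gives $\int_{|x|\le\rho^{-2/9}}(1+|x|)^{-1}\d x\le C\rho^{-4/9}$.

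For $|x|>\rho^{-2/9}$, use that $\varphi^>$ is the Fourier transform of $\mathcal{F}(\varphi_\infty)$ times a function vanishing near $0$. Far out, $\varphi_\infty - a/|x|$ is compactly supported, and the piece $\frac{a}{|x|}-\frac a{|\cdot|}*\chi_{<,\infty}$ equals $\rho^{2/9}$ times a fixed function $h(\rho^{2/9}x)$. This $h$ has Fourier transform $4\pi a|p|^{-2}(1-\hat\chi(p))$, which is smooth away from the origin with symbol-type decay, so $|h(y)|\le C_N|y|^{-N}$ for $|y|\ge1$. Scaling then gives
\[
\int_{|x|>\rho^{-2/9}}|\varphi^>|\le C\rho^{2/9}\rho^{-6/9}=C\rho^{-4/9}.
\]
The remaining compactly supported piece contributes $O(1)$.

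Finally, the periodization and Riemann-sum errors are $o_{L\to\infty}(1)$, which is absorbed for $L$ large.
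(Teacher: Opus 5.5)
Your argument is correct. It is a different route only in the sense that the paper gives no argument of its own here. The paper defers the bounds on $\varphi^>$ and on $\|\chi_<\|_1$ to \cite[Lemmas~A.2 and~A.5]{GHNS24}. For $\|\chi_<\|_2$ it just notes that $\|\chi_<\|_2^2=L^{-3}\sum_p|\widehat\chi_<(p)|^2\le CL^{-3}\,\#\{p\in\Lambda^*:|p|\le5\rho^{2/9}\}\le C\rho^{2/3}$, which is the Fourier-side counterpart of your real-space rescaling.

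Your self-contained route buys transparency. It uses only $0\le\varphi_\infty\le1$, $\varphi_\infty=a/|x|$ outside $\supp V_\infty$, and $|\mathcal{F}(\varphi_\infty)(p)|\le C|p|^{-2}$. Beyond these it needs two properties of the cutoffs that the paper leaves implicit:
\begin{itemize}
\item $\mathcal{F}(\chi_{>,\infty})=1-\mathcal{F}(\chi_{<,\infty})$, which the paper uses in the proof of Lemma~\ref{lem:remove_regularizations};
\item $\mathcal{F}(\chi_{<,\infty})$ is the dilation of a fixed profile, which is needed for $\|\chi_<\|_1\le C$ uniformly in $\rho$.
\end{itemize}
It also shows where the exponent comes from: $\rho^{-4/9}=\rho^{2/9}\cdot\rho^{-2/3}$ is the $L^1$ norm of the Coulomb tail $a/|x|$ cut off at length $\rho^{-2/9}$. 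Since $h\in L^1(\RRR^3)$, you do not even need the split at $|x|=\rho^{-2/9}$: directly, $\|\varphi^>_\infty\|_1\le O(1)+\rho^{-4/9}\|h\|_1$.

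Two statements should be tightened.
\begin{itemize}
\item The multiplier $|p|^{-2}(1-\hat\chi(p))$ is smooth on all of $\RRR^3$, because $1-\hat\chi$ vanishes identically near $0$. This, not smoothness away from the origin, is what gives $|h(y)|\le C_N|y|^{-N}$; for $|p|^{-2}$ itself one only gets $1/|y|$ decay.
\item Set $g_0:=\varphi_\infty-a/|x|$. The remainder $g_0-g_0*\chi_{<,\infty}$ is not compactly supported. It is nevertheless $O(1)$ in $L^1$ by Young's inequality, and it decays rapidly on the scale $\rho^{-2/9}$. That is all the periodization step for the $L^\infty$ bound requires once $L\gg\rho^{-2/9}$.
\end{itemize}
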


For the proof, we refer to \cite[Lemmas~A.2 and A.5]{GHNS24}. The second estimate in \eqref{eq:norms chi} follows directly from the support properties of $\widehat{\chi}_<$.
We may also bound creation and annihilation operators by the following lemma.

\begin{lemma}
\label{lem:v bounds}
There exists $C>0$ such that for all $ \sigma \in \{ \uparrow, \downarrow \} $, $x \in \Lambda$, and $t\in [0,\infty)$,
\begin{equation}  \label{eq: L2 norm v u conv}
    \norm{\hat{v}_\sigma}_2
    \leq C \rho^{\frac 12} \;, \qquad
    \|\hat{v}_{t,\sigma}\|_2
    \leq Ce^{t(k_{\F}^\sigma)^2}\rho^{\frac 12}\;, \qquad
    \|\hat{u}_{t,\sigma}^<\|_2
    \leq Ce^{-t(k_{\F}^\sigma)^2}\rho^{\frac 13} \;,
\end{equation}
and for $\# \in \{ \cdot, *\} $, we have the operator norm bounds
\begin{equation} \label{eq: a conv L2 norm v u conv}
\begin{aligned}
    \|c_{x,\sigma}^{\#}\|
    & = \|\hat{v}_\sigma\|_2 \;, \qquad 
    &\|c_\sigma^{\#}(g_x)\| 
    & =\|{v}_\sigma \ast {g}\|_2 
    \leq \norm{\hat g}_\infty 
    \|\hat{v}_\sigma\|_2 \;, \\
    \|c_{t,x,\sigma}^{\#}\| 
    &=  \|\hat{v}_{t,\sigma}\|_2 \;, \qquad 
    &\|c_{t,\sigma}^{\#}(g_x)\| 
    &=\|{v}_{t,\sigma} \ast {g}\|_2 
    \leq \norm{\hat g}_\infty \|\hat{v}_{t,\sigma}\|_2 \;, \\
    \|(d_{t,x,\sigma}^<)^{\#}\|
    &=  \|\hat{u}_{t,\sigma}^<\|_2 \;, \qquad
    & \|d_{t,\sigma}^{\#}(g_x)\| 
    &=\|{u}_{t,\sigma} \ast {g}\|_2 
    \leq \norm{\hat g}_\infty \|\hat{u}_{t,\sigma}^<\|_2 \;.
\end{aligned}
\end{equation}
\end{lemma}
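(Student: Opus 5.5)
The first step is to prove the equalities, which all come from the fermionic norm identity \eqref{eq: Pauli}: for any $f\in L^2(\Lambda)$ we have $\|a_\sigma(f)\| = \|a^*_\sigma(f)\| = \|f\|_2$.

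\textbf{The equalities.} Take $c_{x,\sigma} = a_\sigma(v_{x,\sigma})$. Here $\|v_{x,\sigma}\|_2 = \|v_\sigma\|_2$ by translation invariance of the $L^2(\Lambda)$ norm. By Parseval, with the normalization $f(x) = L^{-3}\sum_p \hat f(p)e^{ip\cdot x}$, this equals $L^{-3/2}\|\hat v_\sigma\|_{\ell^2}$. The lemma's $\|\hat v_\sigma\|_2$ is read with this same normalization, that is, as $(L^{-3}\sum_p|\hat v_\sigma(p)|^2)^{1/2}$; this is the convention implicit in the statement. The identical argument handles $c_{t,x,\sigma}$, $d^<_{t,x,\sigma}$ and the smeared operators, whose arguments are $(g*v_\sigma)_x$, $(g*v_{t,\sigma})_x$ and $(g*u_{t,\sigma})_x$.

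\textbf{The convolution bounds.} For the convolved functions, the Fourier coefficients of $g*v$ are $\hat g(p)\hat v(p)$. Hence
\[
\|g*v_\sigma\|_2^2 = L^{-3}\sum_p |\hat g(p)|^2|\hat v_\sigma(p)|^2 \le \|\hat g\|_\infty^2\,\|\hat v_\sigma\|_2^2 ,
\]
and the same holds with $v_{t,\sigma}$ or $u_{t,\sigma}$ in place of $v_\sigma$.

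\textbf{The $L^2$ bounds \eqref{eq: L2 norm v u conv}.} These reduce to counting lattice points.
\begin{itemize}
\item For $\hat v_\sigma$: $L^{-3}\sum_p \hat v_\sigma(p)^2 = L^{-3}|\mathcal B_\F^\sigma| = \rho_\sigma \le \rho$, which gives $\|\hat v_\sigma\|_2\le C\rho^{1/2}$.
\item For $\hat v_{t,\sigma}$: on the Fermi ball we have $e^{t|p|^2}\le e^{t(\kF^\sigma)^2}$, so $\|\hat v_{t,\sigma}\|_2 \le e^{t(\kF^\sigma)^2}\rho^{1/2}$.
\item For $\hat u_{t,\sigma}$: its support is $\kF^\sigma<|k|\le 6\rho_\sigma^{2/9}$, where $e^{-t|k|^2}\le e^{-t(\kF^\sigma)^2}$. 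The number of lattice points in the ball of radius $6\rho_\sigma^{2/9}$ is, for large $L$, at most $CL^3\rho^{2/3}$. This gives $\|\hat u_{t,\sigma}\|_2\le Ce^{-t(\kF^\sigma)^2}\rho^{1/3}$.
\end{itemize}
The constant $C$ is uniform in $t\ge 0$ because the exponential factor has been extracted explicitly.

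The only point needing care is this normalization convention; everything else is a direct computation, and no step is a genuine obstacle. The lattice-point counts should be made uniform for $L$ large, for instance by comparing with the volume of a slightly enlarged ball via the standard Riemann-sum argument.
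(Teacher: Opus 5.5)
Your proposal is correct and follows the same route as the paper. The paper derives \eqref{eq: L2 norm v u conv} from the support properties of $\hat v_\sigma$ and $\hat u^<_\sigma$ together with the definition \eqref{eq: def ut vt}, and gets the operator-norm identities from $\|a^{\#}_\sigma(f)\| = \|f\|_2$. You additionally make explicit what the paper leaves implicit: the $L^{-3}$-normalized Parseval identity, $\widehat{g*v}=\hat g\,\hat v$, and the lattice-point count for large $L$.
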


\begin{proof}
The estimates~\eqref{eq: L2 norm v u conv} follow from the support properties of $\hat{v}_\sigma$ and $\hat{u}_\sigma^<$, compare~\eqref{eq: u<}, and from definition \eqref{eq: def ut vt}.
The estimates~\eqref{eq: a conv L2 norm v u conv} then follow from~\eqref{eq: Pauli}.
\end{proof}

As in \cite{GHNS24, GHNS25}, we frequently encounter operators that are quadratic in the fermionic creation or annihilation operators. More precisely, for $h\in L^1(\Lambda)$, $f,g \in L^2(\Lambda)$, and $\sigma\in\{\uparrow, \downarrow\}$, consider
\begin{equation}\label{eq: def bophgf}
	b_\sigma(h,f,g) := \int \di z\, h(z)a_\sigma(f_z)a_\sigma(g_z) \;.
\end{equation}

\begin{lemma} \label{lem: est b operator}
Let $h\in L^1(\Lambda)$ and  $f,g \in L^2(\Lambda)$; then
\begin{equation} \label{eq: est b op 0}
\begin{aligned}
    \|b_\sigma(h,f,g)\|
    &\leq \|h\|_1 \|\hat{f}_\sigma\|_2 \|\hat{g}_\sigma\|_2\;, \\
\end{aligned}
\end{equation}
and furthermore
\begin{equation} \label{eq: est b op}
    \|b_\sigma(\varphi^>_z, v, u^>)\| \leq C \rho^{\frac{7}{18}} \;, \qquad 
    \|b_\sigma(\varphi^>_z, g*v, u^>)\| \leq C \rho^{\frac{7}{18}} \norm{\hat{g}}_\infty \;.
\end{equation}
\end{lemma}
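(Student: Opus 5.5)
The first bound \eqref{eq: est b op 0} follows from the triangle inequality for Bochner integrals together with the fermionic norm identity \eqref{eq: Pauli}. For each fixed $z$, we have $\|a_\sigma(f_z)a_\sigma(g_z)\| \le \|a_\sigma(f_z)\|\,\|a_\sigma(g_z)\| = \|f_z\|_2\|g_z\|_2$. Translation invariance of the $L^2$ norm and Plancherel (with the paper's normalization, $\|f\|_2 = L^{-3/2}\|\hat f\|_{\ell^2}$, which is the quantity denoted $\|\hat f_\sigma\|_2$) identify this with $\|\hat f_\sigma\|_2\|\hat g_\sigma\|_2$. Integrating against $|h(z)|$ then gives $\|b_\sigma(h,f,g)\| \le \int |h(z)|\,\|f_z\|_2\|g_z\|_2\,\d z = \|h\|_1\|\hat f_\sigma\|_2\|\hat g_\sigma\|_2$.

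For \eqref{eq: est b op}, I read $b_\sigma(\varphi^>_z, v, u^>)$ as the operator $\int \d z\,\varphi^>(z)\,a_\sigma(v_z)a_\sigma(u^>_z)$, possibly further shifted by some external variable, which does not change norms. The naive bound $\|\varphi^>\|_1\|\hat v\|_2\|\hat u^>\|_2$ fails because $\hat u^>$ is not square summable. I would therefore exploit the antisymmetry and a Cauchy--Schwarz argument that places the $L^2$ norm on $\varphi^>$ instead. Writing $A := \int \d z\,\varphi^>(z)\,c_{z,\sigma}d^>_{z,\sigma}$, I estimate $|\langle \Phi_1, A\Phi_2\rangle| \le \int |\varphi^>(z)|\,\|c^*_{z,\sigma}\Phi_1\|\,\|d^>_{z,\sigma}\Phi_2\|\,\d z$. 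Cauchy--Schwarz in $z$, with $\|c^*_{z,\sigma}\| \le \|\hat v_\sigma\|_2 \le C\rho^{1/2}$ from Lemma~\ref{lem:v bounds}, gives the bound
\[
C\rho^{\frac12}\,\|\Phi_1\|\,\|\varphi^>\|_2\Big(\int \d z\,\|d^>_{z,\sigma}\Phi_2\|^2\Big)^{1/2}.
\]
The last integral equals $\langle\Phi_2,\sum_k \hat u^>(k)\hat a^*_k\hat a_k\Phi_2\rangle \le \langle \Phi_2, \mathcal N\Phi_2\rangle$. This is not an operator-norm bound, so the approach needs modification.

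The better route is to keep $\varphi^>$ in $L^\infty$ and use the operator-valued Cauchy--Schwarz on the pair. Expand $a_\sigma(v_z) = \int \d y\, v(y-z) a_{y,\sigma}$, so that $A = \int \d y\,\d w\, a_{y,\sigma}a_{w,\sigma} \int \d z\,\varphi^>(z) v(y-z)u^>(w-z)$. This is a fermionic pair operator with two-body kernel $K(y,w)$, and its norm is bounded by $\|K\|_{HS}$ for the antisymmetric part. By Plancherel, $\|K\|_{HS}^2 = L^{-3}\sum_{k,\ell}|\widehat{\varphi^>}(k+\ell)|^2 \hat v(k)\hat u^>(\ell)$, which is at most $\|\hat v\|_2^2\,\|\varphi^>\|_2^2 \le C\rho\cdot\rho^{-2/9} = C\rho^{7/9}$. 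Taking the square root gives exactly $C\rho^{7/18}$, which matches the claim. The key step is this Hilbert--Schmidt bound for $\int K(y,w)a_ya_w$, namely $\|\int K a a\| \le \|K\|_{L^2}$. This holds because the operator equals $a(\cdot)$ applied to a two-particle vector, which on fermionic Fock space satisfies $\|\sum_{ij}K_{ij}a_ia_j\| \le \sqrt2\,\|K_{\mathrm{antisym}}\|_{HS}$; I would prove it directly by viewing $\sum_j K_{ij}a_j = a(\overline{K_{i\cdot}})$ and applying the Pauli bound row by row together with Cauchy--Schwarz on $\sum_i a_i(\cdots)$.

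The second estimate in \eqref{eq: est b op} repeats the same computation with $\hat v$ replaced by $\hat g\,\hat v$. The only change in the Hilbert--Schmidt computation is an extra factor $|\hat g(k)|^2 \le \|\hat g\|_\infty^2$, so the same argument yields $C\rho^{7/18}\|\hat g\|_\infty$. The main obstacle is stating the fermionic Hilbert--Schmidt bound cleanly with the constant kept under control. The $L^2$ bound on $\varphi^>$ comes from Lemma~\ref{lem: bounds phi}.
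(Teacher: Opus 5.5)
You prove \eqref{eq: est b op 0} correctly, by the standard argument (triangle inequality in $z$ plus $\|a_\sigma(f)\|=\|f\|_2$). This is what the paper invokes via \cite[Lemma~5.3]{FGHP21}.

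The second part has a genuine gap. Your key inequality $\bigl\|\int K(y,w)\,a_{y,\sigma}a_{w,\sigma}\,\d y\,\d w\bigr\|\le C\|K\|_{L^2}$ is false. The Pauli principle gives a norm identity only for a single annihilation operator, not for pair operators.

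Here is a counterexample. Take $N$ disjoint pairs of orthonormal modes and $B=\sum_{n=1}^N a(e_{2n-1})a(e_{2n})$. Its kernel has Hilbert--Schmidt norm of order $\sqrt N$. The pair operators commute and act on pair-occupation states like spin lowering operators, so $B$ acts as the total lowering operator of $N$ spins $1/2$. Hence $\|B\|\ge N/2$; test it on the symmetric state with $N/2$ occupied pairs.

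The correct general bounds are:
\begin{itemize}
\item $\|\int Kaa\,\psi\|\le\|K\|_{L^2}\|\cN^{1/2}\psi\|$. This is exactly what your row-by-row Pauli bound followed by Cauchy--Schwarz produces, which brings back the factor $\cN^{1/2}$ you rightly rejected in your first attempt.
\item $\|\int Kaa\|\le\|K\|_{\mathfrak S_1}$, with the trace norm of $K$.
\end{itemize}
So your computation $\|K\|_{L^2}\le C\rho^{7/18}$ gives the right exponent by coincidence and controls nothing in operator norm. Note also that it uses nothing about $\varphi^>$ beyond $\|\varphi^>\|_2$.

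The missing input is where $\varphi^>$ is concentrated. Since $\widehat{\varphi^>}$ vanishes for $|p|<4\rho^{2/9}$, $\varphi^>$ lives on $|x|\lesssim\rho^{-2/9}$. This is much shorter than the length $(\kF^\sigma)^{-1}\sim\rho^{-1/3}$ over which $v_x$ varies.

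Expand $v_x$ around $z$ in $b_\sigma(\varphi^>_z,v,u^>)=\int\d x\,\varphi^>(x-z)\,a_\sigma(v_x)a_\sigma(u^>_x)$.
\begin{itemize}
\item The zeroth-order term is $a_\sigma(v_z)\,a_\sigma(h_z)$ with $h_z=\int\d x\,\overline{\varphi^>(x-z)}\,u^>_x$. This is a product of two single annihilation operators, so its norm is at most $\|\hat v_\sigma\|_2\|\varphi^>\|_2\le C\rho^{1/2-1/9}=C\rho^{7/18}$, using $0\le\hat u^>\le 1$ and \Cref{lem: bounds phi}.
\item The term of order $|\alpha|$ pairs $\|\partial^\alpha v\|_2\le C\rho^{1/2}(\kF^\sigma)^{|\alpha|}$ with moments of $\varphi^>$. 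Each order gains roughly $\kF^\sigma\rho^{-2/9}\sim\rho^{1/9}$.
\end{itemize}
In kernel language, this writes $K$ as a sum of rank-one pieces. It therefore bounds the trace norm, not the Hilbert--Schmidt norm.

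This is the mechanism behind \cite[Lemma~3.1]{GHNS24}, which the paper cites. It also explains why the paper gets the $g*v$ version from $\|\partial^\alpha(g*v)\|_2\le\|\hat g\|_\infty\|\partial^\alpha v\|_2$. Your observation that $g$ costs only $\|\hat g\|_\infty$ survives, but in that form.
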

\begin{proof}
The bound~\eqref{eq: est b op 0} follows as in the proof of~\cite[Lemma~5.3]{FGHP21}. 
The first bound in~\eqref{eq: est b op} is~\cite[Lemma~3.1]{GHNS24} with $\gamma = \frac{1}{9}$, and the second bound follows by the same strategy, using $\norm{\partial^\alpha(g*v)}_2 = \norm{(-ik)^\alpha \hat{g} \hat{v}}_2 \le \norm{\hat{g}}_\infty \norm{\partial^\alpha v}_2$ for any multi-index $ \alpha = (\alpha_1, \alpha_2, \alpha_3) \in \mathbb{N}_0^3 $ and derivative $ \partial^\alpha f(x) = \partial_{x_1}^{\alpha_1} \partial_{x_2}^{\alpha_2} \partial_{x_3}^{\alpha_3} f(x) $.
\end{proof}

Moreover we use the following lemma to estimate operators which are quadratic in the fermionic creation and annihilation operators by the number operator.
\begin{lemma}\label{lem: a*a with N}
For all $ \sigma \in \{ \uparrow, \downarrow \} $, $\psi\in\mathcal{F}_{\mathrm{f}}$ and $\hat{f}\in \ell^\infty(\Lambda^\ast)$, we have 
\begin{equation} \label{eq: est a*a N}
	\int \ud x\, \langle \psi, a^\ast_\sigma(f_x)a_\sigma(f_x) \psi\rangle \leq \|\hat{f}\|_\infty^2 \langle \psi, \mathcal{N} \psi \rangle \;.
\end{equation}
\end{lemma}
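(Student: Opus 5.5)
The plan is to pass to momentum space, where the integrated quadratic form becomes a weighted sum of number operators, and then bound the weights by $\|\hat f\|_\infty^2$.

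First we expand $a_\sigma(f_x)$ in the plane-wave basis. Since $f_x(y) = f(y-x)$, its Fourier coefficients are $\widehat{f_x}(k) = \hat f(k) e^{-ik\cdot x}$. Hence
\[
a_\sigma(f_x) = \frac{1}{L^{3/2}} \sum_{k\in\Lambda^*} \overline{\hat f(k)}\, e^{ik\cdot x}\, \hat a_{k,\sigma},
\]
up to the conjugation convention, which plays no role here because only moduli will survive. The adjoint is
\[
a^*_\sigma(f_x) = \frac{1}{L^{3/2}} \sum_{k'\in\Lambda^*} \hat f(k')\, e^{-ik'\cdot x}\, \hat a^*_{k',\sigma}.
\]
We then multiply the two expansions and integrate over $x\in\Lambda$. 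The orthogonality relation $\int_\Lambda e^{i(k-k')\cdot x}\,\di x = L^3 \delta_{k,k'}$ cancels the prefactor $L^{-3}$ and gives the operator identity
\[
\int \di x\, a^*_\sigma(f_x)\, a_\sigma(f_x) = \sum_{k\in\Lambda^*} |\hat f(k)|^2\, \hat a^*_{k,\sigma} \hat a_{k,\sigma}.
\]

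Next we take expectations. Each $n_{k,\sigma}$ is nonnegative, so
\[
\sum_k |\hat f(k)|^2 \langle \psi, n_{k,\sigma}\psi\rangle \le \|\hat f\|_\infty^2 \langle \psi, \mathcal{N}_\sigma \psi\rangle \le \|\hat f\|_\infty^2 \langle \psi, \mathcal{N}\psi\rangle,
\]
where $\mathcal{N}_\sigma \le \mathcal{N}$ holds because $\mathcal{N} = \mathcal{N}_\uparrow + \mathcal{N}_\downarrow$ with both summands nonnegative. This is exactly \eqref{eq: est a*a N}.

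The only delicate point is exchanging the $x$-integral with the infinite momentum sums. We handle it by first proving the bound for $\psi$ in the dense domain of finite-particle vectors with finitely many occupied momenta. On such vectors only finitely many $\hat a_{k,\sigma}\psi$ are nonzero, so the sums are finite. Alternatively, one can work with the quadratic form directly: $\langle\psi, a^*_\sigma(f_x)a_\sigma(f_x)\psi\rangle = \|a_\sigma(f_x)\psi\|^2$, and Parseval in $x$ applies to the $\Lambda$-periodic vector-valued function $x\mapsto a_\sigma(f_x)\psi$, whose Fourier coefficients are $L^{3/2}\,\overline{\hat f(k)}\,\hat a_{k,\sigma}\psi$. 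For general $\psi$ the identity then follows by monotone convergence, or trivially if $\langle\psi,\mathcal{N}\psi\rangle=\infty$. I do not expect any genuine obstacle beyond this bookkeeping.
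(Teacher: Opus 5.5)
Your proposal is correct and is essentially the paper's argument: the paper just says ``direct computation'' (deferring to an earlier reference), and that computation is your identity $\int \di x\, a^*_\sigma(f_x)a_\sigma(f_x) = \sum_{k} |\hat f(k)|^2\, \hat a^*_{k,\sigma}\hat a_{k,\sigma}$, followed by $|\hat f(k)|^2\le \|\hat f\|_\infty^2$ and $\mathcal N_\sigma\le\mathcal N$. Your Parseval/quadratic-form remark for $\hat f\notin\ell^2$ is a useful clarification, since the lemma is applied to $d_{z,\sigma}=a_\sigma(u_z)$ where $\hat u_\sigma$ is not square-summable.
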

\begin{proof}
Direct computation, see \cite[Lemma 3.5]{GHNS25} for details.
\end{proof}
We now state two lemmas for terms with four or six fermionic operators.
\begin{lemma}[Estimate for Quartic Terms]\label{lem:a*a*aa} 
Let  $W_1,W_2\in L^1(\Lambda)$, $W_3,W_4\in  L^2(\Lambda)$ and $\hat f, \hat k\in \ell^\infty(\Lambda^*)$.
Then for all $\sigma,\sigma'\in \{\uparrow,\downarrow\}$, $\# \in \{\cdot, *\}$ and $\psi\in\mathcal{F}_{\mathrm{f}}$, we have
 \begin{align}\label{eq:4aa}
 & \int_{\Lambda^4} \di x \di y \di z \di z' |W_1(x-y) W_2 (z-z') W_3 (x-z) W_4 (y-z')| \times \\
&\quad \times \left| \left\langle \psi,  a^*_\sigma(f_x) a_{\sigma'}^{\#} (g_y) a^{\#}_{\sigma} (h_z) a_{\sigma'} (k_{z'}) \psi  \right\rangle \right| \nonumber \\
 &\le \|W_1\|_1 \|W_2\|_1 \|W_3\|_2 \|W_4\|_2 \|\hat{g}_{\sigma'}\|_2 \|\hat{h}_\sigma\|_2 \left(\int \di x \| a_{\sigma}  (f_x) \psi\|^2 \right)^{\frac 1 2}\left(\int \di z' \| a_{\sigma'}  (k_{z'}) \psi\|^2 \right)^{\frac 1 2}\nonumber \;.
 \end{align}
\end{lemma}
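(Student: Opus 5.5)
The plan is to bound the integrand pointwise with Cauchy--Schwarz in the Fock space, pulling out the two middle operators by their operator norms, and then to handle the remaining integrals over the weights by a weighted Cauchy--Schwarz (Schur-type) argument in the four position variables.

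\textbf{Operator step.} For fixed $x,y,z,z'$ we write
\[
\left\langle \psi, a^*_\sigma(f_x) a^{\#}_{\sigma'}(g_y) a^{\#}_\sigma(h_z) a_{\sigma'}(k_{z'})\psi\right\rangle = \left\langle a_\sigma(f_x)\psi,\, a^{\#}_{\sigma'}(g_y) a^{\#}_\sigma(h_z)\, a_{\sigma'}(k_{z'})\psi\right\rangle .
\]
By \eqref{eq: Pauli}, $\|a^{\#}_{\sigma'}(g_y)\| = \|\hat g_{\sigma'}\|_2$ and $\|a^{\#}_\sigma(h_z)\| = \|\hat h_\sigma\|_2$, and both are independent of $y$ and $z$ by translation invariance of the $L^2$ norm. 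This gives the pointwise bound
\[
\left|\langle\cdots\rangle\right| \le \|\hat g_{\sigma'}\|_2 \|\hat h_\sigma\|_2\, F(x)\, G(z'), \qquad F(x) := \|a_\sigma(f_x)\psi\|, \quad G(z') := \|a_{\sigma'}(k_{z'})\psi\| .
\]

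\textbf{Scalar step.} It then remains to show that
\[
I := \int \d x\,\d y\,\d z\,\d z'\, |W_1(x-y)|\,|W_2(z-z')|\,|W_3(x-z)|\,|W_4(y-z')|\, F(x)\, G(z') \le \|W_1\|_1\|W_2\|_1\|W_3\|_2\|W_4\|_2 \|F\|_2\|G\|_2 .
\]
The idea is to split the product symmetrically and apply Cauchy--Schwarz in all four variables at once:
\[
I \le \Big(\int |W_1(x-y)|\,|W_2(z-z')|\,|W_3(x-z)|^2 F(x)^2\Big)^{1/2} \Big(\int |W_1(x-y)|\,|W_2(z-z')|\,|W_4(y-z')|^2 G(z')^2\Big)^{1/2} .
\]
In the first factor we integrate $y$, which gives $\|W_1\|_1$, then $z'$, which gives $\|W_2\|_1$, and then $z$, which gives $\|W_3\|_2^2$. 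What remains is $\|F\|_2^2$. In the second factor we integrate $x$ (giving $\|W_1\|_1$), $z$ (giving $\|W_2\|_1$) and $y$ (giving $\|W_4\|_2^2$), leaving $\|G\|_2^2$. Taking the square roots yields exactly the stated constant $\|W_1\|_1\|W_2\|_1\|W_3\|_2\|W_4\|_2$, since each $L^1$ norm appears once inside each root. Combining this with the operator step proves \eqref{eq:4aa}.

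\textbf{Main obstacle.} The only real point is to choose the split so that the $L^1$ weights $W_1,W_2$ are shared between the two factors while the $L^2$ weights $W_3,W_4$ are squared. Each variable must be integrated against a weight whose argument depends on that variable, in an order that leaves only $x$ in the first factor and only $z'$ in the second. The order given above achieves this. Measurability and finiteness are not an issue, because $F$ and $G$ are bounded and all integrals are over the compact domain $\Lambda$.
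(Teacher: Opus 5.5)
Your proof is correct. You bound the two middle operators in operator norm and then split the weights symmetrically with Cauchy--Schwarz, so that $W_1,W_2$ appear in both factors while $W_3$ resp.\ $W_4$ are squared; this is the standard argument behind the result the paper takes from \cite[Lemma~3.9]{GHNS25} without reproducing a proof. One small inaccuracy: only $\hat f,\hat k\in\ell^\infty(\Lambda^*)$ are assumed, so $F$ and $G$ need not be bounded (e.g.\ $f=u$ in the applications), but this is harmless, since Cauchy--Schwarz and Tonelli apply to nonnegative measurable functions and the inequality is trivial when the right-hand side is infinite.
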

For the proof of Lemma~\ref{lem:a*a*aa} we refer to \cite[Lemma 3.9]{GHNS25}.
\begin{lemma}[Estimate for $a^\ast a^\ast a^\ast aaa$]\label{lem:a*a*a*aaa}
Let $\hat{f}, \hat{g},\hat{h} \in \ell^\infty(\Lambda^\ast)$. Let $\{X(x)\}_{x\in \Lambda}$, $\{Y(y)\}_{y\in \Lambda}$  be two families of bounded operators on $\mathcal{F}_{\mathrm{f}}$. Then for all $\sigma,\sigma'\in \{\uparrow,\downarrow\}$ and $\psi\in\mathcal{F}_{\mathrm{f}}$ we have
\begin{align}\label{eq:6a1}
    &\Bigg| \frac{1}{L^3} \sum_{p\in \Lambda^*} \hat h(p) \Big\langle \int \di x e^{-ipx} X(x) a_\sigma(f_x) \psi, \int \di y e^{-ip\cdot y} Y(y) a_{\sigma'} (g_y) \psi \Big\rangle \Bigg| \\
    &\le \|\hat h\|_\infty  \sup_x \| X(x)\| \sup_y \|Y(y)\| \left( \int \di x  \left\|  a_\sigma(f_x) \psi \right\|^2 \right)^{\frac 1 2} \left(   \int \di y \left\| a_{\sigma'} (g_y) \psi \right\|^2 \right)^{\frac 1 2} \;.  \nonumber
\end{align}
\end{lemma}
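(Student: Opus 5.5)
The plan is to reduce \eqref{eq:6a1} to a Cauchy--Schwarz estimate after recognizing the momentum sum as a convolution in position space. First expand the inner product:
\[
\frac{1}{L^3}\sum_{p} \hat h(p) \int \di x \di y\, e^{ip\cdot x} e^{-ip\cdot y} \big\langle X(x) a_\sigma(f_x)\psi, Y(y) a_{\sigma'}(g_y)\psi\big\rangle
= \int \di x \di y\, h(x-y)\, \big\langle \Phi(x), \Xi(y)\big\rangle ,
\]
where $h(z) = L^{-3}\sum_p \hat h(p) e^{ip\cdot z}$, $\Phi(x) := X(x)a_\sigma(f_x)\psi$ and $\Xi(y) := Y(y)a_{\sigma'}(g_y)\psi$. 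Here I use the antilinearity of the inner product in the first slot, which turns $e^{-ip\cdot x}$ into $e^{ip\cdot x}$. The issue is that $h$ need not lie in $L^1$; only $\hat h\in\ell^\infty$ is available. So I will not put absolute values on $h$. Instead I treat the kernel $h(x-y)$ as the convolution operator $\mathcal{H}$ on $L^2(\Lambda)$, whose operator norm equals $\|\hat h\|_\infty$ since it is diagonal in the Fourier basis.

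The second step handles the Fock-space-valued functions. View $\Phi,\Xi$ as elements of $L^2(\Lambda;\mathcal{F}_{\mathrm f})\cong L^2(\Lambda)\otimes\mathcal{F}_{\mathrm f}$. The expression above is then $\langle \Phi, (\mathcal{H}\otimes 1)\Xi\rangle$, and $\|\mathcal{H}\otimes 1\| = \|\mathcal{H}\| = \|\hat h\|_\infty$. To make this concrete and avoid tensor-product language, I can expand in an orthonormal basis $\{e_j\}$ of $\mathcal{F}_{\mathrm f}$. Writing $\Phi_j(x) = \langle e_j,\Phi(x)\rangle$, the double integral equals $\sum_j \langle \Phi_j, \mathcal{H}\Xi_j\rangle_{L^2(\Lambda)}$. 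This is bounded by
\[
\|\hat h\|_\infty \sum_j \|\Phi_j\|_2\|\Xi_j\|_2 \le \|\hat h\|_\infty \Big(\int\di x\,\|\Phi(x)\|^2\Big)^{1/2}\Big(\int \di y\,\|\Xi(y)\|^2\Big)^{1/2}.
\]
Equivalently, one can diagonalize directly: the double integral equals $L^{-3}\sum_p \hat h(p)\langle \tilde\Phi(p),\tilde\Xi(p)\rangle$, with $\tilde\Phi(p) = \int e^{-ip\cdot x}\Phi(x)\,\di x$. Cauchy--Schwarz in $p$ together with Plancherel, $L^{-3}\sum_p\|\tilde\Phi(p)\|^2 = \int \|\Phi(x)\|^2\di x$, then gives the same bound. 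This is essentially the cleanest route, and it is what I would write.

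Finally, $\|\Phi(x)\| \le \sup_x\|X(x)\|\,\|a_\sigma(f_x)\psi\|$, and similarly for $\Xi$, which yields \eqref{eq:6a1}. The only point requiring care is the justification of Plancherel for Fock-space-valued functions: measurability and finiteness of $\int\|a_\sigma(f_x)\psi\|^2\di x$. By Lemma~\ref{lem: a*a with N} this integral is finite when $\langle\psi,\mathcal N\psi\rangle<\infty$; otherwise the right-hand side is infinite and there is nothing to prove. I would also note that $X,Y$ should be measurable in $x,y$, and interchanging the finite-volume sum over $p$ with the integrals is fine by absolute convergence once these norms are finite, possibly after a momentum truncation of $\hat h$ followed by a limit.
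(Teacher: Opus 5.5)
Your argument is correct and is essentially the standard one the paper relies on; the paper itself gives no proof and defers to \cite[Lemma~3.7]{GHNS25}. The left-hand side is already $L^{-3}\sum_p \hat h(p)\langle\tilde\Phi(p),\tilde\Xi(p)\rangle$, so Cauchy--Schwarz in $p$, Plancherel $L^{-3}\sum_p\|\tilde\Phi(p)\|^2=\int\di x\,\|\Phi(x)\|^2$, and $\|\Phi(x)\|\le\sup_x\|X(x)\|\,\|a_\sigma(f_x)\psi\|$ give the bound, while your detour through the convolution operator with multiplier $\hat h$ on $L^2(\Lambda)\otimes\mathcal{F}_{\mathrm{f}}$ is an equivalent but unnecessary reformulation.
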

For the proof of Lemma~\ref{lem:a*a*a*aaa} we refer to \cite[Lemma 3.7]{GHNS25}.

\subsection{Integral Estimates for Norms of \texorpdfstring{$u$ and $v$}{u and v}}

The following bounds can be found in \cite[Lemma~3.3]{GHNS25} with $\gamma = \frac 19$ plugged in.

\begin{lemma}[Integration over $t$]\label{lem:t} For $v_{t,\sigma}$ and $u_{t,\sigma}$ defined as in \eqref{eq: def ut vt}, given any $\kappa >0$, there exist $C_\kappa, C > 0$ such that
\[
	\int_0^{\infty} \di t \, e^{-2t\varepsilon}e^{2t(k_{\F}^\sigma)^2}\|\hat{u}_{t,\sigma}\|_2^2 \leq C\rho^{\frac{2}{9}} \;,\qquad 
	\int_0^{\infty}\di t  \, e^{-2t\varepsilon}e^{-2t(k_{\F}^\sigma)^2}\|\hat{v}_{t,\sigma}\|_2^2 \leq C_\kappa \rho^{\frac{1}{3}-\kappa} \;.
\]
\end{lemma}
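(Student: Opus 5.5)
The plan is to evaluate both $t$-integrals \emph{before} the momentum sums, using Fubini/Tonelli, since all integrands are non-negative. By definition \eqref{eq: def ut vt} and Plancherel (with the normalization used in \Cref{lem:v bounds}, i.e.\ $\|\hat f\|_2^2 \simeq \frac{(2\pi)^3}{L^3}\sum_k|\hat f(k)|^2$, which becomes $\int_{\RRR^3}\d k$ as $L\to\infty$) we have
\[
\|\hat u_{t,\sigma}\|_2^2 \simeq \sum_{k_\F^\sigma<|k|\le 6\rho_\sigma^{2/9}} e^{-2t|k|^2}\,,\qquad
\|\hat v_{t,\sigma}\|_2^2 \simeq \sum_{|k|< k_\F^\sigma} e^{2t|k|^2}\,.
\]
Exchanging sum and $t$-integral then gives, up to constants,
\[
\int_0^\infty\!\d t\, e^{-2t\varepsilon}e^{2t(k_\F^\sigma)^2}\|\hat u_{t,\sigma}\|_2^2 \simeq \sum_{k_\F^\sigma<|k|\le 6\rho_\sigma^{2/9}} \frac{1}{2(|k|^2-(k_\F^\sigma)^2)+2\varepsilon}\,,
\qquad
\int_0^\infty\!\d t\, e^{-2t\varepsilon}e^{-2t(k_\F^\sigma)^2}\|\hat v_{t,\sigma}\|_2^2 \simeq \sum_{|k|<k_\F^\sigma}\frac{1}{2((k_\F^\sigma)^2-|k|^2)+2\varepsilon}\,.
\]
Each $t$-integral converges because the exponent is negative: $|k|>k_\F^\sigma$ in the first case and $|k|<k_\F^\sigma$ in the second, with $\varepsilon>0$ in both.

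Next I bound these Riemann sums by the corresponding integrals over $\RRR^3$, uniformly for $L$ large. The summands are bounded by $1/(2\varepsilon)$ and are monotone in $|k|$, so the difference between sum and integral is $o_{L\to\infty}(1)$ at fixed $\rho$. I then pass to radial coordinates, writing $r=|k|$ and $k_\F:=k_\F^\sigma\sim\rho_\sigma^{1/3}$.

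For the $u$-term the radial integral is $\int_{k_\F}^{6\rho^{2/9}} r^2\,\d r/(r^2-k_\F^2+\varepsilon)$, which I split at $r=2k_\F$.
\begin{itemize}
\item On $r\ge 2k_\F$ the integrand is at most $C$, so this part contributes at most $C\rho^{2/9}$.
\item On $k_\F<r<2k_\F$ I use $r^2-k_\F^2\ge k_\F(r-k_\F)$. This part is then bounded by $Ck_\F^2\int_0^{k_\F}\d s/(k_\F s+\varepsilon)\le Ck_\F\log(1+k_\F^2/\varepsilon)$.
\end{itemize}
For the $v$-term I use $k_\F^2-r^2\ge k_\F(k_\F-r)$ on $0\le r<k_\F$. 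The same logarithmic bound $Ck_\F\log(1+k_\F^2/\varepsilon)$ follows, and this is the only contribution.

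The main (mild) obstacle is controlling this logarithmic singularity at the Fermi surface, where the regularization $\varepsilon$ is what keeps the integrals finite. Since $\varepsilon=\rho^{2/3+\delta}$ and $k_\F^2\sim\rho^{2/3}$, we get $\log(1+k_\F^2/\varepsilon)\le C(1+\delta)|\log\rho|$. Hence the Fermi-surface contribution is $\le C\rho^{1/3}|\log\rho|\le C_\kappa\rho^{1/3-\kappa}$ for every $\kappa>0$, which is the claimed bound for the $v$-term. For the $u$-term it is moreover $\le C\rho^{2/9}$ since $1/3>2/9$, so the first bound $C\rho^{2/9}$ follows. The only care needed is that the constants may depend on the fixed parameter $\delta$, which is harmless.
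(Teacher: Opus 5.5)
Your proposal is correct and follows essentially the argument the paper relies on. The paper defers to \cite[Lemma~3.4]{GHNS24}, and the same computation appears explicitly in the proof of \Cref{lem:ts}: exchange the $t$-integral with the momentum sum to obtain $L^{-3}\sum_k \big(2\,\big||k|^2-(k_{\F}^\sigma)^2\big|+2\varepsilon\big)^{-1}$ over the shell, resp.\ the Fermi ball, compare the Riemann sum with the integral, and control the Fermi-surface singularity by $C k_{\F}^\sigma\log\big(1+(k_{\F}^\sigma)^2/\varepsilon\big)$ together with $\log(1+\rho^{-\delta})\le C\delta|\log\rho|\le C_\kappa\rho^{-\kappa}$. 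One harmless slip is the Plancherel normalization: it is $\|\hat f\|_2^2=L^{-3}\sum_k|\hat f(k)|^2$, without the factor $(2\pi)^3$, which only changes constants.
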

For the proof, we refer to \cite[Lemma 3.4]{GHNS24}, where (recalling $\varepsilon = \rho^{\frac 23 + \delta}$) the constant $C_\kappa$ comes from estimating \begin{equation}\label{eq: est log 1} \log(1 + \rho^{\frac 23} \varepsilon^{-1}) = \log(1 + \rho^{-\delta}) \le C \delta |\log(\rho)| \le C_\kappa \rho^{-\kappa} \;.\end{equation}
In particular, $C_\kappa$ blows up as $\delta \to \infty$, but is finite at fixed, arbitrarily large $\delta$.
From Lemma \ref{lem:t}, it follows that 
\begin{equation}\label{eq: est int t uv}
	\int_0^\infty \di t \, e^{-2t\varepsilon}\|\hat{u}_{t,\sigma}\|_2 \|\hat{v}_{t,\sigma}\|_2
    \leq C_\kappa \rho^{\frac{5}{18}-\kappa} \;. 
\end{equation}
\begin{lemma}[Integration over $t$ and $s$]\label{lem:ts} For $\hat{v}_{t,\sigma},\hat{u}_{t,\sigma}$ defined as in \eqref{eq: def ut vt} with $\sigma\in\{\uparrow, \downarrow\}$, $t\in\mathbb{R}$, given any $\kappa > 0$, there exist $ C_\kappa, C > 0 $ such that
\begin{equation}\label{eq: int ts}
\begin{aligned}
	\int_0^{\infty}\di t  \int_0^\infty \di s  \, e^{2(t+s) (-(k_{\F}^\sigma)^2 -(k_{\F}^{\sigma^\prime})^2 -\varepsilon)}
        \|\hat{v}_{t+s, \sigma}\|_2^2
        \|\hat{v}_{t+s, \sigma^\prime}\|_2^2
    &\leq  C_\kappa \rho^{\frac{2}{3} - \kappa} \;, \\
	\int_0^{\infty}\di t  \int_0^\infty \di s  \, e^{2(t+s) ((k_{\F}^\sigma)^2 +(k_{\F}^{\sigma^\prime})^2 -\varepsilon)} 
        \|\hat{u}_{t+s,\sigma}\|_2^2\|
        \hat{u}_{t+s,\sigma^\prime}\|_2^2
    &\leq C\rho^{\frac{4}{9}} \;, \\
	\int_0^{\infty}\di t  \int_0^\infty \di s  \, e^{2(t+s) (-(k_{\F}^\sigma)^2 +(k_{\F}^{\sigma^\prime})^2 -\varepsilon)} 
        \|\hat{v}_{t+s, \sigma}\|_2^2
        \|\hat{u}_{t+s, \sigma^\prime}\|_2^2 
    &\leq C_\kappa \rho^{\frac{5}{9} - \kappa} \;.
\end{aligned}
\end{equation}
\end{lemma}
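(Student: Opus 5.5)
The plan is to reduce each double integral to a single integral in the variable $\tau := t+s$ and then evaluate explicit radial integrals. The integrands depend on $(t,s)$ only through $t+s$, so the substitution $\tau = t+s$ with Jacobian factor $\tau$ gives
\[
	\int_0^\infty \di t\int_0^\infty \di s\, F(t+s) = \int_0^\infty \di\tau\, \tau F(\tau) \;.
\]
All three bounds then become one-dimensional integrals of $\tau e^{-2\tau\varepsilon}$ against products of norms. Using \eqref{eq: def ut vt}, these norms are explicit in the thermodynamic limit:
\[
	e^{-2\tau (k_{\F}^\sigma)^2}\|\hat v_{\tau,\sigma}\|_2^2 \simeq \int_{|k|<k_{\F}^\sigma} e^{-2\tau((k_{\F}^\sigma)^2-|k|^2)}\di k \;, \qquad
	e^{2\tau (k_{\F}^\sigma)^2}\|\hat u_{\tau,\sigma}\|_2^2 \simeq \int_{k_{\F}^\sigma<|k|<6\rho^{2/9}} e^{-2\tau(|k|^2-(k_{\F}^\sigma)^2)}\di k \;.
\]
The first is bounded by $C\rho^{1/3}\min\{\rho^{1/3},\tau^{-1}\}$. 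This follows by writing $|k|^2=(k_{\F})^2-w$, so that $\di k \approx C k_{\F}\,\di w$ near the surface, and integrating over $w\in[0,k_{\F}^2]$. The second is bounded by $C\int_{k_{\F}}^{6\rho^{2/9}} r^2 e^{-2\tau(r^2-k_{\F}^2)}\di r$.

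Each bound then follows by inserting these estimates and integrating in $\tau$ with the weight $\tau e^{-2\tau\varepsilon}$. For the $vv$ term the integrand is bounded by $C\rho^{2/3}\tau\min\{\rho^{2/3},\tau^{-2}\}e^{-2\tau\varepsilon}$. On $\tau<\rho^{-2/3}$ this contributes $C\rho^{4/3}\rho^{-4/3}\rho^{2/3}\cdot\frac12$, which is of order $\rho^{2/3}$. On $\tau>\rho^{-2/3}$ it contributes $\rho^{2/3}\int_{\rho^{-2/3}}^\infty \tau^{-1}e^{-2\tau\varepsilon}\di\tau\le C\rho^{2/3}\log(1+\rho^{2/3}\varepsilon^{-1})$. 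By \eqref{eq: est log 1} this is at most $C_\kappa\rho^{2/3-\kappa}$. The $uu$ term has no logarithm because the momentum shell extends up to $\rho^{2/9}\gg k_{\F}$. I will bound the product of the two shell integrals directly. Writing $r^2-k_{\F}^2=w$ and $r^2\di r\le C r\,\di w$, and keeping the $\tau$ integral innermost, gives
\[
	\int_0^\infty \tau\, e^{-2\tau(w+w')}\di\tau = \frac{C}{(w+w')^2} \;.
\]
We are left with $\iint r r' (w+w')^{-2}\di w\,\di w'$ over $w,w'\lesssim\rho^{4/9}$ with $r,r'\lesssim\rho^{2/9}$, that is,
\[
	\int\!\!\int \frac{rr'\,\di w\,\di w'}{(w+w')^2} \;.
\]
The two-dimensional integral of $(w+w')^{-2}$ diverges logarithmically at the origin. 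However, there $r,r'\approx k_{\F}$ is small, and at large $w$ one has $r\sim\sqrt w$. Splitting into the region $w,w'<k_{\F}^2$ (contribution roughly $\rho^{2/3}\cdot\log$, which is dominated) and its complement, where $r r'\le \sqrt{ww'}\le w+w'$, yields $\int\!\!\int (w+w')^{-1}$ over the box $[0,\rho^{4/9}]^2$. This is of order $\rho^{4/9}$. The logarithm from the inner region is at most $\log(\rho^{4/9}/\rho^{2/3})\sim|\log\rho|$ times $\rho^{2/3}$, which is still $\le C\rho^{4/9}$. This gives $C\rho^{4/9}$.

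The mixed $vu$ term is handled the same way. After integrating in $\tau$ one gets $\iint k_{\F}\, r\,(w+w')^{-2}\di w\,\di w'$, with $w\in[0,k_{\F}^2]$ a hole variable and $w'\in[0,\rho^{4/9}]$ a particle variable. Integrating in $w$ first gives $k_{\F}r\,[(w')^{-1}-(w'+k_{\F}^2)^{-1}]$. The region $w'>k_{\F}^2$, using $r\le C\sqrt{w'}$, contributes $k_{\F}^3\int_{k_{\F}^2}^{\rho^{4/9}}(w')^{-3/2}\di w'\le Ck_{\F}^2\sim\rho^{2/3}$. Together with the factor $k_{\F}$ from the hole measure this is about $\rho^{2/3}$. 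The region $w'<k_{\F}^2$ gives $k_{\F}^2\int_0^{k_{\F}^2}\min\{(w')^{-1},\ldots\}$, which is where the $\varepsilon$-regularization enters: keeping $e^{-2\tau\varepsilon}$ replaces $(w+w')^{-2}$ by $(w+w'+\varepsilon)^{-2}$, and the resulting logarithm is again controlled by \eqref{eq: est log 1}. This gives $C_\kappa\rho^{2/3-\kappa}\le C_\kappa\rho^{5/9-\kappa}$, which is the claimed, weaker bound.

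The main obstacle is the bookkeeping of the measures near the Fermi surface, so that the logarithmic singularities at $w=w'=0$ are correctly cut off by $\varepsilon$ and by the finite shell widths. I would first verify the $vv$ case carefully, since it produces the exact logarithm, and then reuse its near-surface analysis for the other two cases. The exact powers $\rho^{4/9}$ and $\rho^{5/9}$ presumably come from cruder bounds, such as Cauchy--Schwarz together with \Cref{lem:t}. A fallback is therefore to bound $\tau\le\int$-splittings by products of the single-$t$ integrals of \Cref{lem:t}. For example, on $\tau\le\rho^{-2/3}$ one can estimate $\tau\|\hat v_\tau\|^2e^{-2\tau k_{\F}^2}\le C\rho^{1/3}$ and use the second lemma factor. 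This reproduces the rates $\rho^{2/9}\cdot\rho^{1/3}$ for the mixed term and $\rho^{2/9}\cdot\rho^{2/9}$ for the $uu$ term.
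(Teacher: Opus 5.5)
Your argument is correct in substance, but it takes a different route from the paper.

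\textbf{Comparison with the paper.} The paper never reduces to $\tau=t+s$. It expands the norms as momentum sums and uses that $e^{-2(t+s)A}$ factorizes, so the $(t,s)$-integral equals $(4A^2)^{-1}$ for each pair of momenta. It then decouples with $(a+b)^2\ge 4ab$, where $a$ and $b$ are the two single-particle energies, each shifted by $\varepsilon/2$. All three bounds thereby become products of one-body sums: $\frac{1}{L^3}\sum_p \hat v_\sigma(p)\,((k_{\F}^\sigma)^2-|p|^2+\varepsilon/2)^{-1}\le C_\kappa\rho^{\frac13-\kappa}$ from \eqref{eq: est with log}, and its particle-shell analogue, which is $\le C\rho^{\frac29}$. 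These are exactly the quantities of \Cref{lem:t}, so your guessed fallback is what the paper does.

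Your kernel $\int_0^\infty \tau e^{-2\tau A}\di\tau=(4A^2)^{-1}$ is the same object. The difference is that you keep it coupled in $(w,w')$; for the $vv$ term you instead bound each norm pointwise in $\tau$, which is equally fine. Your route costs more bookkeeping but is sharper. For the mixed term you get $C_\kappa\rho^{\frac23-\kappa}$ rather than $C_\kappa\rho^{\frac59-\kappa}$, because the coupled kernel decays like $(w')^{-2}$ in the particle energy, whereas the decoupled particle sum sees the whole shell up to $6\rho^{2/9}$. For the $uu$ term both routes give $\rho^{\frac49}$, which is sharp.

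\textbf{Steps that need repair.}

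\emph{The $uu$ estimate.} You must keep $e^{-2\tau\varepsilon}$, i.e.\ work with $(w+w'+\varepsilon)^{-2}$. Without it, the integral over $w,w'<k_{\F}^2$ is infinite; the smallness of $rr'\approx k_{\F}^2$ does not cure a divergence. The logarithm there is $\log(1+k_{\F}^2/\varepsilon)\le C_\kappa\rho^{-\kappa}$, not $\log(\rho^{4/9}/\rho^{2/3})$.

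Also, $rr'\le\sqrt{ww'}$ is false in the complement region when one of $w,w'$ is small. A clean replacement for both issues is $rr'\le\frac12(r^2+r'^2)\le\frac12(w+w')+C\rho^{\frac23}$, which gives
\[
\iint\frac{rr'\,\di w\,\di w'}{(w+w'+\varepsilon)^2}\le\iint_{[0,C\rho^{4/9}]^2}\frac{\di w\,\di w'}{w+w'}+C\rho^{\frac23}\log\Big(1+\frac{C\rho^{4/9}}{\varepsilon}\Big)\le C\rho^{\frac49}\;.
\]

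\emph{The $vv$ paragraph.} The pointwise bound should read $C\rho^{\frac13}\min\{\rho^{\frac23},\tau^{-1}\}$, so the squared bound is $\min\{\rho^{\frac43},\tau^{-2}\}$. Your subsequent computation (crossover at $\rho^{-2/3}$, value $\rho^{\frac43}$) already uses these correct values.

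\emph{The mixed term.} The hole factor $k_{\F}$ is already contained in $k_{\F}^3$ and should not be multiplied in again. Your result $\rho^{\frac23}$ is nonetheless right.
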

\begin{proof}
We start by proving the first bound in \eqref{eq: int ts}:
\begin{align*}
	&\int_0^{\infty}\di t  \int_0^\infty \di s  \, e^{2(t+s) (-(k_{\F}^\sigma)^2 -(k_{\F}^{\sigma^\prime})^2 -\varepsilon)}
        \|\hat{v}_{t+s, \sigma}\|_2^2
        \|\hat{v}_{t+s, \sigma^\prime}\|_2^2 
\\ 
& = \frac{1}{L^6}\sum_{p,k}\hat{v}_\sigma(p)\hat{v}_{\sigma^\prime}(k)\left(\int_0^\infty \di t \, e^{- 2t((k_{\F}^\sigma)^2 + (k_{\F}^{\sigma^\prime})^2 - |p|^2 - |k|^2 + \varepsilon)}\right)^2 
\\ 
& = \frac{1}{4 L^6}\sum_{p,k}\frac{\hat{v}_\sigma(p)\hat{v}_{\sigma^\prime}(k)}{((k_{\F}^\sigma)^2 - |p|^2 + (k_{\F}^{\sigma^\prime})^2 - |k|^2 + \varepsilon)^2} \;.
\end{align*}
We now use that for $ |p| \le |k^\sigma_{\F}| $ and $ |k| \le |k^\sigma_{\F}| $,
\[
	\frac{1}{((k_{\F}^\sigma)^2 - |p|^2 + (k_{\F}^{\sigma^\prime})^2 - |k|^2 + \varepsilon)^2}
    \leq \frac{C}{((k_{\F}^\sigma)^2 - |p|^2 + {\varepsilon/2})((k_{\F}^{\sigma^\prime})^2 - |k|^2 + {\varepsilon/2})} \;.
\]
Bounding the sum by the respective integral, we conclude
\begin{align*}
	&\int_0^{\infty}\di t  \int_0^\infty \di s  \, e^{2(t+s) (-(k_{\F}^\sigma)^2 -(k_{\F}^{\sigma^\prime})^2 -\varepsilon)}
        \|\hat{v}_{t+s, \sigma}\|_2^2
        \|\hat{v}_{t+s, \sigma^\prime}\|_2^2  
	\\ 
	&\leq \frac{C}{L^6}\sum_{p,k}\frac{\hat{v}_\sigma(p)\hat{v}_{\sigma^\prime}(k)}{((k_{\F}^\sigma)^2 - |p|^2 + {\varepsilon/2})((k_{\F}^{\sigma^\prime})^2 - |k|^2 + {\varepsilon/2})}
    \leq C_\kappa \rho^{\frac{2}{3} - \kappa} \;,
\end{align*}
for any $\kappa >0$. To see the last inequality, view the sum as a Riemann sum and bound as in \cite{GHNS24}
\begin{equation}\label{eq: est with log}
    \sum_p \frac{\hat{v}_\sigma(p)}{(k_{\F}^\sigma)^2 - |p|^2 + \varepsilon/2} \leq C \kF^\sigma \log\left( 1 + \frac{2(\kF^\sigma)^2}{\varepsilon} \right) 
    \leq C_\kappa \rho^{1/3 -\kappa} \;,\qquad \forall \kappa >0 \;.
\end{equation}
The second inequality in~\eqref{eq: int ts} can be proved in the same way:
\begin{align*}
& \int_0^{\infty}\di t  \int_0^\infty \di s  \, e^{2(t+s) ((k_{\F}^\sigma)^2 +(k_{\F}^{\sigma^\prime})^2 -\varepsilon)} 
    \|\hat{u}_{t+s,\sigma}\|_2^2
    \|\hat{u}_{t+s,\sigma^\prime}\|_2^2 
\\ 
&= \frac{1}{L^6}\sum_{p,k}\hat{u}^<_\sigma(p)\hat{u}^<_{\sigma^\prime}(k) \left(\int_0^{\infty} \di t \, e^{-2t( |p|^2 + |k|^2 - (k_{\F}^\sigma)^2 - (k_{\F}^{\sigma^\prime})^2 + \varepsilon)}\right)^2
\\ 
& \leq \frac{C}{L^6}\sum_{p,k}\frac{\hat{u}^<_\sigma(p)\hat{u}^<_{\sigma^\prime}(k)}{(|p|^2 - (k_{\F}^\sigma)^2 + \varepsilon)(|k|^2 - (k_{\F}^{\sigma^\prime})^2 + \varepsilon)}
\leq C \rho^{\frac{4}{9}} \;.
\end{align*}
Finally, we prove the third inequality in~\eqref{eq: int ts}. Writing the term explicitly we find
\begin{align*}
&\int_0^{\infty}\di t  \int_0^\infty \di s  \, e^{2(t+s) (-(k_{\F}^\sigma)^2 +(k_{\F}^{\sigma^\prime})^2 -\varepsilon)} 
    \|\hat{v}_{t+s, \sigma}\|_2^2
    \|\hat{u}_{t+s, \sigma^\prime}\|_2^2 
\\ 
& = \frac{1}{L^6}\sum_{k,p} \hat{v}_\sigma(p) \hat{u}^<_{\sigma^\prime}(k)
    \left( \int_0^{\infty}\di t \, 
    e^{-2t(\varepsilon + (k_{\F}^\sigma)^2 - |p|^2 + |k|^2 - (k_{\F}^{\sigma^\prime})^2)} \right)^2
\\ 
&= \frac{1}{4 L^6}\sum_{k,p} \frac{\hat{v}_{\sigma}(p) \hat{u}^<_{\sigma^\prime}(k)}{(|k|^2 - (k_{\F}^{\sigma^\prime})^2 + (k_{\F}^\sigma)^2 - |p|^2 + \varepsilon)^2} 
\\ 
&\leq \left(\frac{1}{2 L^3}\sum_p \frac{\hat{v}_{\sigma}(p)}{(k_{\F}^{\sigma^\prime})^2 - |p|^2 + {\varepsilon/2}}\right)
    \left(\frac{1}{2 L^3}\sum_k \frac{\hat{u}^<_{\sigma^\prime}(k)}{|k|^2 - (k_{\F}^{\sigma^\prime})^2 + {\varepsilon/2}}\right)
\leq C_\kappa \rho^{\frac{5}{9} -\kappa} \;.
\end{align*}
This concludes the proof.
\end{proof}

\subsection{Estimate of the Number Operator}

We frequently need bounds for the number operator $ \cN $.
\begin{lemma}[Estimate for $\mathcal{N}$]\label{lem: est number op} Recall $T_{1;\lambda}$ and $T_{2;\lambda}$ from~\eqref{eq: def T1 unitary} and~\eqref{eq: def T2 unitary} with $\lambda\in [0,1]$. Then for any $\kappa>0$, there exists $ C_\kappa > 0 $ such that
\begin{equation}\label{eq: est number op}
\begin{aligned}
  \langle T_{2;\lambda}\Omega, \mathcal{N} T_{2;\lambda}\Omega\rangle
  &\leq C_\kappa L^3\rho^{\frac{14}{9} -\kappa} \;, \\
  \langle T_{1;\lambda}T_2\Omega, \mathcal{N} T_{1;\lambda}T_2\Omega\rangle
  &\leq C_\kappa L^3\rho^{\frac{14}{9} -\kappa} \;.
\end{aligned}
\end{equation}
\end{lemma}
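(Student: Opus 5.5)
We propose a Grönwall/Duhamel argument in the interpolation parameter $\lambda$, handling $T_{2;\lambda}$ first and then $T_{1;\lambda}$ acting on $T_2\Omega$.

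\textbf{The bound for $T_{2;\lambda}$.} Set $f(\lambda) := \langle T_{2;\lambda}\Omega, \mathcal{N} T_{2;\lambda}\Omega\rangle$. Since $B_2$ creates four fermions and $[\mathcal{N}, B_2^*] = 4B_2^*$, we have
\[
\partial_\lambda f(\lambda) = -8\,\mathrm{Re}\,\langle T_{2;\lambda}\Omega, B_2 T_{2;\lambda}\Omega\rangle \;.
\]
Rather than closing this by Grönwall against $\mathcal{N}$ (which only gives $\|B_2\psi\|\lesssim$ something times $\langle\mathcal{N}\rangle^{1/2}$), we bound $|\langle \psi, B_2\psi\rangle|$ directly by a constant times $L^3$.

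To do this, write $B_2$ in position space using the $t$-integral representation \eqref{eq: formula eta t-integral}:
\[
B_2 = 8\pi a\int_0^\infty \di t\, e^{-2t\varepsilon}\int \di x\,\di y\, \chi_<(x-y)\, d_{t,x,\uparrow}\, c_{t,x,\uparrow}\, d_{t,y,\downarrow}\, c_{t,y,\downarrow}
\]
(up to signs and the $-r$ convention). We bound the two spin-down operators $d_{t,y,\downarrow}c_{t,y,\downarrow}$ in operator norm via Lemma~\ref{lem:v bounds}. We pair the spin-up operators $d_{t,x,\uparrow}c_{t,x,\uparrow}$ with the vector $\psi$ in norm and bound the $x$-integral by $L^3$ via Cauchy--Schwarz, using $\|\chi_<\|_1\le C$ from Lemma~\ref{lem: bounds phi}. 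This gives
\[
|\langle\psi,B_2\psi\rangle| \le C L^3 \int_0^\infty \di t\, e^{-2t\varepsilon}\,\|\hat u_{t,\uparrow}\|_2\|\hat v_{t,\uparrow}\|_2\,\|\hat u_{t,\downarrow}\|_2\|\hat v_{t,\downarrow}\|_2 \;.
\]
The exponentials $e^{\pm t(k_\F)^2}$ cancel in pairs, so Cauchy--Schwarz in $t$ together with Lemma~\ref{lem:ts} bounds this by
\[
C L^3 \left(\rho^{\frac 49}\right)^{\frac12}\left(C_\kappa\rho^{\frac23-\kappa}\right)^{\frac12} = C_\kappa L^3 \rho^{\frac59-\kappa}\;.
\]

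This is not good enough, since we need $\rho^{14/9}$. The direct term therefore has to be refined. Integrating the $\lambda$-derivative once more, we have
\[
\langle \Omega, B_2\Omega\rangle = 0\;, \qquad \partial_\lambda\langle\psi_\lambda,B_2\psi_\lambda\rangle = \langle\psi_\lambda,[B_2,\,B_2-B_2^*]\psi_\lambda\rangle\;,
\]
so the leading contribution is $\lambda\|B_2^*\Omega\|^2$ plus commutator terms. The quantity $\|B_2^*\Omega\|^2$ equals $L^{-3}\sum|\hat\eta|^2\times$ phase-space factors. Using the $t$,$s$ double-integral representation of $|\hat\eta|^2$ and Lemma~\ref{lem:ts}, this is $L^3 a^2$ times a factor of order $\rho^{2/3}\cdot\rho^{4/9}\cdot\rho^{4/9}$, up to $\rho^{-\kappa}$, which is $\lesssim C_\kappa L^3\rho^{14/9-\kappa}$. (Equivalently, compute $\langle\mathcal N\rangle$ to second order in $\lambda$: it is $4\lambda^2\|B_2^*\Omega\|^2$ plus errors.)

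The commutator $[B_2^*,B_2]$ minus its vacuum part is normal-ordered quadratic and higher. Its terms are sextic and quartic in fermionic operators, and we bound them by Lemma~\ref{lem:a*a*aa} and Lemma~\ref{lem:a*a*a*aaa}, using Lemma~\ref{lem: a*a with N} to control $\int\|a(f_x)\psi\|^2$ by $\langle\mathcal N\rangle$. This yields a differential inequality of the form
\[
|\partial_\lambda^2 f| \le C_\kappa L^3\rho^{\frac{14}{9}-\kappa} + o(1)\, f\;,
\]
which Grönwall closes.

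\textbf{The bound for $T_{1;\lambda}$.} The same strategy applies to $\langle T_{1;\lambda}T_2\Omega,\mathcal N\,T_{1;\lambda}T_2\Omega\rangle$. Its derivative is $-8\,\mathrm{Re}\langle\psi, B_1\psi\rangle$ with $\psi = T_{1;\lambda}T_2\Omega$. We write $B_1$ in position space as
\[
\int \di x\,\di y\, \varphi^>(x-y)\, a_\uparrow(u^>_x)a_\uparrow(v_x)\,a_\downarrow(u^>_y)a_\downarrow(v_y)
\]
(the high-momentum cutoff allowing us to use $u^>$), and use the $b$-operator bound of Lemma~\ref{lem: est b operator}, namely $\|b_\sigma(\varphi^>_z,v,u^>)\|\le C\rho^{7/18}$, for one spin. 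For the other spin we use Cauchy--Schwarz with $\int\|a(u^>_y)\psi\|^2\le\langle\mathcal N\rangle$ and $\|v\|_2\le C\rho^{1/2}$. This gives
\[
\partial_\lambda g \le C\rho^{\frac7{18}+\frac12}\, L^{3/2}\, g^{1/2}\;,
\]
so that
\[
g(1) \le 2 g(0) + C L^3 \rho^{\frac{16}9}\;,
\]
with $g(0)$ bounded by the $T_2$ estimate from the first step. Hence this part is straightforward.

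\textbf{Expected main obstacle.} The hardest step is the $T_2$ estimate. The first-order bound is far too weak, so the argument must exploit the vanishing vacuum expectation of $B_2$ and bound the remainder terms of the commutator $[B_2,B_2^*]$ with the $t,s$-integral lemmas, while keeping track of the losses coming from the $\log\varepsilon$ factors.
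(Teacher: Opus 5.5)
Your $T_{1;\lambda}$ step is the natural analogue of the paper's argument. You run a first-order Gr\"onwall, control one spin via $\|b_\downarrow(\varphi^>_x,v,u^>)\|\le C\rho^{7/18}$ and the other via $\mathcal{N}^{1/2}$, and $g(1)\le 2g(0)+CL^3\rho^{16/9}$ is correct.

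For $T_{2;\lambda}$, however, you discard this same first-order argument on a mistaken premise. The paper proves the $T_2$ bound with it in one line. In your position-space form of $B_2$:
\begin{itemize}
\item keep the spin-up pair in operator norm, $\|\hat u_{t,\uparrow}\|_2\|\hat v_{t,\uparrow}\|_2$;
\item bound $\|c_{t,y,\downarrow}\|\le Ce^{t(k_{\F}^\downarrow)^2}\rho^{1/2}$;
\item let $d_{t,y,\downarrow}$ act on $\psi$. By \Cref{lem: a*a with N} with $\|\hat u_{t,\downarrow}\|_\infty\le e^{-t(k_{\F}^\downarrow)^2}$ and Cauchy--Schwarz in $y$, $\int\di y\,\|d_{t,y,\downarrow}\psi\|\le L^{3/2}e^{-t(k_{\F}^\downarrow)^2}\langle\psi,\mathcal N\psi\rangle^{1/2}$.
\end{itemize}
The exponentials cancel, $\int\di x\,|\chi_<(x-y)|\le C$, and \eqref{eq: est int t uv} gives
\[
  |\partial_\lambda f(\lambda)| \le C_\kappa L^{\frac32}\rho^{\frac12+\frac{5}{18}-\kappa} f(\lambda)^{\frac12} = C_\kappa L^{\frac32}\rho^{\frac79-\kappa} f(\lambda)^{\frac12} \;.
\]
Hence $f\le C_\kappa L^3\rho^{14/9-\kappa}$, and the exponent $\frac{14}{9}=2\cdot\frac79$ in the statement comes from exactly this. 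Your ``direct'' bound $C_\kappa L^3\rho^{5/9-\kappa}$ is weak only because it never uses that $\langle\mathcal N\rangle\ll L^3$.

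Your second-order route for $T_2$ is nevertheless sound and genuinely different. With $X=B_2-B_2^*$ you have $f(0)=f'(0)=0$ and $f''(\lambda)=\langle\psi_\lambda,[[\mathcal N,X],X]\psi_\lambda\rangle$. Since $\mathcal N=L^3(2\pi)^{-3}n_g$ for $\hat g\equiv 1$, you can invoke \Cref{pro:B2}. It is an operator statement for arbitrary $\psi$ whose proof uses only $\|\hat g\|_\infty$, so there is no circularity. It gives $f''\le 8\|B_2^*\Omega\|^2+C_\kappa\rho^{1/2-\kappa}f$, which closes.

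Moreover, $\|B_2^*\Omega\|^2\le CL^3\rho^{5/3}$ for $L$ large. This follows by rescaling as in \Cref{lem:remove_regularizations} and using $\mathrm{I}_2\le C$ from \Cref{lem:belyakov_integral_bound}. So your route even yields the sharper order $L^3\rho^{5/3}$ for the $T_2$ part, matching the size of the Belyakov term.

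The price is the full nine-term double-commutator analysis, with the $t,s$-integrals of \Cref{lem:ts} and the regroupings $\mathrm{I}_2+\mathrm{I}_5$ and $\mathrm{I}_4+\mathrm{I}_7$. Your sketch compresses this into one sentence, and the remainder also contains normal-ordered quadratic terms, not only quartic and sextic ones. Since $\rho^{14/9}$ suffices for every error term in the paper, the first-order argument is the economical choice.
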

This differs from~\cite[Prop.~3.5]{GHNS24} due to a refined treatment of the logarithmic factors appearing in Lemma \ref{lem:t}, see  \eqref{eq: est log 1}.
\begin{proof}
We use Grönwall's lemma as in~\cite[Prop.~3.5]{GHNS24} (a method which goes back to the study of the dynamics of many-body systems, e.\,g., \cite{RS09,BPS14}): we compute
\begin{align*}
    &|\partial_\lambda \langle T_{2;\lambda}\Omega, \mathcal{N} T_{2;\lambda}\Omega\rangle| \\
    &\leq C \int_0^\infty \di t \; e^{-2 t \varepsilon}
        \int \di x \di y \; |\chi_<(x-y)|
            |\langle T_{2;\lambda}\Omega,
            a_\uparrow(u_{t,x}) a_\uparrow(v_{t,x}) a_\downarrow(u_{t,y}) a_\downarrow(v_{t,y})
            T_{2;\lambda}\Omega \rangle| \\
    &\leq C L^{\frac 32} \rho^{\frac 12}
        \| \cN^{\frac 12} T_{2;\lambda}\Omega \|
        \int_0^\infty \di t \; e^{-2 t \varepsilon}
        \norm{\hat{u}_{t,\uparrow}}_2 \norm{\hat{v}_{t,\uparrow}}_2 \;.
\end{align*}
We estimate the integral as in \eqref{eq: est int t uv}, then apply the Cauchy--Schwarz inequality, followed by Gr\"onwall's lemma recalling that $\lambda \leq 1$, and arrive at the claimed bounds.
\end{proof}

\section{Many-Body Analysis}
To prove \cref{thm:main} and extract the leading-order of $ \langle T_1T_2\Omega, n_g T_1T_2\Omega \rangle $, we perform two Duhamel expansions and estimate all resulting error terms.
\subsection{Second Order Duhamel Expansion}
\label{sec:Duhamel}

\begin{lemma}[Duhamel Expansion]
\label{lem:Duhamelexpansion}
We have
\begin{equation} \label{eq:Duhamelexpansion}
\begin{aligned}
& \expect{T_1T_2\Omega, n_g T_1T_2\Omega} \\
  & = \int_0^1 \ud \lambda_2 \, (1-\lambda_2) \expect{T_{2;\lambda_2}\Omega, [ [n_g, B_2-B_2^*], B_2-B_2^* ] T_{2;\lambda_2}\Omega}
  \\
  & \quad + \int_0^1 \ud \lambda_2 \expect{T_{2;\lambda_2}\Omega, [ [n_g, B_1-B_1^*],B_2-B_2^*] T_{2;\lambda_2}\Omega}
  \\ & \quad
      + \int_0^1 \ud \lambda_1 \, (1-\lambda_1) \expect{T_{1;\lambda_1}T_2\Omega, [[n_g, B_1-B_1^*], B_1-B_1^*] T_{1;\lambda_1}T_2\Omega} \;.
\end{aligned}
\end{equation}
\end{lemma}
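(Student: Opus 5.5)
The plan is a second-order Taylor expansion with integral remainder in the interpolation parameters $\lambda_1,\lambda_2$, carried out first for $T_1$ and then for $T_2$. The zeroth- and first-order terms will vanish because of particle-number parity on the vacuum. Write $X_j := B_j - B_j^*$, which is anti-selfadjoint, so that $T_{j;\lambda} = e^{\lambda X_j}$ and $T_{j;\lambda}^* = e^{-\lambda X_j}$. For any bounded operator $A$ (all operators here are bounded on the relevant finite-$L$ spaces, since $n_g$ and $B_j$ are finite sums of bounded fermionic operators), differentiating gives
\[
\partial_\lambda \big(T_{j;\lambda}^* A\, T_{j;\lambda}\big) = T_{j;\lambda}^* [A, X_j]\, T_{j;\lambda} \;.
\]
Iterating this gives the second derivative with the double commutator $[[A,X_j],X_j]$. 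Hence for $h(\lambda) := \langle \phi, T_{j;\lambda}^* A T_{j;\lambda}\phi\rangle$ the Taylor formula $h(1) = h(0) + h'(0) + \int_0^1 (1-\lambda) h''(\lambda)\,\d\lambda$ holds, and also the first-order version $h(1) = h(0) + \int_0^1 h'(\lambda)\,\d\lambda$.

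\textbf{Step 1 ($T_1$).} Apply the second-order formula with $j=1$, $A=n_g$, $\phi = T_2\Omega$. This produces the third line of \eqref{eq:Duhamelexpansion}, plus the boundary terms $\langle T_2\Omega, n_g T_2\Omega\rangle$ and $\langle T_2\Omega, [n_g, X_1] T_2\Omega\rangle$.

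\textbf{Step 2 (first-order boundary term).} Treat $\langle T_2\Omega,[n_g,X_1]T_2\Omega\rangle$ with the first-order formula for $j=2$, $A = [n_g,X_1]$, $\phi=\Omega$. This yields the second line of \eqref{eq:Duhamelexpansion}, provided $\langle\Omega,[n_g,X_1]\Omega\rangle = 0$. That vanishing holds because $n_g$ conserves particle number, while $B_1$ lowers and $B_1^*$ raises the particle number by four. So $[n_g,X_1]$ maps the $n$-particle sector into the $(n\pm4)$-particle sectors, and its vacuum expectation is zero.

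\textbf{Step 3 (zeroth-order boundary term).} Treat $\langle T_2\Omega, n_g T_2\Omega\rangle$ with the second-order formula for $j=2$, $A=n_g$, $\phi=\Omega$. This gives the first line of \eqref{eq:Duhamelexpansion}. The boundary terms vanish for two reasons: $n_g\Omega = 0$, since $n_g$ is a sum of $\hat a^*\hat a$ terms; and $\langle\Omega,[n_g,X_2]\Omega\rangle=0$ by the same particle-number argument as in Step 2.

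Summing the three pieces gives the claim. The only point needing any care is the derivative formula and its signs; the rest is bookkeeping. The boundedness needed to differentiate the exponentials is automatic at fixed $L$, because $\hat\varphi\widehat\chi_>$ and $\hat\eta^\varepsilon\widehat\chi_<$ are summable at finite $L$ thanks to the momentum cutoffs (for $T_1$ via $\hat\varphi\in\ell^2$ and the bounded $b$-operators).
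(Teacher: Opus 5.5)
Your proposal is correct and follows the paper's proof essentially step for step. You expand to second order in $T_1$, then to second order in $T_2$ for $\langle T_2\Omega, n_g T_2\Omega\rangle$ and to first order in $T_2$ for $\langle T_2\Omega,[n_g,B_1-B_1^*]T_2\Omega\rangle$; the vacuum boundary terms vanish because $n_g\Omega=0$ and $[n_g,B_j-B_j^*]$ changes particle number by $\pm 4$.

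One small inaccuracy, which the paper shares since it also applies a formula for bounded $A$ to $n_g$: $n_g$ and $B_1$ are not finite sums, and $n_g$ need not be bounded unless $\hat g\in\ell^1(\Lambda^*)$. This is harmless, because $[n_g,B_j-B_j^*]$ is bounded, $|\langle\psi,n_g\psi\rangle|\le (2\pi)^3L^{-3}\|\hat g\|_\infty\langle\psi,\mathcal{N}\psi\rangle$, and $T_{j;\lambda}$ preserves the domain of $\mathcal{N}$.
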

\begin{proof}
Consider any bounded operator $ A $.
We first expand to second order in $T_1$ and then expand the resulting terms in $T_2$, so all terms are expanded to second order joint in $T_1$ and $T_2$:
\begin{align*}
&\expect{T_1T_2\Omega, A T_1T_2\Omega} \\
  & = \expect{T_2\Omega, A T_2\Omega} + \expect{T_2\Omega, [A,B_1-B_1^*] T_2\Omega}
  \\ & \quad
    + \int_0^1 \ud \lambda_1 \, (1-\lambda_1) \expect{T_{1;\lambda_1}T_2\Omega, [[A, B_1 - B_1^*], B_1 - B_1^*] T_{1;\lambda_1}T_2\Omega}
  \\
  & = \expect{\Omega, A \Omega}
    + \expect{\Omega, [A,B_2-B_2^*] \Omega}
  \\ & \quad 
    + \int_0^1 \ud \lambda_2 \, (1-\lambda_2) \expect{T_{2;\lambda_2}\Omega, [ [A, B_2-B_2^*], B_2 - B_2^*] T_{2;\lambda_2}\Omega}
  \\ & \quad 
    + \expect{\Omega, [A,B_1-B_1^*] \Omega}
    + \int_0^1 \ud \lambda_2 \expect{T_{2;\lambda_2}\Omega, [ [A, B_1 - B_1^*], B_2 - B_2^*] T_{2;\lambda_2}\Omega}
  \\ & \quad 
    + \int_0^1 \ud \lambda_1 \, (1-\lambda_1) \expect{T_{1;\lambda_1}T_2\Omega, [[A, B_1 - B_1^*], B_1 - B_1^*] T_{1;\lambda_1}T_2\Omega} \;.
\end{align*}
For $A=n_g = \frac{(2\pi)^3}{L^3} \sum_{k,\sigma} \hat g(k) \hat a_{k,\sigma}^* \hat a_{k,\sigma}$ as in \eqref{eq:ng}, the vacuum expectation of $A$ vanishes since $ \hat a_{k,\sigma} \Omega = 0 $. Likewise, $[A, B_j - B_j^*]$ amounts to a sum of a term with 4 creation operators and another one with 4 annihilation operators, so its vacuum expectation vanishes. This renders the desired result.
\end{proof}

\subsection{Analysis of \texorpdfstring{$[[n_g , B_2 - B_2^\ast], B_2 - B_2^\ast]$}{[[ng,B2-B2*],B2-B2*]}}\label{sec: B2}

The main result of this section is the following proposition. 
\begin{proposition}\label{pro:B2}
Let $n_g$ and $B_2$ as in \eqref{eq:ng} and \eqref{eq: def T2 unitary}, respectively. Then
\begin{multline*}
	 [[n_g, B_2 - B_2^\ast], B_2 - B_2^\ast] 
	\\ 
	= {\frac{2(2\pi)^3}{L^9}} \sum_{\substack{|r| \leq \kF^\uparrow < |r+p| \\ |r^\prime| \leq \kF^\downarrow < |r^\prime - p|}}\left(\hat{g}(r+p) + \hat{g}(-r) + \hat{g}(r^\prime - p) + \hat{g}(-r^\prime) \right)\left(\hat{\eta}^{\varepsilon}_{r,r^\prime}(p) \widehat{\chi}_<(p)\right)^2 + \mathcal{E}_{B_2} \;,
\end{multline*}
where, given $ \kappa > 0 $, there exists $ C_\kappa > 0 $ such that for any $\psi\in \mathcal{F}_{\mathrm{f}}$,
\[
	|\langle \psi, \mathcal{E}_{B_2}\psi \rangle|
    \leq C_\kappa {L^{-3}}\|\hat{g}\|_\infty \rho^{\frac 12 -\kappa}\langle\psi, \mathcal{N}\psi\rangle \;.
\]
\end{proposition}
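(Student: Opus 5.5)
We compute the double commutator explicitly and then normal order. Since $n_g$ is quadratic and diagonal in momentum, we have $[\hat a^*_{k,\sigma}\hat a_{k,\sigma}, \hat a_{q,\sigma}] = -\delta_{k,q}\hat a_{q,\sigma}$. Hence for a product of annihilators
\[
[n_g, b_{p,r,\uparrow}b_{-p,r',\downarrow}] = -\tfrac{(2\pi)^3}{L^3}\bigl(\hat g(r+p)+\hat g(-r)+\hat g(r'-p)+\hat g(-r')\bigr)\, b_{p,r,\uparrow}b_{-p,r',\downarrow}.
\]
Abbreviating $G(p,r,r') := \hat g(r+p)+\hat g(-r)+\hat g(r'-p)+\hat g(-r')$, this gives
\[
[n_g,B_2-B_2^*] = -\tfrac{(2\pi)^3}{L^6}\sum_{p,r,r'} G\,\hat\eta^\varepsilon_{r,r'}(p)\widehat\chi_<(p)\,\bigl(b_{p,r,\uparrow}b_{-p,r',\downarrow} + \mathrm{h.c.}\bigr).
\]
Call the sum appearing here $\tilde B_2 + \tilde B_2^*$. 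It has the same structure as $B_2$, but its kernel is weighted by $G$, with $|G|\le 4\|\hat g\|_\infty$.

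The outer commutator is then $-[\tilde B_2+\tilde B_2^*, B_2-B_2^*]$. The terms $[\tilde B_2,B_2]$ and $[\tilde B_2^*,B_2^*]$ vanish, since annihilators anticommute and the $b$'s are even. What remains is $[\tilde B_2,B_2^*]$ together with its adjoint-type partner. We compute $[b_{p,r,\uparrow}b_{-p,r',\downarrow},\, b^*_{q,s,\uparrow}b^*_{-q,s',\downarrow}]$ via the CAR and normal order. The fully contracted part, with $q=p$, $s=r$, $s'=r'$, produces the scalar
\[
\tfrac{2(2\pi)^3}{L^9}\sum G\,(\hat\eta^\varepsilon\widehat\chi_<)^2,
\]
restricted to $|r|\le k_\F^\uparrow<|r+p|$ and $|r'|\le k_\F^\downarrow<|r'-p|$ by the factors $\hat u\hat v$. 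The factor $2$ comes from the two cross terms. The remaining normal-ordered pieces have either one contraction (quartic, of the form $a^*a^*aa$) or two contractions (quadratic, of the form $a^*a$). These, together with the sign of the constant check, constitute $\mathcal E_{B_2}$.

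The main obstacle is bounding $\mathcal E_{B_2}$ by $L^{-3}\rho^{1/2-\kappa}\mathcal N$ uniformly in the state. Following \cite{GHNS24}, we use the $t$-integral representation \eqref{eq: formula eta t-integral}, which factorizes $\hat\eta^\varepsilon$ into $e^{-t\varepsilon}$ times products of $\hat u_{t}$ and $\hat v_{t}$. In position space each error term becomes
\[
\int_0^\infty\!\di t\int_0^\infty\!\di s\; e^{-2(t+s)\varepsilon}\int \di x\,\di y\,\di z\,\di z'\;\chi_<(x-y)\,\chi_<(z-z')\;\langle \text{quartic/quadratic in } c_t,d_t\rangle.
\]
The weight $\hat g$ is absorbed into one of the operators as $c_{t,\sigma}(g_x)$ or $d_{t,\sigma}(g_x)$, whose norms are controlled by $\|\hat g\|_\infty$ via Lemma~\ref{lem:v bounds}.

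For the quadratic terms, we keep the two uncontracted operators. Lemma~\ref{lem: a*a with N} bounds them by $\mathcal N$, while the contracted pair gives factors $\|\hat v_{t+s}\|_2^2$ or $\|\hat u_{t+s}\|_2^2$; this is the only place where both $t$ and $s$ integrals appear. For the quartic terms, we use Lemma~\ref{lem:a*a*aa}. Two operators are bounded in norm, giving $\|\hat u_t\|_2$ and $\|\hat v_t\|_2$ factors. The other two are each controlled by $(\int\|a(f_x)\psi\|^2)^{1/2}\le \langle\psi,\mathcal N\psi\rangle^{1/2}$. The $\chi_<$ norms come from Lemma~\ref{lem: bounds phi}, namely $\|\chi_<\|_1\le C$ and $\|\chi_<\|_2\le C\rho^{1/3}$. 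The $L^{-3}$ prefactor comes from the $1/L^6$ in the double sum after converting one momentum sum into a spatial integral.

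The remaining step is the $t,s$ integration, carried out with Lemmas~\ref{lem:t} and~\ref{lem:ts} and \eqref{eq: est int t uv}. The worst case pairs two $v$-factors, giving $\rho^{2/3-\kappa}$, or a $uv$ mix, giving $\rho^{5/9-\kappa}$. Combined with the $\chi_<$ and $\rho$ factors, each case should yield at least $\rho^{1/2-\kappa}$. The term I expect to be tightest is the quartic one with a single $u$–$v$ contraction across spins. If the direct bound is insufficient there, I would redistribute the $e^{\pm t k_\F^2}$ weights between the two spin species, as in the proof of Lemma~\ref{lem:ts}.
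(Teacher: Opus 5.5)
\textbf{There is a genuine gap in the error analysis.} Your extraction of the scalar term is fine: the signs, the factor $2$ from $[\tilde B_2,B_2^*]$ and $[\tilde B_2^*,B_2]$, and the restriction to $|r|\le k_\F^\uparrow<|r+p|$, $|r'|\le k_\F^\downarrow<|r'-p|$ all agree with the paper.

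\textbf{The contraction count is off by one, and the sextic terms are missing.} Each $b_{p,r,\uparrow}b_{-p,r',\downarrow}$ is a product of \emph{four} fermionic annihilation operators. Normal ordering $[\tilde B_2,B_2^*]$ therefore produces:
\begin{itemize}
\item sextic terms (one contraction),
\item quartic terms (two contractions),
\item quadratic terms (three contractions),
\item the scalar (four contractions).
\end{itemize}
Your plan contains no sextic terms, but they are present. They are $\mathrm{I}_6$ and $\mathrm{I}_9$ in the paper, e.g.\ $\hat d^*_{r+p,\sigma}\hat c^*_{-r',\sigma'}\hat d^*_{r'-p,\sigma'}\hat d_{s-k,\sigma'}\hat c_{-s,\sigma'}\hat d_{r+k,\sigma}$, where only $\hat v_\sigma(r)$ is contracted.

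Neither Lemma~\ref{lem:a*a*aa} nor Lemma~\ref{lem: a*a with N} controls these terms, and a naive position-space bound fails. After bounding the inner pairs in norm, the two outer operators stay coupled through the kernel $v_{t+s}(x-z)$ of the single contraction. But $\|v_{t+s}\|_{L^1(\Lambda)}$ is not bounded uniformly in $L$, since the Fermi-ball kernel decays only like $|x|^{-2}$.

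The missing idea is to keep the contracted momentum in Fourier space, writing the term as
\[
\frac{1}{L^3}\sum_r \hat v_{t+s,\uparrow}(r)\Big\langle \int \di x\, e^{-ir\cdot x}\, b_\downarrow((\chi_<)_x,v_t,u_t)\, d_{t,\uparrow}(g_x)\psi,\ \int \di z\, e^{-ir\cdot z}\, b_\downarrow((\chi_<)_z,v_s,u_s)\, d_{s,z,\uparrow}\psi\Big\rangle .
\]
Then apply Lemma~\ref{lem:a*a*a*aaa}, which is Plancherel in $r$ with $|\hat v_{t+s,\uparrow}(r)|\le e^{(t+s)(k_\F^\uparrow)^2}$. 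Combine it with the operator bound $\|b_\sigma(h,f,g)\|\le\|h\|_1\|\hat f_\sigma\|_2\|\hat g_\sigma\|_2$ from Lemma~\ref{lem: est b operator}. This gives $\big(\int_0^\infty\di t\,e^{-2t\eps}\|\hat v_{t,\downarrow}\|_2\|\hat u_{t,\downarrow}\|_2\big)^2\le C_\kappa\rho^{5/9-\kappa}$.

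\textbf{The quadratic terms are more delicate than your sketch suggests.} These are $\mathrm{I}_2$ and $\mathrm{I}_4$, with three contractions each. Every contracted factor carries $t+s$, so the double integral collapses to $\int_0^\infty \tau\, e^{-2\tau\eps}(\cdots)\,\di\tau$. If you control the contracted factors by $\ell^2$ norms, as you propose, the bound picks up a factor of order $\eps^{-1/2}=\rho^{-1/3-\delta/2}$. This cannot be absorbed, because $\delta>2/9$ is arbitrary. (Also, every error term contains two factors $\hat\eta^\eps$ and hence both a $t$- and an $s$-integral, not only the quadratic ones.)

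The paper never estimates these terms on their own:
\begin{itemize}
\item $\mathrm{I}_2$ is exactly the term produced by moving $\hat c^*_{-r',\sigma'}$ past $\hat c_{-r'+p-k,\sigma'}$ in $\mathrm{I}_5$. So $\mathrm{I}_2+\mathrm{I}_5$ is a single anti-normal-ordered quartic expression. Its middle operators $c_{s,z',\downarrow}c^*_{t,y,\downarrow}$ are bounded in norm by $\|\hat v_{s,\downarrow}\|_2\|\hat v_{t,\downarrow}\|_2$, with $t$ and $s$ \emph{separated}. Lemma~\ref{lem:ts} then gives $\rho^{11/18-\kappa}$.
\item $\mathrm{I}_4+\mathrm{I}_7$ is treated the same way, with $\hat d\hat d^*$ anti-normal ordered.
\end{itemize}
It is this last term, a same-spin $u_{t+s}v_{t+s}$ contraction, that saturates the bound $\rho^{1/2-\kappa}$, not the cross-spin term you guessed. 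The exponent comes from $\rho^{2/9}$ (Lemma~\ref{lem:t}) times $(\rho^{5/9-\kappa})^{1/2}$ (Lemma~\ref{lem:ts}).
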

\begin{proof}
First, note that (recall the formula in \eqref{eq: def T2 unitary} for $B_2$)
\begin{align*}
 		B_2 
 		&  
        = \frac{1}{L^3}\sum_{p,r,r^\prime \in \Lambda^*}\hat{\eta}_{r,r^\prime}^\varepsilon(p)\widehat{\chi}_<(p) 
        \hat{d}_{r+p,\uparrow}\hat{c}_{-r,\uparrow}\hat{d}_{r^\prime - p, \downarrow}\hat{c}_{-r^\prime, \downarrow} \;.
\end{align*}
Recall that $n_g = n_{g,\uparrow} + n_{g,\downarrow}$, with $n_{g,\sigma}$ as in \eqref{eq:ng}. We only show the computations for the double commutator with $n_{g,\uparrow}$, the commutator with $n_{g,\downarrow}$ is analyzed similarly.
Using the definition of $B_2$ in \eqref{eq: def T2 unitary}, we compute $[[n_{g,\uparrow}, B_2 - B_2^\ast], B_2 - B_2^\ast]$. 
By a straightforward but lengthy computation, putting all terms in normal order, we find
\begin{align} \label{eq:I1I9_expansion}
	[[n_{g,\uparrow}, B_2 - B_2^\ast], B_2 - B_2^\ast] = {\frac{(2\pi)^3}{L^3}}\sum_{j=1}^9 \mathrm{I}_j + \mathrm{h.c.} \;,
\end{align}
with
\allowdisplaybreaks[4]
\begin{align*}
	\mathrm{I}_1 &:= \frac{1}{L^6}\sum_{p,r,r^\prime} \left(\hat{g}(r+p) + \hat{g}(-r) \right)\left(\hat{\eta}^{\varepsilon}_{r,r^\prime}(p) \widehat{\chi}_<(p)\right)^2\hat{u}_\uparrow(r+p)\hat{u}_\downarrow(r^\prime - p)\hat{v}_\uparrow(r)\hat{v}_\downarrow(r^\prime) \;,
\\
\mathrm{I}_2 & = -\frac{1}{L^6}\sum_{\sigma\neq\sigma^\prime} \sum_{p,r,r^\prime}
    \left(\delta_{\sigma,\uparrow}(\hat{g}(r+p) + \hat{g}(-r))
    +\delta_{\sigma,\downarrow} (\hat{g}(r^\prime - p) + \hat{g}(-r^\prime)) \right)
    \left(\hat{\eta}^{\varepsilon}_{r,r^\prime}(p) \widehat{\chi}_<(p)\right)^2 \times
\\ 
&\qquad \times \hat{v}_\sigma(r) \hat{u}_{\sigma^\prime}(r^\prime - p) \hat{v}_{\sigma^\prime}(r^\prime)
        \hat{d}^\ast_{r+p, \sigma}
        \hat{d}_{r+p,\sigma} \;,
\\
	\mathrm{I}_3 &:= \frac{1}{L^6}\sum_{p,k,r,r^\prime}\left(\hat{g}(r+p) + \hat{g}(-r) \right) \hat{\eta}^\varepsilon_{r,r^\prime}(p)\widehat{\chi}_<(p)\hat{\eta}^{\varepsilon}_{r,r^\prime}(k)\widehat{\chi}_<(k)\times
	\\ 
	&\qquad \times \hat{v}_\uparrow(r)\hat{v}_\downarrow(r^\prime)
        \hat{d}_{r+p,\uparrow}^\ast
        \hat{d}_{r^\prime - p, \downarrow}^\ast
        \hat{d}_{r^\prime - k, \downarrow}
        \hat{d}_{r+k,\uparrow} \;,
\\
	\mathrm{I}_4 &:= - \frac{1}{L^6}\sum_{\sigma\neq \sigma^\prime}\sum_{p,r, r^\prime} 
    \left(\delta_{\sigma,\uparrow}(\hat{g}(r+p) + \hat{g}(-r))
    +\delta_{\sigma,\downarrow} (\hat{g}(r^\prime - p) + \hat{g}(-r^\prime)) \right)
    \times
	\\ 
	&\qquad \times \left(\hat{\eta}^{\varepsilon}_{r,r^\prime}(p) \widehat{\chi}_<(p)\right)^2 \hat{u}_\sigma(r+p)\hat{u}_{\sigma^\prime}(r^\prime - p)\hat{v}_{\sigma^\prime}(r^\prime)
        \hat{c}_{-r,\sigma}^\ast 
        \hat{c}_{-r,\sigma} \;,
\\
	\mathrm{I}_5 &:= \frac{1}{L^6}\sum_{\sigma\neq \sigma^\prime}\sum_{p,k,r,r^\prime}
    \left(\delta_{\sigma,\uparrow}(\hat{g}(r+p) + \hat{g}(-r))
    +\delta_{\sigma,\downarrow} (\hat{g}(r^\prime - p) + \hat{g}(-r^\prime)) \right)
    \hat{\eta}_{r,r^\prime}^{\varepsilon}(p)\widehat{\chi}_<(p)\times
	\\ 
	&\quad \times \hat{\eta}_{r,r^\prime - p +k}(k)\widehat{\chi}_<(k)\hat{u}_{\sigma^\prime}(r^\prime - p)\hat{v}_\sigma(r)
        \hat{d}_{r+p,\sigma}^\ast 
        \hat{c}^\ast_{-r^\prime, \sigma^\prime} 
        \hat{c}_{-r^\prime + p - k, \sigma^\prime} 
        \hat{d}_{r+k,\sigma} \;,
\\
	\mathrm{I}_6 &:= - \frac{1}{L^6}\sum_{\sigma\neq \sigma^\prime}\sum_{p,k,r,r^\prime,s}
    \left(\delta_{\sigma,\uparrow}(\hat{g}(r+p) + \hat{g}(-r))
    +\delta_{\sigma,\downarrow} (\hat{g}(r^\prime - p) + \hat{g}(-r^\prime)) \right)
    \hat{\eta}^{\varepsilon}_{r,r^\prime}(p) \times
	\\ 
	&\qquad \times  \widehat{\chi}_<(p)\hat{\eta}^{\varepsilon}_{r,s}(k)\widehat{\chi}_<(k)\hat{v}_\sigma(r)
        \hat{d}_{r+p,\sigma}^\ast
        \hat{c}_{-r^\prime, \sigma^\prime}^\ast
        \hat{d}^\ast_{r^\prime - p, \sigma^\prime}
        \hat{d}_{s-k,\sigma^\prime}
        \hat{c}_{-s, \sigma^\prime}
        \hat{d}_{r+k,\sigma} \;,
\\
	\mathrm{I}_7 &:= \frac{1}{L^6}\sum_{\sigma\neq \sigma^\prime}\sum_{p,r,r^\prime, s}\left(\delta_{\sigma,\uparrow}(\hat{g}(r+p) + \hat{g}(-r))
    +\delta_{\sigma,\downarrow} (\hat{g}(r^\prime - p) + \hat{g}(-r^\prime)) \right)
    \hat{\eta}_{r,r^\prime}^{\varepsilon}(p) \times
	\\ 
	&\qquad \times \widehat{\chi}_<(p)^2\hat{\eta}_{s,r^\prime}^{\varepsilon}(p) \hat{u}_{\sigma^\prime}(r^\prime - p) \hat{v}_{\sigma^\prime}(r^\prime) 
        \hat{c}^{\ast}_{-r,\sigma}
        \hat{d}_{r+p,\sigma}^\ast 
        \hat{d}_{s+p, \sigma}
        \hat{c}_{-s, \sigma} \;,
\\
	\mathrm{I}_8 &:= \frac{1}{L^6}\sum_{p,r,r^\prime, k}\left(\hat{g}(r+p) + \hat{g}(-r) \right)\hat{\eta}_{r,r^\prime}^{\varepsilon}(p)\widehat{\chi}_<(p)\hat{\eta}_{r+p-k,r^\prime - p + k}^{\varepsilon}(k)\widehat{\chi}_<(k) \times
	\\ 
	&\quad \times \hat{u}_\uparrow(r+p)\hat{u}_\downarrow(r'-p)
        \hat{c}_{-r, \uparrow}^\ast 
        \hat{c}_{-r^\prime, \downarrow}^\ast 
        \hat{c}_{-r^\prime +p-k,\downarrow}
        \hat{c}_{-r-p+k, \uparrow} \;,
\\
	\mathrm{I}_9 &:= -\frac{1}{L^6}\sum_{\sigma\neq \sigma^\prime}\sum_{p,k,r,r^\prime,s}
    \left(\delta_{\sigma,\uparrow}(\hat{g}(r+p) + \hat{g}(-r))
    +\delta_{\sigma,\downarrow} (\hat{g}(r^\prime - p) + \hat{g}(-r^\prime)) \right)
    \hat{\eta}_{r,r^\prime}^\varepsilon(p) \times
	\\ 
	&\qquad \times   \widehat{\chi}_<(p)\hat{\eta}_{r+p-k,s}(k)\widehat{\chi}_<(k)\hat{u}_\sigma(r+p)
        \hat{c}_{-r,\sigma}^\ast 
        \hat{c}_{-r^\prime, \sigma^\prime}^\ast 
        \hat{d}_{r^\prime - p, \sigma^\prime}^\ast
        \hat{d}_{s-k, \sigma^\prime}
        \hat{c}_{-s, \sigma^\prime}
        \hat{c}_{-r-p+k, \sigma} \;.
\end{align*}	
\allowdisplaybreaks[1]

The first term, $\mathrm{I}_1$, is the constant (not containing any operators) we want to extract. We estimate $\mathrm{I}_j$ for $j=2, \ldots 9$, which are all error terms.

By anti-normal ordering the $c$ and $c^*$ operators in $\mathrm I_5$ we absorb the term $\mathrm I_2$ as
\begin{align*}
	\mathrm{I}_2 + \mathrm{I}_5  &= -\frac{1}{L^6}\sum_{\sigma\neq \sigma^\prime}\sum_{p,k,r,r^\prime}
    \left(\delta_{\sigma,\uparrow}(\hat{g}(r+p) + \hat{g}(-r))
    +\delta_{\sigma,\downarrow} (\hat{g}(r^\prime - p) + \hat{g}(-r^\prime)) \right)
    \hat{\eta}_{r,r^\prime}^{\varepsilon}(p) \times
	\\ 
	&\quad \times \widehat{\chi}_<(p)\hat{\eta}_{r,r^\prime - p +k}(k)\widehat{\chi}_<(k)\hat{u}_{\sigma^\prime}(r^\prime - p)\hat{v}_\sigma(r)
    \hat{d}_{r+p,\sigma}^\ast \hat{c}_{-r^\prime + p - k, \sigma^\prime}\hat{c}^\ast_{-r^\prime, \sigma^\prime} \hat{d}_{r+k,\sigma} \;. 
\end{align*}
Now, as observed in \cite{GHNS24}, the constraints on $p$, $r$, and $r^\prime$ allow us to replace both $\hat d_{r+p,\sigma}$ and $\hat d_{r+k,\sigma}$ by $\hat d_{r+p,\sigma}^<$ and $\hat d_{r+k,\sigma}^<$, defined in \eqref{eq: u<} above. With~\eqref{eq: formula eta t-integral} and~\eqref{eq: def ut vt}, we can then conveniently re-write the right-hand side in position space and split it according to the $4$ summands $\hat g(\cdot)$ and
$\mathrm{I}_5  + \mathrm{I}_2= \mathrm{A} + \mathrm{B} + \mathrm{C} + \mathrm{D}$ with
\begin{align*}
	\mathrm{A} &:=  -(8\pi a)^2 \int_0^\infty \di t \int_0^\infty \di s \, e^{-2(t+s)\varepsilon} \int \di x \di y \di z \di z^\prime \chi_<(x-y)\chi_<(z-z^\prime)\times
	\\ 
	&\quad \times u_{t+s,\downarrow}(y;z^\prime)v_{t+s,\uparrow}(x;z) d^\ast_{t,\uparrow}(g_x)c_{s,z'\downarrow}c_{t,y,\downarrow}^\ast d_{s,z,\uparrow} \;,
\\
	\mathrm{B} &:= -(8\pi a)^2 \int_0^\infty \di t \int_0^\infty \di s  \, e^{-2(t+s)\varepsilon}\int \di x \di y \di z \di z^\prime \chi_<(x-y)\chi_<(z-z^\prime)\times
	\\ 
	& \quad \times u_{t+s,\downarrow}(y;z^\prime)(v_{t+s,\uparrow}\ast g)(x;z)  d^\ast_{t,x\uparrow}c_{s,z',\downarrow} c_{t,y,\downarrow}^\ast d_{s,z,\uparrow} \;,
\\
	\mathrm{C} &:= -(8\pi a)^2 \int_0^\infty \di t \int_0^\infty \di s  \, e^{-2(t+s)\varepsilon}\int \di x \di y \di z \di z^\prime \chi_<(x-y)\chi_<(z-z^\prime)\times
	\\ 
	& \quad \times (u_{t+s,\uparrow}\ast g)(y;z^\prime)v_{t+s,\downarrow}(x;z)  d^\ast_{t,x,\downarrow}c_{s,z',\uparrow} c_{t,y,\uparrow}^\ast d_{s,z,\downarrow} \;,
\\
	\mathrm{D} &:= -(8\pi a)^2 \int_0^\infty \di t \int_0^\infty \di s  \, e^{-2(t+s)\varepsilon}\int \di x \di y \di z \di z^\prime \chi_<(x-y)\chi_<(z-z^\prime)\times
	\\ 
	& \quad \times u_{t+s,\uparrow}(y;z^\prime)v_{t+s,\downarrow}(x;z)  
    d^\ast_{t,x,\downarrow}c_{s,z',\uparrow} c_{t,\uparrow}^\ast(g_y) d_{s,z,\downarrow} \;.
\end{align*}

We start with estimating $\mathrm{A}$. An application of \Cref{lem:v bounds,lem:a*a*aa}, followed by \Cref{lem: a*a with N} with $ \Vert \hat{u}_{t,\sigma} \Vert_\infty \le e^{-t(k^\sigma_{\F})^2} $ gives 
\begin{align*}
	|\langle\psi, \mathrm{A}\psi\rangle|
    &\leq  C\|\hat{g}\|_\infty \int_0^\infty \di t  \int_0^\infty \di s \, e^{-2\varepsilon(t+s)} \int \di x \di y \di z \di z^\prime\, |\chi_<(x-y)||\chi_<(z-z^\prime)| \times \\ 
	&\qquad \times |u_{t+s, \downarrow}(y;z^\prime)|
        |v_{t+s,\uparrow}(x;z)|
        \|\hat{v}_{s, \downarrow}\|_2
        \|\hat{v}_{t,\downarrow}\|_2
        \|d_{t,\uparrow}(g_x)\psi\| 
        \|d_{s,z,\uparrow}\psi\| \\
    &\leq C\|\hat{g}\|_\infty \|\chi_<\|_1^2 \langle \psi, \mathcal{N}\psi\rangle \times
	\\ 
	&\quad \times \left(\int_0^\infty \di t  \int_0^\infty \di s \, e^{-2(t+s)\varepsilon} e^{-(t+s)(k_{\F}^\uparrow)^2}
        \|\hat{v}_{s,\downarrow}\|_2
        \|\hat{v}_{t,\downarrow}\|_2
        \|\hat{u}_{t+s,\downarrow}\|_2
        \|\hat{v}_{t+s, \uparrow}\|_2 \right) \;.
\end{align*}
With the aid of \Cref{lem:t,lem:ts} we estimate the integrals with respect to $t$ and $s$:
\begin{align}
    &\int_0^\infty \di t \int_0^\infty \di s \, e^{-2\varepsilon(t+s)}e^{-(t+s)(\kF^\uparrow)^2}
        \|\hat{v}_{s,\downarrow}\|_2
        \|\hat{v}_{t,\downarrow}\|_2
        \|\hat{u}_{t+s,\downarrow}\|_2
        \|\hat{v}_{t+s, \uparrow}\|_2
\nonumber \\ 
&\leq \left(\int_0^\infty \di t \int_0^\infty \di s \, e^{-2\varepsilon(t+s)}e^{2(t+s)(\kF^\downarrow)^2} e^{-2(t+s)(\kF^\uparrow)^2}
        \|\hat{u}_{t+s,\downarrow}\|_2^2
        \|\hat{v}_{t+s,\uparrow}\|_2^2 \right)^{\frac{1}{2}} \times
\nonumber \\ 
&\quad \times \left(\int_0^\infty \di t \int_0^\infty \di s \, e^{-2\varepsilon(t+s)}e^{-2(t+s)(\kF^\downarrow)^2}
        \|\hat{v}_{s,\downarrow}\|_2^2
        \|\hat{v}_{t,\downarrow}\|_2^2 \right)^{\frac{1}{2}}
\leq C_\kappa \rho^{\frac{11}{18} - \kappa} \;. \label{eq:st_estimation_strategy}
\end{align}
The terms $B$, $C$ and $D$ are estimated likewise. By Lemma~\ref{lem: bounds phi}, {$\|\chi_<\|_1^2 \le C$}, so
\[
	|\langle\psi, (\mathrm{I}_2 +\mathrm{I}_5) \psi\rangle|
    \leq C_\kappa \|\hat{g}\|_\infty\rho^{\frac{11}{18} - \kappa}\langle \psi, \mathcal{N}\psi\rangle \;.
\]
Next, we consider $\mathrm{I}_3$. We split this into two terms: $\mathrm{I}_{3;a}$ with $\hat g(p+r)$ and $\mathrm{I}_{3;b}$ with $\hat g(-r)$: 
\begin{equation} \label{eq:I3_split}
\begin{aligned}
	\mathrm{I}_{3;a} &:= \frac{1}{L^6}\sum_{p,k,r,r^\prime}\hat{g}(r+p) \hat{\eta}^\varepsilon_{r,r^\prime}(p)\widehat{\chi}_<(p)\hat{\eta}^{\varepsilon}_{r,r^\prime}(k)\widehat{\chi}_<(k)
        \hat{v}_\uparrow(r)
        \hat{v}_\downarrow(r^\prime)
        \hat{d}_{r+p,\uparrow}^\ast 
        \hat{d}_{r^\prime - p, \downarrow}^\ast 
        \hat{d}_{r^\prime - k, \downarrow}
        \hat{d}_{r+k,\uparrow}\;, \\
	\mathrm{I}_{3;b} &:= \frac{1}{L^6}\sum_{p,k,r,r^\prime}\hat{g}(r) \hat{\eta}^\varepsilon_{r,r^\prime}(p)\widehat{\chi}_<(p)\hat{\eta}^{\varepsilon}_{r,r^\prime}(k)\widehat{\chi}_<(k)
        \hat{v}_\uparrow(r)
        \hat{v}_\downarrow(r^\prime)
        \hat{d}_{r+p,\uparrow}^\ast
        \hat{d}_{r^\prime - p, \downarrow}^\ast
        \hat{d}_{r^\prime - k, \downarrow}
        \hat{d}_{r+k,\uparrow} \;.
\end{aligned}
\end{equation}
We re-write $\mathrm{I}_{3;a}$ in position space, using~\eqref{eq: formula eta t-integral} and~\eqref{eq: def ut vt}, where the momentum constraints allow replacing $ \hat{d}_{r+p,\uparrow} $ and $ \hat{d}_{r+k,\uparrow} $ by $ \hat{d}_{r+p,\uparrow}^< $ and $ \hat{d}_{r+k,\uparrow}^< $:
\begin{align*}
	\mathrm{I}_{3; a} &:= (8\pi a)^2\int_0^{\infty}\di t  \int_0^\infty \di s \, e^{-2\varepsilon(t+s)}\int \di x \di y \di z \di z^\prime \, 
    \chi_<(x-y)\chi_<(z-z^\prime)\times
	\\ 
	&\qquad \times  v_{t+s,\uparrow}(x;z)v_{t+s,\downarrow}(y;z^\prime)d^\ast_{t,\uparrow}(g_x) d_{t,y,\downarrow}^\ast d_{s,z',\downarrow} d_{s,z,\uparrow}
    \;.
\end{align*}
Using~\Cref{lem:a*a*aa,lem:v bounds,lem: a*a with N}, we get   
\begin{align*}
	|\langle \psi, \mathrm{I}_{3;a}\psi\rangle| 
	&\leq C\|\hat{g}\|_\infty\|\chi_<\|_1^2 \langle \psi, \mathcal{N}\psi\rangle \times
	\\ 
	&\quad \times \left(\int_0^\infty \di t \int_0^\infty \di s \, e^{-2\varepsilon(t+s)}e^{-(t+s)(\kF^\uparrow)^2}  \|\hat{v}_{t+s,\uparrow}\|_2 \|\hat{v}_{t+s,\downarrow}\|_2 \|\hat{u}_{t,\downarrow}\|_2 \|\hat{u}_{s,\downarrow}\|_2\right) \,.
\end{align*}
For the term $\mathrm{I}_{3;b}$, we proceed in the same way, {using  also \eqref{eq: a conv L2 norm v u conv}, i.\,e., $\|v_{t+s, \uparrow}\ast g\|_2 \leq \|\hat{g}\|_\infty\|\hat{v}_{t+s,\uparrow}\|_2$}. The integrals with respect to $t$ and $s$ are bounded via \Cref{lem:t,lem:ts} as in \eqref{eq:st_estimation_strategy}. 
Thus, recalling \Cref{lem: bounds phi} to bound $\norm{\chi_<}_1 \le C$, we conclude that 
\[
	|\langle \psi, \mathrm{I}_3 \psi\rangle|
    \leq C_\kappa \|\hat{g}\|_\infty\rho^{\frac{5}{9} -\kappa}\langle \psi, \mathcal{N}\psi\rangle \;.
\]
To estimate $\mathrm{I}_4$, it is convenient to combine it with $\mathrm{I}_7$, similarly as what we did for $\mathrm{I}_2$ and $\mathrm{I}_5$, by partially anti-normal ordering the $d$ and $d^*$ operators in $\mathrm{I}_7$. 
We thus write 
\begin{align*}
	\mathrm{I}_4 + \mathrm{I}_7 &= -\frac{1}{L^6}\sum_{\sigma\neq \sigma^\prime}\sum_{p,r,r^\prime, s}\left(\delta_{\sigma,\uparrow} (\hat{g}(r+p) + \hat{g}(-r)) + \delta_{\sigma,\downarrow} (\hat{g}(r^\prime - p) + \hat{g}(-r^\prime))\right) \times
	\\ 
	&\qquad \times \hat{\eta}_{r,r^\prime}^{\varepsilon}(p)\widehat{\chi}_<(p)^2 \hat{\eta}_{s,r^\prime}^{\varepsilon}(p)
    \hat{u}_{\sigma^\prime}(r^\prime - p) \hat{v}_{\sigma^\prime}(r^\prime) 
    \hat{c}^{\ast}_{-r,\sigma}\hat{d}_{s+p, \sigma}\hat{d}_{r+p,\sigma}^\ast \hat{c}_{-s, \sigma} \;.
\end{align*}
To estimate the term above, we proceed as for $\mathrm{I}_2 + \mathrm{I}_5$.
Rewriting in position space, {we split $\mathrm{I}_4 + \mathrm{I}_7 $ according to the $4$ summands $\hat g(\cdot)$, i.\,e.,} $\mathrm{I}_4 + \mathrm{I}_7 = \mathrm{E} + \mathrm{F} + \mathrm{G} + \mathrm{H}$ with
\begin{align*}
	\mathrm{E} &:=-(8\pi a)^2 \int_0^\infty \di t \int_0^\infty \di s  \, e^{-2(t+s)\varepsilon}\int \di x \di y\, (\chi_< \ast (u_{t+s,\downarrow}v_{t+s,\downarrow}) * \chi_<)(x-y)\times
	\\ 
	&\quad \times c^\ast_{t,x,\uparrow} d_{s,y,\uparrow} d_{t,\uparrow}(g_x)^* c_{s,y,\uparrow} \;,
\\
	\mathrm{F} &:=-(8\pi a)^2 \int_0^\infty \di t \int_0^\infty \di s  \, e^{-2(t+s)\varepsilon}\int \di x \di y\, (\chi_< \ast (u_{t+s,\downarrow}v_{t+s,\downarrow}) * \chi_<)(x-y) \times
	\\ 
	&\quad \times c_{t,\uparrow}(g_x)^* d_{s,y,\uparrow} d_{t,x,\uparrow}^* c_{s,y,\uparrow} \;,
\\
	\mathrm{G} &:=-(8\pi a)^2 \int_0^\infty \di t \int_0^\infty \di s  \, e^{-2(t+s)\varepsilon}\int \di x \di y\, (\chi_< \ast ((u_{t+s,\uparrow} * g) v_{t+s,\uparrow}) * \chi_<)(x-y) \times
	\\ 
	&\quad \times c_{t,x,\downarrow}^* d_{s,y,\downarrow} d_{t,x,\downarrow}^* c_{s,y,\downarrow} \;,
\\
	\mathrm{H} &:=-(8\pi a)^2 \int_0^\infty \di t \int_0^\infty \di s  \, e^{-2(t+s)\varepsilon}\int \di x \di y\, (\chi_< \ast (u_{t+s,\uparrow} (v_{t+s,\uparrow} * g)) * \chi_<)(x-y)\times
	\\ 
	&\quad \times c_{t,x,\downarrow}^* d_{s,y,\downarrow} d_{t,x,\downarrow}^* c_{s,y,\downarrow} \;.
\end{align*}

All the terms  can be estimated in the same way. For instance, $\mathrm{E}$ can be bounded {with the aid of Cauchy--Schwarz and using $\|\chi_< \ast (u_{t+s,\downarrow}v_{t+s, \downarrow}) \ast \chi_<\|_1 \leq \| \chi_<\|_1^2 \|u_{t+s,\downarrow}v_{t+s, \downarrow}\|_1 \leq \|\chi_<\|_1^2 \|\hat{u}_{t+s, \downarrow}\|_2 \|\hat{v}_{t+s,\downarrow}\|_2$} together with \Cref{lem: a*a with N,lem:v bounds} as
\begin{align*}
	|\langle \psi, \mathrm{E}\psi\rangle| &\leq C\|\hat{g}\|_\infty \, \|(\chi_< \ast \chi_<)\|_1 \, \langle \psi, \mathcal{N}\psi\rangle \times 
	\\ 
	&\quad \times \left(\int_0^\infty \di t \int_0^\infty \di s \, e^{-2(t+s)\varepsilon} e^{(t+s)(k_{\F}^\uparrow)^2} \|\hat{u}_{t,\uparrow}\|_2\|\hat{u}_{s,\uparrow}\|_2\|\hat{u}_{t+s, \downarrow}\|_2 \|\hat{v}_{t+s, \downarrow}\|_2\right)
	\\ 
	&\leq C\|\hat{g}\|_\infty \langle \psi, \mathcal{N}\psi\rangle \left(\int_0^\infty \di t  \int_0^\infty \di s  \,e^{-2(t+s)\varepsilon} e^{2(t+s)(k_{\F}^\uparrow)^2}\|\hat{u}_{t,\uparrow}\|_2^2 \|\hat{u}_{s,\uparrow}\|_2^2\right)^{\frac{1}{2}}\times
	\\ 
	&\quad \times \left(\int_0^\infty \di t  \int_0^\infty \di s  \,e^{-2(t+s)\varepsilon} \|\hat{u}_{t+s,\downarrow}\|_2^2 \|\hat{v}_{t+s,\downarrow}\|_2^2\right)^{\frac 12} \;,
\end{align*}
where we used $\|(\chi_< \ast \chi_<)\|_1 \leq C\|\chi_<\|_1^2 \leq C$ (see~\eqref{eq:norms chi}). With Lemma~\ref{lem:t} and Lemma~\ref{lem:ts}, we conclude that
\[
	|\langle \psi, (\mathrm{I}_4 + \mathrm{I}_7) \psi\rangle|
    \leq C_\kappa \|\hat{g}\|_\infty\rho^{\frac{1}{2} -\kappa}\langle \psi, \mathcal{N}\psi\rangle \;.
\]
Next we consider the term $\mathrm{I}_6$. As for the other terms, we can rewrite it in position space as $\mathrm{I}_6 = \mathrm{I}_{6;a} + \mathrm{I}_{6;b} + \mathrm{I}_{6;c} + \mathrm{I}_{6;d}$ with
\allowdisplaybreaks[4]
\begin{equation} \label{eq:I6_split}
\begin{aligned}
	\mathrm{I}_{6;a}
    &:= - \frac{(8\pi a)^2}{L^3}
        \int_0^\infty \di t  \int_0^\infty  \di s
        \sum_r \hat{v}_{t+s,\uparrow}(r)   \, e^{-2(t+s)\varepsilon} \int \di x \di y \di z \di z^\prime\, e^{ir\cdot (x-z)} \times
	\\ 
	&\qquad \times \chi_<(x-y)\chi_<(z-z^\prime)
    d_{t,\uparrow}(g_x)^* c_{t,y,\downarrow}^\ast d_{t,y,\downarrow}^\ast 
    d_{s,z',\downarrow} c_{s,z',\downarrow} d_{s,z,\uparrow} \;,
\\
	\mathrm{I}_{6;b}
    &:= - \frac{(8\pi a)^2}{L^3}
        \int_0^\infty \di t  \int_0^\infty  \di s
        \sum_r \hat{v}_{t+s,\uparrow}(r)\hat g(r) e^{-2(t+s)\varepsilon} \int \di x \di y \di z \di z^\prime\, e^{ir\cdot (x-z)} \times
	\\ 
	&\qquad \times \chi_<(x-y)\chi_<(z-z^\prime)
    d^\ast_{t,x,\uparrow} c_{t,y,\downarrow}^\ast d_{t,y,\downarrow}^\ast d_{s,z',\downarrow} c_{s,z',\downarrow} d_{s,z,\uparrow} \;,
\\
	\mathrm{I}_{6;c}
    &:= - \frac{(8\pi a)^2}{L^3}
        \int_0^\infty \di t  \int_0^\infty  \di s
        \sum_r \hat{v}_{t+s,\downarrow}(r)
        e^{-2(t+s)\varepsilon} \int \di x \di y \di z \di z^\prime\, e^{ir\cdot (x-z)}  \times
	\\ 
	&\qquad \times \chi_<(x-y)\chi_<(z-z^\prime)
    d^\ast_{t,x,\downarrow} c_{t,y,\uparrow}^\ast d_{t,\uparrow}(g_y)^\ast d_{s,z',\uparrow} c_{s,z',\uparrow} d_{s,z,\downarrow} \;,
\\
	\mathrm{I}_{6;d}
    &:= - \frac{(8\pi a)^2}{L^3}
        \int_0^\infty \di t  \int_0^\infty  \di s
        \sum_r \hat{v}_{t+s,\downarrow}(r)  e^{-2(t+s)\varepsilon} \int \di x \di y \di z \di z^\prime\, e^{ir\cdot (x-z)} \times
	\\ 
	&\qquad \times \chi_<(x-y)\chi_<(z-z^\prime)
    d^\ast_{t,x,\downarrow} c_{t,\uparrow}(g_y)^\ast d_{t,y,\uparrow}^\ast d_{s,z',\uparrow} c_{s,z',\uparrow} d_{s,z,\downarrow} \;. 
\end{aligned}
\end{equation}
\allowdisplaybreaks[0]

We start with $\mathrm{I}_{6;a}$. We use Lemma \ref{lem:a*a*a*aaa} together with $|\hat{v}_{t+s,\sigma}(r)|\leq e^{(t+s)(k_{\F}^\sigma)^2}$ to write  
\begin{align}\label{eq: est CS r I6}
	&|\langle \psi, \mathrm{I}_{6;a}\psi\rangle |
    \leq C \int_0^{\infty}\di t \int_0^\infty \di s  \, e^{-2(t+s)\varepsilon}
        \Bigg\vert \frac{1}{L^3} \sum_r \hat{v}_{t+s,\uparrow}(r) \times\nonumber
	\\ 
	& \qquad\times \Big\langle \int \di x e^{-ir\cdot x} b_{\downarrow}((\chi_<)_x, v_t, u_t) d_{t,\uparrow}(g_x) \psi,
        \int \di z e^{-ir\cdot z} b_{\downarrow}((\chi_<)_z, v_s, u_s) d_{s,z,\uparrow} \psi
        \Big\rangle \Bigg\vert \nonumber
	\\
  &\leq C \int_0^{\infty}\di t  \int_0^\infty \di s  \, e^{-2(t+s)\varepsilon}e^{(t+s)(k_{\F}^\uparrow)^2}
  \sup_ x \left\| b_{\downarrow}((\chi_<)_x, v_t, u_t)\right\|
  \sup_z \left\|b_{\downarrow}((\chi_<)_z, v_s, u_s)\right\| \times \nonumber
  \\ 
  &\quad \times
  \sqrt{\int \di x \, \|d_{t,\uparrow}(g_x)\psi\|^2}
  \sqrt{\int \di z\, \|d_{s,z,\uparrow}\psi\|^2} \;.
\end{align}
Using then \Cref{lem: est b operator,lem: a*a with N} we find
\begin{align*}
	&|\langle \psi, \mathrm{I}_{6;a}\psi\rangle| 
	\\ 
	&\leq C \norm{\chi_<}_1^2\|\hat{g}\|_\infty \left(\int_0^\infty \di t \int_0^\infty \di s \, e^{-2(t+s)\varepsilon} \|\hat{u}_{t,\downarrow}\|_2 \|\hat{v}_{t,\downarrow}\|_2 \|\hat{u}_{s,\downarrow}\|_2 \|\hat{v}_{s,\downarrow}\|_2 \right)\langle \psi, \mathcal{N}\psi\rangle  
	\\ 
	&\leq C \norm{\chi_<}_1^2 \|\hat{g}\|_\infty 
    \left(\int_0^\infty \di t \, e^{-2t\varepsilon} \|\hat{v}_{t,\downarrow}\|_2 \|\hat{u}_{t,\downarrow}\|_2\right)^2 \langle \psi, \mathcal{N}\psi\rangle  \\
&    \leq C_\kappa \rho^{\frac{5}{9} - \kappa}
        \langle \psi, \mathcal{N}\psi\rangle \;,
\end{align*}
where in the last estimate we used \Cref{lem: bounds phi,lem:t} (see \eqref{eq: est int t uv}).
The terms $\mathrm{I}_{6;b}$, $\mathrm{I}_{6;c}$ and $\mathrm{I}_{6;d}$ can be estimated in the same way, using also \Cref{lem:v bounds}. Therefore
\[
	|\langle \psi, \mathrm{I}_6\psi\rangle|
    \leq C_\kappa \|\hat{g}\|_\infty\rho^{\frac{5}{9} - \kappa}\langle \psi, \mathcal{N}\psi\rangle \;.
\]

We now consider $\mathrm{I}_8$, which in position space we can write as $\mathrm{I}_8 = \mathrm{I}_{8;a} + \mathrm{I}_{8;b}$ with
\allowdisplaybreaks[4]
\begin{align*}
    \mathrm{I}_{8;a} &:= (8\pi a)^2 \int_0^\infty \di t  \int_0^\infty \di s \, e^{-2(t+s)\varepsilon}\int \di x \di y \di z \di z^\prime \, \chi_<(x-y)\chi_<(z-z^\prime)\times
    \\ 
    &\quad \times (u_{t+s, \uparrow}\ast g) (x;z)u_{t+s, \downarrow}(y;z^\prime) c^\ast_{t,x,\uparrow} c^\ast_{t,y,\downarrow} c_{s,z',\downarrow} c_{s,z,\uparrow} \;,
\\
    \mathrm{I}_{8;b} &:= (8\pi a)^2 \int_0^\infty \di t  \int_0^\infty \di s \, e^{-2(t+s)\varepsilon}\int \di x \di y \di z \di z^\prime \, \chi_<(x-y)\chi_<(z-z^\prime)\times
    \\ 
    &\quad \times u_{t+s, \uparrow}(x;z)u_{t+s, \downarrow}(y;z^\prime)
    c^\ast_{t,\uparrow}(g_x) c^\ast_{t,y,\downarrow} c_{s,z',\downarrow} c_{s,z,\uparrow} \;.
\end{align*}
\allowdisplaybreaks[0]
Both terms can be estimated via Lemma~\ref{lem:a*a*aa}, yielding
\begin{align*}
    &|\langle \psi, \mathrm{I}_8\psi\rangle| \leq C\|\hat{g}\|_\infty\|\chi_<\|_1^2 { \bigg(\int_0^\infty \di t \int_0^\infty \di s  \, e^{-2(t+s) \varepsilon}\sqrt{\int \d y\, \|c_{t,y,\downarrow}\psi\|^2}\sqrt{\int \d z\, \|c_{s,z,\uparrow}\psi\|^2}\times}
    \\ 
    &\hspace{12em} \times \|\hat{u}_{t+s, \uparrow}\|_2 \|\hat{u}_{t+s, \downarrow}\|_2 \|\hat{v}_{t,\uparrow}\|_2 \|\hat{v}_{s,\downarrow}\|_2\bigg)
    \\
   & \leq C\|\hat{g}\|_\infty\|\chi_<\|_1^2 \langle \psi, \mathcal{N}\psi\rangle\times
    \\ 
    &\qquad \times \left(\int_0^\infty \di t \int_0^\infty \di s  \, e^{-2(t+s) \varepsilon} e^{t(\kF^\downarrow)^2}e^{s(\kF^\uparrow)^2} \|\hat{u}_{t+s, \uparrow}\|_2 \|\hat{u}_{t+s, \downarrow}\|_2 \|\hat{v}_{t,\uparrow}\|_2 \|\hat{v}_{s,\downarrow}\|_2\right)
    \\ 
    &\leq  C \|\hat{g}\|_\infty\langle \psi, \mathcal{N}\psi\rangle\left(\int_0^\infty \! \di t \int_0^\infty \! \di s \, e^{-2(t+s)\varepsilon} e^{2(t+s)(\kF^\uparrow)^2} e^{2(t+s)(\kF^\downarrow)^2}\|\hat{u}_{t+s,\uparrow}\|_2^2 \|\hat{u}_{t+s, \downarrow}\|_2^2\right)^{\frac{1}{2}}\times
    \\ 
    &\qquad \times \left(\int_0^\infty \di t \int_0^\infty \di s \, e^{-2(t+s)\varepsilon} e^{-2t(\kF^\uparrow)^2}e^{-2s(\kF^\downarrow)^2} \|\hat{v}_{t,\uparrow}\|_2^2 \|\hat{v}_{s,\downarrow}\|_2^2\right)^{\frac{1}{2}} 
    \\ 
    &\leq C_\kappa \|\hat{g}\|_\infty \rho^{\frac{5}{9} - \kappa}\langle \psi, \mathcal{N}\psi\rangle \;,
\end{align*} 
where we used Lemma \ref{lem:t} and Lemma \ref{lem:ts}.

To conclude the analysis, we consider $\mathrm{I}_9$. We write it as a sum of four terms, $\mathrm{I}_9 = \mathrm{I}_{9;a} + \mathrm{I}_{9;b} + \mathrm{I}_{9;c} + \mathrm{I}_{9;d}$, with 
\begin{align*}
    \mathrm{I}_{9;a} &:= - \frac{(8\pi a)^2}{L^3}
        \int_0^\infty \di t  \int_0^\infty \di s
        \sum_r \hat{u}_{t+s, \uparrow}(r) \hat{g}(r) 
        e^{-2(t+s)\varepsilon} \int \di x \di y \di z \di z^\prime\, e^{ir\cdot (x-z)}\times
    \\ 
    &\qquad \times \chi_<(x-y)\chi_<(z-z^\prime)
    c_{t,x,\uparrow}^\ast c^\ast_{t,y,\downarrow} d_{t,y,\downarrow}^\ast d_{s,z',\downarrow} c_{s,z',\downarrow} c_{s,z,\uparrow} \;,
\\
    \mathrm{I}_{9;b} &:= - \frac{(8\pi a)^2}{L^3}
        \int_0^\infty \di t  \int_0^\infty \di s
        \sum_r \hat{u}_{t+s, \uparrow}(r) 
        e^{-2(t+s)\varepsilon} \int \di x \di y \di z \di z^\prime\, e^{ir\cdot (x-z)}\times
    \\ 
    &\qquad \times \chi_<(x-y)\chi_<(z-z^\prime)
    c_{t,\uparrow}^\ast(g_x) c^\ast_{t,y,\downarrow} d_{t,y,\downarrow}^\ast d_{s,z',\downarrow} c_{s,z',\downarrow} c_{s,z,\uparrow} \;,
\\
    \mathrm{I}_{9;c} &:= - \frac{(8\pi a)^2}{L^3}
        \int_0^\infty \di t  \int_0^\infty \di s
        \sum_r \hat{u}_{t+s, \downarrow}(r)
        e^{-2(t+s)\varepsilon} \int \di x \di y \di z \di z^\prime\, e^{ir\cdot (x-z)}\times
    \\ 
    &\qquad \times \chi_<(x-y)\chi_<(z-z^\prime)
    c_{t,x,\downarrow}^\ast c^\ast_{t,y,\uparrow} d_{t,\uparrow}^\ast(g_y) d_{s,z',\uparrow} c_{s,z',\uparrow} c_{s,z,\downarrow} \;,
\\
    \mathrm{I}_{9;d} &:= -\frac{(8\pi a)^2}{L^3}
        \int_0^\infty \di t  \int_0^\infty \di s
        \sum_r \hat{u}_{t+s, \downarrow}(r) 
        e^{-2(t+s)\varepsilon}  \int \di x \di y \di z \di z^\prime\, e^{ir\cdot (x-z)}\times
    \\ 
    &\qquad \times \chi_<(x-y)\chi_<(z-z^\prime)
    c_{t,x,\downarrow}^\ast c^\ast_{t,\uparrow}(g_y) d_{t,y,\uparrow}^\ast d_{s,z',\uparrow} c_{s,z',\uparrow} c_{s,z,\downarrow} \;.
\end{align*}
We first consider instance $\mathrm{I}_{9;a}$. The estimate is as the one for {the terms in} $\mathrm{I}_6$:
We first apply Lemma \ref{lem:a*a*a*aaa} with $|\hat{u}_{t+s, \uparrow}(r) \hat{g}(r)| \leq e^{-(t+s)(k_{\F}^\uparrow)^2}\|\hat{g}\|_\infty$, and~\Cref{lem: a*a with N} with $ \norm{\hat{v}_{s,\uparrow}}_\infty \leq e^{{s}(k_{\F}^\uparrow)^2} $. Then
\begin{align}\label{eq: est CS r I9}
|\langle \psi, \mathrm{I}_{9;a}\psi\rangle |&\leq C\|\hat{g}\|_\infty \int_0^{\infty}\di s  \int_0^\infty \di t  \, e^{-2(t+s)\varepsilon}e^{-(t+s)(k_{\F}^\uparrow)^2} {\sqrt{\int \d x \| c_{t,x,\uparrow}\psi\|^2}\times }\nonumber
\\ 
& \quad\times {\sqrt{\int \d z \|c_{s,z,\uparrow}\psi\|^2}}\sup_x \norm{b_\downarrow((\chi_<)_x, v_t, u_t)}
    \sup_z \norm{b_\downarrow((\chi_<)_z, v_s, u_s)}
     \\
&\leq C\|\hat{g}\|_\infty \int_0^{\infty}\di s  \int_0^\infty \di t  \, e^{-2(t+s)\varepsilon}e^{-(t+s)(k_{\F}^\uparrow)^2}\times\nonumber
\\ 
& \quad\times \sup_x \norm{b_\downarrow((\chi_<)_x, v_t, u_t)}
    \sup_z \norm{b_\downarrow((\chi_<)_z, v_s, u_s)}
    e^{(t+s) (k_{\F}^\uparrow)^2}
    \expect{\psi, \cN\psi} \nonumber \\
&\leq C \norm{\chi_<}_1^2 \norm{\hat g}_{\infty} 
        \left(\int_0^\infty \di t \, e^{-2t\eps} \norm{\hat v_{t,\downarrow}}_2\norm{\hat u_{t,\downarrow}}_2\right)^2
        \expect{\psi, \cN\psi}\;. \nonumber
\end{align}
The estimates on $\mathrm{I}_{9;b}$ through $\mathrm{I}_{9;d}$ are analogous. Bounding the $t$-integral via Cauchy--Schwarz and~\Cref{lem:t} results in
\begin{equation*}
    \abs{\expect{\psi, \mathrm{I}_9 \psi}} 
    \leq C_\kappa {\|\hat{g}\|_\infty}\rho^{\frac 59 - \kappa} \expect{\psi,\cN\psi} \;.
\end{equation*}
This concludes the proof. 
\end{proof}

\subsection{Analysis of \texorpdfstring{$[[n_g , B_1 - B_1^\ast], B_1 - B_1^\ast]$}{[[ng,B1-B1*],B1-B1*]}}
In this section, we prove that {the double commutator $[[n_g , B_1 - B_1^\ast], B_1 - B_1^\ast] $ gives rise only to} error terms.

\begin{proposition}\label{pro:B1}
Let $n_g $ and $B_1$ as in \eqref{eq:ng} and \eqref{eq: def T1 unitary}.
For any $\psi\in \mathcal{F}_{\mathrm{f}}$, we have 
\[
    |\langle \psi, [[n_g , B_1 - B_1^\ast], B_1 - B_1^\ast] \psi \rangle| \leq C \|\hat{g}\|_\infty \rho^{\frac{5}{3} + \frac{1}{9}} + C{L^{-3}\left(\rho^{\frac{7}{9}}\|\hat{g}\|_\infty  + \rho^{\frac{4}{9}} \|\hat{g}\|_2\right)} \langle \psi, \mathcal{N}\psi\rangle \;.
\]
\end{proposition}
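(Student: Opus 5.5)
We proceed in parallel with the proof of \Cref{pro:B2}: compute the double commutator explicitly, put it in normal order, and bound each resulting term in position space using the lemmas of \Cref{sec:usefullemmas}. Two features differ from the $B_2$ case. First, $B_1$ involves the full operators $b_{p,\sigma}=\sum_r b_{p,r,\sigma}$ with the kernel $\hat\varphi(p)\widehat\chi_>(p)$, which is independent of $r,r'$. In position space we can therefore write
\[
B_1=\int \d x\,\d y\,\varphi^>(x-y)\,d_{x,\uparrow}c_{x,\uparrow}d_{y,\downarrow}c_{y,\downarrow},
\]
with the $d$'s effectively restricted to $d^>$, since $|p|>4\rho^{2/9}$ forces the particle momentum above $3\rho^{2/9}$. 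Second, the fully contracted constant term is no longer of Belyakov form. It has to be estimated rather than extracted, and this is the source of the operator-independent error $C\|\hat g\|_\infty\rho^{5/3+1/9}$.

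\textbf{Step 1 (commutator structure).} Since $[n_{g,\sigma},\hat a^*_{k,\sigma}]=\frac{(2\pi)^3}{L^3}\hat g(k)\hat a^*_{k,\sigma}$, the first commutator $[n_g,B_1-B_1^*]$ is $B_1^{(g)}+\mathrm{h.c.}$. Here $B_1^{(g)}$ is $B_1$ with one of the four operators replaced by its $g$-convolved version ($d_\sigma(g_x)$ or $c_\sigma(g_x)$), carrying the factor $(2\pi)^3L^{-3}$. Commuting once more with $B_1-B_1^*$ and normal ordering produces the analogues of $\mathrm{I}_1,\dots,\mathrm{I}_9$:
\begin{itemize}
\item a constant $\mathrm{J}_1=\frac{1}{L^6}\sum_{p,r,r'}(\dots)\,\hat g(\cdot)\,|\hat\varphi(p)\widehat\chi_>(p)|^2\hat u\hat v\hat u\hat v$;
\item quadratic terms $d^*d$ and $c^*c$;
\item quartic terms;
\item sextic terms.
\end{itemize}

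\textbf{Step 2 (the constant).} We bound $|\hat g|\le\|\hat g\|_\infty$ and use $|\mathcal F(\varphi_\infty)(p)|\le C a|p|^{-2}$ for $|p|\ge4\rho^{2/9}$. Since $\sum_r\hat v(r)\le CL^3\rho$, this gives
\[
\frac{(2\pi)^3}{L^3}|\mathrm J_1|\le C\|\hat g\|_\infty\rho^2\int_{|p|>4\rho^{2/9}}|p|^{-4}\,\d p\le C\|\hat g\|_\infty\rho^{2-2/9}=C\|\hat g\|_\infty\rho^{5/3+1/9}.
\]

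\textbf{Step 3 (operator terms).} Each remaining term is written in position space. The sextic terms are treated via \Cref{lem:a*a*a*aaa}, with the inner quadratic blocks being $b_\sigma(\varphi^>_z,v,u^>)$ or $b_\sigma(\varphi^>_z,g*v,u^>)$, bounded by $C\rho^{7/18}$ (times $\|\hat g\|_\infty$) through \eqref{eq: est b op}. The outer annihilators are controlled by \Cref{lem: a*a with N}; each of the two blocks contributes $\rho^{7/18}$, giving $L^{-3}\rho^{7/9}\|\hat g\|_\infty\langle\psi,\cN\psi\rangle$. The quartic and quadratic terms are treated with \Cref{lem:a*a*aa} and \Cref{lem: a*a with N}. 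Here the weights are $\|\varphi^>\|_1\le C\rho^{-4/9}$ and $\|\varphi^>\|_2\le C\rho^{-1/9}$ from \Cref{lem: bounds phi}, together with the factor $\|\hat v\|_2\le C\rho^{1/2}$ from each $c$-operator.

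\textbf{Main obstacle.} The hardest part is the terms in which $g$ sits on a high-momentum $d$-operator contracted against $\varphi^>$, for example $\int\d x\,\d y\,\big((\varphi^>)^2*\ldots\big)$ or $\varphi^>\,(g*\varphi^>)$-type kernels. Using only $\|\hat g\|_\infty$ there loses powers of $\rho$, because $\|\varphi^>\|_1$ is large. In these terms I would instead put $g$ in $L^2$, bounding $\|g*\varphi^>\|_\infty\le\|\hat g\|_2\|\hat\varphi^>\|_2$ (or an analogous Young/Plancherel estimate) and combining it with $\|\hat v\|_2^2\le C\rho$ and $\|\varphi^>\|_2\le C\rho^{-1/9}$. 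This is what should produce the $L^{-3}\rho^{4/9}\|\hat g\|_2\langle\psi,\cN\psi\rangle$ contribution. Careful power counting is required to land exactly on the exponents $7/9$ and $4/9$ in the statement.
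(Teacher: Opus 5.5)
Your Steps 1 and 2 and your treatment of the sextic terms (via \Cref{lem:a*a*a*aaa} and \eqref{eq: est b op}, giving $\rho^{7/9}$) agree with the paper. Step 3, however, has a genuine gap for the quartic terms, and these are exactly where $B_1$ differs from $B_2$.

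In \Cref{pro:B2} the particle operators are $d_{t,x,\sigma}=a_\sigma(u_{t,x})$ with $\hat u_{t,\sigma}$ supported in $|k|\le 6\rho^{2/9}$. They are therefore norm-bounded, which is what makes \Cref{lem:a*a*aa} applicable. For $B_1$ the particle operators are $d_{x,\sigma}=a_\sigma(u_x)$ (or $a_\sigma(u^>_x)$), and $\hat u_\sigma$, $\hat u^>_\sigma$ are not in $\ell^2$. So these operators have no norm bound, and position-space kernels such as $u_\downarrow(y;z')$ or $v_\downarrow u_\downarrow$ are not in $L^2$.

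\Cref{lem:a*a*aa} needs the two inner operators bounded and $W_3,W_4\in L^2$, so it cannot be applied as you propose. It fails outright for the $d^*d^*dd$ term (the analogue of $\mathrm{I}_3$, where all four operators are $d$'s). It also fails for the mixed terms $d^*c^*cd$, $c^*d^*dc$, $c^*c^*cc$ because of the $u$-kernels. For the same reason, the anti-normal-ordering combinations $\mathrm{I}_2+\mathrm{I}_5$ and $\mathrm{I}_4+\mathrm{I}_7$ from \Cref{pro:B2} are unavailable, so ``proceeding in parallel'' breaks down here.

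Two ideas are missing.

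\textbf{The $d^*d^*dd$ term.} Regroup it (up to a sign) as $\int \d x\,\d z'\, d^*_\uparrow(g_x)\,\big(\int\d y\,\varphi^>(x-y)v_\downarrow(y;z')d^*_{y,\downarrow}\big)\big(\int\d z\,\varphi^>(z-z')v_\uparrow(x;z)d_{z,\uparrow}\big)\,d_{z',\downarrow}$. The bracketed operators are of the form $a\big((\varphi^>_xv_{z'})*u\big)$, whose norm is at most $\|\varphi^>_xv_{z'}\|_2$ using only $\|\hat u\|_\infty\le 1$. Cauchy--Schwarz and \Cref{lem: a*a with N} then give $\|\hat g\|_\infty\|\varphi^>\|_2^2\|\hat v_\uparrow\|_2\|\hat v_\downarrow\|_2\le C\|\hat g\|_\infty\rho^{7/9}$.

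\textbf{The mixed terms.} Split $u_\sigma=\delta-v_\sigma$, and $u_\sigma*g=g-v_\sigma*g$. The $\delta$-part collapses to a local term; for example $c^*_{y}c_{y}$ combined with $|v_\uparrow(x;z)|\le C\rho$ gives $\rho^2\|\varphi^>\|_1^2\le C\rho^{10/9}$. The $v$-part then falls within the scope of \Cref{lem:a*a*aa}.

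This also corrects your diagnosis of the $\|\hat g\|_2$ contribution. The issue is not that $\|\hat g\|_\infty$ ``loses powers of $\rho$''. When $\hat g$ sits on a contracted particle line ($\hat g(r+p)\hat u(r+p)$ or $\hat g(r'-p)\hat u(r'-p)$), the kernel is $u*g$, whose $\delta$-part is the bare $g$. Since $\hat u\notin\ell^2$, this kernel can only be placed in $L^2$ through $\|\hat g\|_2$, which $\|\hat g\|_\infty$ does not control.

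In the $c^*c^*cc$ term this yields $\|\varphi^>\|_1\|\varphi^>\|_2\|\hat g\|_2\|\hat v_\uparrow\|_2^2\le C\rho^{4/9}\|\hat g\|_2$. The factor $\|\varphi^>\|_1\le C\rho^{-4/9}$ is missing from your count, and it is what produces the exponent $4/9$. The analogous pieces of the $d^*c^*cd$ and $c^*d^*dc$ terms give the smaller $\rho^{11/18}\|\hat g\|_2$.
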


\begin{proof}
We estimate only $ L^3 [[n_{g,\uparrow}, B_1 - B_1^\ast], B_1 - B_1^\ast]$; the error terms in $  [[n_{g,\downarrow}, B_1 - B_1^\ast], B_1 - B_1^\ast]$ can be treated the same way.
We have
\[
    [[n_{g,\uparrow}, B_1 - B_1^\ast], B_1 - B_1^\ast] = \frac{(2\pi)^3}{L^3}\sum_{j=1}^9 \mathrm{I}_j \;.
\]
with the terms  $\I_1, \ldots, \I_9$ being of the same form as those in~\eqref{eq:I1I9_expansion}, but with $\hat{\eta}_{\cdot, \cdot}(\cdot) \chi_<(\cdot)$ replaced by $\hat{\varphi}(\cdot) \chi_>(\cdot) = \hat{\varphi}^>(\cdot)$. All the terms are errors that have to be estimated.
The first term is the constant contribution:
\begin{align} \label{eq:I1_estimate_B1B1}
	\mathrm{I}_1 &= \frac{1}{L^6}\sum_{p,r,r^\prime} \left(\hat{g}(r+p) + \hat{g}(-r) \right) |\hat{\varphi}(p)|^2 \widehat{\chi}_>(p)^2 \hat{u}_\uparrow(r+p)\hat{u}_\downarrow(r^\prime - p)\hat{v}_\uparrow(r)\hat{v}_\downarrow(r^\prime) \;.
\end{align}
This can be easily estimated by using that $|\hat{\varphi}(p)|\leq C/|p|^2$ and $ |p| > C \rho^{\frac 29} $, which gives 
\[
	|\mathrm{I}_1 | \leq \frac{C\|\hat{g}\|_\infty}{L^6}\sum_{p,r,r^\prime}\frac{\widehat{\chi}_>(p)^2}{|p|^4}\hat{v}_\uparrow(r)\hat{v}_\downarrow(r^\prime)\leq C\|\hat{g}\|_\infty\rho^2\sum_p\frac{\widehat{\chi}_>(p)^2}{|p|^4}\leq C\|\hat{g}\|_\infty L^3\rho^{\frac{5}{3} + \frac{1}{9}} \;.
\]
Next, we consider 
\begin{align*}
\mathrm{I}_2 & = -\frac{1}{L^6}\sum_{\sigma\neq\sigma^\prime} \sum_{p,r,r^\prime}\left(\delta_{\sigma,\uparrow}(\hat{g}(r+p) + \hat{g}(-r)) + \delta_{\sigma,\downarrow}(\hat{g}(r^\prime - p) + \hat{g}(-r^\prime))\right) \hat{\varphi}(p)^2 \widehat{\chi}_>(p)^2 \times
\\ 
&\qquad \times \hat{u}_{\sigma^\prime}(r^\prime - p)\hat{v}_\sigma(r)\hat{v}_{\sigma^\prime}(r^\prime)\hat{d}^\ast_{r+p, \sigma}\hat{d}_{r+p,\sigma} \;.
\end{align*}
Contrary to the bound for the $B_2$-$B_2$ commutator term above it is not useful to combine this term with the analogue of the term $\mathrm{I}_5$ by means of anti-normal ordering since $\hat u_\sigma$ is not bounded in $\ell^2$-norm.
Instead, we use that $\Vert \hat\varphi^> \Vert_\infty^2 \le \Vert \varphi^> \Vert_1^2 \le C \rho^{-\frac 89}$ by Lemma~\ref{lem: bounds phi}, and we estimate
\begin{equation}
\begin{aligned} \label{eq:I2_estimate_B1B1}
	|\langle \psi, \mathrm{I}_2\psi\rangle| &\leq \frac{C\|\hat{g}\|_\infty }{L^6}\sum_{\sigma\neq \sigma^\prime}
    \sum_{p,r,r^\prime}
    \rho^{-\frac 89}
    \hat{v}_\sigma(r)\hat{v}_{\sigma^\prime}(r^\prime)\|\hat{d}_{r+p,\sigma}\psi\|^2 
    \leq C\|\hat{g}\|_\infty \rho^{2 - \frac 89}\sum_{p}\|\hat{a}_{p,\sigma}\psi\|^2 
	\\ 
	&\leq C\|\hat{g}\|_\infty \rho^{\frac{10}{9}}\langle \psi, \mathcal{N}\psi\rangle \;.
\end{aligned}
\end{equation}
Next, we split $\mathrm{I}_3$ as in the proof of Proposition~\ref{pro:B2}:
\begin{align*}
	\mathrm{I}_3 &= \frac{1}{L^6} \sum_{p,k,r,r^\prime}\left(\hat{g}(r+p) + \hat{g}(-r) \right) \hat{\varphi}^>(p) \hat{\varphi}^>(k)
    \hat{v}_\uparrow(r)\hat{v}_\downarrow(r^\prime)
        \hat{d}_{r+p,\uparrow}^\ast
        \hat{d}_{r^\prime - p, \downarrow}^\ast
        \hat{d}_{r^\prime - k, \downarrow}
        \hat{d}_{r+k,\uparrow}
	\\ 
	&= \mathrm{I}_{3;a} + \mathrm{I}_{3;b} \;.
\end{align*}
Consider
\begin{align*}
	&\mathrm{I}_{3;a} = \int \di x \di y \di z \di z^\prime\, \varphi^>(x-y)\varphi^>(z-z^\prime) v_{\uparrow}(x;z) v_{\downarrow}(y;z^\prime)
        d_\uparrow^\ast(g_x) 
        d^\ast_{y,\downarrow} 
        d_{z',\downarrow}
        d_{z,\uparrow} 
	\\ 
	&= -\int \di x \di z^\prime  d_\uparrow^\ast(g_x) 
    \left(\int \di y\, \varphi^>(x-y) v_\downarrow(y;z^\prime) d_{y,\downarrow}^\ast\right)
    \left(\int \di z\, \varphi^>(z-z^\prime) v_\uparrow(x;z) d_{z,\uparrow}\right)d_{z',\downarrow} \,. 
\end{align*}
Since $ d^*_{y,\sigma} = a_\sigma(u_y) $ with $0\leq \hat{u}_{\sigma}\leq 1$, we deduce that
\begin{equation} \label{eq:intoperator_estimate}
\begin{aligned}
    \left\|\int \di y \,\varphi^>(x-y) v_\downarrow(y;z^\prime) d_{y,\downarrow} \right\| 
    &= \|a( (\varphi^>_x v_{z',\downarrow}) * u_\downarrow)\|
    \leq \| \varphi^>_x v_{z',\downarrow}  \|_2
        \| \hat{u}_\downarrow \|_\infty
    \leq \|\varphi^>_x v_{z^\prime,\downarrow}\|_2 \;, \\
    \left\|\int \di z \,\varphi^>(z-z^\prime) v_\uparrow(x;z) d_{z,\uparrow} \right\| 
    &\leq \|\varphi^>_{z^\prime} v_{x,\uparrow}\|_2 \;. 
\end{aligned}
\end{equation}
With the aid of the two bounds above, we get 
\begin{align*}
	|\langle \psi, \mathrm{I}_{3;a}\psi\rangle| &\leq C\left(\int \di x \di z^\prime \, \|\varphi_x^>v_{z^\prime,\downarrow}\|^2_2 \|d_\uparrow(g_x)\psi\|^2\right)^{\frac{1}{2}}
    \left(\int \di x \di z^\prime \, \|\varphi_{z^\prime}^>v_{x,\uparrow}\|^2_2 \|d_{z',\downarrow}\psi\|^2\right)^{\frac{1}{2}}
	\\ 
	&\leq C\|\hat{g}\|_\infty \|\varphi^>\|_2^2 \|\hat{v}_\downarrow\|_2 \|\hat{v}_\uparrow\|_2 \langle \psi, \mathcal{N}\psi\rangle \;,
\end{align*}
where we used Lemmas~\ref{lem:v bounds} and \ref{lem: a*a with N}. 
The term $\mathrm{I}_{3;b}$ can be estimated similarly, using the bound for $\|\hat{v}_\uparrow\ast \hat{g}\|_2$ in \Cref{lem:v bounds}.
By Lemma~\ref{lem: bounds phi} we conclude that
\[
	|\langle \psi, \mathrm{I}_{3}\psi\rangle| \leq C\|\hat{g}\|_\infty \rho^{\frac 79}\langle \psi, \mathcal{N}\psi\rangle \;.
\]
The next error term to estimate is
\begin{align*}
	\mathrm{I}_4 &= - \frac{1}{L^6}\sum_{\sigma\neq \sigma^\prime}\sum_{p,r, r^\prime} \left(\delta_{\sigma,\uparrow} (\hat{g}(r+p) + \hat{g}(-r)) + \delta_{\sigma,\downarrow}(\hat{g}(r^\prime - p) + \hat{g}(-r^\prime)) \right) \hat{\varphi}(p)^2 \widehat{\chi}_>(p)^2\times
	\\ 
	&\qquad \times \hat{u}_\sigma(r+p)\hat{u}_{\sigma^\prime}(r^\prime - p)\hat{v}_{\sigma^\prime}(r^\prime)\hat{c}_{-r,\sigma}^\ast \hat{c}_{-r,\sigma} \;.
\end{align*}
In this case, we have
\begin{align*}
	|\langle \psi, \mathrm{I}_4 \psi\rangle| &\leq \frac{C\|\hat{g}\|_\infty }{L^6}\sum_{\sigma\neq \sigma^\prime}\sum_{p,r,r^\prime} |\hat{\varphi}^>(p)|^2 \hat{v}_{\sigma^\prime}(r^\prime) \|\hat{c}_{-r,\sigma}\psi\|^2
    \leq C\rho\|\varphi^>\|_2^2 \langle \psi, \mathcal{N}\psi\rangle
	\\ 
	& \leq C\|\hat{g}\|_\infty \rho^{\frac 79}\langle \psi,\mathcal{N}\psi\rangle \;,
\end{align*}
where we used Lemmas~\ref{lem: bounds phi}.
The next term to estimate is
\begin{align*}
	\mathrm{I}_5 &= \frac{1}{L^6}\sum_{\sigma\neq \sigma^\prime}\sum_{p,k,r,r^\prime}\left(\delta_{\sigma,\uparrow} (\hat{g}(r+p) +\hat{g}(-r)) + \delta_{\sigma,\downarrow}(\hat{g}(r^\prime - p) + \hat{g}(-r^\prime))\right)\hat{\varphi}^>(p)\times
	\\ 
	&\quad \times \hat{\varphi}^>(k) \hat{u}_{\sigma^\prime}(r^\prime - p)\hat{v}_\sigma(r)
        \hat{d}_{r+p,\sigma}^\ast 
        \hat{c}^\ast_{-r^\prime, \sigma^\prime}
        \hat{c}_{-r^\prime + p - k, \sigma^\prime}
        \hat{d}_{r+k,\sigma}
    = \mathrm{I}_{5;a} + \mathrm{I}_{5;b} + \mathrm{I}_{5;c}  + \mathrm{I}_{5;d} \;.
\end{align*}
In order to estimate
\begin{equation*}
	\mathrm{I}_{5;a} 
    = \int \di x \di y \di z \di z^\prime\, \varphi^>(x-y)\varphi^>(z-z^\prime)u_{\downarrow}(y;z^\prime)v_\uparrow(x;z)
        d^\ast_{\uparrow}(g_x) 
        c_{y,\downarrow}^\ast 
        c_{z',\downarrow}
        d_{z,\uparrow} \;,
\end{equation*}
we use that $u_{\downarrow}(y;z^\prime) = \delta(y-z^\prime) - v_{\downarrow}(y;z^\prime)$; accordingly $\mathrm{I}_{5;a} = \mathrm{I}_{5;a;1} + \mathrm{I}_{5;a;2}$ with
\begin{equation*}
\begin{aligned}
	\mathrm{I}_{5;a;1}
    &:= \int \di x \di y \di z \, \varphi^>(x-y)\varphi^>(z-y)v_\uparrow(x;z)
        d^\ast_{\uparrow}(g_x)
        c_{y,\downarrow}^\ast 
        c_{y,\downarrow}
        d_{z,\uparrow} \;, \\
    \mathrm{I}_{5;a;2}
    &:= \int \di x \di y \di z \di z' \, \varphi^>(x-y)\varphi^>(z-z') v_\downarrow(y;z') v_\uparrow(x;z)
        d^\ast_{\uparrow}(g_x)
        c_{y,\downarrow}^\ast 
        c_{z',\downarrow}
        d_{z,\uparrow} \;. 
\end{aligned}
\end{equation*}
Now, for any $\psi\in\mathcal{F}_{\mathrm{f}}$, {by Lemma \ref{lem:v bounds}} we have 
\begin{align*}
	&|\langle \psi, \mathrm{I}_{5;a;1}\psi\rangle|
	\\
	& \leq C\int \di x \di y \di z\, |\varphi^>(x-y)| |\varphi^>(z-y)| |v_\uparrow(x;z)| \|\hat{v}_{\downarrow}\|_2^2  \|d_\uparrow(g_x)\psi\| \|d_{z,\uparrow}\psi\|
	\\ 
	&\leq C\rho^2\left(\int \di x \di y \di z\,  |\varphi^>(x-y)| |\varphi^>(z-y)|\|d_\uparrow(g_x)\psi\|^2\right)^{\frac{1}{2}}\times
	\\ 
	&\quad \times \left(\int \di x \di y \di z\,  |\varphi^>(x-y)| |\varphi^>(z-y)|\|d_{z,\uparrow}\psi\|^2\right)^{\frac{1}{2}}\\
    & \leq C\|\hat{g}\|_\infty \rho^{2}\|\varphi^>\|_1^2 \langle \psi, \mathcal{N}\psi\rangle \;,
\end{align*}
{where we used also Lemma \ref{lem: a*a with N}.}
By Lemma \ref{lem: bounds phi}, we conclude that 
\[
	|\langle \psi, \mathrm{I}_{5;a;1}\psi\rangle|\leq C\|\hat{g}\|_\infty \rho^{\frac{10}{9}}\langle \psi, \mathcal{N}\psi\rangle \;.
\]
To {estimate} $\mathrm{I}_{5;a;2}$, we apply \Cref{lem:a*a*aa}, followed by \Cref{lem:v bounds,lem: a*a with N,lem: bounds phi} to  write 
\[
	|\langle \psi, \mathrm{I}_{5;a;2}\psi\rangle| \leq C\|\hat{g}\|_\infty \rho\|\varphi^>\|_1^2 \|\hat{v}_{\uparrow}\|_2 \|\hat{v}_{\downarrow}\|_2 \langle \psi, \mathcal{N}\psi\rangle 
    \leq C \norm{\hat g}_\infty \rho^{\frac{10}{9}} \expect{\psi, \cN\psi} \;.
\]
For $\mathrm{I}_{5;b}$ and $\mathrm{I}_{5;d}$ we can proceed in a similar way, we omit the details. It remains $\mathrm{I}_{5;c}$. In this case we write $(u_{\downarrow}\ast g) (y;z^\prime) = g(y;z^\prime) - (v_{\downarrow} \ast g)(y;z^\prime)$ and correspondingly define
\begin{equation*}
\begin{aligned}
	\mathrm{I}_{5;c;1}
    &:= \int \di x \di y \di z \di z' \, \varphi^>(x-y)\varphi^>(z-y)g(y;z^\prime)v_\uparrow(x;z)
        d^\ast_{x,\uparrow}
        c_{y,\downarrow}^\ast 
        c_{z^\prime,\downarrow}
        d_{z,\uparrow} \;, \\
    \mathrm{I}_{5;c;2}
    &:= -\int \di x \di y \di z \di z' \, \varphi^>(x-y)\varphi^>(z-z') (v_\downarrow\ast g)(y;z') v_\uparrow(x;z)
        d^\ast_{x,\uparrow}
        c_{y,\downarrow}^\ast 
        c_{z',\downarrow}
        d_{z,\uparrow} \;. 
\end{aligned}
\end{equation*}
The two terms can both be estimated similarly as $\mathrm{I}_{5;a;2}$ above with the aid of Lemma \ref{lem:a*a*aa}. One obtains
\[
	|\langle \psi,\mathrm{I}_{5;c;1}\psi\rangle| \leq C\|\varphi^>\|_1^2 \|\hat{g}\|_2 \|\hat{v}_\uparrow\|_2 \|\hat{v}_{\downarrow}\|_2^2 \langle \psi, \mathcal{N}\psi\rangle\leq C\rho^{\frac{11}{18}}\|\hat{g}\|_2 \langle \psi,\mathcal{N}\psi\rangle \;,
\]
and (bounding $\norm{g*v_\uparrow}_2 \leq \norm{\hat g}_\infty \norm{\hat v_\uparrow}_2$)
\[
	|\langle \psi,\mathrm{I}_{5;c;2}\psi\rangle| 
    \leq C\|\varphi^>\|_1^2 \|\hat{g}\|_\infty \|\hat{v}_\uparrow\|_2^2 \|\hat{v}_{\downarrow}\|_2^2 \langle \psi, \mathcal{N}\psi\rangle
    \leq C\|\hat{g}\|_\infty \rho^{\frac{10}{9}}\langle \psi, \mathcal{N}\psi\rangle \;. 
\]
Thus
\[
	|\langle \psi, \mathrm{I}_5\psi\rangle| \leq C\|\hat{g}\|_\infty\rho^{\frac{10}{9}}\langle \psi, \mathcal{N}\psi\rangle + C\rho^{\frac{11}{18}}\|\hat{g}\|_2 \langle \psi, \mathcal{N}\psi\rangle \;. 
\]
Next, we estimate $\mathrm{I}_6$, which we split as in~\eqref{eq:I6_split} into
\begin{align*}
	\mathrm{I}_6 &= - \frac{1}{L^6} \sum_{\sigma \neq \sigma^\prime} \sum_{p,k,r,r^\prime, s}\left(\delta_{\sigma,\uparrow} (\hat{g}(r+p) + \hat{g}(-r)) + \delta_{\sigma,\downarrow}(\hat{g}(r^\prime -p) + \hat{g}(-r^\prime))\right)\hat{\varphi}^>(p) \hat{\varphi}^>(k) \times
	\\ 
	&\qquad \times
    \hat{v}_\sigma(r) 
        \hat{d}_{r+p,\sigma}^\ast 
        \hat{c}_{-r^\prime, \sigma^\prime}^\ast
        \hat{d}^\ast_{r^\prime - p, \sigma^\prime}
        \hat{d}_{s-k,\sigma^\prime}
        \hat{c}_{-s, \sigma^\prime}
        \hat{d}_{r+k,\sigma}
    = \mathrm{I}_{6;a} + \mathrm{I}_{6;b} + \mathrm{I}_{6;c} + \mathrm{I}_{6;d} \;.
\end{align*}
Note that using the constraints on $p$ and $r^\prime$ we can replace 
$\hat d_{r'-p, \sigma'} = \hat u_{\sigma'}(r'-p) \hat a_{r'-p,\sigma'}$ by 
$\hat u_{\sigma'}^>(r'-p) \hat a_{r'-p,\sigma'}$ and we can proceed similarly with $\hat d_{s-k,\sigma'}$ using the constraints on $k$ and $s$.
(with $\hat{u}^>_\sigma$ defined as in \eqref{eq: def u>}).
The first error term written in position space is 
\begin{align*}
	\mathrm{I}_{6;a} &= - \frac{1}{L^3} \sum_r\hat{v}_\uparrow(r)\int \di x \di y \di z \di z^\prime\,e^{ir\cdot (x-z)} \varphi^>(x-y)\varphi^>(z-z^\prime)\times
	\\ 
	&\qquad \times 
        d^\ast_\uparrow(g_x)
        c_{y,\downarrow}^\ast 
        d_{y,\downarrow}^\ast
        d_{z',\downarrow}
        c_{z',\downarrow}
        d_{z,\uparrow} 
	\\ 
	& = - \frac{1}{L^3} \sum_r \hat{v}_\uparrow(r)\left(\int \di x \, e^{ir\cdot x} d^\ast_\uparrow(g_x) b^\ast_{\downarrow}(\varphi^>_x, v, u^>)\right)
    \left(\int \di z \, e^{-ir\cdot z} b_{\downarrow}(\varphi^>_z, v, u^>) d_{z,\uparrow} \right) \;.
\end{align*}
By Lemma \ref{lem:a*a*a*aaa} and using that $0\leq \hat{v}_\uparrow(r)\leq 1$, we get 
\begin{align*}
	|\langle \psi,\mathrm{I}_{6;a}\psi\rangle| 
	&\leq C \left(\sup_x \|b_{\downarrow}(\varphi^>_x, v, u^>) \|\right)^2\left(\int \di x  \|d_\uparrow(g_x)\psi\|^2\right)^{\frac{1}{2}}\left(\int \di z\|d_{z,\uparrow}\psi\|^2\right)^{\frac{1}{2}}
	\\ 
	&\leq C\|\hat{g}\|_\infty \rho^{\frac{7}{9}}\langle \psi, \mathcal{N}\psi\rangle \;,
\end{align*}
where in the last estimate we used \Cref{lem: est b operator,lem: a*a with N} (see \eqref{eq: est b op}). For the term $\mathrm{I}_{6;b}$ we can proceed in the same way using that $|\hat{v}_\sigma(r)\hat{g}(-r)|\leq \|\hat{g}\|_\infty$. For the terms $\mathrm{I}_{6;c}$ and $\mathrm{I}_{6;d}$ we proceed likewise, using {also \Cref{lem:v bounds} (see \eqref{eq: a conv L2 norm v u conv})}. Thus, we conclude that 
\[{}
	|\langle \psi, \mathrm{I}_6\psi\rangle| \leq C\|\hat{g}\|_\infty \rho^{\frac{7}{9}}\langle \psi, \mathcal{N}\psi\rangle \;.
\]
We next consider 
\begin{align*}
	\mathrm{I}_7 &= \frac{1}{L^6}\sum_{\sigma\neq \sigma^\prime}\sum_{p,r,r^\prime, s}\left(\delta_{\sigma,\uparrow} (\hat{g}(r+p) + \hat{g}(-r)) + \delta_{\sigma,\downarrow}(\hat{g}(r^\prime - p) + \hat{g}(-r^\prime)) \right) \hat{\varphi}^>(p)^2\times
	\\ 
	&\qquad \times \hat{u}_{\sigma^\prime}(r^\prime - p) \hat{v}_{\sigma^\prime}(r^\prime) 
    \hat{c}^{\ast}_{-r,\sigma}\hat{d}_{r+p,\sigma}^\ast \hat{d}_{s+p, \sigma}\hat{c}_{-s, \sigma} 
    = \mathrm{I}_{7;a} + \mathrm{I}_{7;b} + \mathrm{I}_{7;c} + \mathrm{I}_{7;d} \;.
\end{align*}
We start with 
\begin{equation*}
    \mathrm{I}_{7;a} = \int \di x \di y \di z \di z^\prime\, \varphi^>(x-y)\varphi^>(z-z') v_{\downarrow}(y;z')u_{\downarrow}(y;z') c^\ast_{x,\uparrow} d_\uparrow^\ast(g_x) d_{z,\uparrow} c_{z,\uparrow} \;.
\end{equation*}
This term can be estimated as $\mathrm{I}_{5;a}$. The same is true for $\mathrm{I}_{7;b}$ and $\mathrm{I}_{7;d}$, which can be estimated as $\mathrm{I}_{5;b}$ and $\mathrm{I}_{5;d}$, respectively. Next, we consider
\begin{equation*}
    \mathrm{I}_{7;c} = \int \di x \di y \di z \di z^\prime\, \varphi^>(x-y)\varphi^>(z-z') v_{\uparrow}(y;z')(u_{\uparrow} * g)(y-z') c^\ast_{x,\downarrow} d_{x,\downarrow}^\ast d_{z,\downarrow} c_{z,\downarrow} \;.
\end{equation*}
This term can be estimated similarly as  $\mathrm{I}_{5;c}$ above, using that $(u_\sigma * g)(y) = g(y) - (v_\sigma * g)(y)$. We omit the details. All together, using \Cref{lem:a*a*aa,lem: a*a with N,lem:v bounds,lem: bounds phi}, we find
\[
	|\langle \psi, \mathrm{I}_{7}\psi\rangle| \leq C\|\hat{g}\|_\infty\rho^{\frac{10}{9}}\langle \psi, \mathcal{N}\psi\rangle + C\rho^{\frac{11}{18}}\|\hat{g}\|_2 \langle \psi, \mathcal{N}\psi\rangle \;.
\]
Next, we estimate
\begin{align*}
	\mathrm{I}_8 &= \frac{1}{L^6}\sum_{p,k,r,r^\prime,}\left(\hat{g}(r+p) + \hat{g}(-r) \right) \hat{\varphi}^>(p) \hat{\varphi}^>(k)\times
	\\
	&\quad \times \hat{u}_\uparrow(r+p)\hat{u}_\downarrow(r'-p)
        \hat{c}_{-r^\prime, \downarrow}^\ast
        \hat{c}_{-r, \uparrow}^\ast
        \hat{c}_{-r-p+k, \uparrow}
        \hat{c}_{-r^\prime +p-k,\downarrow}
    = \mathrm{I}_{8;a} + \mathrm{I}_{8;b} \;.
\end{align*}
We write $\mathrm{I}_{8;a}$ as
\[
	\mathrm{I}_{8;a} = \int \di x \di y \di z \di z^\prime\, \varphi^>(x-y)\varphi^>(z-z^\prime) (u_\uparrow * g)(x-z)u_\downarrow(y;z^\prime)
        c_{y,\downarrow}^\ast
        c^\ast_{x,\uparrow}
        c_{z,\uparrow}
        c_{z',\downarrow} \;.
\]
We decompose $u_\uparrow * g = g - v_\uparrow * g$ and correspondingly write $\mathrm{I}_{8;a} = \mathrm{I}_{8;a;1} + \mathrm{I}_{8;a;2}$. Moreover, in each of the terms we write  $u_\downarrow(y;z') = \delta(y-z') - v_\downarrow(y;z')$. We proceed to estimate
\begin{align*}
\mathrm{I}_{8;a;1}
	&= \int \di x \di y \di z\, \varphi^>(x-y)\varphi^>(z-y) g(x-z)
        c_{y,\downarrow}^\ast
        c^\ast_{x,\uparrow}
        c_{z,\uparrow}
        c_{y,\downarrow}
	\\
	&\quad - \int \di x \di y \di z \di z^\prime\, \varphi^>(x-y)\varphi^>(z-z^\prime) g(x-z)v_\downarrow(y;z^\prime)
        c_{y,\downarrow}^\ast
        c^\ast_{x,\uparrow}
        c_{z,\uparrow}
        c_{z',\downarrow} \;.
\end{align*}
With the aid of \Cref{lem: bounds phi,lem:v bounds} we find
\begin{align*}
	|\langle\psi,\mathrm{I}_{8;a;1}\psi\rangle|
    &\leq C\|\hat{v}_\uparrow\|_2^2 \int \di x \di y \di z \, |\varphi^>(x-y)||\varphi^>(z-y)| |g(x-z)| \|c_{y,\downarrow}\psi\|^2
	\\
	&\quad + \|\hat{v}_\uparrow\|_2^2 \left(\int \di x \di y \di z \di z^\prime\, |\varphi^>(x-y)||\varphi^>(z-z^\prime)| |g(x-z)|^2 \|c_{y,\downarrow}\psi\|^2\right)^{\frac{1}{2}} \times
	\\
	&\qquad \times \left(\int \di x \di y \di z \di z^\prime\, |\varphi^>(x-y)||\varphi^>(z-z^\prime)| |v_\downarrow(y;z')|^2 \|c_{z',\downarrow}\psi\|^2\right)^{\frac{1}{2}}
	\\
	&\leq C \|\varphi^>\|_1 \|\varphi^>\|_2 \|\hat{g}\|_2 \|\hat{v}_\uparrow\|_2^2 \langle \psi,\mathcal{N}\psi\rangle
    + C \|\varphi^>\|_1^2
    \|\hat{g}\|_2 \|\hat{v}_\uparrow\|_2^2 \|\hat{v}_\downarrow\|_2 \langle\psi,\mathcal{N}\psi\rangle
	\\
	&\leq C\rho^{\frac{4}{9}}\|\hat{g}\|_2\langle \psi, \mathcal{N}\psi\rangle \;.
\end{align*}
The term $\mathrm{I}_{8;a;2}$ is the same as $\mathrm{I}_{8;a;1}$ only with $g$ replaced by $-v_\uparrow * g$, so
\begin{equation*}
    \abs{\expect{\psi, \mathrm{I}_{8;a;2} \psi}}
    \leq C \rho^{\frac 49} \norm{v_\uparrow * g}_2 \expect{\psi , \cN \psi} \leq
    C \rho^{\frac{17}{18}} \norm{\hat{g}}_{\infty} \expect{\psi , \cN \psi} \;.
\end{equation*}
Likewise, we split
\[
	\mathrm{I}_{8;b} = \int \di x \di y \di z \di z^\prime\, \varphi^>(x-y)\varphi^>(z-z^\prime) u_\uparrow(x;z) u_\downarrow(y;z^\prime)
        c_{y,\downarrow}^\ast
        c_{\uparrow}^\ast(g_x)
        c_{z,\uparrow}
        c_{z',\downarrow}
\]
according to $u_\downarrow(y;z') = \delta(y-z') - v_\downarrow(y;z')$, as $\mathrm{I}_{8;b} = \mathrm{I}_{8;b;1} + \mathrm{I}_{8;b;2}$. Writing $u_\uparrow(x;z) = \delta(x-z) - v_\uparrow(x;z)$, with the aid of Lemmas~\Cref{lem:v bounds,lem: a*a with N,lem: bounds phi}, we get
\begin{align*}
	|\langle\psi,\mathrm{I}_{8;b;1}\psi\rangle|
    &\leq \|\hat{g}\|_\infty \|\hat{v}_\uparrow\|_2^2
        \int \di x \di y \, |\varphi^>(x-y)|^2 \|c_{y,\downarrow}\psi\|^2
	\\
	&\quad + \|\hat{g}\|_\infty \|\hat{v}_\uparrow\|_2^2
        \int \di x \di y \di z \, |\varphi^>(x-y)||\varphi^>(z-y)||v_\uparrow(x;z)| \|c_{y,\downarrow}\psi\|^2
	\\
	&\leq \|\hat{g}\|_\infty
        \|\varphi^>\|_2^2
        \|\hat{v}_\uparrow\|_2^2
        \langle \psi,\mathcal{N}\psi\rangle
    + \|\hat{g}\|_\infty
        \|\varphi^>\|_1 \|\varphi^>\|_2
        \|\hat{v}_\uparrow\|_2^3
        \langle\psi,\mathcal{N}\psi\rangle
	\\
	&\leq C\rho^{\frac 79}
        \|\hat{g}\|_ \infty
        \langle \psi, \mathcal{N}\psi\rangle \;.
\end{align*}
The term $\mathrm{I}_{8;b;2}$ is bounded similarly,
yielding
\begin{align*}
    |\langle \psi, \mathrm{I}_{8;b;2}\psi\rangle| &\leq \|\hat{g}\|_\infty\|\varphi^>\|_1\|\varphi^>\|_2 \|\hat{v}_\uparrow\|_2^2 \|\hat v_\downarrow \|_2 \langle \psi, \mathcal{N}\psi\rangle + \|\hat{g}\|_\infty \|\varphi^>\|_1^2 \|\hat{v}_\uparrow\|^3_2\|\hat{v}_\downarrow\|_2 \langle \psi,\mathcal{N}\psi\rangle
    \\
    &\leq C\|\hat{g}\|_\infty \rho^{\frac{17}{18}}\langle \psi,\mathcal{N}\psi\rangle \;,
\end{align*}
where we used also \Cref{lem:a*a*aa}.
Combining the estimates, we get
\begin{align*}
    \abs{\expect{\psi, \mathrm{I}_8 \psi}}
    \leq C \rho^{\frac 79} \norm{\hat{g}}_{\infty} \expect{\psi , \cN \psi}
    + C \rho^{\frac 49} \norm{\hat{g}}_2 \expect{\psi , \cN \psi} \;.
\end{align*}
To conclude the proof, we estimate $\mathrm{I}_9$.
\begin{align*}
	\mathrm{I}_9 &= -\frac{1}{L^6}\sum_{\sigma\neq \sigma^\prime}\sum_{p,k,r,r^\prime,s}\left( \delta_{\sigma,\uparrow} (\hat{g}(r+p) + \hat{g}(-r)) + \delta_{\sigma,\downarrow}(\hat{g}(r^\prime - p) + \hat{g}(-r^\prime)) \right)\hat{\varphi}^>(p) \hat{\varphi}^>(k)\times
	\\
	&\qquad \times \hat{u}_\sigma(r+p)
        \hat{c}_{-r,\sigma}^\ast
        \hat{c}_{-r^\prime, \sigma^\prime}^\ast
        \hat{d}_{r^\prime - p, \sigma^\prime}^\ast
        \hat{d}_{s-k, \sigma^\prime}
        \hat{c}_{-s, \sigma^\prime}
        \hat{c}_{-r-p+k, \sigma}
    = \mathrm{I}_{9;a} + \mathrm{I}_{9;b} + \mathrm{I}_{9;c} + \mathrm{I}_{9;d} \;.
\end{align*}
As for $\mathrm{I}_6$ above, we can replace
$\hat d_{r'-p, \sigma'} = \hat u_{\sigma'}(r'-p) \hat a_{r'-p,\sigma'}$ by
$\hat u_{\sigma'}^>(r'-p) \hat a_{r'-p,\sigma'}$ using the constraints on $p$ and $r^\prime$ and similarly with $\hat d_{s-k,\sigma'}$
(with $\hat{u}^>_\sigma$ defined as in \eqref{eq: def u>}) due to the constraints on $s$ and $k$.
We start with $\mathrm{I}_{9;a}$ which we write in position space as
\begin{align*}
	\mathrm{I}_{9;a} &= -\frac{1}{L^3} \sum_r \hat{u}_\uparrow(r)\hat{g}(r) \int \di x \di y \di z \di z^\prime\, e^{-ir\cdot (x-z)}\varphi^>(x-y)\varphi^>(z-z^\prime)\times
	\\
	&\quad \times c^\ast_{x,\uparrow} c^\ast_{y,\downarrow} d^\ast_{y, \downarrow}d_{z',\downarrow} c_{z',\downarrow} c_{z,\uparrow}
	\\
	&= -\frac{1}{L^3} \sum_r \hat{u}_{\uparrow}(r)\hat{g}(r) \int \di x \di z\, e^{ir\cdot (x-z)} c^\ast_{x,\uparrow} b_{\downarrow}^\ast(\varphi^>_x, v, u^>) b_{\downarrow}(\varphi^>_z, v, u^>) c_{z,\uparrow} \;.
\end{align*}
This term can be estimated as $\mathrm{I}_{6;a}$ using \Cref{lem:v bounds} and the fact that $|\hat{u}_\uparrow(k) \hat{g}(k)| \leq \|\hat{g}\|_\infty \|\hat{u}_\uparrow\|_\infty \leq \|\hat{g}\|_\infty$.
The terms $\mathrm{I}_{9;b}, \mathrm{I}_{9;c}$ and $\mathrm{I}_{9;d}$ are estimated similarly; thus
\[
	|\langle\psi, \mathrm{I}_{9}\psi\rangle| \leq C\|\hat{g}\|_\infty \rho^{\frac{7}{9}}\langle \psi, \mathcal{N}\psi\rangle \;.
\]
This concludes the proof. 
\end{proof}
\subsection{Analysis of \texorpdfstring{$[[n_g , B_1 - B_1^\ast], B_2 - B_2^\ast]$}{[[ng,B1-B1*],B2-B2*]}}
{The main goal of this section is to prove that $[[n_g , B_1 - B_1^\ast], B_2 - B_2^\ast]$ gives rise only to error terms.} In this section we use the notations introduced above and we prove the following proposition.
\begin{proposition}\label{pro:B1B2}
Let $B_1$ and $B_2$ as in \eqref{eq: def T1 unitary} and \eqref{eq: def T2 unitary}, respectively. 
Then, given $ \kappa > 0 $, there exists $ C_\kappa > 0 $ such that for any $\psi\in \mathcal{F}_{\mathrm{f}}$,
\[
	 |\langle \psi, [[n_g , B_1 - B_1^\ast], B_2 - B_2^\ast] \psi \rangle| \leq  C\|\hat{g}\|_\infty \rho^{\frac{5}{3} + \frac{1}{9}} + C_\kappa{L^{-3}}\|\hat{g}\|_\infty \rho^{\frac{5}{9} -\kappa}\langle \psi, \mathcal{N}\psi\rangle \;.
\]
\end{proposition}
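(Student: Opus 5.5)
We will mirror the strategy of Propositions~\ref{pro:B2} and~\ref{pro:B1}: restrict to $n_{g,\uparrow}$ (the spin-down part is symmetric) and write $[[n_{g,\uparrow}, B_1-B_1^*], B_2-B_2^*]$ in normal order. This gives a sum $\frac{(2\pi)^3}{L^3}\sum_{j=1}^9 \mathrm{I}_j$ plus hermitian conjugates. The terms $\mathrm{I}_j$ have the same structure as in \eqref{eq:I1I9_expansion}, except that one factor $\hat\eta^\varepsilon_{r,r'}(\cdot)\widehat\chi_<(\cdot)$ is replaced by $\hat\varphi^>(\cdot)$. The first commutator $[n_{g,\uparrow},B_1-B_1^*]$ only multiplies each summand of $B_1$ by $\hat g(r+p)+\hat g(-r)$, so it is again a combination of $b b$ and $b^*b^*$ with kernel $\hat g\cdot\hat\varphi^>$. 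The second commutator with $B_2-B_2^*$ then produces the nine terms via the CAR.

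\textbf{Constant term.} The only fully contracted term is
\[
\mathrm{I}_1=\frac{1}{L^6}\sum_{p,r,r'}\big(\hat g(r+p)+\hat g(-r)\big)\hat\varphi^>(p)\,\hat\eta^\varepsilon_{r,r'}(p)\widehat\chi_<(p)\,\hat u_\uparrow(r+p)\hat u_\downarrow(r'-p)\hat v_\uparrow(r)\hat v_\downarrow(r') .
\]
Here the supports of $\widehat\chi_>$ (where $|p|\ge 4\rho^{2/9}$) and $\widehat\chi_<$ (where $|p|\le5\rho^{2/9}$) overlap only in the shell $4\rho^{2/9}\le|p|\le5\rho^{2/9}$. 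On this shell $|\hat\varphi(p)|\le C|p|^{-2}$. Moreover $\lambda_{r,p}+\lambda_{r',-p}\ge c|p|^2$, since $|r|,|r'|\lesssim\rho^{1/3}\ll|p|$, so $\hat\eta^\varepsilon\le C|p|^{-2}$. The shell has volume $\sim\rho^{2/3}$, which gives
\[
|\mathrm{I}_1|\le C\|\hat g\|_\infty\rho^2\rho^{-8/9}\rho^{2/3}L^3=C\|\hat g\|_\infty\rho^{\frac{5}{3}+\frac19}L^3 .
\]
This accounts for the first term in the claim, and it is better than the corresponding pure-$B_1$ bound $\sum_{|p|>\rho^{2/9}}|p|^{-4}$.

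\textbf{Operator terms.} For $\mathrm{I}_2,\dots,\mathrm{I}_9$ we go to position space. We use \eqref{eq: formula eta t-integral} to write the $B_2$ kernel as a single $t$-integral involving $\chi_<$, $u_{t}$, $v_{t}$, while the $B_1$ kernel is $\varphi^>$ together with the plain $u,v$. Only one $t$-integral appears, so we need Lemma~\ref{lem:t} and \eqref{eq: est int t uv}, not Lemma~\ref{lem:ts}. Each term is then handled like its counterpart in the two previous proofs:
\begin{itemize}
\item quartic terms ($\mathrm{I}_2$ to $\mathrm{I}_5$, $\mathrm{I}_7$, $\mathrm{I}_8$) via Lemma~\ref{lem:a*a*aa} and Lemma~\ref{lem: a*a with N};
\item sextic terms ($\mathrm{I}_6$, $\mathrm{I}_9$) via Lemma~\ref{lem:a*a*a*aaa}. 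There one $b$-operator is $b_\sigma(\varphi^>_x,v,u^>)$, bounded by $C\rho^{7/18}$ through \eqref{eq: est b op}, and the other is $b_\sigma((\chi_<)_x,v_t,u_t)$, bounded through \eqref{eq: est b op 0} by $\|\chi_<\|_1\|\hat v_t\|_2\|\hat u_t\|_2$, after which \eqref{eq: est int t uv} gives $\rho^{5/18-\kappa}$. Together these yield $\rho^{2/3-\kappa}\le\rho^{5/9-\kappa}$.
\end{itemize}
The $\hat g$ factor always falls on a $u$, $v$ or creation/annihilation function and is controlled by $\|\hat g\|_\infty$ through Lemma~\ref{lem:v bounds}. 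The resulting powers are products of $\|\varphi^>\|_1$ or $\|\varphi^>\|_2$, $\|\chi_<\|_1$, $\|\hat v\|_2\sim\rho^{1/2}$, and $t$-integrals of size $\rho^{5/18-\kappa}$ or $\rho^{1/9}$ (from the $\hat u_t$ part). The target is $\rho^{5/9-\kappa}$ throughout.

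\textbf{Main obstacle.} The hard part will be the mixed terms in which the unbounded projector $\hat u$ from the $B_1$ side is paired with $\varphi^>$ (the analogues of $\mathrm{I}_5$, $\mathrm{I}_7$, $\mathrm{I}_8$). In Proposition~\ref{pro:B1} these terms produced an $\|\hat g\|_2$ contribution through the decomposition $u=\delta-v$. Here we expect to avoid this: momentum conservation with $\widehat\chi_<(k)$ restricts the contracted $u$ momenta on the $B_2$ side to the low shell, so that $d$ can be replaced by $d^<$ or $d_t$. Then $\|\hat u_t\|_2\le Ce^{-t(\kF)^2}\rho^{1/3}$ and Lemma~\ref{lem:t} apply, and no $\|\hat g\|_2$ term arises. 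This is consistent with the statement, which contains only $\|\hat g\|_\infty$. If some contraction still carries a bare $u$ convolved with $\varphi^>$, the fallback is the $\|\varphi^>\|_\infty^2$-type bound used in \eqref{eq:I2_estimate_B1B1}, now with only one factor $\|\hat\varphi^>\|_\infty\le C\rho^{-4/9}$ and the other factor from $\eta$ integrated in $t$. This should still give a power at least $\rho^{5/9-\kappa}$.
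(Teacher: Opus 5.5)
Your architecture matches the paper's. You use the same nine terms with one kernel $\hat\eta^\varepsilon\widehat\chi_<$ replaced by $\hat\varphi^>$, and the same shell bound for $\mathrm{I}_1$. You use a single $t$-integral controlled by Lemma~\ref{lem:t}, and Lemma~\ref{lem:a*a*a*aaa} with $b_\downarrow(\varphi^>_x,v,u^>)$ and $b_\downarrow((\chi_<)_x,v_t,u_t)$ for $\mathrm{I}_6,\mathrm{I}_9$. You also replace the contracted $u$'s by $u_t$ in $\mathrm{I}_5,\mathrm{I}_7,\mathrm{I}_8$.

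However, the step ``quartic terms via Lemma~\ref{lem:a*a*aa}'' fails for $\mathrm{I}_3$. There the two uncontracted creation operators $\hat d^*_{r+p,\uparrow}\hat d^*_{r'-p,\downarrow}$ both carry the $B_1$-momentum $p$, of which only $|p|\geq 4\rho^{2/9}$ is known. In position space they are therefore the bare operators $d^*_\uparrow(g_x)$ and $d^*_{y,\downarrow}=a^*_\downarrow(u_y)$, with $\|\hat u_\downarrow\|_2=\infty$. Lemma~\ref{lem:a*a*aa} has one creation and one annihilation operator in its outer ($\ell^\infty$) slots, so the second bare creation operator must sit in a middle slot, which costs an $\ell^2$ norm. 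No assignment of the kernels $\varphi^>(x-y)$, $\chi_<(z-z')$, $v_t(x;z)$, $v_t(y;z')$ avoids this. Your mechanism ($\widehat\chi_<(k)$ forcing $d\to d_t$) only regularizes the $B_2$-side annihilation operators, and your fallback concerns contracted $u$'s, so neither covers this term. In Proposition~\ref{pro:B2}, Lemma~\ref{lem:a*a*aa} worked for $\mathrm{I}_3$ only because all four operators were $d_t$'s. The right model here is the $\mathrm{I}_3$ of Proposition~\ref{pro:B1}.

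The missing idea is \eqref{eq:intoperator_estimate}: absorb the $y$-integration into a single creation operator. Using $0\le\hat u_\downarrow\le1$, you get $\|\int\di y\,\varphi^>(x-y)v_{t,\downarrow}(y;z')\,d^*_{y,\downarrow}\|\le\|\varphi^>_x v_{t,z',\downarrow}\|_2$. Similarly, $\|\int\di z\,\chi_<(z-z')v_{t,\uparrow}(x;z)\,d_{t,z,\uparrow}\|\le e^{-t(\kF^\uparrow)^2}\|\chi_{<,z'}v_{t,x,\uparrow}\|_2$. Cauchy--Schwarz in $(x,z')$, Lemma~\ref{lem: a*a with N} and Lemma~\ref{lem:t} then give $|\langle\psi,\mathrm{I}_{3;a}\psi\rangle|\le C_\kappa\|\hat g\|_\infty\|\varphi^>\|_2\|\chi_<\|_2\,\rho^{\frac13-\kappa}\langle\psi,\cN\psi\rangle\le C_\kappa\|\hat g\|_\infty\rho^{\frac59-\kappa}\langle\psi,\cN\psi\rangle$.

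This is exactly the term that sets the rate $\rho^{5/9-\kappa}$ in the statement. It uses $\|\chi_<\|_2\le C\rho^{1/3}$ from \eqref{eq:norms chi}, which is not among the ingredients you list. The terms you flag as the main obstacle are in fact harmless: $\rho^{5/6-\kappa}$ for $\mathrm{I}_5,\mathrm{I}_7$ and $\rho^{7/9}$ for $\mathrm{I}_8$.

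Two minor points. First, $\mathrm{I}_2$ and $\mathrm{I}_4$ are quadratic, not quartic; they are bounded directly using $|\hat\varphi(p)\hat\eta^\varepsilon_{r,r'}(p)|\le C|p|^{-4}$ on the shell. Second, your $\mathrm{I}_1$ bound is of the same order $\rho^{16/9}$ as in Proposition~\ref{pro:B1}, not better.
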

\begin{proof}
Computing $[[n_g, B_1 - B_1^\ast], B_2 - B_2^\ast]$ as in~\eqref{eq:I1I9_expansion}, results in similar terms $\I_1, \ldots, \I_9$, but with one $\hat{\eta}_{\cdot,\cdot}(\cdot) \chi_<(\cdot)$ replaced by $\hat{\varphi}(\cdot) \chi_>(\cdot) = \hat{\varphi}^>(\cdot)$. All these terms are errors that have to be estimated. Throughout the proof, $\kappa>0$ is an arbitrary constant as small as needed. In what follows we write all of the terms $\I_1,\ldots,\I_9$ explicitly and estimate them. We start by considering the constant term
\begin{align*}
	\mathrm{I}_1 &= \frac{1}{L^6}\sum_{p,r,r^\prime} \left(\hat{g}(r+p) + \hat{g}(-r)\right)\hat{\varphi}(p)\hat{\eta}_{r,r^\prime}^\varepsilon(p)\widehat{\chi}_>(p)\widehat{\chi}_<(p) \hat{u}_\uparrow(r+p)\hat{u}_\downarrow(r^\prime - p)\hat{v}_\uparrow(r)\hat{v}_\downarrow(r^\prime) \;.
\end{align*}
We notice that since $p\in\mathrm{supp}(\widehat{\chi}_>)\cap \mathrm{supp}(\widehat{\chi}_<)$, we get that $4\rho^{2/9}\leq |p| \leq 5\rho^{2/9}$. For those values of $p$, we have 
\begin{equation}\label{eq: est phieta}
	|\hat{\varphi}(p)\hat{\eta}_{r,r^\prime}(p)| \leq C|p|^{-4} \;.
\end{equation}
Thus, similarly as in~\eqref{eq:I1_estimate_B1B1}, we can estimate 
\[
	|\mathrm{I}_1 | \leq \frac{C\|\hat{g}\|_\infty}{L^6}\sum_{4\rho^{\frac{2}{9}} \leq |p| \leq 5 \rho^{\frac{2}{9}}}\sum_{r,r^\prime}\frac{1}{|p|^4}\hat{v}_\uparrow(r)\hat{v}_\downarrow(r^\prime)\leq CL^3\|\hat{g}\|_\infty \rho^{\frac{5}{3} + \frac{1}{9}} \;.
\]
The next term is 
\begin{align*}
\mathrm{I}_2 = -\frac{1}{L^6}\sum_{\sigma\neq\sigma^\prime} \sum_{p,r,r^\prime} & \left(\delta_{\sigma,\uparrow} (\hat{g}(r+p) + \hat{g}(-r)) + \delta_{\sigma,\downarrow}(\hat{g}(r^\prime - p) + \hat{g}(-r^\prime)) \right)\times
\\ 
& \times \hat{\varphi}(p) \hat{\eta}_{r,r^\prime}(p)\widehat{\chi}_>(p)\widehat{\chi}_<(p)
\hat{u}_{\sigma^\prime}(r^\prime - p)\hat{v}_\sigma(r)\hat{v}_{\sigma^\prime}(r^\prime)
\hat{d}^\ast_{r+p, \sigma}\hat{d}_{r+p,\sigma} \;.
\end{align*}
In this case, we use again \eqref{eq: est phieta}, which implies that $|\hat{\varphi}(p)\hat\eta_{r,r'}(p)| \leq C|p|^{-4}\leq C\rho^{-\frac 89}$ for $p\in\mathrm{supp}(\widehat{\chi}_>)\cap \mathrm{supp}(\widehat{\chi}_<) $. As in~\eqref{eq:I2_estimate_B1B1}, we then write
\[
	|\langle \psi, \mathrm{I}_2\psi\rangle| 
    \leq \frac{C\|\hat{g}\|_\infty }{L^6}\sum_{\sigma\neq \sigma^\prime}\sum_{p,r,r^\prime} 
    \rho^{-\frac 89}\hat{v}_\sigma(r)\hat{v}_{\sigma^\prime}(r^\prime)\|\hat{d}_{r+p,\sigma}\psi\|^2 \leq C\|\hat{g}\|_\infty \rho^{\frac{10}{9}}\langle \psi, \mathcal{N}\psi\rangle \;.
\]
We now estimate the error term $\mathrm{I}_3$, given by
\begin{align*}
	\mathrm{I}_3 &= \frac{(8 \pi a)}{L^6}\int_0^{\infty}\di t \, e^{-2\varepsilon t} \sum_{p,k,r,r^\prime}\left(\hat{g}(r+p) + \hat{g}(-r)  \right) \hat{\varphi}^>(p) \widehat{\chi}_<(k)\times
	\\ 
	&\qquad \times \hat{v}_{t,\uparrow}(r)\hat{v}_{t,\downarrow}(r^\prime) 
        \hat{d}_{r+p,\uparrow}^\ast
        \hat{d}_{r^\prime - p, \downarrow}^\ast 
        \hat{d}_{t,r^\prime - k, \downarrow}
        \hat{d}_{t,r+k,\uparrow} \;.
\end{align*}
Note that in deriving this expression, we replaced $\hat{d}_{r+k,\uparrow}$ and $\hat{d}_{r'-k,\downarrow}$ by $\hat{d}_{r+k,\uparrow}^<$ and $\hat{d}_{r'-k,\downarrow}^<$, which is allowed due to the momentum constraints on $k,r,r^\prime$, and then applied the definition~\eqref{eq: def ut vt} of $ \hat{d}_{t,k,\sigma} $, as well as~\eqref{eq: formula eta t-integral}.
We then split {$\mathrm{I}_3$ as we did above in~\eqref{eq:I3_split}, i.\,e., } $ \mathrm{I}_3 = \mathrm{I}_{3;a} + \mathrm{I}_{3;b} $, and write $\mathrm{I}_{3;a}$ in position space
\begin{align*}
	\mathrm{I}_{3;a} 
    &= - (8\pi a)
        \int_0^\infty \di t \, e^{-2\varepsilon t}
        \int \di x \di y \di z \di z^\prime\, \varphi^>(x-y)\chi_<(z-z^\prime) v_{t,\uparrow}(x;z) v_{t,\downarrow}(y;z^\prime) \times
	\\ 
	&\qquad \times  
    d_\uparrow^\ast(g_x) d^\ast_{y,\downarrow} d_{t,z,\uparrow} d_{t,z',\downarrow} 
	\\ 
	&= - (8\pi a)
        \int_0^\infty \di t \, e^{-2\varepsilon t}
        \int \di x \di z^\prime d^\ast_{\uparrow}(g_x)
    \left(\int \di y\, \varphi^>(x-y) v_{t,\downarrow}(y;z^\prime) d_{y,\downarrow}^\ast \right)\times
	\\ 
	&\qquad \times \left(\int \di z\, \chi_<(z-z^\prime) v_{t,\uparrow}(x;z) d_{t,z,\uparrow}\right) d_{t,z',\downarrow} \;. 
\end{align*}
Since $0\leq \hat{u}_{\sigma}\leq 1$ and $0\leq \hat{u}_{t,\sigma}\leq e^{-t(k_{\F}^\sigma)^2}$, we can deduce as in~\eqref{eq:intoperator_estimate} that
\begin{align*}
	\left\|\int \di y \,\varphi^>(x-y) v_{t,\downarrow}(y;z^\prime) a_\downarrow(u_y)\right\| 
    & \leq \|\varphi^>_x v_{t,z^\prime,\downarrow}\|_2 \;,
\\
    \left\|\int \di z \,\chi_<(z-z^\prime) v_{t,\uparrow}(x;z) a_\uparrow(u_{t,z})\right\| 
    & \leq e^{-t(\kF^\uparrow)^2} \|\chi_{<,z^\prime} v_{t,x,\uparrow}\|_2 \;,
\end{align*}
where we used the same notations as above, in particular, $\chi_{<,z^\prime}(z)  = \chi_<(z^\prime - z)$. With the aid of Cauchy--Schwarz and of the two bounds above, we get 
\begin{align*}
	|\langle \psi, \mathrm{I}_{3;a}\psi\rangle| &\leq C\int_0^{\infty}\di t \, e^{-2\varepsilon t}\left(\int \di x \di z^\prime \, \|\varphi_x^>v_{t,z^\prime,\downarrow}\|^2_2 \|d_\uparrow(g_x)\psi\|^2\right)^{\frac{1}{2}}\times
	\\ 
	&\qquad \times \left(\int \di x \di z^\prime \, e^{-2t(\kF^\uparrow)^2}\|\chi_{<,z^\prime}v_{t,x,\uparrow}\|^2_2 \|d_{t,z',\downarrow} \psi\|^2\right)^{\frac{1}{2}}
	\\ 
	&\leq C\|\hat{g}\|_\infty \|\varphi^>\|_2\|\chi_<\|_2 \left(\int_0^{\infty}\di t \, e^{-2\varepsilon t}e^{-t(\kF^\uparrow)^2}e^{-t(\kF^\downarrow)^2}\|\hat{v}_{t,\downarrow}\|_2 \|\hat{v}_{t,\uparrow}\|_2 \right)\langle \psi, \mathcal{N}\psi\rangle
	\\ 
    & \leq C_\kappa \norm{\hat g}_\infty \rho^{\frac 59 - \kappa} \expect{\psi, \cN\psi} \;,
\end{align*}
where we used {\Cref{lem: bounds phi,lem:v bounds,lem:t}}. The term $\mathrm{I}_{3;b}$ is estimated similarly. {More precisely, we have 
\begin{align*}
	\mathrm{I}_{3;b} 
    &= - (8\pi a)
        \int_0^\infty \di t \, e^{-2\varepsilon t}
        \int \di x \di y \di z \di z^\prime\, \varphi^>(x-y)\chi_<(z-z^\prime) (v_{t,\uparrow}\ast g)(x;z) v_{t,\downarrow}(y;z^\prime)  \times
	\\ 
	&\qquad \times  
    d_{x,\uparrow}^\ast d^\ast_{y,\downarrow} d_{t,z,\uparrow} d_{t,z',\downarrow} \;.
\end{align*}
We can estimate $\mathrm{I}_{3;b}$ similarly as $\mathrm{I}_{3;a}$ getting $|\langle \psi, \mathrm{I}_{3;b}\psi\rangle| \leq C_\kappa \|\hat{g}\|_\infty \rho^{5/9 - \kappa} \langle \psi, \mathcal{N}\psi\rangle$ and conclude that
}
\[
	|\langle \psi, \mathrm{I}_{3}\psi\rangle|
    \leq C_\kappa \|\hat{g}\|_\infty \rho^{\frac{5}{9} -\kappa }\langle \psi, \mathcal{N}\psi\rangle \;.
\]
The next error term to estimate is
\begin{align*}
	\mathrm{I}_4 &= - \frac{1}{L^6}\sum_{\sigma\neq \sigma^\prime}\sum_{p,r, r^\prime} \left(\delta_{\sigma,\uparrow} (\hat{g}(r+p) + \hat{g}(-r)) + \delta_{\sigma,\downarrow}(\hat{g}(r^\prime - p)  + \hat{g}(-r^\prime)) \right)\times
	\\ 
	&\qquad \times \hat{\varphi}(p)\hat{\eta}_{r,r^\prime}(p)\widehat{\chi}_>(p)\widehat{\chi}_<(p)
    \hat{u}_\sigma(r+p)\hat{u}_{\sigma^\prime}(r^\prime - p)\hat{v}_{\sigma^\prime}(r^\prime)\hat{c}_{-r,\sigma}^\ast \hat{c}_{-r,\sigma} \;.
\end{align*}
 In this case, using \eqref{eq: est phieta} {for $p\in \supp\widehat{\chi}_< \cap\supp\widehat{\chi}_>$}, we find 
\begin{align*}
	|\langle \psi, \mathrm{I}_4 \psi\rangle| 
    &\leq \frac{C\|\hat{g}\|_\infty}{L^6}\sum_{\sigma\neq \sigma^\prime}\sum_{p,r,r^\prime}\frac{1}{|p|^4}\widehat{\chi}_>(p)\widehat{\chi}_<(p)\hat{v}_{\sigma^\prime}(r^\prime) \|\hat{c}_{-r,\sigma}\psi\|^2
    \leq C\|\hat{g}\|_\infty \rho^{\frac{7}{9}}\langle \psi,\mathcal{N}\psi\rangle \;.
\end{align*}
We now estimate $\mathrm{I}_5$, which reads as 
\begin{align*}
	\mathrm{I}_5 &= \frac{1}{L^6}\sum_{\sigma\neq \sigma^\prime}\sum_{p,k,r,r^\prime}\left(\delta_{\sigma,\uparrow} (\hat{g}(r+p) + \hat{g}(-r)) + \delta_{\sigma,\downarrow}(\hat{g}(r^\prime - p) + \hat{g}(-r^\prime)) \right)\hat{\varphi}^>(p)\times
	\\ 
	&\quad \times \hat{\eta}_{r,r^\prime - p +k}(k)\widehat{\chi}_<(k)
    \hat{u}_{\sigma^\prime}(r^\prime - p) \hat{v}_{\sigma}(r) 
        \hat{d}_{r+p,\sigma}^\ast 
        \hat{c}^\ast_{-r^\prime, \sigma^\prime}
        \hat{c}_{-r^\prime + p - k, \sigma^\prime}
        \hat{d}_{r+k,\sigma} \\
    &= \mathrm{I}_{5;a} + \mathrm{I}_{5;b} + \mathrm{I}_{5;c} + \mathrm{I}_{5;d} \;.
\end{align*}
We first consider $\mathrm{I}_{5;a}$, {which in position space reads as (see \eqref{eq: formula eta t-integral})}
\begin{align*}
	\mathrm{I}_{5;a} &= (8\pi a) \int_0^\infty \di t \, e^{-2\varepsilon t}\int \di x \di y \di z \di z^\prime\, \varphi^>(x-y)\chi_<(z-z^\prime)u_{t,\downarrow}(y;z^\prime)v_{t,\uparrow}(x;z)\times
	\\ 
	&\quad \times 
        d^\ast_{\uparrow}(g_x) 
        c_{y,\downarrow}^\ast 
        c_{t,z',\downarrow}
        d_{t,z,\uparrow} \;.
\end{align*}
Applying \Cref{lem:a*a*aa,lem: a*a with N,lem:v bounds,lem: bounds phi,lem:t} we get
\begin{align*}
	|\langle\psi,\mathrm{I}_{5;a}\psi\rangle| &\leq C\|\hat{g}\|_\infty \|\varphi^>\|_1 \|\chi_<\|_1  {\|\hat{v}_\downarrow\|_2}\times
	\\ 
	& \quad \times
    \left(\int_0^{\infty}\di t \,e^{-2t\varepsilon} e^{-t(k_{\F}^\uparrow)^2} \|\hat{u}_{t,\downarrow}\|_2 \|\hat{v}_{t,\uparrow}\|_2 \|\hat{v}_{t,\downarrow}\|_2 \right) \langle \psi, \mathcal{N}\psi\rangle
 	\\ 
 	&\leq C\|\hat{g}\|_\infty \rho\|\varphi^>\|_1 \|\chi_<\|_1 \left(\int_0^{\infty}\di t \,e^{-2t\varepsilon}  \|\hat{v}_{t,\downarrow}\|_2 \|\hat{u}_{t,\downarrow}\|_2\right)\langle \psi, \mathcal{N}\psi\rangle
 	\\ 
    &\leq C_\kappa \rho^{\frac{5}{6} - \kappa}\langle \psi, \mathcal{N}\psi\rangle \;.
\end{align*}

The other terms $\I_{5;b}, \I_{5;c}, \I_{5;d}$ can be estimated similarly, using the lemmas mentioned for $\mathrm{I}_{5;a}$. 
Thus
\[
	|\langle \psi, \mathrm{I}_5\psi\rangle| 
    \leq C_\kappa \|\hat{g}\|_\infty\rho^{\frac{5}{6} -\kappa}\langle \psi, \mathcal{N}\psi\rangle \;.
\]
Now, we estimate $\mathrm{I}_6$, which we split {in four terms similarly as for~\eqref{eq:I6_split}}
\begin{align*}
	\mathrm{I}_6 
    &= - \frac{1}{L^6}\sum_{\sigma\neq \sigma^\prime}\sum_{p,k,r,r^\prime, s}\left(\delta_{\sigma,\uparrow} (\hat{g}(r+p) + \hat{g}(-r)) + \delta_{\sigma,\downarrow}(\hat{g}(r^\prime -p) + \hat{g}(-r^\prime)) \right) \hat{\varphi}^>(p)\times
	\\ 
	&\qquad \times \hat{\eta}_{r,s}(k)\widehat{\chi}_<(k) 
    \hat{v}_\sigma(r) 
        \hat{d}_{r+p,\sigma}^\ast 
        \hat{c}_{-r^\prime, \sigma^\prime}^\ast
        \hat{d}^\ast_{r^\prime - p, \sigma^\prime}
        \hat{d}_{s-k,\sigma^\prime}
        \hat{c}_{-s, \sigma^\prime}
        \hat{d}_{r+k,\sigma}
	\\ 
	&= \mathrm{I}_{6;a} + \mathrm{I}_{6;b} + \mathrm{I}_{6;c} + \mathrm{I}_{6;d} \;.
\end{align*}
Note that using the constraints on $p,r^\prime$, we can replace 
$\hat d_{r'-p,\sigma'} = \hat{u}_{\sigma^\prime}(r^\prime -p) \hat a_{r'-p,\sigma'}$  with $\hat{u}^>_{\sigma^\prime}(r^\prime - p) \hat a_{r'-p,\sigma'}$, where $\hat{u}^>_\sigma$ is as defined in \eqref{eq: def u>}.
The first error term written in position space is
\begin{align*}
	\mathrm{I}_{6;a} 
    &= \frac{(8\pi a)}{L^3} \int_0^{\infty} \di t \, e^{-2\varepsilon t} 
    \sum_r \hat{v}_{t,\uparrow}(r)
    \left(\int \di x  \, e^{ir\cdot x} d^\ast_\uparrow(g_x) b^\ast_{\downarrow}(\varphi^>_x, v, u^>) \right)\times
	\\
	&\qquad \times \left(\int \di z \di z^\prime\, e^{-ir\cdot z}\chi_<(z-z^\prime) d_{t,z',\downarrow} c_{t,z',\downarrow} d_{t,z,\uparrow} \right) \;.
\end{align*}
Proceeding similarly as for the term $\mathrm{I}_{6;a}$ in Proposition \ref{pro:B2} and Proposition \ref{pro:B1}, and in particular applying  \Cref{lem:a*a*a*aaa,lem: a*a with N,lem: est b operator}, we get 
\[
	|\langle\psi,\mathrm{I}_{6;a}\psi\rangle | \leq C \|\hat{g}\|_\infty\|\chi_<\|_1\rho^{\frac{7}{18}}\left(\int_0^\infty \di t \, e^{-2\varepsilon t} \|\hat{v}_{t,\downarrow}\|_2 \|\hat{u}_{t,\downarrow}\|_2 \right)\langle \psi, \mathcal{N}\psi\rangle \;.
\]
Using then \Cref{lem: bounds phi}, and Lemma \ref{lem:t} to estimate the integral with respect to $t$ (see \eqref{eq: est int t uv}), we find 
\[
	|\langle \psi,\mathrm{I}_{6;a}\psi\rangle| 
    \leq C_\kappa \|\hat{g}\|_\infty \rho^{\frac{2}{3} - \kappa}\langle \psi,\mathcal{N}\psi\rangle \;.
\]
 {To estimate $\mathrm{I}_{6,b}$, $\mathrm{I}_{6;c}$ and $\mathrm{I}_{6;d}$, it suffices to argue as in the estimate of $\mathrm{I}_{6;a}$. More precisely, we apply the same lemmas as used for $\mathrm{I}_{6;a}$, together with Lemma \ref{lem:v bounds} for the terms $\mathrm{I}_{6;c}$ and $\mathrm{I}_{6;d}$ and the bound $|\hat{v}_{t,\uparrow}(r) \hat{g}(r)| \leq \|\hat{g}\|_\infty$ for the term $\mathrm{I}_{6;b}$. Hence, all the terms can be estimated in the same way, and we conclude that}
\[
	|\langle \psi, \mathrm{I}_6\psi\rangle| 
    \leq C_\kappa \|\hat{g}\|_\infty \rho^{\frac{2}{3} - \kappa}\langle \psi,\mathcal{N}\psi\rangle \;.
\]
We next consider
\begin{align*}
	\mathrm{I}_7 &= \frac{1}{L^6}\sum_{\sigma\neq \sigma^\prime}\sum_{p,r,r^\prime, s}\left(\delta_{\sigma,\uparrow} (\hat{g}(r+p) + \hat{g}(-r))  + \delta_{\sigma,\downarrow}(\hat{g}(r^\prime - p) + \hat{g}(-r^\prime)) \right) \hat{\varphi}^>(p)\times
	\\ 
	&\qquad \times \hat{\eta}_{s,r^\prime}^{\varepsilon}(p) \widehat{\chi}_<(p)
    \hat{u}_{\sigma^\prime}(r^\prime - p) \hat{v}_{\sigma^\prime}(r^\prime) 
    \hat{c}^{\ast}_{-r,\sigma}\hat{d}_{r+p,\sigma}^\ast \hat{d}_{s+p, \sigma}\hat{c}_{-s, \sigma} 
    = \mathrm{I}_{7;a} + \mathrm{I}_{7;b} + \mathrm{I}_{7;c} + \mathrm{I}_{7;d} \;.
\end{align*}
In order to estimate all of these terms, we write them in position space. We start with $\mathrm{I}_{7;a}$:
\begin{align*}
	\mathrm{I}_{7;a} 
    &= (8\pi a) \int_0^\infty \di t \, e^{-2\varepsilon t}\int \di x \di y \di z \di z'\, \varphi^>(x-y)\chi_<(z-z') v_{t,\downarrow}(y;z') u_{t,\downarrow}(y;z')\times
	\\ 
	&\quad \times c^\ast_{x,\uparrow} d_\uparrow^\ast(g_x) d_{t,z,\uparrow}c_{t,z,\uparrow} \;.
\end{align*}
We use Lemma \ref{lem:a*a*aa} together with \Cref{lem: a*a with N,lem:v bounds} and obtain:
\begin{align*}
	|\langle\psi, \mathrm{I}_{7;a}\psi\rangle| &\leq C\|\hat{g}\|_\infty \|\varphi^>\|_1 \| \chi_<\|_1 {\|\hat{v}_ \uparrow\|_2}\times
	\\ 
	&\quad \times  \left(\int_0^\infty \di t \, e^{-2t\varepsilon} e^{-t(k_{\F}^\uparrow)^2}\|\hat{v}_{t,\uparrow}\|_2 \|\hat{u}_{t,\downarrow}\|_2 \|\hat{v}_{t,\downarrow}\|_2 \right)\langle \psi,\mathcal{N}\psi\rangle 
	\\ 
	&\leq C\|\hat{g}\|_\infty \rho^{\frac{5}{9}}\left(\int_0^\infty \di t \, e^{-2\varepsilon t} \|\hat{v}_{t,\downarrow}\|_2 \|\hat{u}_{t,\downarrow}\|_2 \right)\langle \psi,\mathcal{N}\psi\rangle 
    \leq C_\kappa \|\hat{g}\|_\infty \rho^{\frac{5}{6} - \kappa}\langle \psi ,\mathcal{N}\psi\rangle \;,
\end{align*}
where in the last two estimates we used \Cref{lem: bounds phi}, and \Cref{lem:t} for the integral with respect to $t$. 
{To estimate $\mathrm{I}_{7,b}$, $\mathrm{I}_{7;c}$ and $\mathrm{I}_{7;d}$, it suffices to argue as in the estimate of $\mathrm{I}_{7;a}$. More precisely, we apply the same lemmas used for $\mathrm{I}_{7;a}$ and get the same bound for all the terms.} All together we find 
\[
	|\langle \psi, \mathrm{I}_{7}\psi\rangle| 
    \leq C_\kappa \|\hat{g}\|_\infty \rho^{\frac{5}{6} - \kappa}\langle \psi,\mathcal{N}\psi\rangle \;.
\]
Next, we estimate
\begin{align*}
	\mathrm{I}_8 
    &= \frac{1}{L^6}\sum_{p,k,r,r^\prime}\left(\hat{g}(r+p) + \hat{g}(-r)  \right) \hat{\varphi}^>(p) \hat{\eta}_{r+p-k,r^\prime-p+k}(k) \widehat{\chi}_<(k)\times
	\\ 
	&\quad \times \hat{u}_\uparrow(r+p)\hat{u}_\downarrow(r'-p)
        \hat{c}_{-r, \uparrow}^\ast 
        \hat{c}_{-r^\prime, \downarrow}^\ast 
        \hat{c}_{-r^\prime +p-k,\downarrow} 
        \hat{c}_{-r-p+k, \uparrow}
    = \mathrm{I}_{8;a} + \mathrm{I}_{8;b} \;.
\end{align*}
We write $\mathrm{I}_{8;a}$ as 
\begin{align*}
	\mathrm{I}_{8;a} 
    &= (8\pi a) \int_0^{\infty}\di t \, e^{-2\varepsilon t}\int \di x \di y \di z \di z^\prime\, \varphi^>(x-y)\chi_<(z-z^\prime)\times
	\\ 
	&\quad \times  (u_{t,\uparrow} * g)(x-z)u_{t,\downarrow}(y;z^\prime)
        c^\ast_{x,\uparrow}
        c_{y,\downarrow}^\ast 
        c_{t,z',\downarrow}
        c_{t,z,\uparrow} \;.
\end{align*}
Again applying Lemma \ref{lem:a*a*aa} and Lemma \ref{lem: a*a with N}, proceeding similarly as for $\mathrm{I}_{5;a}$ and $\mathrm{I}_{7;a}$ above, we obtain
\begin{align*}
	|\langle\psi,\mathrm{I}_{8;a}\psi\rangle| &\leq C\|\hat{g}\|_\infty \|\varphi^>\|_1 \|\chi_<\|_1 \|\hat{v}_\downarrow\|_2 \int_0^{\infty} \di t \,e^{-2t\varepsilon} e^{t(\kF^\uparrow)^2} \|\hat{u}_{t,\uparrow}\|_2 \|\hat{u}_{t,\downarrow}\|_2 \|\hat{v}_{t, \downarrow}\|_2 \langle \psi, \mathcal{N}\psi\rangle
 	\\ 
 	&\leq C\|\hat{g}\|_\infty \rho^{\frac{5}{9}}\int_0^{\infty} \di t \,e^{-2t\varepsilon} e^{t(\kF^\uparrow)^2}e^{t(\kF^\downarrow)^2} \|\hat{u}_{t,\uparrow}\|_2 \|\hat{u}_{t,\downarrow}\|_2 \langle \psi, \mathcal{N}\psi\rangle
	\\ 
	&\leq C \|\hat{g}\|_\infty \rho^{\frac 79}\langle \psi,\mathcal{N}\psi\rangle \;,
\end{align*}
where we used also Lemma~\ref{lem: bounds phi} and Lemma~\ref{lem:t}.
In the same way one obtains
\[
    |\langle \psi, \mathrm{I}_{8;b}\psi\rangle|\leq C \|\hat{g}\|_\infty \rho^{\frac 79}\langle \psi,\mathcal{N}\psi\rangle \;.
\]
Thus
\begin{align*}
    |\langle\psi,\mathrm{I}_{8}\psi\rangle|
    \leq C \|\hat{g}\|_\infty \rho^{\frac 79}\langle \psi,\mathcal{N}\psi\rangle \;.
\end{align*}
To conclude the proof, we estimate
\begin{align*}
	\mathrm{I}_9 &= - \frac{1}{L^6}\sum_{\sigma\neq \sigma^\prime}\sum_{p,k,r,r^\prime,s}\left( \delta_{\sigma,\uparrow} (\hat{g}(r+p) + \hat{g}(-r)) + \delta_{\sigma,\downarrow} (\hat{g}(r^\prime - p) + \hat{g}(-r^\prime)) \right) \hat{\varphi}^>(p)\times
	\\ 
	&\qquad \times \hat{\eta}_{r+p-k,s}(k)\widehat{\chi}_<(k)
    \hat{u}_\sigma(r+p)
        \hat{c}_{-r,\sigma}^\ast 
        \hat{c}_{-r^\prime, \sigma^\prime}^\ast
        \hat{d}_{r^\prime - p, \sigma^\prime}^\ast 
        \hat{d}_{s-k, \sigma^\prime}
        \hat{c}_{-s, \sigma^\prime}
        \hat{c}_{-r-p+k, \sigma}
	\\ &
    = \mathrm{I}_{9;a} + \mathrm{I}_{9;b} + \mathrm{I}_{9;c} + \mathrm{I}_{9;d} \;.
\end{align*}
Similar to $\mathrm{I}_6$ above, due to the constraints on $r^\prime$ and $p$, we can replace the factor $\hat{u}_{\sigma^\prime}(r^\prime -p)$ (implicit in $\hat d_{r'-p,\sigma'}^\ast$) with $\hat{u}_{\sigma^\prime}^>(r^\prime -p)$ using the definition of $\hat{u}^>_\cdot$ in \eqref{eq: def u>}.
We start with $\mathrm{I}_{9;a}$, which we write in position space as 
\begin{align*}
	\mathrm{I}_{9;a} 
    &= -\frac{8\pi a}{L^3} \int_0^{\infty}\di t \, e^{-2t\varepsilon}\sum_r \hat{u}_{t,\uparrow}(r) \hat{g}(r) \int \di x \di y \di z \di z^\prime\, e^{ir\cdot (x-z)}\varphi^>(x-y) \times
	\\ 
	&\quad \times \chi_<(z-z^\prime)
    c^\ast_{x,\uparrow} c^\ast_{y,\downarrow} d^\ast_{y,\downarrow} d_{t,z',\downarrow} c_{t,z',\downarrow}c_{t,z,\uparrow}
	\\ 
	&= \frac{8\pi a}{L^3} \int_0^\infty \di t \, e^{-2t\varepsilon} \sum_r \hat{u}_{t,\uparrow}(r) \hat{g}(r)
    \left( \int \di x \, e^{ir\cdot x} c^\ast_{x,\uparrow} b_{\downarrow}^\ast(\varphi^>_x, v, u^>)\right)\times
	\\ 
	&\quad \times \left(\int \di z \di z' \, e^{-ir\cdot z} \chi_<(z-z')
    d_{t,z',\downarrow} c_{t,z',\downarrow} c_{t,z,\uparrow}\right) \;.
\end{align*}
Proceeding similarly as for $\mathrm{I}_{6;a}$ and applying Lemma~\ref{lem:a*a*a*aaa}, we find
\begin{align*}
    |\langle \psi,\mathrm{I}_{9;a}\psi\rangle| 
    &\leq C\|\hat{g}\|_\infty \norm{\chi_<}_1 
    \sup_x \|b_{\downarrow}^\ast(\varphi^>_x, v, u^>)\| 
    \int_0^\infty \di t \, e^{-2t\varepsilon} 
    e^{-t(k_{\F}^\uparrow)^2}\times
    \\
	&\quad\times 
    \|\hat{u}_{t,\downarrow}\|_2 \|\hat{v}_{t,\downarrow}\|_2
    \left(\int \di x \, \|c_{x,\uparrow}\psi\|^2\right)^{\frac{1}{2}}
    \left(\int \di z\, \|c_{t,z,\uparrow} \psi\|^2\right)^{\frac{1}{2}}
	\\ 
	&\leq C\|\hat{g}\|_\infty \norm{\chi_<}_1 
    \rho^{\frac{7}{18}}
    \left(\int_0^\infty \di t \, e^{-2t\varepsilon} 
    \|\hat{u}_{t,\downarrow}\|_2 \|\hat{v}_{t,\downarrow}\|_2 \right)\langle \psi, \mathcal{N}\psi\rangle 
	\\ 
	& \leq C_\kappa \|\hat{g}\|_\infty \rho^{\frac{2}{3} - \kappa}\langle \psi,\mathcal{N}\psi\rangle \;,
\end{align*}
where we used \Cref{lem:t} to bound the $t$-integral. Similarly, we can estimate $\mathrm{I}_{9;b}$ through $\mathrm{I}_{9;d}$ getting $|\langle \psi, \mathrm{I}_{9;b} \psi\rangle| ,\, |\langle \psi, \mathrm{I}_{9;c} \psi\rangle|, \, |\langle \psi, \mathrm{I}_{9;d} \psi\rangle|\leq  C_\kappa \|\hat{g}\|_\infty \rho^{\frac{2}{3} - \kappa}\langle \psi,\mathcal{N}\psi\rangle  $. Therefore
\begin{equation*}
    \abs{\expect{\psi, \mathrm{I}_9 \psi}}
    \leq 
    C_\kappa \norm{\hat g}_\infty \rho^{\frac 23-\kappa} \expect{\psi, \cN\psi} \;.
\end{equation*}
This concludes the proof. 
\end{proof}

\subsection{Proof of the Main Theorem}\label{sec.proof.prop.main}

To finalize the proof of~\Cref{thm:main}, it remains to plug the estimates from~\Cref{pro:B2,pro:B1,pro:B1B2} into~\eqref{eq:Duhamelexpansion}, and to remove the regularizations by $\varepsilon$ and $\chi_<$, as well as to prove the energy statement \cref{eq:main_energy}.

\subsubsection{Removing the Regularizations}
We start by proving the following lemma which will allow us to remove the regularizations by $\varepsilon$ and $\chi_<$ in the excitation density.
\begin{lemma} \label{lem:remove_regularizations}
Let $\hat g$ be the restriction to $\Lambda^\ast$ of a function $\mathcal{F}(g) \in L^\infty(\RRR^3) \cap L^2(\RRR^3)$ and let $n^{(\Bel)}_{g} = \sum_{\sigma\in\{\uparrow, \downarrow\}} n^{(\Bel)}_{g,\sigma}$ as in \eqref{eq:ngBelyakov_2}. Let $\widehat{\chi}_<$ be as in \eqref{eq: def chi< chi>} and let $\varepsilon = \rho^{2/3 + \delta}$ for some $\delta > 2/9$. Define
\begin{equation}\label{eq: def constant thmd lim}
    \mathcal{C}_g
    := \frac{a^2}{\pi^4}\int_{\mathbb{R}^3}\di p  \int_{\scriptscriptstyle{\substack{|r| \leq \kF^\uparrow \\ \kF^\uparrow< |r + p|}}}\hspace{-0.2cm}\di r  \int_{\scriptscriptstyle{\substack{|r^\prime| \leq \kF^\downarrow \\ \kF^\downarrow< |r^\prime - p|}}}\hspace{-0.2cm}\di r ^\prime \frac{\left(\hat{g}(r+p) + \hat{g}(-r) + \hat{g}(r^\prime - p) + \hat{g}(-r^\prime)\right) \widehat{\chi}_<(p)^2}{(|r+p|^2 - |r|^2 + |r^\prime - p|^2 - |r^\prime|^2 + 2\varepsilon)^2} \;.
\end{equation}
Then
\begin{equation}\label{eq: nbelg - constant}
	\abs{n_g^{(\Bel)} - \mathcal{C}_g}
	\leq C \norm{\hat{g}}_\infty \big( \rho^{\frac 53 + \frac{1}{9}} \big) \;.
\end{equation}
\end{lemma}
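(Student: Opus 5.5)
The plan is to compare the integrands of $n^{(\Bel)}_g$ and $\mathcal{C}_g$ pointwise and split the difference into two pieces. The first piece comes from the momentum cutoff $\widehat{\chi}_<(p)^2$, which is missing in $n^{(\Bel)}_g$. The second comes from the regularization $2\varepsilon$ in the denominator.

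First we rewrite $n^{(\Bel)}_g$ so that it has the same structure as $\mathcal{C}_g$. Take the $\sigma=\downarrow$ summand in \eqref{eq:ngBelyakov_2} and change variables $p\mapsto -p$, $r\leftrightarrow r'$. It becomes the term with $\mathcal{F}(g)(r'-p)+\mathcal{F}(g)(-r')$ under the same constraints $|r|\le \kF^\uparrow<|r+p|$, $|r'|\le\kF^\downarrow<|r'-p|$. So $n^{(\Bel)}_g$ equals the expression \eqref{eq: def constant thmd lim} with $\widehat{\chi}_<^2\equiv 1$ and $\varepsilon=0$. The denominator in \eqref{eq:ngBelyakov_2} has a misprinted $|r'|$ that should read $|r'|^2$. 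The numerator is bounded by $4\norm{\hat g}_\infty$, so it suffices to bound, with $D:=|r+p|^2-|r|^2+|r'-p|^2-|r'|^2\ge 0$,
\[
 \int \d p\int\d r\int\d r'\,\Big|\frac{1}{D^2}-\frac{\widehat{\chi}_<(p)^2}{(D+2\varepsilon)^2}\Big| \le \int\!\!\int\!\!\int \frac{1-\widehat{\chi}_<(p)^2}{D^2} + \int\!\!\int\!\!\int \frac{\widehat{\chi}_<(p)^2\,4\varepsilon(D+\varepsilon)}{D^2(D+2\varepsilon)^2}\,.
\]

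For the first (high-momentum) piece, $1-\widehat{\chi}_<(p)^2$ is supported on $|p|\ge 4\rho^{2/9}\gg \kF^\sigma\sim\rho^{1/3}$. There $D=|p|^2+2p\cdot(r-r')\ge |p|^2/2$. Bounding the $r,r'$ integrals by the volumes of the Fermi balls, $C\rho^2$, gives a contribution of order
\[
C\rho^2\int_{|p|\ge 4\rho^{2/9}}\frac{\d p}{|p|^4}\le C\rho^{2-\frac29}=C\rho^{\frac53+\frac19},
\]
exactly as in the estimate of $\mathrm{I}_1$ in \Cref{pro:B1}.

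The second ($\varepsilon$-removal) piece is the main obstacle, because $D^{-2}$ is only borderline integrable near the Fermi surfaces. Write $D=\lambda_{r,p}+\lambda_{r',-p}$ with both summands positive under the constraints, and use the $t$-representation
\[
\frac{1}{D^2}-\frac{1}{(D+2\varepsilon)^2}=\int_0^\infty \d t\; t\,e^{-tD}\big(1-e^{-2\varepsilon t}\big).
\]
The $r$ and $r'$ integrals then factorize for fixed $p$. We bound $1-e^{-2\varepsilon t}\le\min(1,2\varepsilon t)$ and estimate the factorized integrals in the spirit of \Cref{lem:ts} and \eqref{eq: est with log}. In scaled variables $r=\kF s$, $p=\kF q$ (with $|q|\lesssim\rho^{-1/9}$), the quantity $\int\d r\, e^{-t\lambda_{r,p}}$ over the shell behaves like $\rho\min(1,|q|/(t\kF^2 ))$ for $|q|\lesssim 1$, and similarly for large $|q|$. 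Integrating against $t\min(1,\varepsilon t)$ produces at worst $\rho^{5/3}(\varepsilon/\rho^{2/3})^{\theta}$ times logarithms, for some $\theta>0$.

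Concretely, I would first try the elementary bound $\frac{4\varepsilon(D+\varepsilon)}{D^2(D+2\varepsilon)^2}\le \frac{C\varepsilon^{1/2}}{D^{5/2}}$. I would then check that $\rho^{-5/3}\!\int\!\!\int\!\!\int \widehat{\chi}_<(p)^2 D^{-5/2}\,\rho^{1/3}$ is finite after scaling, which amounts to the integrability of $D^{-5/2}$ near $D=0$ in the 9-dimensional variable $(p,r,r')$. The phase-space measure of $\{D<\tau\}$ near the Fermi surfaces vanishes like $\tau^{3}$ or faster at small $|p|$ (two shell thicknesses times the $p$-ball), so $D^{-5/2}$ is integrable. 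This gives a contribution $\lesssim \rho^{5/3}(\varepsilon\rho^{-2/3})^{1/2}=\rho^{\frac53+\frac\delta2}\le\rho^{\frac53+\frac19}$, since $\delta>2/9$. If the exponent $1/2$ turns out to be too aggressive near $p=0$, I would fall back to any exponent $\theta$ with $\delta\theta\ge 1/9$, which is possible because $\delta$ may be taken large. Finally, replacing $\hat g$ by $\mathcal{F}(g)$ and passing from Riemann sums to integrals is already encoded in the definitions, so the two pieces together yield \eqref{eq: nbelg - constant}.
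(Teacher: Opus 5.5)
Your removal of the cutoff is the paper's argument: the paper bounds $\mathcal{E}_1$ by $C\norm{\hat g}_\infty\rho^2\int_{|p|>4\rho^{2/9}}|p|^{-4}\,\d p$ in the same way. Your pointwise splitting $D^{-2}-\widehat{\chi}_<^2(D+2\varepsilon)^{-2}=(1-\widehat{\chi}_<^2)D^{-2}+\widehat{\chi}_<^2\big(D^{-2}-(D+2\varepsilon)^{-2}\big)$ is slightly cleaner. It needs only $0\le\widehat{\chi}_<\le 1$, not the tacit identity $\widehat{\chi}_<=1-\widehat{\chi}_>$. (Small slip: $D=2|p|^2+2p\cdot(r-r')$, which only helps.)

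For the $\varepsilon$-term you take a genuinely different route. The paper uses $C\varepsilon D^{-3}$ and proves finiteness of $\mathrm{I}_3$ in Lemma~\ref{lem:belyakov_integral_bound} by a long geometric case analysis with the ``dangerous'' sets $C_{r,x}$, obtaining $\rho^{\frac53+\delta}$. You interpolate to $C\varepsilon^{1/2}D^{-5/2}$, which is correct. This costs half the gain, $\rho^{\frac53+\frac\delta2}$, which is still enough precisely because $\delta>2/9$, and in return you only need a less singular integral.

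However, integrability of $D^{-5/2}$ near $D=0$ is the whole content of this step, and your justification does not deliver it. ``Two shell thicknesses'' alone give, for fixed $p$ with $|p|\sim 1$, a set of measure $\sim x\tau^2$. So they only yield $\mu(\{D<\tau\})\lesssim \tau^2$, and the layer-cake integral $\int_0 \tau^{-7/2}\tau^{2}\,\d\tau$ diverges.

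The missing ingredient is angular localization. On $\{r: |r|<1<|r+p|,\ \lambda_{r,p}<\tau\}$ one has $1-\tau<|r|^2<1$. Moreover $\lambda_{r,p}=|p|^2+2|p||r|\cos\theta$ is affine in $\cos\theta$ with slope $2|p||r|$ and must stay in an interval of length $\tau$. Hence $\cos\theta$ lies in a window of length $\le \tau/(2|p||r|)$, and this set has measure $\le C\tau\min(1,\tau/|p|)$. Likewise the $r'$-set has measure $\le Cx\tau\min(1,\tau/(x|p|))$ for $\tau\le x^2/2$. Since $D\ge\max(\lambda_{r,p},\lambda_{r',-p})$ and $\{D<\tau\}\subset\{|p|<3\}$ for small $\tau$ (with $x\le 1$), you get
\[
\mu(\{D<\tau\})\le \int_{|p|<3}\d p\; C x\tau^2\min\Big(1,\frac{\tau}{|p|}\Big)\min\Big(1,\frac{\tau}{x|p|}\Big)\le C_x\,\tau^4 .
\]
With the layer-cake formula, this makes $D^{-s}$ integrable near $D=0$ for every $s<4$; large $|p|$ is handled by $D\gtrsim |p|^2$. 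This closes your argument. It also shows $\mathrm{I}_3<\infty$, so you could take exponent $\theta=1$ and recover the paper's rate with a much shorter proof than Appendix~\ref{app:belyakov_integral_bound}.

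Your fallback of ``taking $\delta$ large'' is not available inside this lemma, where only $\delta>2/9$ is given. With the estimate above it is not needed.
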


\begin{proof}
Recall Belyakov's integral formula~\eqref{eq:ngBelyakov_2}:
\begin{equation*}
	n_{g, \sigma}^{(\Bel)}
	= \frac{a^2}{\pi^4} \int_{\RRR^3} \d p
		\int_{|r| < k_{\F}^\sigma < |r+p|} \d r
		\int_{|r'| < k_{\F}^{\sigma^\perp} < |r'-p|} \d r'
		\frac{\hat{g}(r+p) + \hat{g}(-r)}
		{(|r+p|^2 - |r|^2 + |r'-p|^2 - |r'|)^2} \;.
\end{equation*}
In the difference in \eqref{eq: nbelg - constant}, we have two errors to bound: one coming from the cutoff $ \widehat{\chi}_< $ and the one from the regularization $ \eps $. We first remove the cut-off. Writing $\widehat{\chi}_<(p)^2 = (1 - \widehat{\chi}_>(p))^2= 1 + (\widehat{\chi}_>(p))^2 - 2\widehat{\chi}_>(p)$, we have 
\begin{align*}
    \mathcal{C}_g &=  \frac{a^2}{\pi^4}\int_{\mathbb{R}^3}\di p  \int_{\scriptscriptstyle{\substack{|r| \leq \kF^\uparrow \\ \kF^\uparrow< |r + p|}}}\hspace{-0.2cm}\di r  \int_{\scriptscriptstyle{\substack{|r^\prime| \leq \kF^\downarrow \\ \kF^\downarrow< |r^\prime - p|}}}\hspace{-0.2cm}\di r ^\prime \frac{\hat{g}(r+p) + \hat{g}(-r) + \hat{g}(r^\prime - p) + \hat{g}(-r^\prime)}{(|r+p|^2 - |r|^2 + |r^\prime - p|^2 - |r^\prime|^2 + 2\varepsilon)^2} + \mathcal{E}_1 
\end{align*}
with
\begin{align}\label{eq: def cal e1}
	\cE_1
	&:= \frac{a^2}{\pi^4} \int_{\RRR^3} \d p
		\int_{|r| < k_{\F}^\uparrow < |r+p|} \d r
		\int_{|r'| < k_{\F}^\downarrow < |r'-p|} \d r' \times \nonumber
        \\ 
        &\qquad \times 	\frac{\left(\hat{g}(r+p) + \hat{g}(-r) +\hat{g}(r^\prime - p) + \hat{g}(-r^\prime)\right) \left(\widehat{\chi}_>(p)^2 -2\widehat{\chi}_>(p)\right)}
		{(|r+p|^2 - |r|^2 + |r'-p|^2 - |r'| + 2 \eps)^2} \;.
\end{align}
Recalling definition \eqref{eq:chi_q_alpha} of $ \widehat{\chi}_> $, the $p$-integration in $\mathcal{E}_1$ is always supported in the set $ |p| > 4 \rho^{2/9} $, for which $ |r+p|^2 - |r|^2 > C |p|^2 $ for $ \rho $ small enough and $|r+p| > \kF^\uparrow > |r|$. Thus, we can estimate
\begin{equation} \label{eq:cE1bound_final}
\begin{aligned}
	|\cE_1|
	& \leq C \norm{\hat{g}}_\infty \int_{|p| > 4 \rho^{2/9}} \d p
		\int_{|r| < k_{\F}^\uparrow < |r+p|} \d r
		\int_{|r'| < k_{\F}^\downarrow < |r'-p|} \d r'
		\frac{1}{|p|^4} \\
	& \leq C \norm{\hat{g}}_\infty \rho^2
		\int_{4 \rho^{2/9}}^\infty \d |p| \frac{1}{|p|^2}
	\leq C \norm{\hat{g}}_\infty \rho^{\frac 53 + \frac{1}{9}} \;.
\end{aligned}
\end{equation}
We now remove the regularization $ \eps $. This produces another error term. We indeed have 
\begin{align*}
    \mathcal{C}_g &= \frac{a^2}{\pi^4}\int_{\mathbb{R}^3}\di p  \int_{\scriptscriptstyle{\substack{|r| \leq \kF^\uparrow \\ \kF^\uparrow< |r + p|}}}\hspace{-0.2cm}\di r  \int_{\scriptscriptstyle{\substack{|r^\prime| \leq \kF^\downarrow \\ \kF^\downarrow< |r^\prime - p|}}}\hspace{-0.2cm}\di r ^\prime\, \frac{\hat{g}(r+p) + \hat{g}(-r) + \hat{g}(r^\prime - p) + \hat{g}(-r^\prime)}{(|r+p|^2 - |r|^2 + |r^\prime - p|^2 - |r^\prime|^2)^2} + \mathcal{E}_1 + \mathcal{E}_2 
      \\
    & = n^{(\mathrm{Bel})}_{g} + \mathcal{E}_1 + \mathcal{E}_2 \;,
\end{align*}
with $\mathcal{E}_1$ as in \eqref{eq: def cal e1} and
\begin{align}
	\cE_2
	& := \frac{a^2}{\pi^4} \int_{\RRR^3} \d p
		\int_{|r| \leq k_{\F}^\uparrow < |r+p|} \!\!\!\!\d r
		\int_{|r'| \leq k_{\F}^\downarrow < |r'-p|} \!\!\!\! \d r'\big( \hat{g}(r+p) + \hat{g}(-r) + \hat{g}(r^\prime - p) + \hat{g}(-r^\prime) \big) \times \nonumber\\
	& \quad \times \left(
		\frac{1}{(|r+p|^2 - |r|^2 + |r'-p|^2 - |r'| + 2\eps)^2}
		- \frac{1}{(|r+p|^2 - |r|^2 + |r'-p|^2 - |r'| )^2}
		\right) \;.
\end{align}
To estimate $\mathcal{E}_2$, we use that
\begin{equation}\label{eq: est const eps}
	\abs{\frac{1}{A^2} - \frac{1}{(A + 2\eps)^2}}
	= \frac{4 \eps A + 4 \eps^2}{A^2 (A + 2\eps)^2}
	\leq {C} \eps A^{-3} \;.
\end{equation}
Using \eqref{eq: est const eps} and rescaling the three integrals by $ k_{\F}^\uparrow \leq C \rho^{\frac 13} $, we thus get 
\begin{equation}
\begin{aligned}
	|\cE_2|
	& \leq C \norm{\hat{g}}_\infty \rho^{\frac 23 + \delta}
		\int_{\RRR^3} \d p
		\int_{|r| < k_{\F}^\uparrow < |r+p|} \hspace{-0.4cm}\d r
		\int_{|r'| < k_{\F}^\downarrow < |r'-p|}\hspace{-0.4cm} \d r'
		\frac{1}{(|r+p|^2 - |r|^2 + |r'-p|^2 - |r'|)^3} \\
	& \leq C \norm{\hat{g}}_\infty \rho^{\frac 53 + \delta}
		\int_{\RRR^3} \d p
		\int_{|r| < 1 < |r+p|} \hspace{-0.4cm} \d r 
		\int_{|r'| < x < |r'-p|}\hspace{-0.4cm} \d r'
		\frac{1}{(|r+p|^2 - |r|^2 + |r'-p|^2 - |r'|)^3} \;,
\end{aligned}
\end{equation}
with $ x = k_{\F}^\downarrow / k_{\F}^\uparrow $. Without loss of generality we assume $ x \in (0,1] $, {otherwise we could have rescaled with respect to $k_\F^\downarrow$}. The integral {above is} bounded by Lemma~\ref{lem:belyakov_integral_bound}, yielding
\begin{equation} \label{eq:cE2bound_final}
	|\cE_2|
	\leq C \norm{\hat{g}}_\infty \rho^{\frac 53 + \delta} \;,
\end{equation}
{where $C$ is a constant which depends on $k_\F^\downarrow/k_\F^\uparrow$.}
Recalling $\delta > \frac{2}{9}$ concludes the proof.
\end{proof}

\subsubsection{Isolating the Constant Term}
\label{subsec:proofmain}

\begin{proof}[Proof of~\Cref{thm:main}]
We choose the trial state $\Psi = R T_1 T_2 \Omega$ constructed in Section~\ref{subsec:trialstate}. It was proved in~\cite{GHNS24} that $L^{-3}\langle \Psi, H_N \Psi \rangle = e_{\textnormal{HY}}(\rho_\uparrow, \rho_\downarrow) + \mathcal{O}(\rho^{{7/3 + 1/9}}) + o_L(1)$, where $e_{\textnormal{HY}}(\rho_\uparrow, \rho_\downarrow)$ is the Huang--Yang energy {density}, defined in~\cite[(1.7), (1.8)]{GHNS24}.
On the other hand, combining the results from~\cite{GHNS24} and~\cite{GHNS25} yields $e(\rho_\uparrow, \rho_\downarrow) = e_{\textnormal{HY}}(\rho_\uparrow, \rho_\downarrow) + \mathcal{O}(\rho^{{7/3 + 1/120}})$, where $ e(\rho_\uparrow, \rho_\downarrow) = \lim_{L \to \infty} L^{-3}E_L(N_\uparrow, N_\downarrow) $. This readily proves the energy statement~\eqref{eq:main_energy}.

\medskip

To establish \Cref{eq:main}, we start from the Duhamel expansion in~\Cref{lem:Duhamelexpansion} and plug in the bounds from~\Cref{pro:B2,pro:B1,pro:B1B2}, extracting the main constant contribution. Recalling \Cref{eq:nexc_simplification} $\langle \Psi, n_g^{\exc} \Psi \rangle = \langle T_1 T_2 \Omega, n_g T_1T_2\Omega\rangle $, and with $\int_0^1 \di \lambda (1 - \lambda) = \frac 12$, we find
\begin{equation}\label{eqn.formula.ng.T1T2Omega}
\begin{aligned}
	&\langle T_1 T_2 \Omega, n_g T_1T_2\Omega\rangle
	\\ 
	&= \frac{(2 \pi)^3}{L^9}\sum_{\substack{ |r| \leq \kF^\uparrow < |r+p| \\ |r^\prime| \leq \kF^\downarrow < |r^\prime - p|}}\left(\hat{g}(r+p) + \hat{g}(-r) + \hat{g}(r^\prime - p) + \hat{g}(-r^\prime)\right)\left(\hat{\eta}^{\varepsilon}_{r,r^\prime}(p) \widehat{\chi}_<(p)\right)^ 2 + \mathcal{E}_{n_g} \;,
\end{aligned}
\end{equation}
where the error is bounded for all $\kappa >0$, as
\begin{align*}
|\mathcal{E}_{n_g}|&\leq C \|\hat{g}\|_\infty \rho^{\frac{5}{3} + \frac{1}{9}} + C_k L^{-3}\|\hat{g}\|_\infty \rho^{\frac{1}{2} - \kappa} \sup_\lambda\langle T_{2,\lambda}\Omega, \mathcal{N}T_{2,\lambda}\Omega\rangle 
    \\
    & \quad + CL^{-3}\rho^{\frac{7}{9}}\|\hat{g}\|_\infty \sup_\lambda\langle T_{1;\lambda}T_2\Omega, \mathcal{N}T_{1;\lambda}T_2\Omega\rangle \\
    & \quad + CL^{-3}\rho^{\frac{4}{9}}\|\hat{g}\|_2 \sup_\lambda\langle T_{1;\lambda}T_2\Omega, \mathcal{N}T_{1;\lambda}T_2\Omega\rangle 
    \\
    & \quad + C_kL^{-3}\|\hat{g}\|_\infty \rho^{\frac{5}{9} - \kappa}\sup_\lambda\langle T_{2;\lambda}\Omega, \mathcal{N}T_{2;\lambda}\Omega\rangle \;.
\end{align*}
Next, we apply~\Cref{lem: est number op} to bound the number operator and get:
\[
	|\mathcal{E}_{n_g}|\leq C \|\hat{g}\|_\infty \rho^{\frac 53 + \frac 19} + C_\kappa \rho^{2-\kappa}\|\hat{g}\|_2 \;,
\]
for any $\kappa>0$. Taking the thermodynamic limit and applying Lemma \ref{lem:remove_regularizations}, we obtain
\begin{align}\label{eq: thermod lim}
	&\langle T_1 T_2\Omega, n_g T_1T_2\Omega\rangle \nonumber \\
	& =  \frac{(2\pi)^3}{L^9}\sum_{\substack{ |r| \leq \kF^\uparrow < |r+p| \\ |r^\prime| \leq \kF^\downarrow < |r^\prime - p|}}\left(\hat{g}(r+p) + \hat{g}(-r) + \hat{g}(r^\prime - p) + \hat{g}(-r^\prime)\right)\left(\hat{\eta}^{\varepsilon}_{r,r^\prime}(p) \widehat{\chi}_<(p)\right)^ 2 
     + {\mathcal{E}}_{n_g}
    \nonumber
	\\ 
	&  = \frac{(8\pi a)^2}{(2\pi)^6} \int_{\mathbb{R}^3}\di p  \int_{\scriptscriptstyle{\substack{|r| \leq \kF^\uparrow \\ \kF^\uparrow< |r + p|}}}\hspace{-0.2cm}\di r \int_{\scriptscriptstyle{\substack{|r^\prime| \leq \kF^\downarrow \\ \kF^\downarrow< |r^\prime - p|}}}\hspace{-0.2cm}\di r ^\prime \frac{\left(\hat{g}(r+p) + \hat{g}(-r) + \hat{g}(r^\prime - p) + \hat{g}(-r^\prime)\right) \widehat{\chi}_<(p)^2}{(|r+p|^2 - |r|^2 + |r^\prime - p|^2 - |r^\prime|^2 + 2\varepsilon)^2} \nonumber 
	\\ 
	&\quad  + {\mathcal{E}}_{n_g} + o_{L\to\infty}(1)\nonumber
    \\
    &  = \frac{a^2}{\pi^4} \int_{\mathbb{R}^3}\di p  \int_{\scriptscriptstyle{\substack{|r| \leq \kF^\uparrow \\ \kF^\uparrow< |r + p|}}}\hspace{-0.2cm}\di r \int_{\scriptscriptstyle{\substack{|r^\prime| \leq \kF^\downarrow \\ \kF^\downarrow< |r^\prime - p|}}}\hspace{-0.2cm}\di r ^\prime \frac{\left(\hat{g}(r+p) + \hat{g}(-r) + \hat{g}(r^\prime - p) + \hat{g}(-r^\prime)\right) \widehat{\chi}_<(p)^2}{(|r+p|^2 - |r|^2 + |r^\prime - p|^2 - |r^\prime|^2 + 2\varepsilon)^2}\nonumber
	\\ 
	&\quad  + {\mathcal{E}}_{n_g} + o_{L\to\infty}(1) \;.
\end{align}
Removing the regularizations using \Cref{lem:remove_regularizations}, and using that according to \eqref{eq:nexc_simplification} we have $ \langle \Psi, n^{(\exc)}_g \Psi \rangle = \langle T_1 T_2 \Omega, n_g T_1 T_2 \Omega \rangle $, yields~\Cref{eq:main}.
%, i.\,e., 
%\[
%    	\abs{\langle \Psi, n_g^{(\exc)} \Psi\rangle - n^{(\Bel)}_{g}}
%	\leq C \|\hat{g}\|_\infty \rho^{\frac{5}{3} + \frac{1}{9}}
%        + C_\kappa\rho^{2-\kappa} \|\hat g\|_2 
%        + o_{L\to \infty}(1)\;.
%\]

It remains to bound $n^{(\Bel)}_{g,\sigma}$ as claimed in~\eqref{eq:ngBelyakovbound}. From its definition and by rescaling the integrals by $k_{\F}^{\sigma} \leq C \rho^{\frac 13}$, we get
\begin{equation}
	n_{g, \sigma}^{(\Bel)} = \frac{a^2}{\pi^4}(k_{\F}^\sigma)^{5}\int_{\mathbb{R}^3}\di q \, \hat{g}(\kF^\sigma q)\I^{(q)}_2 \;,
\end{equation}
with $\I^{(q)}_2$ as in \eqref{eq: def Is q}, where $x= k_{\F}^{\sigma^\perp} / k_{\F}^\sigma \in (0,\infty)$. Now, \eqref{eq:ngBelyakovbound} follows by estimating the triple integral $\I_2^{(q)}$ using Lemma~\ref{app:belyakov_integral_bound_q}.
\end{proof}

\appendix
\section{Estimates of Belyakov-Type Integrals}
\label{app:belyakov_integral_bound}
\begin{lemma} \label{lem:belyakov_integral_bound} For $s \in \{2,3\}$ and $x\in (0,1]$, let 
\[
	\mathrm{I}_s = \int_{\mathbb{R}^3}\d p \int_{|r| < 1 < |r+p|} \d r \int_{|r^\prime| < x < |r^\prime -p|}\d r^\prime \, \frac{1}{(|r+p|^2 - |r|^2 + |r^\prime - p|^2 - |r^\prime|^2)^s} \;.
\]
Then, there exists a $C>0$ (independent of $x$) such that
\begin{equation}\label{eq: integ power 2}
    \mathrm{I}_2
    \leq C \;, \qquad
    \mathrm{I}_3 
    \leq C x^{-3} \qquad \text{for all } x\in (0,1]\;.
\end{equation}
\end{lemma}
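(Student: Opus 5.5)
The plan is to reduce both integrals to one-dimensional radial integrals using the standard Laplace representation and the explicit form of the "shell" integrals. Write $A := |r+p|^2-|r|^2 > 0$ and $B := |r'-p|^2-|r'|^2 > 0$, and use
\[
\frac{1}{(A+B)^s} = \frac{1}{\Gamma(s)}\int_0^\infty \d t\, t^{s-1} e^{-t(A+B)} .
\]
For fixed $p$ and $t$, the $r$- and $r'$-integrals then factorize:
\[
F_1(p,t):=\int_{|r|<1<|r+p|} e^{-tA}\,\d r ,
\qquad
F_x(p,t):=\int_{|r'|<x<|r'-p|} e^{-tB}\,\d r' .
\]
A cleaner way to organize the same idea, which I would actually use, is to work with the densities of $A$ and $B$. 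Define
\[
\mu_p(\alpha):=\big|\{r : |r|<1<|r+p|,\ A\in \d\alpha\}\big|/\d\alpha ,
\]
and $\nu_p(\beta)$ analogously for $r'$ with radius $x$, and write $\mathrm{I}_s=\int \d p\int\!\!\int \mu_p(\alpha)\nu_p(\beta)(\alpha+\beta)^{-s}\,\d\alpha\,\d\beta$.

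\textbf{Step 1: bound the densities.} Since $A=2r\cdot p+|p|^2$, the level sets of $A$ in $r$ are planes orthogonal to $p$. The slice of the unit ball at height $r\cdot\hat p=(\alpha-|p|^2)/(2|p|)$ has area at most $\pi$, so $\mu_p(\alpha)\le C/|p|$. Moreover $\mu_p$ is supported in $\alpha\in(0,\,|p|^2+2|p|]$. The constraint $|r+p|>1$ restricts small $\alpha$: if $|p|\le 2$, the available cross-section near $\alpha\to 0$ is the annular part between the two spheres, of area $\lesssim \alpha$ (the set $|r|<1<|r+p|$ with $A\le\alpha$ has volume $\lesssim \alpha^2/|p|$). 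This gives
\[
\mu_p(\alpha)\le C\min(\alpha,1)/|p| \qquad \text{for } |p|\lesssim 1 ,
\]
and similarly $\nu_p(\beta)\le C\min(\beta/x^{0},\,x^2)/|p|$. More carefully, the slice area is at most $\pi x^2$, and near $\beta=0$ the annulus area is at most $C\beta$. For large $|p|$ ($|p|>2$) the exclusion constraints are void and $A\ge |p|^2-2|p|\gtrsim |p|^2$, with total masses $\int\mu_p\le 4\pi/3$ and $\int\nu_p\le Cx^3$.

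\textbf{Step 2: large $|p|$.} For $|p|>2$ we have $(A+B)^{-s}\le C|p|^{-2s}$, so this region contributes at most $C x^3\int_{|p|>2}|p|^{-2s}\,\d p\le Cx^3$. This is harmless for both claims.

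\textbf{Step 3: small $|p|\le 2$, the main obstacle.} Here everything hinges on the behavior near $\alpha,\beta\to0$ and near $p\to0$. Using $\mu_p(\alpha)\le C\min(\alpha,1)/|p|$ and $\nu_p(\beta)\le C\min(\beta,x^2)/|p|$, both supported in $[0,C|p|]$ (for $|p|\le 2$, $\alpha\le |p|^2+2|p|\le 4|p|$), we get
\[
\mathrm{I}_s\lesssim \int_{|p|\le2}\frac{\d p}{|p|^2}\int_0^{4|p|}\!\!\int_0^{4|p|}\frac{\alpha\,\min(\beta,x^2)}{(\alpha+\beta)^s}\,\d\alpha\,\d\beta .
\]
For $s=2$, bound $\min(\beta,x^2)\le\beta$. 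Then $\alpha\beta/(\alpha+\beta)^2\le 1$, so the inner integral is $\le C|p|^2$, and the $p$-integral is finite. This gives $\mathrm{I}_2\le C$ uniformly in $x$. For $s=3$, use $\alpha\beta/(\alpha+\beta)^3\le(\alpha+\beta)^{-1}$, whose double integral over $[0,4|p|]^2$ is $\le C|p|$. Multiplying by $|p|^{-2}$ and integrating over $|p|\le 2$ gives a finite constant, so we even get $\mathrm{I}_3\le C$, which is stronger than $Cx^{-3}$.

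The delicate point I would double-check is the claimed linear vanishing $\mu_p(\alpha)\lesssim\alpha/|p|$. The slice at height $h$ of $\{|r|<1<|r+p|\}$ is the disk of radius$^2$ $1-h^2$ minus the disk of radius$^2$ $1-(h+|p|)^2$, so its area is $\pi\big((h+|p|)^2-h^2\big)=\pi|p|(2h+|p|)=\pi\alpha$. Since $\d\alpha=2|p|\,\d h$, this gives $\mu_p=\pi\alpha/(2|p|)$ exactly, wherever the slice is nonempty. This confirms the bound, and the same computation applies to $\nu_p$ with radius $x$. Should any subtlety arise where the smaller disk is empty, the slice area is simply capped by $\pi$ (resp. $\pi x^2$), which only improves the estimate. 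If the sharper uniform bound fails for some reason, the fallback is the cruder $\nu_p\le Cx^2/|p|$ paired with scaling in $x$, which yields the stated $x^{-3}$.
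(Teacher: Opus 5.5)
Your argument is correct, and it takes a genuinely different and much shorter route than the paper. It even gives the stronger bound $\mathrm{I}_3\le C$ uniformly in $x\in(0,1]$.

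There is one small slip in Step~2. The claim $A\gtrsim|p|^2$ for $|p|>2$ fails near $|p|=2$. For example, take $x=1$, $|p|=2+\delta$, $r=-(1-\eta)\hat p$ and $r'=(1-\eta)\hat p$; then $A+B\approx 4\delta$ as $\eta\to0$. This is harmless. Your slice bounds $\mu_p(\alpha)\le\pi\min(\alpha,1)/(2|p|)$ and $\nu_p(\beta)\le\pi\min(\beta,x^2)/(2|p|)$ hold for every $p$. So Step~3 runs verbatim on $|p|\le3$, with supports in $[0,5|p|]$. For $|p|>3$ one has $A\ge|p|^2-2|p|\ge|p|^2/3$, which is what Step~2 needs.

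Here is how your route compares with the paper's. The paper stays in the excitation-energy picture $e_r,e_{r+p},\widetilde e_{r'},\widetilde e_{r'-p}$.
\begin{itemize}
\item For $\mathrm{I}_2$, it discards $\widetilde e_{r'-p}$ and uses $\widetilde e_{r'}\ge x(x-|r'|)$. It distinguishes $e_{r+p}\le e_r$ from $e_r<e_{r+p}$ and integrates $p$ over a shell of width $\sim e_r$.
\item For $\mathrm{I}_3$, it splits according to which of the four energies dominates. A dominant energy above $\epsilon=cx^2$ is treated crudely, and this is exactly where the $x^{-3}$ comes from, via $\epsilon^{-3}x^3\sim x^{-3}$.
\item Below $\epsilon$, it bounds the volume of the intersection of two thin annuli through their tangency angle. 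It isolates near-tangential ``dangerous'' configurations and controls those by spherical caps.
\end{itemize}

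You instead exploit that at fixed $p$ both $A=2r\cdot p+|p|^2$ and $B=|p|^2-2r'\cdot p$ are affine. The $r$- and $r'$-integrals therefore decouple into explicit one-dimensional push-forward densities, with exact slice area $\pi\min(1-h^2,A)$. After that, only $\alpha\beta\le(\alpha+\beta)^2/4$ and $\iint_{[0,L]^2}(\alpha+\beta)^{-1}\,\mathrm{d}\alpha\,\mathrm{d}\beta\le CL$ are needed.

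This removes all the geometry. The uniform bound on $\mathrm{I}_3$ would also make the constant in the removal of the $\varepsilon$-regularization independent of $k_{\mathrm F}^\downarrow/k_{\mathrm F}^\uparrow$.

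The paper's annulus-intersection machinery is heavier, but the authors reuse it for the pinned integral $\mathrm{I}_2^{(q)}$ in the next lemma. There a delta function fixes $r$ or $r+p$, so your factorization at fixed $p$ is not available in the same form.
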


\begin{proof}
We introduce the excitation energies
\begin{equation}\label{eq: def er etilder}
	e_r := ||r|^2 - 1|\; , \qquad
    \widetilde{e}_{r} := ||r|^2 - x^2| \;.
\end{equation}
Then
\[
	\mathrm{I}_s = \int_{\mathbb{R}^3}\d p \int_{|r| < 1<|r+p|} \d r\int_{|r^\prime| < x < |r^\prime - p|} \d r^\prime \frac{1}{(e_{r+p} + e_r + \widetilde{e}_{r^\prime - p} + \widetilde{e}_{r^\prime})^s} \;.
\]
We distinguish
\[
    \mathrm{I}_{s}\big\vert_{\{p \in \mathbb{R}^3: |p|\leq 3\}} =: \mathrm{I}_s^< \;, \qquad
    \mathrm{I}_s\big\vert_{\{p \in \mathbb{R}^3: |p| > 3\}} =: \mathrm{I}_{s}^> \;.
\]
Here and in the following, given an integral $X = \int_B f$, we use the notation
\[
    X\big\vert_A := \int_{A\cap B} f \;.
\]

\noindent
\underline{Estimate for $\mathrm{I}_s^>$ for $s=2,3$.} Observe first that for $|p|> 3$, the integrand is bounded by $C|p|^{-2s}$, since $(e_{r+p} + e_r + \widetilde{e}_{r^\prime - p} + \widetilde{e}_{r^\prime})^s\geq C|p|^{2s}$. Hence, for $s=2,3$, we have
\begin{equation}\label{eq: est Is>}
	\mathrm{I}_{s}^> = \int_{|p|> 3}\d p \int_{|r| < 1<|r+p|} \d r\int_{|r^\prime| < x < |r^\prime - p|} \d r^\prime \frac{1}{(e_{r+p} + e_r + \widetilde{e}_{r^\prime - p} + \widetilde{e}_{r^\prime})^s} \leq C \;.
\end{equation}
\underline{Estimate for $\mathrm{I}_2^<$.} Recall that $\mathrm{I}_2^<$ is given by
\begin{equation}\label{eq: Is<}
	\mathrm{I}_2^< := \int_{|p|\leq 3}\d p 
        \int_{|r| < 1 < |r+p|} \d r
        \int_{|r^\prime| < x < |r^\prime - p|}\d r^\prime
        \frac{1}{(e_{r+p} + e_r + \widetilde{e}_{r^\prime - p} + \widetilde{e}_{r^\prime})^2} \;.
\end{equation}
This estimate was already proved in \cite[Appendix C]{GHNS24}; for completeness we recall the argument. Using that $|r^\prime -p| \geq x$ and $\widetilde{e}_{r^\prime}\geq x(x-|r^\prime|)$, we obtain
\[
	\mathrm{I}_2^< \leq \int_{|p|\leq 3} \d p\int_{|r| < 1 < |r+p|} \d r \int_{|r^\prime| < x < |r^\prime - p|} \d r^\prime \frac{1}{e_{r+p}^2 + e_{r}^2 + x^2(x-|r^\prime|)^2} \;.
\]
We distinguish the two cases $e_{r+p}\leq e_r$ and $e_r< e_{r+p}$ and we denote the corresponding integrals by $\mathrm{I}_{2;1}^<$ and $\mathrm{I}_{2;2}^<$. For the first integral, i.\,e., $\mathrm{I}_{2;1}^<$, the condition $e_{r+p}\leq e_r$ implies $1< |r+p| < 2-|r|^2$. Dropping the positive term $e_{r+p}^2$ in the denominator for an upper bound, we can integrate with respect to $p$ and then with respect to $r^\prime,r$, which yields
\[
	\mathrm{I}_{2;1}^< \leq C\int_{|r|< 1}\d r \int_{|r^\prime| < x}\d r^\prime \frac{e_r}{e_r^2 + x^2(x-|r^\prime|)^2} \leq Cx \leq C \;.
\]
The case $e_{r}< e_{r+p}$ is treated in the same way, by performing a suitable change of variable (see \cite[Appendix C]{GHNS24}). More precisely, by changing variables with $q= r+p$, and using $1 < |q| \leq |r|+|p|\leq 3$, we get
\begin{align*}
    \mathrm{I}_{2;2}^<
    &\leq C\int_{|r|<1} \d r\int_{|r^\prime|<1}\d r^\prime \int_{1 < |q| < 3}\d q \; \chi_{\{\sqrt{(2-|q|^2)_+} \leq |r| \leq 1\}}\frac{C}{e_q^2 + e_{r^\prime}^2}
\\ 
    &\leq C\int_{|r^\prime| < 1}\d r^\prime\int_{|q| <3}\d r\frac{e_q}{e_q^2 + \widetilde{e}_{r^\prime}^2}
    \leq C \;,
\end{align*}
where in the next to the last inequality we used that $1- (2-|q|^2)^{3/2}_+ \leq C(|q|^2 - 1) = Ce_q$ and then concluded similarly as for $\mathrm{I}_{2;1}^<$.

\medskip

\noindent \underline{Estimate for $\mathrm{I}_3^<$.}
%To estimate $\mathrm{I}_3^<$, a sharper analysis is required.
We contrast the argument to the case $s=2$.
The key point is that the bound $\widetilde{e}_{r^\prime - p} + \widetilde{e}_{r^\prime}\geq x (x-|r^\prime|)$ is no longer sufficient. In particular, one can no longer ignore $\widetilde{e}_{r^\prime - p}$ by positivity. It is therefore convenient to split the integrand into four disjoint sets depending on which of $e_{r}$, $e_{r+p}$, $\widetilde{e}_{r^\prime}$ and $\widetilde{e}_{r^\prime -p}$ is the largest. Correspondingly, we write 
\begin{equation}\label{eq: int I3<}
	\mathrm{I}_3^< = \mathrm{I}_{3;1}^< + \mathrm{I}_{3;2}^< + \mathrm{I}_{3;3}^< + \mathrm{I}_{3;4}^< \;,
\end{equation}
where $e_{r}$ dominates in $\mathrm{I}_{3;1}^<$, $e_{r+p}$ in $\mathrm{I}_{3;2}^<$, $\widetilde{e}_{r^\prime}$ in $\mathrm{I}_{3;3}^<$ and $\widetilde{e}_{r^\prime -p}$ in $\mathrm{I}_{3;4}^<$. We estimate each integral separately. We discuss the estimate for $\mathrm{I}_{3;1}^<$ in detail, and then indicate how the remaining integrals are handled in the same spirit.

We introduce a cut-off parameter $\epsilon > 0$ and distinguish the cases $e_{r} < \epsilon$ and $e_{r}>\epsilon$. It is going to be convenient to fix $\epsilon = c x^2$ for some $0< c < 3/4$ small enough. In the latter case, {i.\,e., $e_r >\epsilon$,} we get a bound on the integral $\mathrm{I}_3^<$ restricted to the region where $e_r > \epsilon$, which we denote $\I_{3;1}^{> \epsilon}$. In fact,
\begin{align}\label{eq: I31 big eps}
	\I_{3;1}^{> \epsilon}
    &\leq \int_{|p|\leq 3} \d p \int_{|r|< 1 < |r+p|} \d r\int_{|r^\prime| < x <|r^\prime -p|} \d r^\prime \frac{\chi_{\{r: e_r > \epsilon\}}}{e_r^3 + e_{r+p}^3 + \widetilde{e}_{r^\prime -p}^3 + \widetilde{e}_{r^\prime}^3} \nonumber \\ 
	&\leq \epsilon^{-3} \int_{|p|\leq 3} \d p \int_{|r|< 1 } \d r\int_{|r^\prime| < x } \d r^\prime\, 1\leq C\epsilon^{-3} x^3 \leq \frac{C}{x^3} \;.
\end{align}
Here we used that $\epsilon = c x^2$. We next consider the case $e_r < \epsilon$. Recalling the constraints of the integration domain, we have
\begin{equation}\label{eq: constraints I3<1}
\begin{split}
&	|p| \leq 3 \;, \quad 
    |r| < 1 < |r+p| \;, \quad
    |r^\prime| < x <|r^\prime -p | \;, \quad 
    e_r < \epsilon \;, \\
&    e_r \geq e_{r+p} \;,    \quad
    e_r \geq  \widetilde{e}_{r^\prime} \;, \quad \text{and }
    e_r \geq  \widetilde{e}_{r^\prime -p} \;.
\end{split}
\end{equation}
We denote by $\mathrm{I}_{3;1}^{<\epsilon}$ the corresponding integral under the constraints \eqref{eq: constraints I3<1}.
Note that the conditions $e_r \geq e_{r+p}$, $e_r \geq \widetilde{e}_{r^\prime -p}$, and $e_r \geq \widetilde{e}_{r^\prime}$ are equivalent to
\begin{equation}\label{eq: regions A1 A2}
	1 < |r+p|^2 < 1+ e_r \;, \qquad
    x^2 < |r^\prime -p |^2 < x^2 + e_r \;, \qquad 
    x^2 - e_r < |r^\prime|^2 < x^2 \;.
\end{equation}
In particular, the last condition together with $e_r < \epsilon$ implies that $|r^\prime|$ must be close to $x$. Unlike in the estimate of $\mathrm{I}_2^<$, here we need to use the first two conditions in \eqref{eq: regions A1 A2} more carefully. We define the following annuli, see Figure~\ref{fig:Shell_intersection}
\begin{equation} \label{eq:Ar_Arrx_Bx}
\begin{aligned}
    A_{r} &:= \left\{p\in\mathbb{R}^3\, : \, 1 < |r+p|^2 < 1+ e_r \right\} \;, \\
    {A}_{r,r^\prime,x} &:= \left\{p\in\mathbb{R}^3\, : \, x^2 < |r^\prime -p |^2 < x^2 + e_r \right\} \;, \\
    {B}_{r,x} &:= \left\{r^\prime\in\mathbb{R}^3\, : \, x^2 - e_r < |r^\prime|^2 < x^2\right\} \;.
\end{aligned}
\end{equation}
\begin{figure}
 	\centering
	\scalebox{0.9}{\input{Shell_intersection.tex}}
\caption{In $\I_{3;1}^{<\epsilon}$, fixing $(r,r')$, the integral runs over such $p \in \mathbb{R}^3$ that $p+r$ and $r'-p$ are outside the respective Fermi balls, but the energy $e_r$ still dominates. The set of such $p$ is an intersection of two annuli $A_r$ and $A_{r,r',x}$, defined in~\eqref{eq:Ar_Arrx_Bx}.}
\label{fig:Shell_intersection}
\end{figure}
A natural first attempt in estimating $\mathrm{I}_{3;1}^{<\epsilon}$ is therefore to bound $|A_{r}\cap {A}_{r,r^\prime,x}|$. The volume of the intersection $A_{r}\cap {A}_{r,r^\prime,x}$ depends on the relative position of $r$ to $r^\prime$. However, when $r,r^\prime$ are such that $|r+r^\prime|$ is close to $1-x$, the intersection angle between the two sets becomes shallow, and consequently $|A_{r}\cap {A}_{r,r^\prime,x}|$ becomes large. To separate such ``dangerous" configurations from the typical ones, we distinguish the following two cases in which a large intersection volume can occur: The case $||r+r^\prime| - (1-x)| \leq (1/2) e^{1/2}_r $ where $r$ and $r'$ are almost pointing in the opposite direction, and the case $||r+r^\prime| - (1+x)| \leq (1/2) e^{1/2}_r $, where $r$ and $r'$ are almost aligned.

We then introduce the region (see Figure~\ref{fig:dangerous})
\begin{equation}\label{eq: def Crx}
\begin{aligned}
	C_{r,x}
    &:= C_{r,x}^{(\mathrm{in})} \cup C_{r,x}^{(\mathrm{out})} \;, \\
    C_{r,x}^{(\mathrm{in})}
    &:= \left\{ r^\prime\in\mathbb{R}^3\, : \, |r^\prime| < x,\, ||r+r^\prime| - (1-x)| \leq 2 e^{1/2}_r \right\} \;, \\
    C_{r,x}^{(\mathrm{out})}
    &:= \left\{ r^\prime\in\mathbb{R}^3\, : \, |r^\prime| < x,\, ||r+r^\prime| - (1+x)| \leq 2 e^{1/2}_r \right\} \;. \\
\end{aligned}
\end{equation}
Since $e_r< \epsilon$, the region $C_{r,x}$ corresponds to the situation in which $|A_{r}\cap {A}_{r,r^\prime,x}|$ is large. In what follows, we estimate $\mathrm{I}_{3;1}^{<\epsilon}$ separately on the ``non-dangerous" region $\{ r^\prime\in\mathbb{R}^3\, : \, |r^\prime| < x\}\setminus C_{r,x}$ and on $C_{r,x}$.
\begin{figure}
 	\centering
	\scalebox{1.0}{\input{Shell_intersection_dangerous.tex}}
    \hspace{4em}
	\scalebox{1.0}{\input{Dangerous_belt.tex}}
\caption{\textbf{Left:} Example of a possible ``dangerous area'', in which the intersection volume of the two annuli $ A_r $ and $ A_{r,r',x} $ becomes large. 
\textbf{Right:} We represent the region $C_{r,x}^{(\mathrm{in})}$. If $r'$ is in this region, then $|A_r \cap A_{r,r',x}|$ becomes large.}
\label{fig:dangerous}
\end{figure}
We write
\begin{equation}\label{eq: splitting I31}
	\mathrm{I}_{3;1}^{<\epsilon} = \mathrm{I}_{3;1}^{<\epsilon} \big\vert_{C_{r,x}} + \mathrm{I}_{3;1}^{<\epsilon} \big\vert_{\left\{ r^\prime\in\mathbb{R}^3\, : \, |r^\prime| < x\right\}\setminus C_{r,x}} =: \mathrm{I}_{3;1;1}^{<\epsilon} + \mathrm{I}_{3;1;2}^{<\epsilon} \;.
\end{equation}

\noindent
\underline{Estimate of $\mathrm{I}_{3;1;2}^{<\epsilon}$.} 
Inside the integration domain, $|r| > 1-e_r$ and $|r'|\leq x$. Thus, $|r+r'| \geq |r| - |r'| \geq 1 - x - e_r$. By definition of $C_{r,x}^{\mathrm{(in)}}$ we must thus have $|r+r'| \geq 1 - x + 2e_r^{1/2}$. Similarly, one can conclude that $|r+r'| \leq 1+x-2e_r^{1/2}$. We next use these bounds, to control the overlap $A_r \cap A_{r,r',x}$. We claim 
\begin{equation}\label{eq: estimate area ar arprime nondangerous}
	|A_r \cap {A}_{r,r^\prime,x}| \leq C x^{-1/2}e_r^{3/2} \;.
\end{equation}
To prove this, let $\beta$ be the intersection angle of the tangents at the intersection of $\partial B_1(0)$ and $\partial B_x(r+r^\prime)$ (see Figure~\ref{fig:Angles}). Then, $|A_r \cap {A}_{r,r^\prime,x}|$ is bounded by $(\sin\beta)^{-1}$ times the product of the thicknesses of the two annuli, where $A_r$ has thickness $\leq C e_r$, while $A_{r,r',x}$ has thickness $\leq C x^{-1} e_r$, so
\begin{equation}\label{eq: area ar arprime}
	|A_r \cap {A}_{r,r^\prime,x}| \leq C x^{-1} e_r^2 (\sin\beta)^{-1} \;.
\end{equation}
In the region considered, we obtain from $|r+r'| \geq 1 - x + 2e_r^{1/2}$
\begin{equation} \label{eq:beta_cosine_relation}
	((1-x) + 2 e_r^{1/2})^2 < |r+r^\prime|^2 = 1+x^2 - 2x\cos\beta \;.
\end{equation}
Using that $\cos\beta > 1-\beta^2/2$, we get \begin{equation}\label{eq: lower bound beta}
	\beta^2 > \frac{2 e_r^{1/2}(2(1-x) + 2 e_r^{1/2})}{x} {\geq \frac{c e_r}{x}}\;.
\end{equation}
Similarly, from the constraint $|r+r'| \leq 1 + x - 2 e_r^{1/2}$, {with $ e_r < (3/4) x^2 $} we find
\begin{equation}\label{eq: lower bound beta_bis}
	(\pi - \beta)^2
    > \frac{2 e_r^{1/2}(2(1+x) - 2 e_r^{1/2})}{x}
    \geq \frac{c e_r}{x} \;.
\end{equation}
Combining these and using $x\leq 1$ {together with $\sin \beta \geq (2/\pi) \min(\beta, \pi - \beta)$,} we conclude $\sin\beta > C e_r^{1/2} x^{-1/2}$, which inserted into~\eqref{eq: area ar arprime} yields \eqref{eq: estimate area ar arprime nondangerous}.

Using now the definition of $\mathrm{I}_{3;1}^{<\epsilon}$ (in particular the constraints \eqref{eq: constraints I3<1}) and the definition of $\mathrm{I}_{3;1;2}^<$ in \eqref{eq: splitting I31}, we obtain
\begin{align*}
	\mathrm{I}_{3;1;2}^{<\epsilon}
    &\leq \int_{|p|\leq 3}\d p\int_{|r| < 1} \d r\int_{|r^\prime| < x} \d r^\prime \frac{\chi_{A_r\cap {A}_{r,r^\prime,x}}(p) \chi_{B_{r,x}}(r^\prime) (1-\chi_{C_{r,x}})(r^\prime)\chi_{\{e_r < \epsilon\}}(r)}{e_{r+p}^3 + e_r^3 + \widetilde{e}_{r^\prime -p}^3 + \widetilde{e}_{r^\prime}^3}
	\\
	&\leq C x^{-1/2} \int_{|r|< 1} \d r \int_{|r^\prime| < x}\d r^\prime e_r^{-3/2}\,\chi_{B_{r,x}}(r^\prime)(1-\chi_{C_{r,x}}(r^\prime))
	\\
	&\leq C x^{-1/2} \int_{|r|< 1}\d r\int_{|r^\prime| < x}\d r^\prime e_r^{-3/2}\chi_{\{\sqrt{x^2 - e_r}\leq |\cdot| \leq x\}}(r^\prime)
	\\
	&\leq C x^{1/2} \int_{|r|<1} \d r\, e^{-1/2}_r \;.
\end{align*}
In the second inequality, we used that the integral over $p$ is bounded, together with the estimate \eqref{eq: estimate area ar arprime nondangerous} and the positivity of $e_{r+p}^3 + \widetilde{e}_{r^\prime-p}^3 + \widetilde{e}_{r^\prime}^3\geq 0$. In the last estimate, we used the constraints on $r^\prime$. The remaining integral has a singularity at $ r = 1 $, which is integrable. Therefore
\[
	\mathrm{I}_{3;1;2}^{<\epsilon} \leq {C x^{1/2}}\;.
\]
\begin{figure}
 	\centering
	\scalebox{0.95}{\input{Angle_alpha.tex}}
	\hspace{4em}
	\scalebox{0.95}{\input{Angle_beta.tex}}
\caption{\textbf{Left:} The intersection angle between the two tangents at the intersection point of the annuli $A_r$ and $A_{r,r^\prime,x}$ is called $\beta$. \textbf{Right:} The angle $ \tilde\beta $ quantifies the area of the dangerous region $C_{r,x}^{\mathrm{(in)}}$ at fixed $|r^\prime| $. This area is a spherical cap. Similarly, $C_{r,x}^{\mathrm{(out)}} \cap \partial B_s(0)$ is a spherical cap. The set $C^{(\mathrm{out})}_{r,x}$ is drawn on top for comparison.}
\label{fig:Angles}
\end{figure}

\noindent
\underline{Estimate of $\mathrm{I}_{3;1;1}^{<\epsilon}$.} In this region, we perform a different analysis. {We first discuss the proof for the region $C_{r,x}^{(\mathrm{in})}$.} We estimate $|A_r \cap A_{r,r^\prime, x}|\leq |A_r| \leq Ce_r$ and use the positivity $e_{r+p}^3 + \widetilde{e}_{r^\prime - p}^3 + \widetilde{e}_{r^\prime}^3 \geq 0$ to write 
\begin{align*}
    &\int_{|p|\leq 3} \d p\int_{|r|<1 <|r+p|}\d r\int_{|r^\prime| < x } \d r^\prime \frac{\chi_{A_r\cap A_{r,r^\prime,x}}(p)\chi_{B_{r,x}}(r^\prime)\chi_{C_{r,x}^{\mathrm{(in)}}}(r^\prime)\chi_{\{e_r \leq \epsilon\}}(r)}{e_{r+p}^3 + e_{r}^3 + \widetilde{e}_{r^\prime - p}^3 + \widetilde{e}_{r^\prime}^3}
	\\
	&\quad \leq C\int_{|r|<1}\d r
        \int_{|r^\prime| < x} \d r'\, 
        \frac{\chi_{B_{r,x}}(r^\prime)\chi_{C_{r,x}^{\mathrm{(in)}}}(r^\prime)\chi_{\{e_r \leq \epsilon\}}(r)}{e^{2}_r} \;
    \\
    &\quad = C\int_{|r|< 1}\d r\,\chi_{\{e_r \leq \epsilon\}}(r) e_r^{-2}  \int_{\sqrt{x^2 - e_r}}^x \d s\,s^2 \int_{\mathbb{S}^2} \d \theta\, \chi_{C_{r,x}^{\mathrm{(in)}}}(r^\prime(s, \theta))\\
    & \quad 
    \leq C\int_{|r|< 1}\d r \, \chi_{\{e_r \leq \epsilon\}}(r) e_r^{-2}  \int_{\sqrt{x^2 - e_r}}^x \d s \, |C_{r,x}^{\mathrm{(in)}} \cap \partial B_s(0)|\;.
\end{align*}
(For the equality we used spherical coordinates for the integration with respect to $r^\prime$.)   
For fixed $ s\in (\sqrt{x^2 - e_r}, x) $, the surface portion inside $ C_{r,x}^{(\mathrm{in})} $ is a spherical cap of radius less than $x$ and opening angle $ \tilde\beta $ (see Figure~\ref{fig:Angles}). Hence,
\begin{equation*}
    |C_{r,x}^{(\mathrm{in})} \cap \partial B_s(0)|
    = 2 \pi s^2 (1 - \cos\tilde\beta)  \qquad \text{for all } s\in  (\sqrt{x^2 - e_r}, x)\;.
\end{equation*}
The spherical cap is limited by a circle at which $|r+r'| = (1-x)+ 2 e_r^{1/2}$. Then, by the cosine relations in the triangle with angle $\tilde\beta$ shown in Figure~\ref{fig:Angles},
\begin{equation}
\begin{aligned}\label{eq: 1-cosbetatilde}
    ((1-x)+  2 e_r^{1/2})^2
    &= (|r|-s)^2 + 2 |r| s (1 - \cos\tilde\beta) \;.
\end{aligned}
\end{equation}
Using that $e_r < \epsilon < \frac 34 x^2 < \frac 34$, we deduce $ |r| > \frac 12 $ and since $\frac 34 x^2 > e_r \ge \widetilde{e}_{r'}$, we further deduce $ s = |r'| > \frac x2 $. Also, we have $e_r = (1+|r|)(1-|r|) \ge (1-|r|)$ and $|r'| < x$, so $|r|-|r'| > 1-x-e_r$. Using these bounds in \eqref{eq: 1-cosbetatilde} together with the constraint $e_r < \epsilon$ with $\epsilon$ small enough, we get 
\begin{equation}\label{eq:Ad_bound}
    (1 - \cos\tilde\beta)\leq \frac{C}{x} e_r^{1/2} 
    \quad \Rightarrow \quad
    |C_{r,x}^{(\mathrm{in})} \cap \partial B_s(0)| \leq C x e_r^{1/2} \;.
\end{equation}
Thus, we conclude that the contribution from $C_{r,x}^{\mathrm{(in)}}$ to $\I_{3;1;1}^{<\epsilon}$ is bounded by 
\begin{equation*}
    C x \int_{|r|<1} \ud \r \, \chi_{\{e_r\leq\epsilon\}}(r) e_r^{-3/2} \abs{x - \sqrt{x^2-e_r}}
    \leq C \int_{|r|<1} \ud r \, e_r^{-1/2} \leq C \;.  
\end{equation*}
The estimate for $C_{r,x}^{(\mathrm{out})}$ is analogous: Also here, $C_{r,x}^{(\mathrm{out})} \cap \partial B_{s}(0)$ is a spherical cap.
In conclusion, we find
\[
	\mathrm{I}_{3;1}^{<\epsilon}\leq \mathrm{I}_{3;1,1}^{<\epsilon} + \mathrm{I}_{3;1;2}^{<\epsilon} 
    \leq C \;,
\]
which, combined with {\eqref{eq: I31 big eps}}, gives 
\begin{equation}
	\mathrm{I}_{3;1}^< \leq C {x^{-3}} \;.
\end{equation}

The estimate of $\mathrm{I}^<_{3;2}$ is obtained in the same way as for $\mathrm{I}_{3;1}^<$, after exchanging the roles of $e_{r+p}$ and $e_r$ and performing a suitable change of variables. Similarly, $\mathrm{I}_{3;3}^<$ can be handled by the same argument with the roles of $r$ and $r^\prime$, and of $e_{r}$ and $\widetilde{e}_{r^\prime}$ interchanged. Finally, the bound for $\mathrm{I}_{3;4}^<$ follows by adapting the analysis of $\mathrm{I}_{3;3}^<$ in the same way as the estimate of $\mathrm{I}_{3;1}^<$ is obtained from that of $\mathrm{I}_{3;2}^<$. This concludes the analysis of $\I_3$.
\end{proof}
\begin{lemma} \label{app:belyakov_integral_bound_q}
For $x \in (0,\infty)$, let
\begin{equation}\label{eq: def Is q}
	\I_2^{(q)}  
	:= \int_{\RRR^3} \d p
	\int_{|r| < 1 < |r+p|} \d r
	\int_{|r'| < x < |r'-p|} \d r'
	\frac{\delta(r-q) + \delta(r+p-q)}{\big( |r+p|^2 - |r|^2 + |r'-p|^2 - |r'|^2 \big)^2} \;.
\end{equation}
Then there exists $c,C > 0$ (independent of $x$) such that
\begin{equation}\label{eq: bound I2q}
    \frac{c x^3}{(1+x^4+|q|^4)(1+x)}
    \leq \mathrm{I}_2^{(q)}
    \leq \frac{{C (x^{-1} + x^3)}}{1+|q|^4} \qquad \text{for all } x\in(0,\infty)\;.
\end{equation}

\end{lemma}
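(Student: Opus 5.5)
We first integrate out the delta functions. Write $D(r,r',p) := |r+p|^2-|r|^2+|r'-p|^2-|r'|^2$, which is positive on the domain. The term with $\delta(r-q)$ forces $r=q$, so it is nonzero only for $|q|<1$. There it equals
\[
J_1(q):=\int_{|q+p|>1}\d p\int_{|r'|<x<|r'-p|}\d r'\, D(q,r',p)^{-2}.
\]
The term with $\delta(r+p-q)$ becomes, after the change of variables $r=q-p$,
\[
J_2(q):=\int_{|q-p|<1}\d p\int_{|r'|<x<|r'-p|}\d r'\, D(q-p,r',p)^{-2},
\]
and it is nonzero only for $|q|>1$. So $\I_2^{(q)}=\mathds{1}_{|q|<1}J_1+\mathds{1}_{|q|>1}J_2$. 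For both terms we use the excitation energies $e_r,\widetilde e_{r'}$ from \eqref{eq: def er etilder}, with $D=e_{r+p}+e_r+\widetilde e_{r'-p}+\widetilde e_{r'}$.

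\textbf{Upper bound.} Take first $|q|$ large, say $|q|>3(1+x)$. In $J_2$ we have $|p|\ge |q|-1$, hence $e_{r+p}=|q|^2-1\ge c|q|^2$. So $D\ge c|q|^2$, and the region has $p$-volume at most $C$ and $r'$-volume at most $Cx^3$. This gives $J_2\le Cx^3|q|^{-4}$, which produces the $x^3/(1+|q|^4)$ part.

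For bounded $q$ we need a uniform bound $C(x^{-1}+x^3)$, and we argue as in the $\I_2^<$ part of Lemma~\ref{lem:belyakov_integral_bound}. Drop all but two of the energies and use $\widetilde e_{r'}\ge x(x-|r'|)$. In $J_1$ with $|q|<1$, keep $e_{q+p}$ and $\widetilde e_{r'}$ (plus $\widetilde e_{r'-p}$ for large $|p|$). Integrating $\int\d p\,(e_{q+p}+b)^{-2}$ over $|q+p|>1$ gives roughly $C b^{-1/2}$, and here $b\gtrsim x(x-|r'|)$. The remaining $r'$-integral
\[
\int_{|r'|<x} \bigl(x(x-|r'|)\bigr)^{-1/2}\,\d r'
\]
is of order $x^{2}$. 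If instead one lower bounds only by $e_q$, the result degenerates as $e_q\to0$. We expect the honest outcome to be $C(x^{-1}+x^3)$ uniformly in $q$ after splitting $|p|\le 3(1+x)$ versus larger. For $x$ large, the large-$|p|$ tail $\int_{|p|>2x} x^3|p|^{-4}\d p$ is fine. For small $x$, the $x^{-1}$ comes from the region where $\widetilde e_{r'}$ and $\widetilde e_{r'-p}$ are both small, scaling like $\widetilde e\lesssim x^2$. $J_2$ for $1<|q|\le 3(1+x)$ is handled identically with $e_{q}$ replaced by $e_{q-p}$ over the ball $|q-p|<1$.

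\textbf{Lower bound.} Restrict to a region of definite size where $D\lesssim 1+x^2+|q|^2$. For $|q|<1$, take $p$ with $|q+p|\in(1,2)$ and $|p|\in(2x,3x+2)$. Take $r'$ with $|r'|<x$, which automatically gives $|r'-p|>x$. Then $D\le C(1+|p|^2+x|p|)\le C(1+x^2)$. Also $\widetilde e$-terms contribute $\le C(1+x)|p|$; we keep a factor of $(1+x)$ to be safe. The volume is at least $cx^3$ times the $p$-volume, so $J_1\ge cx^3/((1+x^4)(1+x))$. For $|q|>1$ do the same with $|q-p|<1$, $p$ in a unit ball near $q$ rescaled, giving $D\le C(|q|^2+x|q|+x^2)$.

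\textbf{Main obstacle.} The main obstacle is the uniform-in-$q$ upper bound near the Fermi surface $|q|\approx1$. There the $\delta$ pins $e_q\approx0$, and one must extract integrability from the $p$ and $r'$ integrations alone. This requires the annulus-intersection geometry of the $\I_3^<$ proof (the angle $\beta$ and the dangerous sets $C_{r,x}$) with one power less. We would try that argument first, with $r$ frozen at $q$.
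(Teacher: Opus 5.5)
There is a genuine gap. The uniform upper bound for bounded $q$ is the whole difficulty of the lemma, and it is not established; the two-energy estimate you sketch for it is wrong. Near the Fermi surface
\[
\int_{|k|>1}\frac{\d k}{(e_k+b)^2}=2\pi\int_0^\infty\frac{\sqrt{1+u}\,\d u}{(u+b)^2}\sim\frac{2\pi}{b}\qquad(b\to0),
\]
not $Cb^{-1/2}$.

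With $b\gtrsim x(x-|r'|)$, the remaining integral $\int_{|r'|<x}\d r'\,\bigl(x(x-|r'|)\bigr)^{-1}$ diverges logarithmically at $|r'|=x$. Keeping $e_q$ as well only replaces this by $Cx\log(1+x^2/e_q)$, which blows up as $|q|\to1$. Reducing to two energies does not close the bound. The third energy ($\widetilde e_{r'-p}$, resp.\ $e_{q-p}$ in $J_2$) must be used where the two thin shells in the integration variables intersect, at bounded $p$, not only ``for large $|p|$''.

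This is what the paper does. It splits according to whether $e_{r+p}\le e_r$.
\begin{itemize}
\item The half in which the $\delta$ pins the larger of these two energies is easy. The free momentum lies in a shell of volume $\lesssim e_q$, and $\int_{|r'|<x}e_q\,(e_q^2+x^2(x-|r'|)^2)^{-1}\d r'\le Cx$.
\item The other half is your $J_2$ on $\{e_{q-p}\ge e_q\}$, and analogously $J_1$ on $\{e_{q+p}>e_q\}$. There the paper disposes of $|q-p|<\frac12$ trivially and then splits by which of $e_r,\widetilde e_{r'},\widetilde e_{r'-p}$ dominates.
\item When $e_r$ dominates, it bounds the intersection of the two $r'$-shells by $C(1+x^{-1})e_r^2/|q-r|$, combined with $|q-r|\ge c(e_r+\theta)$.
\item When $\widetilde e_{r'}$ dominates, off the dangerous set $\widetilde C_{q,x}$ (where $|q-r'|\approx|1-x|$ or $1+x$) it uses the angle bound $\sin\beta\ge c\,x^{-1/2}\widetilde e_{r'}^{1/2}$. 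This gives $|\widetilde A_{r',x}\cap\widetilde A_{q,r'}|\le Cx^{-1/2}\widetilde e_{r'}^{3/2}$. On $\widetilde C_{q,x}$ itself it uses the spherical-cap bound $|\widetilde C_{q,x}\cap\partial B_s(0)|\le C(x+x^2)\widetilde e_{r'}^{1/2}$.
\item The regions where the dominating energy exceeds $\frac34x^2$ are trivial; these are what produce the $x^{-1}$.
\end{itemize}
Your last paragraph names the right tool, but the proposal defers exactly this step. The claims ``of order $x^2$'' and ``we expect $C(x^{-1}+x^3)$'' are therefore unsupported.

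The lower bound is also incorrect as constructed. For $|q|<1$, your conditions $|q+p|<2$ and $|p|>2x$ are incompatible once $x\ge\frac32$, since then $|p|<|q+p|+|q|<3\le 2x$. They are already incompatible for $q=0$, $x\ge1$, so your region is empty there. The fix, as in the paper, is not to force $B_x(p)\cap B_x(0)=\emptyset$. For $|q|<3$, take $1<|p|<4$ and note $D=2p\cdot(r-r')+2|p|^2\le C(1+x)$. Then use
\[
|B_x(0)\setminus B_x(p)|\ge c\min\{|p|,x\}\,x^2\ge c\,\frac{x^3}{1+x}.
\]
This is where the factor $(1+x)$ in the statement comes from. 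The same partial-volume bound is what your ``rescaled'' remark for $1<|q|\lesssim x$ needs.

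Finally, your large-$|q|$ bound $J_2\le Cx^3|q|^{-4}$ already holds for all $|q|\ge3$, since $e_q=|q|^2-1\ge\frac89|q|^2$ there. With the threshold $3(1+x)$ you leave the range $3\le|q|\le3(1+x)$, where a $q$-independent bound $C(x^{-1}+x^3)$ is too weak for large $x$.
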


\begin{proof} We distinguish the two cases $|q| \geq 3$ and $|q|< 3$.

\medskip

\noindent\underline{Case $|q| \geq 3$.} Here $|r|\leq 1$, so only the term $\delta(r+p-q)$ contributes, and therefore 
\[
    \mathrm{I}_{2}^{(q)} = \int_{\mathbb{R}^3}\d p  \int_{|r|< 1 <|r+p|} \d r\int_{|r^\prime| < x < |r^\prime - p|}\d r^\prime \frac{\delta(r+p-q) }{(e_r + e_{r+p} + \widetilde{e}_{r^\prime} + \widetilde{e}_{r^\prime -p})^2} \;.
\]
Using that $e_{r} + \widetilde{e}_{r^\prime -p} + \widetilde{e}_{r^\prime} \geq 0$, we obtain
\begin{align*}
    \mathrm{I}_{2}^{(q)}
     \leq C\int_{|r| < 1 < |q|}\d r\int_{|r^\prime| < x < |r^\prime - q+r|} \d r^\prime\frac{1}{e_q^2}
    \leq \frac{C}{|q|^4}\int_{|r| < 1} \mathrm{d}\r\int_{|r^\prime| < x}\d r^\prime \; 1
    \leq \frac{Cx^3}{|q|^4 } \;,
\end{align*}
where in the next to the last estimate we used that $e_q^2 \geq C|q|^4$ for $|q| \geq 3$. For a lower bound, we use that under the integration constraints on $r$, $r^\prime$, and $p$, we have $(e_{r+p} + e_r + \widetilde{e}_{r^\prime} + \widetilde{e}_{r^\prime-p})^2 \leq C(|r+p|^2 +|r^\prime - p|^2)^2 \leq C(x^2 + |r+p|^2)^2 \leq  C(x^4 +|r+p|^4)$, and therefore
\begin{align*}
    \mathrm{I}_2^{(q)}
    &\geq C \int_{\mathbb{R}^3}\d p\int_{|r| < 1 < |r+p|}\d r\int_{|r^\prime| < x < |r^\prime - p|}\d r^\prime \,\frac{\delta(r+p-q)}{{{x^4 +} |r+p|^4}}
\\
    &\geq  C\int_{|r| < 1}\d r\int_{|r^\prime| < x }\d r^\prime \frac{1}{{x^4 +} |q|^4}
    \geq \frac{C x^3}{{{x^4 +} |q|^4}} \;,
\end{align*}
where we used that for $|q|\geq 3$, the constraints we omitted in the $r$ and $r^\prime$ integration are always satisfied. This proves \eqref{eq: bound I2q} for $|q| \geq 3$.

\medskip

\noindent \underline{Case $|q| < 3$.} 
As a lower bound, we restrict the integration domain in $p$ to $1 < |p| < 4$. The constraint $|p| < 4$, together with $|r| < 1$ and $|r'| < x$ implies $|r+p|^2 - |r|^2 + |r'-p|^2 - |r'|^2 \leq C{(1 + x)^2}$. 
Thus, 
\begin{align*}
    \I_2^{(q)}
    & \geq c \int_{1 < |p| < 4} \ud p \int_{|r'| < x < |r-'p|} \ud r' \int_{|r| < 1 < |r+p|} \ud r \frac{\delta(r-q) + \delta(r+p-q)}{{(1 + x)^4}}
    \\
    & = c \int_{1 < |p| < 4} \d p
	\int_{|r'| < x < |r'-p|} \d r'
    \frac{\chi(|q| < 1 < |q+p|)
    + \chi(|q-p| < 1 < |q|) }{{(1 + x)^4}} \;.
\end{align*}
The integration domain for $ r' $ is the set $\{ r' : |r'| < x < |r'-p|\} = B_x(0) \setminus B_x(p)$, whose volume is lower-bounded by $|\{ r' : |r'| < x < |r'-p|\}| \geq c \min\{|p|,x\} x^2 \geq c \frac{x^3}{1+x}$, since $|p| > 1$. Therefore
\begin{equation}
\begin{aligned}
    \I_2^{(q)}  
    &\geq C x^3 \int_{1 < |p| < 4} \d p
    \frac{ \chi(|q| < 1 < |q+p|)
    + \chi(|q-p| < 1 < |q|) }{{(1 + x)^5}}
    \geq \frac{c x^3}{{(1 + x)^5}} \;.
\end{aligned}
\end{equation}
%where we used that for any $|q| < 3$, the integration domain in $p$ is of volume $\ge c$.
We conclude that, without restriction on $\lvert q\rvert$, we have 
\[
    \mathrm{I}^{(q)}_2 \geq \frac{c x^3}{(1+x^4+|q|^4)(1+x)} \;.
\]

It remains to show the upper bound on $\mathrm{I}_2^{(q)}$ in the case $|q| < 3$. As for $\I_2$, we distinguish the case $e_{r+p} \leq e_r$ and the case $e_r < e_{r+p}$ and denote the corresponding integrals by $\I^{(q)}_{2;1}$ and $\I^{(q)}_{2;2}$, i.\,e., 
\begin{equation*}
    \I^{(q)}_{2}
    = \I^{(q)}_{2;1} + \I^{(q)}_{2;2} \;.
\end{equation*}

\noindent\underline{Estimate for $\I^{(q)}_{2;1}$.} We split $\I^{(q)}_{2;1} = \I^{(q=r)}_{2;1} + \I^{(q=r+p)}_{2;1}$ where
\begin{align*}
	\I^{(q=r)}_{2;1} 
    & := 
    \int_{\mathbb{R}^3}\d p\int_{|r|< 1 < |r+p|} \d r\int_{|r^\prime| < x < |r^\prime -p|}\d r^\prime \frac{\delta(r-q)\chi(e_{r+p} \leq e_r)}{(|r+p|^2 - |r|^2 + |r^\prime- p|^2 - |r^\prime|^2)^2} \;,
    \\
    \I^{(q=r+p)}_{2;1} 
    & := 
    \int_{\mathbb{R}^3}\d p\int_{|r|< 1 < |r+p|} \d r\int_{|r^\prime| < x < |r^\prime -p|}\d r^\prime \frac{\delta(r+p-q)\chi(e_{r+p} \leq e_r)}{(|r+p|^2 - |r|^2 + |r^\prime- p|^2 - |r^\prime|^2)^2} \;.
\end{align*}
\noindent\underline{Estimate for $\I^{(q=r)}_{2;1}$.} $\mathrm{I}_{2;1}^{(q=r)}$ is estimated similarly as~\eqref{eq: Is<}:
\begin{equation*}
\begin{aligned}
    \I^{(q=r)}_{2;1}
    &= \int_{\RRR^3} \d p
	\int_{|r'| < x < |r'-p|} \d r'
	\frac{\chi(|q| < 1 < |q+p|) \chi(e_{q+p} \leq e_q)}{\big( e_q + e_{q+p} + \widetilde{e}_{r'} + \widetilde{e}_{r'-p} \big)^2} 
   \\
    &\leq C \int_{|r'| < x} \d r'
	\frac{e_q}{e_q^2 + x^2 (x-|r'|)^2 }
    \leq Cx \;.
\end{aligned}
\end{equation*}
\noindent\underline{Estimate for $\I^{(q= r+p)}_{2;1}$.} For $\I^{(q=r+p)}_{2;1}$, the above argument does not work since the integration volume $|\{p : e_q \leq e_{q-p}\}|$ is no longer of order $e_q$
%, but rather $1$. So
and the above bound blows up as $e_q \to 0$. We need a more refined analysis.

If $|r|< 1/2$, then the integral is finite since 
\begin{align*}
    \I^{(q=r+p)}_{2;1} \Big\vert_{|r| < \frac 12}
    &\leq \int_{|r|< \frac 12}\d r
        \int_{|r^\prime| < x < |r^\prime -(q-r)|}\d r^\prime \frac{1}{{(e_r + e_q + \widetilde{e}_{r'} + \widetilde{e}_{r'-(q-r)} )^2}} \\
    &\leq C\int_{|r| < \frac 12}\d r
        \int_{|r^\prime| < x}\d r^\prime
        \frac{1}{e_r^2}
    \leq Cx^3 \int_{|r| < \frac 12} \d r \frac{1}{(1-|r|^2)^2}
    \leq Cx^3.
\end{align*}

So in the following we may restrict to the case $1/2 < |r| <1$.
We further divide the problem into three cases according to which of the quantities $e_{r}$, $\widetilde{e}_{r^\prime}$, and $\widetilde{e}_{r^\prime - p}$ is the largest, defining
\begin{equation}
\begin{aligned}\label{eq: regions S123qrx}
    S_{x}^{(1)}
    &:= \{ r',r,p \in \RRR^3 \; : \; \max(\widetilde{e}_{r'}, \widetilde{e}_{r'-p}) \leq e_r \} \;, \\
    S_{x}^{(2)} 
    &:= \{  r',r,p \in \RRR^3 \; : \; \max (e_r, \widetilde{e}_{r'-p}) \leq \widetilde{e}_{r'} \} \;, \\
    S_{x}^{(3)} 
    &:= \{  r',r,p \in \RRR^3 \; : \; \max(e_r, \widetilde{e}_{r'}) \leq \widetilde{e}_{r'-p} \} \;,
\end{aligned}
\end{equation}
and correspondingly writing
\begin{equation}
\label{eq: def int sjqrx}
    \I_{2;1}^{(q=r+p)} \Big\vert_{|r| > \frac 12}
    = \I_{2;1;1}^{(q=r+p)} 
        + \I_{2;1;2}^{(q=r+p)} 
        + \I_{2;1;3}^{(q=r+p)} \;,
\end{equation}
with
\begin{align}
    \I_{2;1;j}^{(q=r+p)}
    & := \int_{\mathbb{R}^3}\d p \int_{ 1/2 < |r|< 1 < |r+p| } \d r \int_{ |r^\prime| < x < |r^\prime -p| } \d r^\prime \, \chi_{S^{(j)}_x}(p,r,r^\prime) \times \nonumber \\
    & \qquad \times \frac{\delta(r+p-q)\chi(e_{r+p} \leq e_r)}{(|r+p|^2 - |r|^2 + |r^\prime- p|^2 - |r^\prime|^2)^2} \;.
\end{align}
Note that since $r+p =q$, the constraint $|r+p| = |q| \ge 1$ is automatically satisfied.

\medskip
\noindent \underline{Bound for $\I_{2;1;1}^{(q=r+p)}$.} In this case, $e_r \geq \widetilde{e}_{r^\prime}$ and $e_r \geq \widetilde{e}_{r^\prime - p}$. Due to the $\delta (r+p-q)$ in the integrand we have $p = r-q$, so that the constraints read as
\[
    x^2 - e_r \leq |r^\prime|^ 2 \leq x^2\;, \qquad x^2 \leq | r^\prime - (q-r)|^2 \leq x^2 + e_r \;.
\]
Thus,
%we can rewrite the region $S_{x}^{(1)}$ with respect to $p=r-q$ and obtain that
the integration domain for the variable $r^\prime$ is
\[ 
    S_{q,r,x}^{(1)}= \{ r' \in \mathbb{R}^3\; : x^2 - e_r \leq |r^\prime|^ 2 \leq x^2 \text{ and }  x^2 \leq | r^\prime - (q-r)|^2 \leq x^2 + e_r \} \;.
\]
Similarly as above, we distinguish the case\footnote{The condition $|r|> 1/2$ already implies $e_r \leq 3/4$. However, we want to find a bound on $\mathrm{I}_2^{(q)}$ valid for any $x\in (0,\infty)$ and thus it is convenient to distinguish also between $e_r\leq (3/4) x^2$ and $e_r > (3/4) x^2$.} $e_r \leq (3/4)x^2$ and the case $e_r > (3/4) x^2$. In the latter case, the integral can be estimated by 
\begin{align*}
    \mathrm{I}_{2;1;1}^{(q= r+p)} \Big\vert_{e_r > \frac 34 x^2}
    &\leq \int_{\frac 12 < |r| < 1}\d r \int_{S^{(1)}_{q,r,x}}\d r^\prime \, \frac{\chi ( e_r \geq (3/4)x^2)}{(e_r + e_q + \widetilde{e}_{r^\prime} + \widetilde{e}_{r^\prime - (q-r)})^2} \\
    &\leq \int_{|r| < 1 }\d r \int_{|r^\prime| < x}\d r^\prime \, \frac{C}{x^4} 
    \leq \frac{C}{x} \;.
\end{align*}
Instead, in the former case $e_r \leq (3/4) x^2$, we estimate $\mathrm{I}_{2;1;1}^{(q= r+p)}$ as follows: we claim that
\begin{equation}\label{eq: volume S1}
    |S^{(1)}_{q,r,x}| 
    \leq C \left(1+\frac{1}{x}\right) \frac{e_r^2}{|q-r|} \;.
\end{equation}
In fact, note that $S^{(1)}_{q,r,x}$ is an intersection of two annuli, one shifted with respect to the other by $ p = q-r$, and both of thickness proportional to $e_r x^{-1}$. We estimate $|S^{(1)}_{q,r,x}|$ differently according to the three cases $|q-r|\leq 2e_r x^{-1}$, $2 e_r x^{-1} \leq |q-r | \leq 2x - 2e_r x^{-1}$, and $|q-r| \geq 2x - 2e_r x^{-1}$, which correspond to different types of intersection of the annuli, see Figure~\ref{fig:both}.

\begin{figure}[h]
\begin{subfigure}[t]{0.48\textwidth}
 	\centering
	\scalebox{0.9}{\input{S1_bound_1.tex}}
\caption{In the first case we have $|q-r| \leq 2e_r x^{-1}$. We bound the volume of $S^{(1)}_{q,r,x}$ by the volume of one annulus.} 
\label{fig:S1_boundL}
\end{subfigure}
\hfill
\begin{subfigure}[t]{0.48\textwidth}
 	\centering
	\scalebox{0.9}{\input{S1_bound_2.tex}}
\caption{In the second case $2 e_r x^{-1} \leq |q-r| \leq 2x - 2e_r x^{-1}$. We use the angle $\beta$ between the tangents to bound the volume of $S^{(1)}_{q,r,x}$.} 
\label{fig:S1_boundR}
\end{subfigure}

\bigskip

\begin{subfigure}{\textwidth}
 	\centering
	\scalebox{0.9}{\input{S1_bound_3.tex}}
\caption{In the third case we have $|q-r| \geq 2x - 2 e_r x^{-1}$.
% The set $S^{(1)}_{q,r,x}$ has thicknesses $\delta$ and $\gamma$, which we express using the angle $\alpha$.
The volume of $S^{(1)}_{q,r,x}$ is constrained by the two lengths $\delta$ and $\gamma$, which we express using the angle $\alpha$.
}
\label{fig:S1_bound_3}
\end{subfigure}
\caption{Illustration of the three cases encountered in estimating the volume of the set $S_{q,r,x}^{(1)}$, which is an intersection of two annuli, one shifted with respect to the other by the vector $p = q-r$.}
\label{fig:both}
\end{figure}

In the first case $|q-r| \leq 2e_r x^{-1}$ (shown in \Cref{fig:S1_boundL}), we estimate the volume of $S^{(1)}_{q,r,x}$ with the volume of the annulus determined by the condition $\widetilde{e}_{r^\prime}>e_r$, which gives 
\begin{equation}\label{eq: volume S1 case 1}
    |S^{(1)}_{q,r,x}| \leq C\frac{e_r}{x} 4\pi x^2 \leq C \frac{e_r^2}{|q-r|} \;, \qquad
   {\mbox{if}\,\,\, |q-r| \leq 2 \frac{e_r}{x}} \;.
\end{equation}

In the second case $2 e_r x^{-1} \leq |q-r| \leq 2x - 2e_r x^{-1}$, the set $S_{q,r,x}^{(1)}$ is the rotated solid of the \emph{two-dimensional} intersection of two \emph{two-dimensional annuli} as in \Cref{fig:S1_boundR}. 
Hence, its volume is bounded by the radius of the rotation times the area of the two-dimensional intersection. Let $\alpha \in (0,\frac{\pi}{2})$ and $\beta \in (0,\pi)$ as in \Cref{fig:S1_boundR}. The rotation is with a circle of radius $\leq C x \sin \alpha$ with $\alpha=\frac{\pi - \beta}{2}$ and the two-dimensional area is $\leq C e_r^2 x^{-2} (\sin\beta)^{-1}$, since the width of the two annuli are $e_r/x$. 
Thus, $|S_{q,r,x}^{(1)}| \leq C x^{-1} e_r^2 \sin \frac{\pi - \beta}{2} (\sin \beta)^{-1}$.
From the cosine relation for the triangle in \Cref{fig:S1_boundR} we have $2x^2 \cos {\beta} = 2x^2 - |q-r|^2$, which with $\cos \beta = 1 - 2 \sin^2 \frac{\beta}{2}$ implies $\sin \frac{\beta}{2} = \frac{|q-r|}{2x}$. 
From this and elementary trigonometric identities we have 
\begin{equation} \label{eq: volume S1 case 2}
    |S_{q,r,x}^{(1)}| \leq C e_r^2 \frac{\sin \frac{\pi - \beta}{2}}{x \sin \beta} 
    = C e_r^2 \frac{\cos \frac{\beta}{2}}{2 x \sin \frac{\beta}{2} \cos \frac{\beta}{2}} 
    = C \frac{e_r^2}{|q-r|} \;. 
\end{equation}

In the third case $|q-r| \geq 2x - 2e_r x^{-1}$, let $\delta$ be the radial displacement (or shell thickness) along the line the line connecting the centers of the two annuli and $\gamma$ the thickness of the intersection measured perpendicular to the line connecting the centers, see \Cref{fig:S1_bound_3}. Then 
\[
    x - x \cos \alpha \leq \delta \leq Ce_r x^{-1} \;,
\]
which implies $\alpha\leq C e_r^{1/2} x^{-1}$, {using $1-\cos \alpha = 2 \sin^2(\alpha/2) \geq (2/\pi^2) \alpha^2$ with $\alpha\in (0,\pi)$}. With this bound on $\alpha$, we deduce that $\gamma \leq C\alpha x \leq C e_{r}^{1/2}$. All together this implies that we can estimate the volume of the intersection of the two annuli by 
\begin{equation}\label{eq: volume S1 case 3}
    |S^{(1)}_{q,r,x}| \leq C \delta \gamma^2 \leq C\frac{e_r^{2}}{x}\leq C\frac{e_r^2}{|q-r|} \;,\qquad
    \mbox{if}\,\,\, |q-r| \geq 2x - {2}e_r x^{-1}  \;,
\end{equation}
where the last bound is a consequence of $e_r \leq (3/4) x^2$ and of the fact that $S_{q,r,x}^{(1)}$ is nonempty only if  
\[
    |q-r| \leq 2x + 2e_r x^{-1} \;.
\]
Indeed, if  $|q-r| \geq 2x + 2e_rx^{-1}$ the two annuli in $S^{(1)}_{q,r,x}$ are disjoint. Hence, 
\[
    |q-r| \leq 2x + 2 e_r x^{-1} \leq C x \;.
\]

Combining \eqref{eq: volume S1 case 1}, \eqref{eq: volume S1 case 2}, and~\eqref{eq: volume S1 case 3}, we proved \eqref{eq: volume S1}. We now use this bound to estimate $\mathrm{I}_{2;1;1}^{(q=r+p)}$. Recalling that we are considering only the values of $r$ such that $1/2 < |r| <1$, before applying \eqref{eq: volume S1}, we estimate further 
\[
    |q-r| \geq c(e_r + \theta) \;, 
\]
for some constant $c>0$ and with $\theta \in (0,\pi)$ being the angle enclosed by $q$ and $r$. The bound above can be proved using that 
\[
    |q-r|^2 \geq 2|q||r| (1-\cos\theta) \geq 1-\cos\theta = 2\sin^2\left(\frac{\theta}{2}\right) \geq C \theta^2  
\]
and 
\[
   |q-r|\geq   ||q|- |r||\geq 1- |r| = \frac{e_r}{1+|r|}\geq \frac{e_r}{2} \;.
\]
Using \eqref{eq: volume S1}, spherical integration and that $ \sin \theta \leq \theta$ {together with the positivity of $e_r$ which implies $\sin \theta (e_r + \theta)^{-1} \leq 1$}, we conclude
\[
    \mathrm{I}_{2;1;1}^{(q= r+p)} \Big\vert_{e_r \leq \frac 34 x^2}
	\leq C\left(1+\frac{1}{x}\right) \int_{\frac{1}{2}}^1 \d |r|
	\int_0^\pi \d \theta \; \sin \theta
	\frac{e_r^2 (e_r + \theta)^{-1}}{e_r^2}
	\leq C\left( 1 + x^{-1} \right)\;.
\]

\noindent \underline{Bound for $\I_{2;1;2}^{(q=r+p)}$}. Here, {we estimate the integral in \eqref{eq: def int sjqrx} with $j=2$. In this case it holds that $\widetilde{e}_{r^\prime} > \widetilde{e}_{r^\prime - p}$ and $\widetilde{e}_{r^\prime} \geq e_r$. The analysis is similar to the one of the term $\mathrm{I}_{3;1}^<$ introduced in \eqref{eq: int I3<}.} {Similarly as above}, we treat the case $\widetilde{e}_{r^\prime}\geq (3/4) x^2$ and the case $\widetilde{e}_{r^\prime} \leq (3/4) x^2$ separately.

In the first case $\widetilde{e}_{r^\prime}> (3/4) x^2$, using $\delta(r+p-q)$, we fix $p= q-r$. The condition $\widetilde{e}_{r^\prime}> (3/4) x^2$ implies that $|r'| \leq \frac x2$, so
\begin{equation} \label{eq:bound_I212_smallrprime}
    \I_{2;1;2}^{(q=r+p)} \Big\vert_{\widetilde{e}_{r'} > \frac 34 x^2}
    \leq \int_{\frac 12 < |r| < 1} \d r \int_{|r'| \leq \frac x2} 
        \frac{\d r'}{\widetilde{e}_{r'}^2}
    \leq {\int_{\frac 12 < |r| < 1} \d r \int_{|r'| \leq \frac x2} \d r'\, \frac{C}{x^4}}
    \leq C x^{-1} \;.
\end{equation}

We now discuss the second case $\widetilde{e}_{r^\prime} \leq (3/4) x^2$. The analysis in this case is analogous to that of $\I_3$ in the case where $e_{r+p}$ dominates, except that the dominating energy is now $\widetilde{e}_{r'}$ and not $e_{r+p}$. It is convenient to compute the integral with respect to $r$ first. The Dirac delta then fixes $r=q-p$. In this case the condition $e_r \geq e_{r+p}$ becomes $e_{q-p} \geq e_q$; the constraints in $S^{(2)}_x$ ($\widetilde{e}_{r^\prime}\geq e_{r}$ and $\widetilde{e}_{r^\prime}\geq \widetilde{e}_{r^\prime -p}$), become $\widetilde{e}_{r^\prime}\geq e_{q-p}$ and $\widetilde{e}_{r^\prime}\geq \widetilde{e}_{r^\prime - p}$, respectively. Consequently we obtain the constraints
\[
    x^2 < |r'-p|^2 < x^2 + \widetilde{e}_{r'} \;, \qquad  1-\widetilde{e}_{r'} < |q-p|^2 < 1 - e_q \;.
\]
Accordingly we introduce the annuli
\begin{align*}
 \widetilde{A}_{r',x}
    & := \left\{ p \in \RRR^3 \, : \, x^2 < |r'-p|^2 < x^2 + \widetilde{e}_{r'} \right\} \;, \\
    \widetilde{A}_{q,r'} 
    & := \left\{ p \in \RRR^3 \, : \, 1-\widetilde{e}_{r'} < |q-p|^2 < 1 - e_q \right\} \;.
\end{align*}
Thus,
%in the case $\widetilde{e}_{r^\prime} \leq \frac 34 x^2$, 
\[
 \mathrm{I}_{2;1;2}^{(q= r+p)} \Big\vert_{\widetilde{e}_{r^\prime} \leq \frac 34 x^2} = \int_{\mathbb{R}^3} \d p\int_{|r^\prime| < x < |r^\prime - p|}\d r^\prime\, \frac{\chi_{\widetilde{A}_{r^\prime, x}\cap \widetilde{A}_{q,r^\prime}}(p) \chi_{\{|p-q| < 1 < |q|\}}(p)}{(|q|^2 - |q-p|^2 + |r^\prime- p|^2 - |r^\prime|^2)^2} \;.
\]
In this case, $\widetilde e_{r'} \ge e_q$ implies $|r'| \le \sqrt{x^2 - e_q}$. Hence, the only potentially problematic regime corresponds to $e_q \to 0$, in which case $|r'| \to x^2$ and the integral develops a singularity. For fixed $e_q$, one can derive a bound which however diverges as $e_q \to 0$. So a more careful analysis is required, based on decomposing $A_{r',x} \cap A_{q,r'}$ into a “dangerous” and a “non-dangerous” region similarly as we already did above.
The volume of $\widetilde{A}_{r',x}\cap \widetilde{A}_{q,r'}$ depends on the position of $r'$ relative to $q$. Similarly to~\eqref{eq: def Crx}, we introduce the following set where $|\widetilde{A}_{r',x}\cap \widetilde{A}_{q,r'}|$ gets large:
\begin{equation}\label{eq:Cqx}
\begin{aligned}
    \widetilde{C}_{q,x} 
    &:= \widetilde{C}_{q,x}^{(\mathrm{in})} \cup \widetilde{C}_{q,x}^{(\mathrm{out})} \;, \\
    \widetilde{C}_{q,x}^{(\mathrm{in})}
    &:= \left\{ r' \in \RRR^3\, : \, |r'| < x,\, ||q-r^\prime| - {\abs{1-x}}| \leq 2 \widetilde{e}_{r'}^{1/2} \right\} \;, \\
    \widetilde{C}_{q,x}^{(\mathrm{out})}
    &:= \left\{ r' \in \RRR^3\, : \, |r'| < x,\, ||q-r^\prime| - (1+x)| \leq 2 \widetilde{e}_{r'}^{1/2} \right\} \;.
\end{aligned}
\end{equation}
Since $\widetilde{e}_{r'}$ depends on $r'$, these are not exactly annuli as was the case for the similarly constructed $C_{r,x}$ in \eqref{eq: def Crx} above.
%As we shall see, however, we can still use these to bound the desired integrals.
Note also that possibly $x>1$.

If $ r' \notin \widetilde{C}_{q,x} $, we use the same arguments as for~\eqref{eq: estimate area ar arprime nondangerous} while replacing $r+r'$ by $q-r'$: Similarly as in Figure~\ref{fig:Angles}, we define $\beta$ as the angle between the tangents at the intersection of $\partial B_1(0)$ and $\partial B_x(q-r')$. The thicknesses of the annuli $\widetilde{A}_{r',x}$ and $\widetilde{A}_{q,r'}$ are now $\widetilde{e}_{r'}/x$ and $\widetilde{e}_{r'}$ respectively (recall $e_q \leq \widetilde{e}_{r'}$), so we have
\begin{equation}
    |\widetilde{A}_{r^\prime,x}\cap \widetilde{A}_{q,r^\prime}| \leq C x^{-1} \widetilde{e}_{r'}^2 (\sin\beta)^{-1} \;.
\end{equation}
The relation~\eqref{eq:beta_cosine_relation} is still true with $e_r$ replaced by $\widetilde{e}_{r'}$, with $r+r'$ replaced by $q-r'$ {and $(1-x)$ replaced by $|1-x|$}. So as in~\eqref{eq: lower bound beta}, but with $\widetilde{e}_{r'} \leq (3/4) x^2$, we conclude
\begin{align*}
    \beta^2
    &> \frac{2 \widetilde{e}_{r'}^{1/2} (2|1-x| + 2 \widetilde{e}_{r'}^{1/2})}{x}
    >  c\frac{\widetilde{e}_{r'}}{x}\;, 
    \\
    (\pi - \beta)^2
    & > \frac{2 \widetilde{e}_{r'}^{1/2} (2(1+x) - 2 \widetilde{e}_{r'}^{1/2})}{x} 
    > c \frac{\widetilde{e}_{r'}}{x}\;,
\end{align*}
so $ \sin\beta > c  x^{-1/2} \widetilde{e}_{r'}^{1/2} $ and therefore
\begin{equation}\label{eq: est not danger i212q}
    |\widetilde{A}_{r^\prime,x}\cap \widetilde{A}_{q,r^\prime}| \leq C 
    {x^{- 1/2}} \widetilde{e}^{3/2}_{r^\prime} \qquad \text{if } r^\prime \in \{r^\prime\in\mathbb{R}^3 \,:\, |r^\prime| < x\}\setminus \widetilde{C}_{q,x} \;.
\end{equation}
Thus,
%in case $\widetilde{e}_{r'} \leq (3/4) x^2$,
for $r^\prime \notin \widetilde{C}_{q,x}$, using \eqref{eq: est not danger i212q} we get the bound 
\begin{equation*}
\begin{aligned}
    \I_{2;1;2}^{(q=r+p)} \Big\vert_{r' \notin \widetilde{C}_{q,x}}
    & \leq \int_{|r'| \leq \sqrt{x^2 - e_q}} \d r' \; \chi_{\widetilde{C}_{q,x}^c}\!(r')
        \int_{\widetilde{A}_{r',x} \cap \widetilde{A}_{q,r'}} 
	\frac{\d p}{\big( e_{q-p} + e_q + \widetilde{e}_{r'} + \widetilde{e}_{r'-p} \big)^2} \\
    &\leq C x^{- 1/2} \int_{|r'| \leq \sqrt{x^2 - e_q}} \d r' \; \chi_{\widetilde{C}_{q,x}^c}\!(r')
        \frac{\widetilde{e}_{r'}^{3/2}}{\widetilde{e}_{r'}^2} 
    \leq C x^{- 1/2} \int_{|r'| < x} \!\!\d r' \widetilde{e}_{r'}^{-1/2} 
    \leq C x \,.
\end{aligned}
\end{equation*}

For $r' \in \widetilde{C}_{q,x}$, {we consider first the case $x\leq1$ and} proceed as around~\eqref{eq:Ad_bound}: Recall that $|r'| \geq \frac x2$, since we are in the case $\widetilde{e}_{r'} \leq (3/4) x^2$. We use $|\widetilde{A}_{r^\prime,x}\cap \widetilde{A}_{q,r^\prime}|\leq |\widetilde{A}_{r^\prime,x}| \leq C \widetilde{e}_{r'}$, and consider the sets $\widetilde{C}^{(\mathrm{in})}_{q,x} \cap \partial B_s(0)$ and $\widetilde{C}^{(\mathrm{out})}_{q,x} \cap \partial B_s(0)$ for fixed $s \in [\frac x2,\sqrt{x^2-e_q})$. Both are spherical caps of some opening angle, called $\tilde\beta$, and of area $2 \pi s^2 (1 - \cos\tilde\beta)$. The argument around \eqref{eq:Ad_bound} still applies, even though $\widetilde{C}_{q,x}^{\mathrm{(in)}}$ is not an annulus. 
This is because, for fixed $|r'|=s$, the number $\widetilde{e}_{r'}$ is in fact constant $\widetilde{e}_{r'} = x^2-s^2$, and thus the analysis proceeds exactly as for $C_{r,x}$ defined in \eqref{eq: def Crx} above.
(Note that $\tilde\beta$ depends on $\widetilde{e}_{r'}$ and therefore on $s$.)
For $\widetilde{C}^{(\mathrm{in})}_{q,x}$, the relation~\eqref{eq: 1-cosbetatilde} is still true with $|r|$ replaced by $|q|$ and $e_r$ replaced by $\widetilde{e}_{r'}$. Using $|q| > 1$, $s \geq \frac x2$ and $\lvert q\rvert -s \geq 1-x$ and $\widetilde{e}_{r^\prime} \leq (3/4)x^2 \leq 1$, we arrive as in~\eqref{eq:Ad_bound} at
\begin{equation}
    1 - \cos\tilde\beta
    \leq C \frac{\widetilde{e}_{r'}^{\frac 12}}{x}
    \quad \Rightarrow \quad
    |\widetilde{C}^{(\mathrm{in})}_{q,x} \cap \partial B_s(0)|
    \leq C x\widetilde{e}_{r'}^{\frac 12} \;.
\end{equation}
Analogously, for $\widetilde{C}^{(\mathrm{out})}_{q,x}$, we get
\begin{equation}
    1 - \cos\tilde\beta
    \leq C \frac{\widetilde{e}_{r'}^{\frac 12}}{x}
    \quad \Rightarrow \quad
    |\widetilde{C}^{(\mathrm{out})}_{q,x} \cap \partial B_s(0)|
    \leq C x\widetilde{e}_{r'}^{\frac 12} \;.
\end{equation}

Consider next the case $x > 1$. We proceed similarly as above, the only difference is the estimate for the set $\widetilde C_{q,x} \cap \partial B_s(0)$. The set $\widetilde C_{q,x}^{\mathrm{(in)}} \cap \partial B_s(0)$ is an annulus on the sphere $\partial B_s(0)$ delimited by the opening angles $\widetilde{\beta}_>$ and $\widetilde{\beta}_<$, see \Cref{fig:spherical-cap-x-geq-1}.
\begin{figure}[h]
 	\centering
	\scalebox{0.9}{\input{angle-beta-x-ge-1.tex}}
\caption{The figure shows in light green the set $\widetilde C_{q,x}^{\mathrm{(in)}}$ as defined in \eqref{eq:Cqx}, and in dark green its intersection with the sphere $\partial B_s(0)$ for the case $x>1$. In particular, it shows how to define the angles $\widetilde{\beta}_<$ and $\widetilde{\beta}_>$ that we employ to estimate $\abs{\widetilde C_{q,x}^{\mathrm{(in)}} \cap \partial B_s(0)}$.}
\label{fig:spherical-cap-x-geq-1}
\end{figure}
Thus, for $x>1$,
\begin{equation*}
    \abs{\widetilde C_{q,x}^{\mathrm{(in)}} \cap \partial B_s(0)} = 2 \pi s^2 (\cos \widetilde{\beta}_< - \cos \widetilde{ \beta}_>) \;.
\end{equation*}
From the cosine relations we have $\cos \widetilde{\beta}_\# = \frac{s^2 + |q|^2 - \ell_\#^2}{2 s |q|}$ for $\#\in\{<,>\}$ and $\ell_\#$ as in \Cref{fig:spherical-cap-x-geq-1}. Using that $s\geq x/2$ and that $|q|>1$ we get $\cos \widetilde{\beta}_< - \cos \widetilde{ \beta}_> \leq 8 \widetilde{e}_{r'}^{1/2}$ and thus
$\abs{\widetilde C_{q,x}^{\mathrm{(in)}} \cap \partial B_s(0)} \leq 8 x^2 \widetilde{e}_{r'}^{1/2}$.
The bound for $\widetilde{C}_{q,r}^{\mathrm{(out)}}\cap \partial B_s(0)$ is exactly as for the case $x \leq 1$. 
We conclude the final bound, valid for any $x > 0$, of 
\begin{equation}\label{eq: est danger i212q}
    |\widetilde{C}_{q,x} \cap \partial B_s(0)| 
     \leq C (x + x^2) \widetilde{e}_{r'}^{1/2} 
     \qquad \text{for all } s\in \left( {\frac x2}, \sqrt{x^2 - e_q} \right)
     \;.
\end{equation}
Thus,
\begin{align*}
\I_{2;1;2}^{(q=r+p)}\bigg\vert_{\widetilde C_{q,x}}
    & = \int_{x/2 \leq |r'| \leq \sqrt{x^2-e_q}} \ud r' \, \chi_{\widetilde{C}_{q,x}}(r')
	\int_{\widetilde{A}_{r'} \cap \widetilde{A}_{q,r'}} 
	\frac{\d p}{\big( e_{q-p} + e_q + \widetilde{e}_{r'} + \widetilde{e}_{r'-p} \big)^2} 
    \\ & 
    \leq \int_{x/2}^{\sqrt{x^2-e_q}} \ud |r'| \, \abs{\widetilde{C}_{q,x}\cap \partial B_{|r'|}(0)} \abs{\widetilde{A}_{r',x} \cap \widetilde{A}_{q,r'}} \widetilde{e}_{r'}^{-2} 
    \\ & 
    \leq C (x+ x^2) \int_{x/2}^{\sqrt{x^2-e_q}} \widetilde{e}_{r'}^{-1/2} \ud |r'|
    \leq C (x+x^2) \;.
\end{align*}

Summarizing, we have shown that
\begin{align*}
    \I_{2;1;2}^{(q=r+p)}  & = \I_{2;1;2}^{(q=r+p)} \Big\vert_{\widetilde{e}_{r'} > \frac 34 x^2} + \I_{2;1;2}^{(q=r+p)} \Big\vert_{\widetilde{e}_{r'} \leq \frac 34 x^2} \\
    & = \I_{2;1;2}^{(q=r+p)} \Big\vert_{\widetilde{e}_{r'} > \frac 34 x^2} + \I_{2;1;2}^{(q=r+p)}\bigg\vert_{\widetilde C_{q,x}} + \I_{2;1;2}^{(q=r+p)} \Big\vert_{r' \notin \widetilde{C}_{q,x}} \leq C(x+x^2 + x^{-1}) \;.
    %\leq C (x^{-2} + x^3) \;.
\end{align*}

\noindent \underline{Bound for $\I_{2;1;3}^{(q=r+p)} $.} Proceeding analogously to the estimates for $\I_{2;1;2}^{(q=r+p)}$, we obtain $\I_{2;1;3}^{(q=r+p)} \leq C(x+x^2 + x^{-1})$. We omit the details.

\medskip

We conclude that there exists $C > 0 $ such that for all $x \in (0,\infty)$ we have
\[
    \I_{2;1}^{(q=r+p)} \leq C (x^{-1} + x^3) \;.
\]
The contribution $\I^{(q=r)}_{2;2}$ can be estimated like $\I^{(q=r+p)}_{2;1}$, and $\I^{(q=r+p)}_{2;2}$ like $\I^{(q=r)}_{2;1}$.
\end{proof}

\section*{Acknowledgments}
Niels Benedikter was supported by the European Union through the ERC StG \textsc{FermiMath}, grant agreement nr.~101040991. Sascha Lill was supported by the European Union, partially by the ERC StG \textsc{FermiMath}, grant agreement nr.~101040991, and partially by the ERC AdvG MathBEC nr.~101095820. Views and opinions expressed are however those of the authors only and do not necessarily reflect those of the European Union or the European Research Council Executive Agency. Neither the European Union nor the granting authority can be held responsible for them. Emanuela L. Giacomelli was partially funded by  Deutsche Forschungsgemeinschaft (DFG, German Research Foundation) via TRR 352 -- Project-ID 470903074. Moreover Niels Benedikter, Sascha Lill, and Emanuela L. Giacomelli were partially supported by Gruppo Nazionale per la Fisica Matematica in Italy. Asbjørn Bækgaard Lauritsen was supported by the French State support managed by ANR under the France 2030 program through the MaQuI CNRS Risky and High-Impact Research programme (RI)$^2$ (grant agreement ANR-24-RRII-0001).

\section*{Statements and Declarations}
The authors have no competing interests to declare.

\section*{Data Availability}
As purely mathematical research, there are no datasets related to this article.

\renewcommand{\bibfont}{\footnotesize}
\printbibliography

\end{document}

%% file: Shell_intersection.tex
\begin{tikzpicture}

% big Fermi ball
\draw[thick] (0,0) circle (2.5) ;
\draw[dashed, red!50!blue] (0,0) circle (2.6) ;
\fill[red!50!blue, opacity = .2] (0,2.6) arc[start angle=90, end angle=450, radius=2.6] -- (0,2.5) arc[start angle=90, end angle=-270, radius=2.5];
\fill (0,2.4) circle (0.05) node[anchor = north west] {\footnotesize $ r $};
\fill (0,0) circle (0.05) node[anchor = west] {\footnotesize $ 0 $};
\draw[thick] (0,2.4) -- (0,0);
\draw[red!50!blue] (1.1,2.3) -- ++(0.3,0.2) node[anchor = west] {\footnotesize $ A_r + r $};

%radius
\draw (0,0) -- (-2,-1.5);
\draw (0,0) -- ++(0.12,-0.16);
\draw (-2,-1.5) -- ++(0.12,-0.16);
\draw[<->] (0.06,-0.08) --node[anchor = south east] {\footnotesize $ 1 $} ++(-2,-1.5);

% allowed region for r'
%\draw[dotted] (0,2.4) circle (1) ;
%\draw[dotted] (0,2.4) circle (0.75) ;
%\fill[green!50!black, opacity = .1] (0,3.4) arc[start angle=90, end angle=450, radius=1] -- (0,3.15) arc[start angle=90, end angle=-270, radius=0.75];
%\draw[green!50!black, <-] (0.2,3.2) -- ++(0.6,0.2) node[anchor = west] {\footnotesize allowed region};
%\node[green!50!black] at (2,3) {\footnotesize for $ r' $};

% small Fermi ball
\fill[thick] (-0.7,1.9) circle (0.05) node[anchor = east] {\footnotesize $ r+r' $};
\draw[thick] (0,2.4) -- (-0.7,1.9);
\draw[thick] (-0.7,1.9) circle (1);
\draw[dashed, red!50!blue] (-0.7,1.9) circle (1.15);
\fill[red!50!blue, opacity = .2] (-0.7,3.05) arc[start angle=90, end angle=450, radius=1.15] -- (-0.7,2.9) arc[start angle=90, end angle=-270, radius=1];
\draw[red!50!blue] (-1.6,2.6) -- ++(-0.3,0.2) node[anchor = east] {\footnotesize $ A_{r,r',x} + r $};
%\draw[red!50!blue, -] (-1.8,1.8) -- ++(-0.4,0.2) node[anchor = east] {\footnotesize allowed region};
%\node[red!50!blue] at (-3.5,1.6) {\footnotesize for $ p $};
%\node[red!50!blue] at (-3.5,1.2) {\footnotesize $ A^{(1)}_{r,r'} \cap A^{(2)}_{r,r'} + r' $};

%radius
\draw (-0.7,1.9) -- ++(-0.8,-0.6);
\draw (-0.7,1.9) -- ++(0.12,-0.16);
\draw (-1.5,1.3) -- ++(0.12,-0.16);
\draw[<->] (-0.64,1.82) -- ++(-0.8,-0.6);
\node at (-0.9,1.4) {\footnotesize $ x $};

%thicknesses
\draw[thick, <-] (2.6,0) -- ++(0.2,0);
\draw[thick, <-] (2.5,0) -- ++(-0.2,0) node[anchor = east] {\footnotesize $ \sim e_r $};
\draw[thick, <-] (-0.7,0.9) -- ++(0,0.2);
\draw[thick, <-] (-0.7,0.75) -- ++(0,-0.2) node[anchor = north] {\footnotesize $ \sim e_r x^{-1} $};

\end{tikzpicture}

%% file: Shell_intersection_dangerous.tex
\begin{tikzpicture}

% big Fermi ball
\draw[thick] (0,0) circle (2.5) ;
\draw[dashed, red!50!blue] (0,0) circle (2.6) ;
\fill[red!50!blue, opacity = .2] (0,2.6) arc[start angle=90, end angle=450, radius=2.6] -- (0,2.5) arc[start angle=90, end angle=-270, radius=2.5];
\fill (0,2.4) circle (0.05) node[anchor = north west] {\footnotesize $ r $};
\fill (0,0) circle (0.05) node[anchor = west] {\footnotesize $ 0 $};
\draw[thick] (0,2.4) -- (0,0);
\draw[red!50!blue] (1.1,2.3) -- ++(0.3,0.2) node[anchor = west] {\footnotesize $ A_r + r $};

% small Fermi ball
\fill[thick] (-0.3,1.5) circle (0.05) node[anchor = east] {\footnotesize $ r+r' $};
\draw[thick] (0,2.4) -- (-0.3,1.5);
\draw[thick] (-0.3,1.5) circle (1);
\draw[dashed, red!50!blue] (-0.3,1.5) circle (1.15);
\fill[red!50!blue, opacity = .2] (-0.3,2.65) arc[start angle=90, end angle=450, radius=1.15] -- (-0.3,2.5) arc[start angle=90, end angle=-270, radius=1];
\draw[red!50!blue] (-0.6,0.4) -- ++(-0.2,-0.3) node[anchor = east] {\footnotesize $ A_{r,r',x} + r $};

\end{tikzpicture}

%% file: Dangerous_belt.tex
\begin{tikzpicture}

% big Fermi ball
\draw[thick] (0,0) circle (2.5) ;
\draw[dashed, red!50!blue] (0,0) circle (2.6) ;
\fill[red!50!blue, opacity = .2] (0,2.6) arc[start angle=90, end angle=450, radius=2.6] -- (0,2.5) arc[start angle=90, end angle=-270, radius=2.5];
\fill (0,2.4) circle (0.05) node[anchor = north west] {\footnotesize $ r $};
\fill (0,0) circle (0.05) node[anchor = north west] {\footnotesize $ 0 $};
\draw[thick] (0,2.4) -- (0,0);

%dangerous belt
\draw[dashed] (0,0) circle (1.5);
\draw[->] (0,0) -- (1.2,0.9) node[anchor =west] {\footnotesize $1-x$};
\draw[dotted] (0,0) circle (1.2);
\draw[dotted] (0,0) circle (1.8);
\fill[blue, opacity = .2] (0,1.8) arc[start angle=90, end angle=450, radius=1.8] -- (0,1.2) arc[start angle=90, end angle=-270, radius=1.2];
\draw[blue, <-] (-0.8,-1) -- ++(0.1,0.4) node[anchor = south] {\footnotesize $C_{r,x}^{(\mathrm{in})} + r$};
%\node[blue] at (1,-1) {\footnotesize $A^{(\d)}_r$};

%allowed region for r+r'
\draw (0,2.4) circle (1);
\draw[green!50!black, dotted] (0,2.4) circle (0.75);
\fill[green!50!black, opacity = .2] (0,3.4) arc[start angle=90, end angle=450, radius=1] -- (0,3.15) arc[start angle=90, end angle=-270, radius=0.75];
\fill (-0.5,1.7) circle (0.05) node[anchor = east] {\footnotesize $ r+r' $};
\draw[thick] (-0.5,1.7) -- (0,2.4);
\draw[green!50!black, <-] (-0.8,2.7) -- ++(-0.4,0.1) node[anchor = east] {\footnotesize $B_{r,x} + r$};
%\node[green!50!black] at (-2.4,2.4) {\footnotesize for $ r+r' $};

%thicknesses
\draw[thin] (0,-1.5) -- ++(-1.2,0);
\draw[thin] (0,-1.8) -- ++(-1.2,0);
\draw[thick, <-] (-1,-1.5) -- ++(0,0.2);
\draw[thick, <-] (-1,-1.8) -- ++(0,-0.2) node[anchor = west] {\footnotesize $ 2 e_r^{1/2} $};

\end{tikzpicture}

%% file: Angle_alpha.tex
\def\delta{19.381} %intersection angle w. r. to large Fermi ball
\def\gamma{56.059} %intersection angle w. r. to small Fermi ball

\begin{tikzpicture}

% big Fermi ball
\fill (0,0) circle (0.05) node[anchor = west] {\footnotesize $ 0 $};
\draw (0,0) -- (0,6.2);
\draw[thick] (0,5) arc[start angle=90, end angle=60, radius=5];
\draw[thick] (0,5) arc[start angle=90, end angle=120, radius=5];
\draw[dashed, red!50!blue] (0,5.2) arc[start angle=90, end angle=60, radius=5.2];
\draw[dashed, red!50!blue] (0,5.2) arc[start angle=90, end angle=120, radius=5.2];
\fill[red!50!blue, opacity = .2] (0,5.2) arc[start angle=90, end angle=60, radius=5.2] -- ++({-sin(30)*0.2},{-cos(30)*0.2}) arc[start angle=60, end angle=120, radius=5] -- ++({-sin(30)*0.2},{cos(30)*0.2}) arc[start angle=120, end angle=90, radius=5.2] ;

%dangerous belt
\draw[dashed] (0,3) arc[start angle=90, end angle=45, radius=3];
\draw[dashed] (0,3) arc[start angle=90, end angle=135, radius=3];
\draw[dotted] (0,3.4) arc[start angle=90, end angle=45, radius=3.4];
\draw[dotted] (0,3.4) arc[start angle=90, end angle=135, radius=3.4];
\draw[dotted] (0,2.6) arc[start angle=90, end angle=45, radius=2.6];
\draw[dotted] (0,2.6) arc[start angle=90, end angle=135, radius=2.6];
\fill[blue, opacity = .2] (0,3.4) arc[start angle=90, end angle=45, radius=3.4] -- ++({-sin(45)*0.8},{-cos(45)*0.8}) arc[start angle=45, end angle=135, radius=2.6] -- ++({-sin(45)*0.8},{cos(45)*0.8}) arc[start angle=135, end angle=90, radius=3.4] ;
\draw[->] (0,0) -- ({-3*sin(45)}, {3*cos(45)}) node[anchor = north east] {\footnotesize $1-x$};

% small Fermi ball
\fill[thick] (0,3.6) circle (0.05) node[anchor = west] {\footnotesize $ r+r' $};
\draw[thick] (0,3.6) circle (2);
\draw[dashed, red!50!blue] (0,3.6) circle (2.3);
\fill[red!50!blue, opacity = .2] (0,5.9) arc[start angle=90, end angle=450, radius=2.3] -- (0,5.6) arc[start angle=90, end angle=-270, radius=2];

%geometric constructions
\coordinate (X) at ({-5*sin(\delta)}, {5*cos(\delta)});
\fill[thick] (X) circle (0.05);
\draw[thick] (0,0) -- (0,3.6);
\draw[thick] (0,3.6) -- (X);
\node at (-1.4,3.4) {\footnotesize $ 1 $};
\draw[thick] (0,0) -- (X);
\node at (-0.6,4.3) {\footnotesize $ x $};

\draw (X) -- ++({1.5*cos(\delta)}, {1.5*sin(\delta)});
\draw (X) -- ++({-1*cos(\delta)}, {-1*sin(\delta)});
\draw (X) -- ++({2*cos(\gamma)}, {2*sin(\gamma)});
\draw (X) -- ++({-1.5*cos(\gamma)}, {-1.5*sin(\gamma)});

\draw[thick] (X) -- ++({0.9*sin(\delta)}, {-0.9*cos(\delta)}) arc[start angle= {-90+\delta}, end angle= {-90+\gamma}, radius=0.9];
\node at (-1.25,4.2) {\footnotesize $ \beta $};
\draw[thick] (X) -- ++({0.9*cos(\delta)}, {0.9*sin(\delta)}) arc[start angle= {\delta}, end angle= {\gamma}, radius=0.9];
\node at (-1.1,5.1) {\footnotesize $ \beta $};

%thicknesses
\draw (0,5.2) -- ++(0.4,0);
\draw (0,5) -- ++(0.4,0);
\draw[thick, <-] (0.3,5.2) -- ++(0,0.2);
\draw[thick, <-] (0.3,5) -- ++(0,-0.2) node[anchor = north] {\footnotesize $ ~~\sim e_r $};
\draw (0,5.9) -- ++(0.5,0);
\draw (0,5.6) -- ++(0.5,0);
\draw[thick, <-] (0.4,5.9) -- ++(0,0.2) node[anchor = south] {\footnotesize $ ~~\sim e_r $};
\draw[thick, <-] (0.4,5.6) -- ++(0,-0.2);

\end{tikzpicture}

%% file: Angle_beta.tex
\def\alpha{18.296} %intersection angle w. r. to small Fermi ball
\def\Beta{34.177} %intersection angle w. r. to dangerous belt
\def\betaout{22.0154} %intersection angle w. r. to outside dangerous belt

\begin{tikzpicture}

% big Fermi ball
\fill (0,0) circle (0.05) node[anchor = west] {\footnotesize $ 0 $};
\draw (0,0) -- (0,7);
\draw[thick] (0,5) arc[start angle=90, end angle=60, radius=5];
\draw[thick] (0,5) arc[start angle=90, end angle=120, radius=5];
\draw[dashed, red!50!blue] (0,5.2) arc[start angle=90, end angle=60, radius=5.2];
\draw[dashed, red!50!blue] (0,5.2) arc[start angle=90, end angle=120, radius=5.2];
\fill[red!50!blue, opacity = .2] (0,5.2) arc[start angle=90, end angle=60, radius=5.2] -- ++({-sin(30)*0.2},{-cos(30)*0.2}) arc[start angle=60, end angle=120, radius=5] -- ++({-sin(30)*0.2},{cos(30)*0.2}) arc[start angle=120, end angle=90, radius=5.2] ;

%dangerous belt
\draw[dashed] (0,3) arc[start angle=90, end angle=45, radius=3];
\draw[dashed] (0,3) arc[start angle=90, end angle=135, radius=3];
\draw[dotted] (0,3.4) arc[start angle=90, end angle=45, radius=3.4];
\draw[dotted] (0,3.4) arc[start angle=90, end angle=135, radius=3.4];
\draw[dotted] (0,2.6) arc[start angle=90, end angle=45, radius=2.6];
\draw[dotted] (0,2.6) arc[start angle=90, end angle=135, radius=2.6];
\fill[blue, opacity = .2] (0,3.4) arc[start angle=90, end angle=45, radius=3.4] -- ++({-sin(45)*0.8},{-cos(45)*0.8}) arc[start angle=45, end angle=135, radius=2.6] -- ++({-cos(45)*0.8},{sin(45)*0.8}) arc[start angle=135, end angle=90, radius=3.4] ;
\draw[->] (0,0) -- ({-3*sin(45)}, {3*cos(45)}) node[anchor = north east] {\footnotesize $1-x$};

%outer dangerous belt
\draw[dashed] (0,7) arc[start angle=90, end angle=68, radius=7];
\draw[dashed] (0,7) arc[start angle=90, end angle=112, radius=7];
\draw[dotted] (0,7.4) arc[start angle=90, end angle=68, radius=7.4];
\draw[dotted] (0,7.4) arc[start angle=90, end angle=112, radius=7.4];
\draw[dotted] (0,6.6) arc[start angle=90, end angle=68, radius=6.6];
\draw[dotted] (0,6.6) arc[start angle=90, end angle=112, radius=6.6];
\fill[blue, opacity = .2] (0,7.4) arc[start angle=90, end angle=68, radius=7.4] -- ++({-cos(68)*0.8},{-sin(68)*0.8}) arc[start angle=68, end angle=112, radius=6.6] -- ++({-cos(68)*0.8},{sin(68)*0.8}) arc[start angle=112, end angle=90, radius=7.4] ;
%\node[blue] at (1,7.4) {\footnotesize $C^{(\mathrm{out})}_{r,x} + r$};

% ball around r
\fill[thick] (0,4.8) circle (0.05) node[anchor = east] {\footnotesize $ r $};
\draw[thick] (0,4.8) circle (2);
\draw[dotted, green!50!black] (0,4.8) circle (1.5);
\fill[green!50!black, opacity = .2] (0,6.8) arc[start angle=90, end angle=450, radius=2] -- (0,6.3) arc[start angle=90, end angle=-270, radius=1.5];

%geometric constructions
\coordinate (X) at ({-3.4*sin(\alpha)}, {3.4*cos(\alpha)});
\draw[thick] (0,4.8) -- (X);
\draw[dashed] (0,4.8) circle (1.9);
\draw[line width = 1.5, green!50!black] (0,2.9) arc[start angle=-90, end angle={-90 + \Beta}, radius=1.9];
\draw[line width = 1.5, green!50!black] (0,2.9) arc[start angle=-90, end angle={-90 - \Beta}, radius=1.9];
\draw[thick] (0,4) arc[start angle=-90, end angle={-90 - \Beta}, radius=0.8];
\node at (-0.15,4.3) {\footnotesize $ \tilde\beta $};
\node at (-1,4) {\footnotesize $ s $};

%geometric constructions for outer dangerous belt
\draw[line width = 1.5, green!50!black] (0,6.7) arc[start angle=90, end angle={90 + \betaout}, radius=1.9];
\draw[line width = 1.5, green!50!black] (0,6.7) arc[start angle=90, end angle={90 - \betaout}, radius=1.9];
\draw (0,4.8) -- ({1.9*sin(\betaout)}, {4.8+1.9*cos(\betaout)}) node[near end, below] {\footnotesize $\quad s$};
%\draw (0,5.6) arc[start angle=90, end angle={90-\betaout}, radius=0.8];
%\node at (0.2,5.9) {\footnotesize $ \tilde\beta' $};
\draw[->] (0,0) -- ({-7*sin(22)}, {7*cos(22)});
\node[anchor=east] at ({-7*sin(22)}, {7*cos(22)}) {\footnotesize $1+x$};
\node[inner sep=0pt, green!50!black] (OutLabel) at (1,7.7) {\footnotesize $(C^{(\mathrm{out})}_{r,x}\cap \partial B_s(0)) + r$};
\draw[->, green!50!black] (OutLabel) -- ({1.9*sin(10)}, {4.8 + 1.9*cos(10)});

\draw (X) -- (0,0);

\fill ({1.9*cos(68)}, {4.8-1.9*sin(68)}) circle (0.05) node[anchor = north west] {\footnotesize $ r+r' $};
%\node[inner sep=0pt, green!50!black, anchor=west] (InLabel) at ({0.2 + 1.9*cos(85)}, {3.3-1.9*sin(85)}) {\footnotesize $ (C^{\mathrm{(in)}}_{r,x} \cap \partial B_{s}(0)) + r $};
%\draw[->, green!50!black] (InLabel) ({1.9*cos(85)}, {4.8-1.9*sin(85)});
\draw[green!50!black, <-] ({1.9*cos(85)}, {4.8-1.9*sin(85)}) -- ++(0.2,-1.5) node[inner sep=0, anchor = north west] {\footnotesize $ (C^{\mathrm{(in)}}_{r,x} \cap \partial B_{s}(0)) + r $};

\end{tikzpicture}

%% file: S1_bound_1.tex
\def\kF{1.8}
\def\Delta{0.2}
\def\shift{0.15}
\begin{tikzpicture}

\fill[red!50!blue, opacity = 0.2] ({\kF},0) 
     arc(360:0:{\kF})
     -- ++({-\Delta},0)
     arc(0:360:{\kF-\Delta});
\draw[thick] (0,0) circle ({\kF});
\draw[red!50!blue, dashed] (0,0) circle ({\kF - \Delta});

\begin{scope}[shift={(\shift,0)}]
\fill[red!50!blue, opacity = 0.2] ({\kF},0) 
     arc(360:0:{\kF})
     -- ++({\Delta},0)
     arc(0:360:{\kF+\Delta});
\draw[dashed] (0,0) circle ({\kF});
\draw[red!50!blue, dashed] (0,0) circle ({\kF + \Delta});
\end{scope}

\draw[red!50!blue] ({(\kF-\Delta)*cos(35)},{(\kF+\Delta)*sin(35)}) -- ++(0.6,0.3) node[anchor = west] {\footnotesize $\{r' : x^2 - e_r \leq |r'|^2 \leq x^2\}$};
\draw[red!50!blue] ({(\kF+\Delta)*cos(10) + \shift},{(\kF+\Delta)*sin(10)}) -- ++(0.4,0.2) node[anchor = west] {\footnotesize $\{r' : x^2 \leq |r'-p| \leq x^2 + e_r\}$};

\fill[red] (0,0) circle (0.05) node[anchor = north east] {$0$};
\fill[red] ({\shift},0) circle (0.05) node[anchor = north west] {$p$};
\draw[red,->] (0,0) -- node[anchor = west] {$x$} ({(\kF)*cos(70)},{(\kF)*sin(70)});
%\node[red] at (5.8,-1) {$p=q-r$};

\end{tikzpicture}

%% file: S1_bound_2.tex
\def\kF{1.8}
\def\Delta{0.2}
\def\shift{0.6}
\begin{tikzpicture}

\fill[red!50!blue, opacity = 0.2] ({\kF},0) 
     arc(360:0:{\kF})
     -- ++({-\Delta},0)
     arc(0:360:{\kF-\Delta});
\draw[thick] (0,0) circle ({\kF});
\draw[red!50!blue, dashed] (0,0) circle ({\kF - \Delta});

\begin{scope}[shift={(\shift,0)}]
\fill[red!50!blue, opacity = 0.2] ({\kF},0) 
     arc(360:0:{\kF})
     -- ++({\Delta},0)
     arc(0:360:{\kF+\Delta});
\draw[thick] (0,0) circle ({\kF});
\draw[red!50!blue, dashed] (0,0) circle ({\kF + \Delta});
\end{scope}

\fill[red] (0,0) circle (0.05) node[anchor = north] {$0$};
\fill[red] ({\shift},0) circle (0.05) node[anchor = north] {$p$};

\pgfmathsetmacro{\dx}{\shift/2}
\pgfmathsetmacro{\dy}{sqrt(\kF*\kF - (\shift/2)*(\shift/2))}
\pgfmathsetmacro{\betaangle}{atan(\dx/\dy)}

\draw[red] (0,0) -- ({\dx},{\dy});
\draw[red] ({\shift},0) -- ({\dx},{\dy});
\draw[red] ({\dx},{\dy}) -- ++({-\dy*0.75},{\dx*0.75});
\draw[red] ({\dx},{\dy}) -- ++({-\dy*0.75},{-\dx*0.75});
\draw[red] (0,0) -- ({\shift},0);

\draw[red] ({\dx*0.4},{\dy*0.4}) arc ({270-\betaangle} : {270+\betaangle} : {\kF*0.6});

\draw[red] ({\shift/2},0) arc[start angle = 0, end angle=80.40593, radius={\shift/2}];

\node[red] at ({\shift/2},1) {$\beta$};
\node[red] at ({\shift/2+0.03},{\shift/2+0.03}) {$\alpha$}; 

\end{tikzpicture}

%% file: S1_bound_3.tex
\def\kF{1.8}
\def\Delta{0.2}
\def\shift{3.55}
\begin{tikzpicture}

\fill[red!50!blue, opacity = 0.2] ({\kF},0) 
     arc(360:0:{\kF})
     -- ++({-\Delta},0)
     arc(0:360:{\kF-\Delta});
\draw[thick] (0,0) circle ({\kF});
\draw[red!50!blue, dashed] (0,0) circle ({\kF - \Delta});

\begin{scope}[shift={(\shift,0)}]
\fill[red!50!blue, opacity = 0.2] ({\kF},0) 
     arc(360:0:{\kF})
     -- ++({\Delta},0)
     arc(0:360:{\kF+\Delta});
\draw[dashed] (0,0) circle ({\kF});
\draw[red!50!blue, dashed] (0,0) circle ({\kF + \Delta});
\end{scope}

\pgfmathsetmacro{\dx}{-(2*\kF*\Delta + \Delta*\Delta - \shift*\shift)/(2*\shift)}
\pgfmathsetmacro{\dy}{sqrt(\kF*\kF - \dx*\dx)}
\pgfmathsetmacro{\alphaangle}{atan(\dy/\dx)}

\fill[red] (0,0) circle (0.05) node[anchor = north] {$0$};
\fill[red] ({\shift},0) circle (0.05) node[anchor = north] {$p$};
\draw[red] (0,0) -- ({\dx},{\dy});
\draw[red] (0,0) -- ({\kF},0);
\draw[red] ({0.6*\kF},0) arc (0 : {\alphaangle} : {0.6*\kF});
\node[red] at (0.8,0.15) {$\alpha$};

\draw[red,<->] (2.5,{-\dy}) -- node[anchor = west] {$\gamma$} (2.5,{\dy});
\draw[red,<->] ({\shift-\kF-\Delta},1.2) -- node[anchor = south] {$\delta$} ++({\Delta},0);

\end{tikzpicture}

%% file: angle-beta-x-ge-1.tex
\def\alpha{18.296} %intersection angle w. r. to small Fermi ball
\def\Beta{34.177} %intersection angle w. r. to dangerous belt

\begin{tikzpicture}

\begin{scope}
\draw[dotted] ({5.3*cos(140)},{5.3*sin(140)}) arc[start angle = 140, end angle = 40, radius = 5.3] node[at start,anchor=north east,inner sep=0pt] {$\partial B_s(0)$};
\fill[white] (0,3.2) circle (3.4);
\clip (0,3.2) circle (3.4);
\draw[line width=1.5pt,green!50!black] (-5.3,0) arc[start angle = 180, end angle = 0, radius = 5.3];
\fill[white] (0,3.2) circle (2.6);
\clip (0,3.2) circle (2.6);
\draw[dotted] (-5.3,0) arc[start angle = 180, end angle = 0, radius = 5.3];
\end{scope}

% big Fermi ball
\fill (0,0) circle (0.08) node[anchor = west] {$ 0 $};
\draw (0,0) -- (0,6.5);
\draw[thick] (0,3) arc[start angle=90, end angle=40, radius=3];
\draw[thick] (0,3) arc[start angle=90, end angle=160, radius=3];
%\draw[dashed, red!50!blue] (0,3.4) arc[start angle=90, end angle=30, radius=3.4];
%\draw[dashed, red!50!blue] (0,3.4) arc[start angle=90, end angle=150, radius=3.4];
%\fill[red!50!blue, opacity = .2] (0,3.4) arc[start angle=90, end angle=30, radius=3.4] -- ++({-sin(60)*0.4},{-cos(60)*0.4}) arc[start angle=30, end angle=150, radius=3] -- ++({-sin(60)*0.4},{cos(60)*0.4}) arc[start angle=150, end angle=90, radius=3.4] ;
\draw (0,0) -- ({-3*sin(60)}, {3*cos(60)}) node[near end, below] {$1$};

%dangerous belt
\draw (0,6) arc[start angle=90, end angle=40, radius=6];
\draw (0,6) arc[start angle=90, end angle=140, radius=6];
%\draw[dotted] (0,6.4) arc[start angle=90, end angle=45, radius=6.4];
%\draw[dotted] (0,6.4) arc[start angle=90, end angle=135, radius=6.4];
%\draw[dotted] (0,5.6) arc[start angle=90, end angle=45, radius=5.6];
%\draw[dotted] (0,5.6) arc[start angle=90, end angle=135, radius=5.6];
%\fill[blue, opacity = .2] (0,6.4) arc[start angle=90, end angle=45, radius=6.4] -- ++({-sin(45)*0.8},{-cos(45)*0.8}) arc[start angle=45, end angle=135, radius=5.6] -- ++({-sin(45)*0.8},{cos(45)*0.8}) arc[start angle=135, end angle=90, radius=6.4] ;
\draw (0,0) -- ({-6*sin(45)}, {6*cos(45)}) node[near end,below] {$x$};

% ball around q
\fill[thick] (0,3.2) circle (0.08) node[anchor = south east] {$ q $};
\draw[dashed] (-3,3.2) arc[start angle = 180, end angle = 0, radius = 3];
%\draw[thick] (0,3.2) circle (3);
%\draw[dotted] (-2.6,3.2) arc[start angle = 180, end angle = 0, radius = 2.6];
%\draw[dotted] (-3.4,3.2) arc[start angle = 180, end angle = 0, radius = 3.4];
%\fill[green!50!black, opacity = .2] (0,6.6) arc[start angle=90, end angle=0, radius=3.4] -- (2.6,3.2) arc[start angle=0, end angle = 180, radius = 2.6] -- (-3.4,3.2) arc[start angle = 180, end angle = 90, radius=3.4]; 
\draw (0,3.2) -- ({-sin(80)*3},{3.2+cos(80)*3}) node[midway,below] {$x-1$};
%\node[anchor = south east] at (0,3.2) {$q$}

% Coordinates: (1.46361, 5.8187496), (1.9383902, 4.9671282), (??, 3.2)
% (1.46361, 5.8187496), (3.2548572, 4.182813), (??, 3.2)
%\draw[dotted] plot[smooth] coordinates {(1.46361, 5.8187496) (1.9383902, 4.9671282) (2.2, 3.2)};
%\draw[dotted] plot[smooth] coordinates {(1.46361, 5.8187496) (2.31116, 5.30696) (3.2548572, 4.182813) (3.6, 3.2)};

\fill[green!50!black, opacity=0.2] (1.46361, 5.8187496) to[curve through={(1.9383902, 4.9671282)}] (2.2, 3.2) -- (3.5, 3.2) to[curve through={(3.2548572, 4.182813) .. (2.31116, 5.30696)}] (1.46361, 5.8187496);
\draw[dotted] (1.46361, 5.8187496) to[curve through={(1.9383902, 4.9671282)}] (2.2, 3.2);
\draw[dotted] (3.5, 3.2) to[curve through={(3.2548572, 4.182813) .. (2.31116, 5.30696)}] (1.46361, 5.8187496);

\fill[green!50!black, opacity=0.2] (-1.46361, 5.8187496) to[curve through={(-1.9383902, 4.9671282)}] (-2.2, 3.2) -- (-3.5, 3.2) to[curve through={(-3.2548572, 4.182813) .. (-2.31116, 5.30696)}] (-1.46361, 5.8187496);
\draw[dotted] (-1.46361, 5.8187496) to[curve through={(-1.9383902, 4.9671282)}] (-2.2, 3.2);
\draw[dotted] (-3.5, 3.2) to[curve through={(-3.2548572, 4.182813) .. (-2.31116, 5.30696)}] (-1.46361, 5.8187496);

\node[fill=white,inner sep = 2pt] (Name) at (0,4.8) {\color{green!50!black}$\widetilde C_{q,x}^{\mathrm{(in)}} \cap \partial B_s(0) $};
\node[fill=white,inner sep = 2pt] (Cq) at (0,6.6) {\color{green!50!black}$\widetilde C_{q,x}^{\mathrm{(in)}}$};

\draw[->,green!50!black] (Name) edge ({0-sin(28)*5.3},{cos(28)*5.3});
\draw[->,green!50!black] (Name) edge ({0+sin(28)*5.3},{cos(28)*5.3});

\draw[->,green!50!black] (Cq) edge [bend right=10] (2,5.2);
\draw[->,green!50!black] (Cq) edge [bend left=10] (-2,5.2);

\draw (0,3.2) -- ({0.365734 * 5.3}, {0.937194 * 5.3}) -- (0,0);
\draw (0,3.2) -- ({0.614124 * 5.3}, {0.789210 * 5.3}) -- (0,0);

\draw (0,2.1) arc[start angle = 90, end angle = 69.5855, radius=2.1];
\draw (0,2.3) arc[start angle = 90, end angle = 52.1117, radius=2.3];

\node at (0.3,1.7) {$\widetilde\beta_<$};
\node at (0.6,2.6) {$\widetilde\beta_>$};

\node[inner sep = 2pt] (s) at (1.8,3.1) {$s$};
\draw[->] (s) -- ({0.365734 * 5.3 *0.65}, {0.937194 * 5.3 *0.65});
\draw[->] (s) -- ({0.614124 * 5.3 *0.65}, {0.789210 * 5.3 *0.65});

\node (l-small) at (0.55,4.1) {$\ell_<$};
\node (l-large) at (2,3.5) {$\ell_>$};

%\node[anchor=south west] at (3,2) {$\ell_< = x - 1 - 2\widetilde{e}_{r'}^{1/2}$};
%\node[anchor=north west] at (3,2) {$\ell_> = x - 1 + 2\widetilde{e}_{r'}^{1/2}$};
\node[anchor=west] at (3,2) {$\begin{aligned}
\ell_> & = x - 1 + 2\widetilde{e}_{r'}^{1/2}
\\
\ell_< & = x - 1 - 2\widetilde{e}_{r'}^{1/2}
\\
\widetilde{e}_{r'} & = {x^2 - s^2}
\end{aligned}$};

\end{tikzpicture}